\documentclass{article} 
\usepackage{fullpage}
\usepackage{natbib}

\usepackage[utf8]{inputenc} 
\usepackage[T1]{fontenc}    
\usepackage{hyperref}       
\usepackage{url}            
\usepackage{booktabs}       
\usepackage{amsfonts}       
\usepackage{nicefrac}       
\usepackage{microtype}      
\usepackage{xcolor}         

\usepackage{amsthm}
\usepackage{amssymb}
\usepackage{amsmath}
\usepackage{leftidx}
\usepackage[all]{xy}
\usepackage{microtype}
\usepackage{hyperref}
\usepackage{tikz}
\usetikzlibrary{arrows}
\usepackage{listings}
\usepackage{graphicx}
\usepackage{paralist}

\usepackage{algorithm}
\usepackage{bbm}

\usepackage[noend]{algpseudocode}
\usepackage{varwidth}
\newcounter{algsubstate}
\makeatletter

\makeatother

\newcommand\eps{\varepsilon}
\renewcommand\epsilon{\varepsilon}

\newcommand\poly{\mathrm{poly}}

\newcommand\argmin{\mathrm{argmin}}

\newcommand\nnz{\mathtt{nnz}}
\newcommand\Var{\mathrm{Var}}
\renewcommand{\paragraph}{\textbf}

\newtheorem{mydef}{Definition}[section]
\newtheorem{lem}[mydef]{Lemma}      
      
\newtheorem{thm}{Theorem}
\newtheorem{pro}[mydef]{Proposition}

\newtheorem{ass}{Assumption}[section]

\title{\vspace{-2em}Almost Linear Constant-Factor Sketching for $\ell_1$ and Logistic Regression}

\author{Alexander Munteanu\thanks{Dortmund Data Science Center, Faculties of Statistics and Computer Science, TU Dortmund University, Dortmund, Germany. Email: \texttt{alexander.munteanu@tu-dortmund.de}.}
\and Simon Omlor \thanks{Faculty of Statistics, TU Dortmund University, Dortmund, Germany. Email: \texttt{simon.omlor@tu-dortmund.de}.}
\and David P.~Woodruff\thanks{Department of Computer Science, Carnegie Mellon University, Pittsburgh, USA. Email: \texttt{dwoodruf@cs.cmu.edu}.}}

\begin{document}
\allowdisplaybreaks

\maketitle

\begin{abstract}
    We improve upon previous oblivious sketching and turnstile streaming results for $\ell_1$ and logistic regression, giving a much smaller sketching dimension achieving $O(1)$-approximation and yielding an efficient optimization problem in the sketch space. Namely, we achieve for any constant $c>0$ a sketching dimension of $\tilde{O}(d^{1+c})$ for $\ell_1$ regression and $\tilde{O}(\mu d^{1+c})$ for logistic regression, where $\mu$ is a standard measure that captures the complexity of compressing the data. For $\ell_1$-regression our sketching dimension is near-linear and improves previous work which either required $\Omega(\log d)$-approximation with this sketching dimension, or required a larger $\operatorname{poly}(d)$ number of rows. Similarly, for logistic regression previous work had worse $\operatorname{poly}(\mu d)$ factors in its sketching dimension. We also give a tradeoff that yields a $1+\varepsilon$ approximation in input sparsity time by increasing the total size to $(d\log(n)/\varepsilon)^{O(1/\varepsilon)}$ for $\ell_1$ and to $(\mu d\log(n)/\varepsilon)^{O(1/\varepsilon)}$ for logistic regression. Finally, we show that our sketch can be extended to approximate a regularized version of logistic regression where the data-dependent regularizer corresponds to the variance of the individual logistic losses.
\end{abstract}

\section{Introduction}

We consider logistic regression in distributed and streaming environments. A key tool for solving these problems is a distribution over random oblivious linear maps $S \in \mathbb{R}^{r \times n}$ which have the property that, for a given $n \times d$ matrix $X$, where we assume the labels for the rows of $X$ have been multiplied into $X$, given only $SX$ one can efficiently and approximately solve the logistic regression problem. The fact that $S$ does not depend on $X$ is what is referred to as $S$ being oblivious, which is important in distributed and streaming tasks since one can choose $S$ without first needing to read the input data. The fact that $S$ is a linear map is also important for such tasks, since given $SX^{(1)}$ and $SX^{(2)}$, one can add these to obtain $S(X^{(1)} + X^{(2)})$, which allows for positive or negative updates to entries of the input in a stream, or across multiple servers in the arbitrary partition model of communication, see, e.g., \citep{Woodruff14} for a discussion of data stream and communication models.

An important goal is to minimize the \emph{sketching dimension} $r$ of the {sketching matrix} $S$, as this translates into the memory required of a streaming algorithm and the communication cost of a distributed algorithm. At the same time, one would like the approximation factor that one obtains via this approach to be as small as possible. Specifically we develop and improve oblivious sketching for the most important \emph{robust} linear regression variant, namely $\ell_1$ regression, and for logistic regression, which is a \emph{generalized} linear model of high importance for binary classification and estimation of Bernoulli probabilities. Sketching supports very fast updates which is desirable for performing robust and generalized regression in high-velocity data processing applications, for instance in physical experiments and other resource constraint settings, cf. \citep{MunteanuOW21,Munteanu23}.

We focus on the case where the number $n$ of data points is very large, i.e., $n \gg d$. In this case, applying a standard algorithm directly is not a viable option since it is either too slow or even becomes impossible when it requires more memory than we can afford. Following the \emph{sketch \& solve} paradigm \citep{Woodruff14}, our goal is in a first step to reduce the size of the data 
without losing too much information. Then, in a second step, we approximate the problem efficiently on the reduced data.\\

\noindent
\textbf{Sketch \& solve principle:}
\begin{enumerate}
\item Calculate a small sketch $SX$ of the data $X$.
\item Solve the problem $\tilde\beta = \operatorname{argmin}_\beta f(SX\beta)$ 
using a standard optimization algorithm.
\end{enumerate}

The theoretical analysis proves that the sketch in the first step is calculated in such a way that the solution obtained in the second step is a good approximation to the original problem, i.e., that $f(X \tilde\beta) \leq C \cdot \operatorname{argmin}_\beta f(X\beta)$ holds for a small constant factor $C\geq 1$.

\subsection{Our Contributions}
For logistic regression our goal is to achieve an $O(1)$-approximation with an efficient estimator in the sketch space and smallest possible sketching dimension in terms of $\mu$ and $d$, where $\mu=\mu(X)=\operatorname{sup}_{\beta\neq 0}\frac{\sum_{x_i\beta >0}|x_i\beta|}{\sum_{x_i\beta < 0}|x_i\beta|}$ is a data dependent parameter that captures the complexity of compressing the data for logistic regression, see Definition \ref{def:mu_complex}. As a byproduct of our algorithms, we also obtain algorithms for $\ell_1$-regression. We note that the parameter $\mu$ is necessary only for logistic regression, i.e., for sketching $\ell_1$-regression, we set $\mu=1$. We summarize our contributions as follows:

\begin{enumerate}
\item We significantly improve the sketch of \citet{MunteanuOW21}. More precisely we show with minor modifications in their algorithm but major modifications in the analysis that the size of the sketch can be reduced from roughly $\tilde O(\mu^7 d^5)$\footnote{The tilde notation suppresses \emph{any} $\operatorname{polylog}(\frac{\mu d n}{\eps \delta})$ even if no higher order terms appear. This allows us to focus on the main parameters and their improvement. The exact terms are specified in Theorems 1-3.} to $\tilde O(\mu d^{1+c})$ for any $c>0$, while preserving an $O(1)$ approximation to either the logistic or $\ell_1$ loss. 
\item We show that increasing the sketching dimension to $(\mu d\log(n)/\varepsilon)^{O(1/\varepsilon)}$ is sufficient to obtain a $1+\varepsilon$ approximation guarantee. 

\item We show that our sketch can also approximate variance-based regularized logistic regression within an $O(1)$ factor if the dependence on $n$ in the sketching dimension is increased to $n^{0.5+c}$ for any $c>0$. We also give an example corroborating that the \texttt{CountMin}-sketch that we use needs at least $\Omega({n}^{0.5})$ rows to achieve an approximation guarantee below $\log^2(\mu)$.  
\end{enumerate}



\subsection{Related Work}\label{subsec:related}
\paragraph{Data oblivious sketching}
Data oblivious sketches have been developed for many problems in computer science, see \citep{Phillips17,Munteanu23} for surveys. The seminal work of \citet{Sarlos06} opened up the toolbox of sketching for numerical linear algebra and machine learning problems, such as linear regression and low rank approximation, cf. \citep{Woodruff14}. We note that oblivious sketching is very important to obtain data stream algorithms in the turnstile model \citep{Muthukrishnan05} and there is evidence that linear sketches are optimal for such algorithms under certain conditions \citep{LiNW14,AiHLW16}. The classic works on $\ell_2$ regression have been generalized to other $\ell_p$ norms \citep{SohlerW11,WoodruffZ13} by combining sketching as a fast but inaccurate preconditioner and subsequent sampling to achieve the desired $(1+\eps)$-approximation bounds. Those works have been generalized further to so-called $M$-estimators, i.e., Huber \citep{ClarksonW15} or Tukey regression loss \citep{ClarksonWW19}, that share nice properties such as symmetry and homogeneity leveraged in previous works on $\ell_p$ norms.\\

\paragraph{$\mathbf{\ell_1}$ regression}
Specifically for $\ell_1$, the first sketching algorithms used random variables drawn from $1$-stable (Cauchy) distributions to estimate the norm \citep{Indyk06}. It is possible to get concentration and a $(1\pm\eps)$-approximation in near-linear space by using a median estimator. However, in a regression setting this estimator leads to a non-convex optimization problem in the sketch space. Since we want to preserve convexity to facilitate efficient optimization in the sketch space, we focus on sketches that work with an $\ell_1$ estimator for solving the $\ell_1$ regression problem in the sketch space in order to obtain a constant approximation for the original $\ell_1$ problem. With this restriction, it is possible to obtain a contraction bound with high probability so as to union bound over a net, but similar results are not available for the dilation. Indeed, \emph{subspace embeddings} for the $\ell_1$ norm have $\tilde\Theta(d)$ dilation \citep{WoodruffZ13,LiW021,WangW22}. A $1+\epsilon$ dilation is only known to be possible when mapping to $\exp(O(1/\eps))$ dimensions \citep{BrinkmanC05}, even for single vectors as in \citep{Indyk06}. We thus focus on obtaining an $O(1)$ approximation in this paper. Previous work had either larger $O(\log(d))$ distortion\footnote{by an argument in the proof of Lemma 7 of \citet{SohlerW11}, cf. \citep[][Problem 1]{WHomework21}.} or larger $\poly(d)$ factors \citep{Indyk06,SohlerW11}. There exists a $(1+\eps)$-approximation algorithm \citep{SohlerW11} for turnstile data streams, running two sketches in parallel: one for preconditioning and another that performs $\ell_1$-row-sampling from the \emph{sketch} \citep{AndoniBIW09}. However, it has a worse $\operatorname{poly}(d \log(n)/\eps)$ update time and sketching dimension, see \citep[][Theorem 13]{SohlerW11}.
An advantage of our sketch is that it uses only random $\{0,1\}$-entries, which have better computational and implicit storage properties \citep{AlonBI86,AlonMS99,RusuD07}. More importantly, our approach works simultaneously for both, $\ell_1$ \emph{and} logistic regression. For the latter no near-linear sketching dimension was known to be possible since sketches for $\ell_1$ cannot preserve the sign of coordinates, which is crucial for any multiplicative error on the asymmetric logistic loss.\\

\paragraph{Generalized linear models (GLMs)}
It is important to extend the works on linear regression to more sophisticated and expressive statistical learning problems, such as \emph{generalized linear models} \citep{McCullaghN89}. Unfortunately, taking this step led to impossibility results. Namely, approximating the regression problems on a succinct sketch for strictly monotonic functions such as logistic loss \citep{MunteanuSSW18} or heavily imbalanced asymmetric functions such as Poisson regression loss \citep{MolinaMK18} allows one to design a low-communication protocol for the \textsc{Indexing} problem that contradicts its $\Omega(n)$ bit randomized one-way communication complexity \citep{KNR99}. This implies an $\tilde\Omega(n)$ sketching dimension for these problems. To circumvent this worst-case limitation for logistic regression, \citet{MunteanuSSW18} introduced a natural data dependent parameter $\mu$ that can be used to bound the complexity of compressing data for logistic and probit regression \citep{MunteanuOP22}. This also led to the very first oblivious sketch for logistic regression \citep{MunteanuOW21}, with a polylogarithmic number of rows for mild data. We improve this by giving,
the \emph{only} near-linear sketching dimension in $d$ and $\mu$ for logistic regression. The previous best sketching dimension obtained by Lewis weight sampling \citep{MaiMR21}, required $O(\mu^2 d)$ and crucially their sketch is not oblivious so cannot be implemented in a turnstile data stream, with positive and negative updates to the entries of the input point set. For lower bounds, an $\Omega(d)$ dependence is immediate since mapping to fewer than $d$ dimensions contracts non-zero vectors in the null-space of the sketching matrix to zero. An $\Omega(\mu)$ lower bound is immediate from \citet{MunteanuSSW18}
and was recently generalized by \citet{WoodruffY23} to more natural settings.\\

\paragraph{Variance-based regularization}
Regularization techniques have been proposed in the literature for many purposes, such as reducing the effective dimension of statistical problems or limiting their expressivity to avoid overfitting. Regularization was also proposed to relax the logistic regression problem. In an extreme setting where the regularizer dominates the objective function, the contributions of data points do not differ significantly. The problem then becomes easy to approximate by uniform subsampling \citep{SamadianPMIC20}. To address the bias-variance tradeoff in machine learning problems in a more meaningful way and to provably reduce the generalization error of models, \citet{MaurerP09} proposed to add a \emph{data-dependent} variance-based regularization. Since this results in a non-convex optimization problem even for convex objectives, \citet{DuchiN19,Yanetal20} used optimization tricks to reformulate a convex variant with additional parameters that can be integrated into standard hyperparameter tuning. Interestingly, this \emph{data-dependent} regularization -- in contrast to standard regularization -- does not relax the sketching problem but makes it more complicated, requiring in the case of logistic regression a combination of $\ell_1$ and $\ell_2$ geometries to be preserved. We show that our sketch can deal with both simultaneously.

\subsection{Our Techniques} Our main motivation is to reduce the large dependence on the parameters of the oblivious sketching algorithm of \citet{MunteanuOW21}. Their sketch consists of $O(\log n)$ levels that take subsamples at exponentially decreasing rate, and apply a \texttt{CountMin}-sketch to each subsample to compress it to roughly size $\tilde O(d^5(\mu/\varepsilon)^7)$ which gives $(1-\eps)$ contraction but only $O(1)$\footnote{The exact constant was not specified but overcounting their parameters gives at best an $\geq 8$-approximation.} dilation bounds. Our new methods significantly improve over their sketching dimension for obtaining $(1-\varepsilon)$ contraction bounds. The
large dependence on $\mu$ came from adapting the analysis of \citet{ClarksonW15} to work for the asymmetric logistic loss function. This required to rescale $\eps'= {\eps}/{\mu}$ to translate the estimation error of $\eps'\|z\|_1$ to an error of $\eps\|z^+\|_1$, where the latter quantity sums only over the positive entries of $z$. We avoid this by noting that we can oversample the elements by a factor of $\mu$ to capture sufficiently many elements to approximate the required $\eps\|z^+\|_1$ error directly. However, the analysis of the so-called \emph{heavy hitters}\footnote{i.e., the coordinates of $z$ with largest $\ell_1$ leverage score} requires $\mu$ elements to be perfectly isolated when hashing them into buckets, which requires $\mu^2$ buckets to succeed with good probability.
To obtain a linear dependence on $\mu$, we sacrifice some sparsity in our sketching matrix. Instead of hashing to a single bucket at each level, we hash each element multiple times. The best known trade-off between sketching dimension and sparsity is due to \citet{Cohen16} for the \texttt{Count}-sketch. We adapt the technique to our \texttt{CountMin}-sketch: we hash each element to roughly $O(\eps^{-1}\log({\mu d}/{\eps}))$ buckets and resolve collisions by summing the elements instead of taking a random sign combination. This brings the dependence on $\mu$ down to quasi-linear. The dependence on $d$ and $\eps$ also benefit from this technique but at this point, our analysis still requires a $d^2$ dependence. This comes from needing to separate the heavy hitters from other large coordinates for all vectors in a net of exponential size in $d$. To bring the dependence on $d$ down to near-linear we densify our sketch to roughly $O(\eps^{-3}\mu d\log(n))$ non-zeros per column, which separates the heavy hitters almost entirely and yields our result.

We also improve the dilation bounds from a factor of $\geq 8$ to a $2$-approximation: previous analyses were conducted by bounding the expected contribution of weight classes $W_q=\{ i \mid 2^{-q-1} < z_i \leq 2^{-q}\}$ to the different levels in our sketch. A simple
bound of $O(\log n)$
was improved to $O(1)$ by a Ky-Fan norm argument, which cuts off elements that have a low redundant contribution. We reverse the perspective and ask for each level $h$, which weight classes are well represented? This allows us to conduct a more fine-grained analysis: we define intervals $Q_h = [q(2),q(3)]$ and $Q_h \subseteq Q_h' = [q(1),q(4)]$ and quantify their size depending on the number of buckets, such that weight classes $q\notin Q_h'$ do not contribute at all, $q\in Q_h'$ make a non-negligible contribution, and $q\in Q_h$ are additionally well-approximated. It is thus desirable to choose the number of buckets in such a way that $|Q_h'|/|Q_h| \approx 1$.
Moreover, we ask how the intervals in consecutive levels overlap. It turns out that slightly increasing the number of buckets in each level to $\tilde O(\mu d^2)$ allows us to show that each $q$ appears only in at most two consecutive levels (in expectation) which yields a $2$-approximation. Indeed, the argument can be continued by raising the size of the sketch to any power of $k\in\mathbb{N}$, resulting in an expected contribution in at most $(1+1/k)$ levels, which yields a $(1+\eps)$-approximation using $(\mu d\log(n)/\varepsilon)^{O(1/\varepsilon)}$ rows. An exponential dependence on ${1}/{\eps}$ is best known for sketching-based estimators of the $\ell_1$ norm \citep{Indyk06,LiW021} that embed into lower-dimensional $\ell_1$, and our sketch can be used as such an estimator for $\ell_1$ as a special case.  

Finally, as a corollary and important application of our results, we obtain similar oblivious sketching bounds for a variance-regularized version of logistic regression, see Section \ref{subsec:related}. It combines aspects of the $\ell_1$ geometry of the sum of logistic losses with the $\ell_2$ geometry that appears in the sum of \emph{squared} logistic losses. The analysis is very similar to the standard logistic regression loss but requires redefining the weight classes in terms of squared values $z_i^2$ and converting between the two norms, which introduces roughly another $O(\sqrt{n})$-factor.
Previous work on data reduction methods for generalized linear models that work for different $\ell_p$ losses were either based on sampling \citep{MunteanuOP22} or worked only for symmetric functions such as norms \citep{ClarksonW15}. However, an oblivious sketch for our loss function requires preserving the signs of elements which is not possible with previous sketching methods. Relying on the \texttt{CountMin}-sketch as in \citep{MunteanuOW21} thus seems necessary. For this choice we show that an additional $\Theta(\sqrt{n})$-factor is unavoidable and hereby we corroborate the tightness of {our analysis}.

\section{Preliminaries and Main Results}\label{sec:prelim}
For $\ell_1$ regression, we consider as inputs a data matrix $X\in \mathbb{R}^{n\times d}$ and a target vector $Y\in\mathbb{R}^n$. The task is to find $\beta \in \operatorname{argmin}_{\beta\in\mathbb{R}^d} \|X\beta-Y\|_1$. {We note that up to constants, this corresponds to minimizing the negative log-likelihood of a standard linear model $Y=X\beta+\eta$ with a Laplace noise distribution $\eta_i\sim L(0,1)$ for all $i\in [n]$.} Our goal will be to design an oblivious linear sketching matrix $S$ such that the sketch $X'=S[X,Y]$ 
is significantly reduced in its number of rows and solving the compressed $\ell_1$ regression problem in the sketch space yields an $O(1)$ approximation to the same problem on the original large data.

Up to slight modifications, this sketch will also allow us to approximate logistic regression within a constant factor. For logistic regression, assume that we are given a data set $Z=\{z_1, \dots, z_n\}$ with $z_i \in \mathbb{R}^d$ for all $i \in [n]$, together with a set of labels $Y=\{y_1, \dots, y_n \}$ with $y_i \in \{-1, 1\} $ for all $i \in [n]$.
In logistic regression the negative log-likelihood \citep{McCullaghN89} is of the form
\begin{align*}\textstyle
    \mathcal{L}(\beta | Z, Y)= \sum_{i=1}^n \ln(1+\exp(-y_iz_i \beta)) ,
\end{align*}
which, from a learning and optimization perspective, is the objective function that we would like to minimize.
For $r \in \mathbb{R}$ we set $\ell(r)=\ln(1+\exp(r))$ to simplify notation. Then we have that $\mathcal{L}(\beta | Z, Y)= \sum_{i=1}^n \ell(-y_i z_i \beta)$.
We also include a variance-based regularization as proposed in \citep{MaurerP09,DuchiN19,Yanetal20} to decrease the generalization error.
We view our data set as $n$ realizations of a random variable $(z,y)$, where each $(z_i,y_i)$ is drawn i.i.d. from an unknown distribution $\mathcal D$.
Then the expected value of the negative log-likelihood (on the empirical sample) for any fixed $\beta$ equals $\mathbb{E} (\ell(-y z \beta))=\frac{1}{n} \mathcal{L}(\beta | Z, Y)$.
The variance is given by
\begin{align*}\textstyle
    \Var{(\ell(-y z \beta))} 
    = \mathbb{E} (\ell(-y z \beta)^2)-\mathbb{E} (\ell(-y z \beta))^2
    = \frac{1}{n} \sum_{i=1}^n \ell(-y_i z_i \beta)^2 - \left(\frac{1}{n}\sum_{i=1}^n \ell(-y_i z_i \beta) \right)^2
\end{align*}
We also introduce a regularization hyperparameter $\lambda \in \mathbb{R}_{\geq 0}$.
Then our objective is to minimize 
\begin{align*}
    \mathbb{E} (\ell(-y z \beta))+\frac \lambda 2 \Var{(\ell(-y z \beta))}  .
\end{align*}
As $z_i$ and $y_i$ always appear together, we set $x_i=-y_i z_i$.
Further we set $X \in \mathbb{R}^{n \times d}$ to be the matrix with row vectors $x_i$ for $i \in [n]$.
For technical reasons we include a weight vector $w \in \mathbb{R}_{\geq 0}^n$ into the objective. Then our goal is to find $\beta \in \mathbb{R}^d$ minimizing
\begin{align*}\textstyle
f_w(X\beta)=\frac{1}{n}\sum_{i=1}^n w_i\ell(x_i \beta)+\frac{ \lambda}{2n}\sum_{i=1}^n w_i\ell(x_i \beta)^2-\frac{\lambda}{2}\left( \frac{1}{n}\sum_{i=1}^n w_i\ell(x_i \beta) \right)^2.
\end{align*}
The unweighted case corresponds to choosing $w$ to be the vector containing only $1$'s, in which case we set $f(X\beta)=f_w(X\beta)$.

We also note that $f(X\beta)\geq \frac{1}{n}\sum_{i=1}^n \ell(x_i \beta)$ since the variance term is non-negative.
Next, observe that 
$\min_{\beta \in \mathbb{R}^d}f(X\beta)\leq f(0)=\ell(0)=\ln(2).$
In our analysis we investigate functions $\ell(r)$
and $\ell(r)^2$.
Further we split $f$ into three functions $f_1(X\beta)=\frac{1}{n}\sum_{i=1}^n \ell(x_i \beta)$, $f_2(X\beta)=\frac{ \lambda}{2n}\sum_{i=1}^n \ell(x_i \beta)^2$, and $f_3(X\beta)=\frac{\lambda}{2}\left( \frac{1}{n}\sum_{i=1}^n \ell(x_i \beta) \right)^2=\frac{\lambda}{2}f_1(X\beta)^2$.

In contrast to the $\ell_1$ regression problem, a data reduction for $f$ or even $f_1$ where the sketch size is $r \ll n$ cannot be obtained in general. There are examples where no sketch of size $r=o(n/\log n)$ exists, even for an arbitrarily \emph{large} but finite error bound \citep{MunteanuSSW18}. If we require the sketch to be a subset of the input, the bound can be strengthened to $\Omega(n)$ \citep{TolochinskyJF22}. Those impossibility results rely on the monotonicity of the loss function and thus extend to the function $f$ studied in this paper.
To get around these strong limitations, \citet{MunteanuSSW18} introduced a parameter $\mu$ as a natural notion for parameterizing the complexity of compressing the input matrix $X$ for logistic regression. It was recently adapted for $p$-generalized probit regression \citep{MunteanuOP22}. We work with a similar generalization given in the following definition.

\begin{mydef}\label{def:mu_complex}
	Let $X \in \mathbb{R}^{n \times d }$ be any matrix and let $p \in [1, \infty)$. We define
	\[ \mu_p(X)=\sup_{\beta \in \mathbb{R}^d\setminus\{0\}} \frac{\sum_{x_i\beta >0}|x_i\beta|^p}{\sum_{x_i\beta <0}|x_i\beta|^p}. \]
	We say that $X$ is $\mu$-complex if $\max\{ \mu_1(X), \mu_2(X)\}\leq \mu$.
\end{mydef}

Our goal is to construct a slightly relaxed version of a sketch that suffices to obtain a good approximation by optimizing in the sketch space: 

\begin{mydef}\label{def:weaksketch}
Given a dataset $(X, w)$, a subset $V \subset \mathbb{R}^d$, $a>1$ and $\varepsilon, \delta>0$.
A weak weighted $(V, a, \varepsilon)$-sketch $C=(X', w')$ for $f$ is a matrix $X' \in \mathbb{R}^{r \times d}$ together with a weight vector $w' \in \mathbb{R}^{r}_{> 0}$ such that 
it holds simultaneously that: For all $\beta \in V$ we have
\begin{align*}
 & f_{w'}(X'\beta) \geq (1-\varepsilon) f_w(X\beta)
\end{align*}
and for $\beta^* \in V$ minimizing $ f_w(X\beta)$ it holds that
\[  f_{w'}(X'\beta^*) \leq a f_w(X\beta^*). \] 
Further for any $\beta \in \mathbb{R}^d\setminus V$ it holds that
\begin{align*}
	f_{w'}(X'\beta) > \min_{\beta \in V}f_{w'}(X'\beta) .
\end{align*}
\end{mydef}
We note that a sketch satisfying the definition for $V=\mathbb{R}^d$ is known in the literature as a \emph{lopsided} embedding \citep{SohlerW11,ClarksonW15focs,FengKW21}.
We denote by $\nnz(X) $ the number of non-zero entries of $X$.
Our main results are the following:

For $\ell_1$ regression, where the objective is $\|X\beta-Y\|_1$, we have

\begin{thm}\label{thm:l1-loss}
Let $X \in \mathbb{R}^{n \times d}$ and let $Y \in \mathbb{R}^{n}$. Let $\varepsilon, \delta>0$ and let $a>1$.
Then there is a distribution over sketching matrices $S\in \mathbb{R}^{r \times n}$ and a corresponding weight vector $w \in \mathbb{R}^r$, for which $X'=S[X,Y]$ can be computed in $T$ time in a single pass over a turnstile data stream such that $(X', w)$ is a weak weighted $(\mathbb{R}^d, \alpha, \varepsilon)$-sketch for $\ell_1$-regression with failure probability at most $P$, where
\begin{enumerate}
    \item $r=O(d^{1+c}\ln(n)^{3+5c})$ for any constant $1\geq c>0$, $T=O(d\ln(n) \nnz(X))$, and $\alpha=1+\frac{1}{c}$ and $P$ are constant, 
    \item $r = O(\frac{d^4 \ln(n)^5 }{\delta\varepsilon^7 })+\frac{32 d  \ln(n)^3}{  \varepsilon^5}\cdot (\frac{64 d  \ln(n )^5 }{\varepsilon^6 \delta  })^{1 + \varepsilon^{-1}}$, $T=O(\nnz(X))$, $\alpha=(1+a\varepsilon)$, and $P=\delta+\frac{1}{a}$.
\end{enumerate}
\end{thm}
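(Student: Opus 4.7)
The plan is to build $S \in \mathbb{R}^{r \times n}$ as an $L = \Theta(\log n)$-level sketch. At level $h$, each of the $n$ rows of $[X,Y]$ is independently subsampled with probability $2^{-h}$, and each surviving row is hashed into $k$ of $B$ buckets at that level via $k$ independent hash functions; collisions within a bucket are resolved by summing, as in a \texttt{CountMin}-sketch densified in the spirit of \citet{Cohen16}. The reweighting is $w'_h = 2^h/k$ per bucket at level $h$. This gives total sketch dimension $r = L \cdot B$, preserves the $\{0,1\}$-structure of the entries, and has $L \cdot k$ non-zeros per column, so that $X' = S[X,Y]$ is built in a single turnstile pass in $O(d \log(n)\cdot \nnz(X))$ time as required by Part 1.

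\textbf{Contraction via weight classes and a net.} For a fixed $\beta \in \mathbb{R}^d$ set $z = [X,Y]\binom{\beta}{-1}$ so that $\|z\|_1 = \|X\beta - Y\|_1$, and partition the coordinates into dyadic weight classes $W_q = \{i : 2^{-q-1}\|z\|_\infty < |z_i| \le 2^{-q}\|z\|_\infty\}$. For each $q$ there is a distinguished level $h(q)$ where the expected number of survivors from $W_q$ is $\Theta(B/\mathrm{polylog}(n))$, so that with good probability each such survivor lands in at least one bucket containing no other large coordinate. Densifying each column to roughly $O(\varepsilon^{-3} d \log n)$ non-zeros by tuning $k$ drives the ``perfect isolation'' failure probability below $\exp(-\Omega(d\log n))$, enough to union bound over a standard $\gamma$-net in the $(d{+}1)$-dimensional column-space of $[X,Y]$ and obtain the $(1-\varepsilon)$ contraction for every $\beta$ at once; perturbation-and-rescaling transfers the bound from the net to all of $\mathbb{R}^d$.

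\textbf{Dilation via the reverse-perspective intervals.} For the minimizer $\beta^*$ I would adopt the new ``reverse'' viewpoint sketched in the techniques subsection: for each level $h$ I define nested intervals of weight classes $Q_h = [q_2(h),q_3(h)] \subseteq Q_h' = [q_1(h),q_4(h)]$, where $Q_h'$ collects those $q$ that can contribute at level $h$ at all (the others are either too heavy, triggering the \texttt{CountMin} cutoff, or too light, lying below the noise floor of $B$ buckets) and $Q_h$ collects those additionally estimated well. Choosing $B = \Theta(d^{1+c}\mathrm{polylog}(n))$ yields $|Q_h'|/|Q_h| \le 1 + 1/c$ and, crucially, each $q$ appears in at most $1 + 1/c$ consecutive levels in expectation. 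Summing over $h$ then gives $\mathbb{E}[f_{w'}(X'\beta^*)] \le (1+1/c)\|z\|_1$, so Markov at threshold $\alpha = 1+1/c$ delivers Part 1; the off-subspace condition in Definition~\ref{def:weaksketch} is vacuous since $V = \mathbb{R}^d$.

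\textbf{Sharpening to $(1+\varepsilon)$ and main obstacle.} For Part 2 I would raise the per-level bucket count to the $(1{+}\varepsilon^{-1})$-th power, which refines the interval argument to $|Q_h'|/|Q_h| \le 1 + \varepsilon$ and places each weight class into at most $1+\varepsilon$ levels in expectation, yielding expected dilation $1+\varepsilon$; a second Markov step at scale $a$ upgrades this to $(1 + a\varepsilon)$ with failure probability $\delta + 1/a$, which is the source of the $(d\log(n)/\varepsilon)^{O(1/\varepsilon)}$ term. The main obstacle is the contraction step: perfect isolation of heavy hitters must hold simultaneously across a net of cardinality $\exp(\Omega(d))$, and it is precisely this union bound that forces $B$ to scale like $d^{1+c}$ rather than $d$; beating this exponent without changing the isolation strategy appears to require either a sharper net (e.g.\ via chaining) or a direct analysis on the column-space of $[X,Y]$, neither of which I see how to make work within the present framework.
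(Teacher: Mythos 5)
Your high-level outline captures the spirit of the paper's construction (multi-level subsampling + \texttt{CountMin}, weight classes, net, reverse-perspective intervals), but there are three substantive gaps that would need to be filled before this becomes a proof of Theorem~\ref{thm:l1-loss}.

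\textbf{Missing: the Ky-Fan argument for $\ell_1$ dilation.} You implicitly transfer the logistic-regression dilation analysis to $\ell_1$, but the mechanism that kills the contribution of very light weight classes is different in the two settings. For logistic loss the estimator only credits buckets with $G(B) > 0$, and the $\mu$-complexity combined with the assumption $f(X\beta^*) < (1-2\varepsilon)\ln 2$ (the paper's Lemma~\ref{lem:dil1}) forces the expected sum of a ``light-only'' bucket to be negative, so those buckets vanish from the estimate. For $\ell_1$ the estimator sums $|G(B)|$ over \emph{all} buckets, so there is no cancellation to exploit: a bucket full of small entries may have a positive $|G(B)|$ that contributes noise. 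The paper therefore redefines $q_h(3)$ and uses a separate Ky-Fan-style bound (Lemma~\ref{lem:l1dil1} and the unnamed lemma after it), restricting attention to the $O(\varepsilon N/h_m)$ largest $|G(B)|$ per level and separately bounding the total leakage from small entries over all levels by $\varepsilon$. Your proposal says small weight classes are simply ``too light, lying below the noise floor of $B$ buckets,'' which is not by itself an argument for $\ell_1$; without the Ky-Fan step the dilation bound fails.

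\textbf{Missing: the random shift of $N_0$.} For Part~2 you claim that raising the per-level bucket count to the $(1+\varepsilon^{-1})$-th power alone places each weight class in ``at most $1+\varepsilon$ levels in expectation.'' This is not correct. Enlarging $N$ (Lemma~\ref{lem:l1constdil3}) only guarantees that at most a $1/k$ \emph{fraction} of weight-class indices lie in two of the intervals $Q_h'$; it does not pin down \emph{which} indices those are, so in the worst case a particular weight class may still contribute on two levels with probability~$1$, giving dilation~$2$. The paper resolves this with an additional randomization: $N_0$ is drawn uniformly from a geometric sequence (Lemma~\ref{lem:l1rndshift}), which randomizes the locations of the interval boundaries and makes the probability that any \emph{fixed} $q$ straddles two intervals at most $1/k$. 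Only then does the expected contribution drop to $(1+1/k)\|W_q\|_1$. This shift is what the paper flags as ``another algorithmic change \ldots crucial for obtaining a $(1+\varepsilon)$-approximation,'' and your proposal omits it.

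\textbf{Construction mismatch affecting the running time.} You hash each surviving row to $k$ buckets at \emph{every} level. The paper densifies only level~$0$ (the level with sampling probability~$1$, which must isolate all heavy hitters); levels $h>0$ hash each sampled row to a single bucket, because subsampling has already thinned out the heavy hitters. This matters for the time bounds: in Part~1 the $O(d\ln(n)\nnz(X))$ time is paid entirely at level~$0$ ($s = \tilde O(d)$ buckets there, $O(1)$ in expectation elsewhere). In Part~2 the theorem demands $T = O(\nnz(X))$, which is incompatible with densifying every level; the paper drops densification altogether for Part~2 and instead enlarges $N_0$ to the larger value $N_0'$ to isolate heavy hitters (this is where the first summand $O(d^4\ln(n)^5/(\delta\varepsilon^7))$ in $r$ comes from). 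Your construction as stated cannot simultaneously achieve Part~2's running time and isolation guarantee.

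Finally, a minor but real imprecision: applying Markov ``at threshold $\alpha = 1+1/c$'' when the mean is $\le (1+1/c)\|z\|_1$ gives no nontrivial failure probability; the paper's actual statement is $f_{w'}(X'\beta^*) \le a(1+1/c)f_w(X\beta^*)$ with failure probability $1/a$, and what the theorem asserts is only that $\alpha$ is a constant.
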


For logistic regression where the objective function is only $f_1(X\beta)$, we have

\begin{thm}\label{thm:loglos}
Let $1\geq c>0$ be any constant. Let $X \in \mathbb{R}^{n \times d}$ be a $\mu$-complex matrix for bounded $\mu \in O((d\log^{3}(n))^c)$. Let $\varepsilon, \delta>0$ and let $a>1$.
Then there is a distribution over sketching matrices $S\in \mathbb{R}^{r \times n}$ and a corresponding weight vector $w \in \mathbb{R}^r$, for which $X'=SX$ can be computed in $T$ time in a single pass over a turnstile data stream such that $(X', w)$ is a weak weighted $(\mathbb{R}^d, \alpha, \varepsilon)$-sketch for $f_1$ with failure probability at most $P$, where
\begin{enumerate}
    \item $r=O(\mu d^{1+c}\ln(n)^{2+4c})$, $T=O(\mu d\ln(n)\nnz(X))$, and $\alpha=1+\frac{1}{c}$ and $P$ are constant,
    \item $r = O(\frac{d^4 \ln(n)^5\mu^2  }{\delta\varepsilon^7 })+\frac{32 d \mu \ln(n)^2}{  \varepsilon^5}\cdot (\frac{64 d \ln(n )^4 }{\varepsilon^7 \delta  })^{1 + \varepsilon^{-1}}$, $T=O(\nnz(X))$, $\alpha=(1+a\varepsilon)$, and $P=\delta+\frac{1}{a}$.
\end{enumerate}
\end{thm}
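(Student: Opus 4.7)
The plan is to follow the oblivious multi-level sketching framework of Munteanu--Omlor--Woodruff (2021), but with two key upgrades spelled out in the techniques section: (i) replacing single-bucket \texttt{CountMin} hashing by multi-hashing à la Cohen, and (ii) densifying the sketch so heavy hitters are almost always isolated. Concretely, I would build a sketch $S$ consisting of $H=O(\log n)$ levels; level $h$ first subsamples rows independently with probability $2^{-h}$ and then applies a \texttt{CountMin}-type map that hashes each surviving row to $s=O(\varepsilon^{-1}\log(\mu d/\varepsilon))$ out of $B=\tilde O(\mu d^{1+c})$ buckets, summing the colliding rows (rather than taking sign combinations). The rescaled weights $w_b=2^h$ in level $h$ are chosen so that, for any fixed $\beta$, the weighted $\ell_1$-mass of the sketch is an unbiased estimator of $\|z^+\|_1$ with $z=X\beta$, which by $\mu$-complexity is within a constant factor of $f_1(X\beta)$ (up to a harmless $\ln 2$ term handled as in Munteanu et al.).

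For the contraction bound $f_{w'}(X'\beta)\geq (1-\varepsilon)f_w(X\beta)$, I would partition the coordinates of $z$ into weight classes $W_q=\{i : 2^{-q-1}<|z_i|\leq 2^{-q}\}$. For each $q$, pick the level $h(q)$ whose sampling rate is tuned so that $|W_q|\cdot 2^{-h(q)}\in\Theta(B/\poly\log)$, ensuring that in level $h(q)$ the class is both well-represented (by Bernstein/Chernoff on the subsample) and has its top $O(\mu)$ elements perfectly isolated from each other with good probability (using that multi-hashing to $s$ buckets inside $B$ buckets needs only $B=\tilde O(\mu d)$ for isolation instead of $\mu^2$). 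The contribution of non-heavy hitters in each bucket is absorbed by summing; the dominant error is a Bernoulli/Bernstein concentration on each weight class, losing a $(1-\varepsilon)$ factor. Finally, to union-bound over all $\beta\in\mathbb{R}^d$ I would use a standard net argument on the unit $\ell_1$-sphere of the column span: the net has size $\exp(O(d))$, which is where the extra $d^{c}$ blow-up in $B$ enters (so that the per-$\beta$ failure probability beats $\exp(-d)$).

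For the dilation bound, which historically was the bottleneck, I would implement the new ``reversed-perspective'' analysis sketched in the techniques section: fix a level $h$ and determine an interval $Q_h'=[q_1,q_4]$ of weight classes that can contribute nontrivial mass at that level, inside which an inner interval $Q_h=[q_2,q_3]$ is accurately estimated. Choosing $B=\tilde O(\mu d^{1+c})$ so that $|Q_h'|/|Q_h|$ is bounded by $1+1/c$, I would show that in expectation each $q$ lies in $Q_h'$ for at most $1+1/c$ consecutive levels, yielding $\mathbb{E}[f_{w'}(X'\beta^*)]\leq (1+1/c)f_w(X\beta^*)$ and a Markov bound with failure probability $1/a$. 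The $(1+\varepsilon)$-approximation of part (2) follows by iterating this argument $k=O(1/\varepsilon)$ times, raising the number of buckets to a $k$-th power, which contributes the $(\mu d\log(n)/\varepsilon)^{O(1/\varepsilon)}$ additive term in $r$.

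The main obstacle I expect is making the contraction and dilation arguments compatible at the same bucket count $B$: contraction wants enough buckets to isolate $\mu$ heavy hitters per weight class with probability $1-\exp(-\Omega(d))$ for the net, while dilation wants $B$ calibrated so that the interval $Q_h'$ is barely larger than $Q_h$. Both demands push the bucket count to $\tilde O(\mu d^{1+c})$, but the polylog-in-$n$ exponents $2+4c$ in part (1) and the exact $\varepsilon$-powers in part (2) must be tracked carefully through the multi-hashing analysis and the net discretization, matching the counts given in \citep{Cohen16} for \texttt{Count}-sketch and adapting them to \texttt{CountMin}. Once those counts are pinned down, the running time bound $T=O(\mu d\log(n)\,\nnz(X))$ (resp.\ $O(\nnz(X))$ in part (2)) follows from the sparsity of $S$, and the non-$V$ condition in Definition~\ref{def:weaksketch} is vacuous because we take $V=\mathbb{R}^d$.
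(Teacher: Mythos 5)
Your proposal follows the paper's approach closely: the same multi-level subsampled \texttt{CountMin} sketch, the same weight-class decomposition and Bernstein concentration for contraction, the same net argument, and the same ``reversed perspective'' interval analysis ($Q_h\subset Q_h'$) with a Markov bound for dilation. Part~1 of the theorem is essentially covered, though note two small slips: the paper densifies \emph{only} level~0 (hashing each row to $s$ buckets there, single bucket at levels $h>0$), whereas you propose multi-hashing at every level, which is not wrong but inflates the running time beyond the $T=O(\mu d\ln(n)\,\nnz(X))$ claimed; and the sketch is not an unbiased estimator of $\|z^+\|_1$ (the $\max\{G(B),0\}$ truncation and the sign-dependent collision cancellations make it biased; the whole interval analysis is precisely about controlling that bias).

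The genuine gap is in part~2. You write that the $(1+\varepsilon)$-approximation ``follows by iterating this argument $k=O(1/\varepsilon)$ times, raising the number of buckets to a $k$-th power.'' Raising the bucket count so that the inner interval $Q_h$ is a $k$-fold multiple of the ``slack'' $Q_h'\setminus Q_h$ does reduce the \emph{fraction} of indices $q$ that fall in two consecutive relevant intervals to about $1/k$, but those indices are \emph{fixed} by the deterministic choice of $N_0,N_1,\ldots$; an adversarial $z$ can put all of $\|z^+\|_1$ on exactly those weight classes and still force a dilation of $2$, not $1+1/k$. The missing ingredient is the \emph{random shift}: the paper chooses $N_0$ uniformly from $\{N_0'\cdot N_2'^{\,i}\}_{i=1}^{k}$ (Lemma~\ref{lem:rndshift}), which randomizes the alignment of the block boundaries so that any fixed $q$ lands in a double-coverage zone with probability at most $1/k$; only then does the \emph{expected} contribution of \emph{every} weight class drop to $(1+1/k)\|W_q^+\|_1$, from which Markov yields $\alpha=1+a\varepsilon$ with failure probability $1/a$. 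Without this randomization of the level-$0$ size (flagged in the paper as ``crucial for obtaining a $(1+\varepsilon)$-approximation''), your argument plateaus at a $2$-approximation and does not give part~2.
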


Note that setting $a=\delta^{-1}$ and substituting $\varepsilon$ with $\varepsilon\delta $ in the second item yields a $(1+\varepsilon)$ approximation with probability at least $1-2\delta$.

For the variance-based regularization, where we consider the full objective function $f(X\beta)$, we have

\begin{thm}\label{thm:var}
Let $X \in \mathbb{R}^{n \times d}$ be a $\mu$-complex matrix for bounded $\mu< n $. Let $\varepsilon, \delta>0$, let $a>1$ and set  $V=\{X\beta ~|~ f_1(X\beta) \leq \ln(2)(1-\varepsilon)\}$.
Then there is a distribution over sketching matrices $S\in \mathbb{R}^{r \times n}$ and a corresponding weight vector $w \in \mathbb{R}^r$, for which $X'=SX$ can be computed in $T$ time in a single pass over a turnstile data stream such that $(X', w)$ is a weak weighted $(V, \alpha, \varepsilon)$-sketch for $f$ with failure probability at most $P$, where
\begin{enumerate}
    \item[$\bullet$] $r = O(\frac{ n^{0.5+c} \mu d^2 \ln^3(n)}{ \varepsilon^5}\cdot \max\{d, \ln(n), \varepsilon^{-1}, \delta^{-1}, \mu \} + \frac{d^5  \mu^2\ln(n)^5 \sqrt{n} }{\delta\varepsilon^7 })$, for arbitrary constant $1\geq c>0$, $T=O(\nnz(X))$, $\alpha=1+\frac{a}{c}$, and $P=\delta+\frac{1}{a}$.
\end{enumerate}
\end{thm}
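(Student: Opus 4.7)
The plan is to reuse the multi-level \texttt{CountMin}-style oblivious sketch of Theorem~\ref{thm:loglos} with a $\sqrt{n}$-factor enlargement of the per-level bucket count (and a corresponding enlargement of the heavy-hitter buffer), and then to decompose $f=f_1+f_2-f_3$ and control each piece separately, exploiting linearity of the sketch. The base construction and the $O(\log n)$-level weight-class machinery are inherited verbatim from the $f_1$ analysis; the new work is (i) extending the weight-class argument from $\ell_1$- to $\ell_2^2$-geometry for $f_2$, (ii) propagating the $f_1$-preservation into $f_3$, and (iii) verifying the non-$V$ separation clause of Definition~\ref{def:weaksketch}.

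First I would invoke Theorem~\ref{thm:loglos} to obtain $(1-\varepsilon)$-contraction for $f_1$ on all of $\mathbb{R}^d$ and $\alpha$-dilation at the minimizer. Because $f_3=\tfrac{\lambda}{2}f_1^2$, any $(1\pm\varepsilon)$-preservation of $f_1$ automatically yields a $(1\pm 2\varepsilon)$-preservation of $f_3$. The restriction $V=\{\beta:f_1(X\beta)\leq\ln(2)(1-\varepsilon)\}$ is precisely what is needed so that the error from subtracting $f_3$ stays multiplicative in $f$: on $V$ one has $f_3\leq\tfrac{\lambda}{2}\ln(2)\cdot f_1\leq\tfrac{\lambda}{2}\ln(2)\cdot f$, so an $\varepsilon$-distortion of $f_3$ contributes only an $O(\lambda\varepsilon)\cdot f$ error, which I absorb by rescaling $\varepsilon$.

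The genuinely new technical content is the handling of $f_2=\tfrac{\lambda}{2n}\sum_i\ell(x_i\beta)^2$, whose $\ell_2^2$-geometry is not covered by the $f_1$ analysis. I would redefine the weight classes of the level-by-level argument as $W_q=\{i:2^{-q-1}<\ell(x_i\beta)^2\leq 2^{-q}\}$, stratified by \emph{squared} loss, and invoke the $\mu_2$-bound from Definition~\ref{def:mu_complex} to balance positive and negative squared contributions to $X\beta$ exactly as $\mu_1$ did for $f_1$. The $\sqrt{n}$ overhead in the row count stems from the heavy/light threshold natural to $\ell_2^2$: indices with $\ell(x_i\beta)^2\geq\|\ell(X\beta)\|_2^2/\sqrt{n}$ number at most $\sqrt{n}$ and must be isolated in their own buckets, while the light remainder satisfies $\sum_{\text{light}}\ell(x_i\beta)^2\leq(\max_{\text{light}}\ell(x_i\beta))\cdot\|\ell(X\beta)\|_1$ by Cauchy--Schwarz, a quantity already controlled by the $\ell_1$-guarantees inherited from the $f_1$ analysis. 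Here the constant $c$ enters both the number of heavy buckets (giving $n^{0.5+c}$) and the dilation $\alpha=1+a/c$, reflecting the same trade-off as in Theorem~\ref{thm:loglos}.

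Finally, for the non-$V$ separation: for $\beta\notin V$, $f_1(X\beta)>\ln(2)(1-\varepsilon)$, so contraction of $f_1$ together with $f_{w'}\geq f_{1,w'}$ (non-negativity of the sketched variance, which follows from weights being non-negative and a Cauchy--Schwarz check on the sketched empirical measure) yields $f_{w'}(X'\beta)\geq(1-\varepsilon)^2\ln(2)$, whereas any in-$V$ comparator $\beta'$ with $f_1(X\beta')$ bounded away from $\ln(2)(1-\varepsilon)$ sketches to strictly less via the dilation bound, giving the required strict inequality. The hard part will be the bookkeeping for $f_2$: the squared-loss weight classes must interact correctly with the original linear weight classes across all $O(\log n)$ sketch levels, and the tradeoff between $\alpha$ and the $n^{0.5+c}$ bucket count arises from balancing these two decompositions. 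The $\sqrt{n}$ lower-bound example announced in the introduction corroborates that this factor is unavoidable for \texttt{CountMin}-based sketches, so I do not expect to be able to improve beyond it within this framework.
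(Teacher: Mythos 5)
Your high-level decomposition $f=f_1+f_2-f_3$, the reduction of $f_3=\tfrac{\lambda}{2}f_1^2$ to the $f_1$-guarantee, and the intuition that the $\sqrt{n}$ overhead comes from the $\ell_1/\ell_2$ mismatch are all in line with the paper. However, the central technical step for $f_2$ is wrong in a way that matters. You propose stratifying by the squared \emph{loss}, $W_q=\{i:2^{-q-1}<\ell(x_i\beta)^2\leq 2^{-q}\}$, and applying the $\mu_2$-bound to these classes. A linear sketch only controls sums of the raw residuals $z_i=x_i\beta$ inside a bucket, not sums of $\ell(z_i)$ or $\ell(z_i)^2$; moreover $\mu_2$ in Definition~\ref{def:mu_complex} is defined on $|x_i\beta|^2$, not on $\ell(x_i\beta)^2$. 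Stratifying by $\ell(z_i)^2$ also collapses: for all small or negative $z_i$, $\ell(z_i)^2$ is pinned near $\ln(2)^2$, so nearly all coordinates land in one or two weight classes and the level-by-level machinery gives no information. The paper instead first applies Lemma~\ref{lem:g-prop} to peel $nf_2$ into the homogeneous piece $\sum_{z_i>0} z_i^2$ plus two uniformly bounded residuals (handled by uniform sampling via Lemma~\ref{lem:smallpart}/\ref{lem:smallparthelp}), and then defines weight classes on the normalized squares $z_i^2/H^+(z)$ of the \emph{residual vector} itself ($W_q^2$ in Section~\ref{sec:variance}). Your argument misses this decomposition entirely.

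Your explanation of the $\sqrt{n}$ factor is also not the paper's and doesn't close on its own. You count the heavy indices with $\ell(x_i\beta)^2\geq\|\ell(X\beta)\|_2^2/\sqrt{n}$ and bound the light remainder by $(\max_{\text{light}}\ell)\cdot\|\ell(X\beta)\|_1$ (a H\"older, not Cauchy--Schwarz, bound), then claim this is ``already controlled by the $\ell_1$-guarantees.'' But the $f_1$ analysis preserves an $\ell_1$-type sum, not the max of the light elements nor the distribution of light contributions across buckets; the sketched light contribution to $f_2$ can still drift without further argument. In the paper the $\sqrt{n}$ enters through Lemma~\ref{lem:l2help}, which uses $\|z\|_1\leq\sqrt{n}\|z\|_2$ to relate the squared weight-class index to the linear one ($q'\leq (q-1)/2+\ln(n)/2$), and then propagates through the bucket thresholds $\gamma_2,\gamma_4$ and the level spacing (Lemmas~\ref{lem:varconprop}--\ref{lem:constdil2'}); it also enters the heavy-hitter count via $\ell_2$-leverage scores (Lemma~\ref{Lem3.5'}) and $\gamma_4\propto n^{-1/2}$. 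You would need these conversions (and the $\ell_2$-leverage replacement of Lemma~\ref{Lem3.5new}) to make the $f_2$ contraction/dilation bounds actually go through.
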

We note that for generality of our results we specify a tradeoff between our $d^{1+c}$ dependence and an arbitrarily large constant approximation error $\alpha=1+1/c$. We stress that specific parameterizations yield strictly improved results over previous work. For instance we improve the $\geq 8$-approximation of \citep{MunteanuOW21} within $\tilde O(\mu^7 d^5)$ to a $2$-approximation within $\tilde O(\mu d^2)$ by choosing $c=1$.
We further improve several lopsided $\ell_1 \rightarrow \ell_1$ embedding results to a factor of $2$ where previous approximations gave only $O(d\log d)$ \citep{SohlerW11}, $O(\log d)$ \citep{WHomework21}, or $\geq 8$ \citep{ClarksonW15} or gave only non-convex estimators in the sketch space \citep{BackursIRW16} along with larger superlinear dependencies on $d$.\\

\paragraph{Technical description of the sketch} Our contributions lie mainly in the improved and refined theoretical analyses. The sketching matrices of Theorems 1-3 are the same as in \citep{MunteanuOW21} up to small but important algorithmic modifications specified in the textual description below, and in pseudo-code, see Algorithm \ref{alg:main} in the appendix. The sketching matrix consists of $O(\log n)$ levels. In each level we take a subsample of all rows $i\in [n]$ at a different rate and hash the sampled items uniformly to a small number of buckets. All items that are mapped to the same bucket are summed up. This corresponds to a \texttt{CountMin} sketch \citep{CormodeM05} applied to the subsample taken at each level. More specifically, we will use the following parameters:
\begin{itemize}
    \item $h_m$: the number of levels,
    \item $N_h$: the number of buckets at level $h$,
    \item $p_h$: the probability that any element $x_i$ is sampled at level $h$.
\end{itemize}
As we read the input, we sample each element $x_i$ for each level $h \leq h_m$ with probability $p_h$. 
The sampling probabilities are exponentially decreasing, i.e., $p_{h}\propto 1 / b^h$ for some $b \in \mathbb{R}$ with $b > 1$. The weight of any bucket at level $h$ is set to ${1}/{p_h}$. At level $h_m$, we have $p_{h_m}\propto \frac{1}{n}$. It thus corresponds to a small uniform subsample and the number of buckets is equal to the number of rows that are sampled, i.e., $N_u := N_{h_m} \approx n p_{h_m} =: n p_u$.
At level $0$ we sample all rows, i.e., $p_0=1$ and the number of buckets is either the same as for the levels $h \in (0, h_m)$ or less. Consequently level $0$ is a standard \texttt{CountMin} sketch of the entire data. All levels $h \in (0, h_m)$ have the same number of buckets $N_h = N$. For obtaining subquadratic dependence on $\mu,d$ in item $1$ of Theorems $1$ and $2$, the sketch at level $0$ is densified, which means that each element is hashed to a number of $s>1$ buckets. The idea of the sketching algorithm is that for each fixed $\beta \in \mathbb{R}^d$ we partition the coordinates of $X\beta$ into weight classes depending on their contribution to the objective function. Each level approximates well a certain range of weight classes if their total contribution is large enough. For example the highest level $h_m$ will cover all the elements in small weight classes and the lowest level $0$ will capture the so-called \emph{heavy hitters} that appear rarely but have a significant contribution to the objective. Another algorithmic change is randomizing the size of the sketch at level $0$, which is crucial for obtaining a $(1+\eps)$-approximation.
For the exact details we refer to the analysis and Assumption \ref{ass:mainass}.

\section{High level description of our novel analysis}\label{sec:highlevel}
Several details of the analysis, such as assumptions on the parameters, technical lemmas, and proofs are deferred to the appendix. 
We start by splitting the functions $f_1$ and $f_2$ into multiple parts:
\begin{lem}\label{lem:g-prop}
It holds that $$\textstyle n f_1(X\beta)=\sum_{x_i\beta > 0} |x_i\beta| ~+~ \sum_{i=1}^n \ell(-|x_i\beta |)$$ and similarly we have that $$\textstyle n f_2(X\beta)=\sum_{x_i\beta > 0} |x_i\beta|^2 ~+~ 2\sum_{x_i\beta > 0} \ell(-|x_i\beta |)\cdot| x_i\beta| ~+~ \sum_{i=1}^n \ell(-|x_i\beta |)^2.$$
\end{lem}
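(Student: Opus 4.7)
Both identities follow from a single pointwise fact about the softplus $\ell(r) = \ln(1+\exp(r))$: for every $r \in \mathbb{R}$,
\[ \ell(r) \;=\; \max(r,0) \;+\; \ell(-|r|). \]
The plan is to prove this by a two-case split. If $r \ge 0$, factor $e^r$ inside the logarithm to get $\ln(1+e^r) = \ln(e^r(1+e^{-r})) = r + \ln(1+e^{-r}) = r + \ell(-r) = r + \ell(-|r|)$, and note $\max(r,0) = r$. If $r < 0$ then $r = -|r|$, so $\ell(r) = \ell(-|r|)$, while $\max(r,0) = 0$. The identity therefore holds in both cases.

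Given this, the first claim follows by applying the identity coordinatewise to $r = x_i\beta$ and summing over $i$:
\[ \sum_{i=1}^n \ell(x_i\beta) \;=\; \sum_{i=1}^n \max(x_i\beta,0) \;+\; \sum_{i=1}^n \ell(-|x_i\beta|) \;=\; \sum_{x_i\beta > 0} |x_i\beta| \;+\; \sum_{i=1}^n \ell(-|x_i\beta|), \]
because $\max(x_i\beta,0)$ equals $|x_i\beta|$ when $x_i\beta>0$ and is zero otherwise. Since $n f_1(X\beta) = \sum_i \ell(x_i\beta)$ by definition of $f_1$, this is the first identity.

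For the second claim, square the pointwise identity to obtain
\[ \ell(r)^2 \;=\; \max(r,0)^2 \;+\; 2\,\max(r,0)\,\ell(-|r|) \;+\; \ell(-|r|)^2. \]
Summing over $i$ with $r = x_i\beta$, the first two terms restrict to the indices with $x_i\beta > 0$, on which $\max(x_i\beta,0) = |x_i\beta|$, while the third term sums over all $i$. This produces exactly the three-term decomposition in the statement (matching $n f_2(X\beta)$ up to the overall $\lambda/2$ prefactor appearing in the definition of $f_2$).

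I do not expect a real obstacle. The single nontrivial step is the softplus identity $\ell(r) = \max(r,0) + \ell(-|r|)$, a one-line manipulation; the rest is bookkeeping that rewrites an unrestricted sum as a positive-sign contribution plus a remainder whose arguments are always nonpositive, which is exactly the form used downstream in the weight-class analysis.
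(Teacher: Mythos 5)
Your proof is correct and uses essentially the same argument as the paper: factor $e^r$ out of the logarithm to obtain the pointwise identity $\ell(r)=\max(r,0)+\ell(-|r|)$ (the paper writes it as $\ell(r)=\ell(-r)+r$ and then case-splits on the sign of $x_i\beta$), sum over $i$ for the first claim, and square and sum for the second. You were also right to flag that, as literally written, the second displayed identity omits the $\lambda/2$ prefactor from the definition of $f_2$; the paper's own proof quietly establishes the identity for $\sum_i \ell(x_i\beta)^2$ rather than for $n f_2(X\beta)$ itself.
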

This can be used in the following way: if all $x_i\beta$ make only small contributions then uniform sampling performs well.
This is not the case for all parts of $f$ but it holds for some 'small' parts of $f$ that appear in the splitting introduced in Lemma \ref{lem:g-prop}.

Next we deal with the remaining 'large' parts of $f$. We will first analyze the approximation for a single $\beta$.
To this end fix $\beta \in \mathbb{R}^d$ and set $z=X\beta$.
Our goal is to approximate $\Vert z^+\Vert_1 := \sum_{i:z_i>0} z_i$ where $z^+ \in \mathbb{R}_{\geq 0}^n$ is the vector that we get by setting all negative coordinates of $z$ to $0$.
We assume w.l.o.g. that $\Vert z \Vert_1 = 1$.
We can do this since $v \mapsto \Vert v^+ \Vert_1$ is absolutely homogeneous.
In order to prove that $\Vert (S z)^+ \Vert_1$ approximates $\Vert z^+\Vert_1$ well, we define weight classes:
given $q\in \mathbb{N}$ we set $W_q^+=\{ i\in [n] ~|~ z_i\in (2^{-q-1}, 2^{-q}] \}$. Our analysis applies with slight adaptations to $\ell_1$ regression preserving $\|z\|_1$ for the residual vector $z=X\beta-Y$. The analysis is entirely in the appendix due to the page limitations. We give a high level description for preserving $\|z^+\|_1$ needed for logistic loss.\\

\paragraph{Contraction bounds}
We set $q_m=\log_2(\frac{n (\mu+1)}{\varepsilon})=O (\ln(n))$ since $n\geq \max\{ \mu, \varepsilon^{-1}\}$.
We say that $W_q^+$ is important if $\Vert W_q^+ \Vert_1\geq \varepsilon':= \frac{\varepsilon}{\mu q_m}$and set $Q^{*}=\{ q \leq q_m ~|~ W_q^+ \text{ is important } \} .$
The idea is that the remaining weight classes can only have small contributions to $\Vert z^+\Vert_1$, so it suffices to analyze $Q^*$. To prove the contraction bound for $z$, i.e., that $ \Vert (S z)^+ \Vert_1 \geq (1-c\varepsilon)\Vert z^+ \Vert_1$ holds for an absolute constant $c$, it suffices to show that the contributions of important weight classes are preserved.
For a bucket $B$ we set $G(B):=\sum_{j \in B}z_j$ and $G^+(B)= \max\{G(B), 0\}$.
In fact, we show that for each level $h$, there exists an 'inner' interval $Q_h=[q_h(2), q_h(3)]$ such that if $W_q^+$ for $q \in Q_h$ is important, then there exists a subset $W_q^* \subseteq W_q^+$ such that each element of $W_q^*$ is sampled at level $h$ and such that $\sum_{i \in W_q^*}G(B_i)\geq (1 - \varepsilon)\Vert W_q^+ \Vert_1\cdot p_h $, where $B_i$ is the bucket at level $h$ containing $z_i$.
Since the weight of all buckets at level $h$ is equal to $p_h^{-1}$ we have that the contribution of $W_q^* $ is indeed at least $(1 - \varepsilon)\Vert W_q^+ \Vert_1$.
The choice of our parameters
then guarantees that $ \bigcup Q_h=\mathbb{N}$ and thus for any important weight class there is at least one level where it is well represented.
Finally, we construct a net of size $|\mathcal N_k|=\exp(O(d \log(n)))$. We ensure that the contraction bound holds for each fixed net point $z\in \mathcal N_k$ with failure probability at most $ \frac{\delta}{|N_k|}$ which will dominate -- among other parameters -- the size of our sketch. 
By a union bound, the contraction result holds for the entire net with probability at least $1-\delta$. The net is sufficiently fine, such that we can conclude the contraction bound by relating all other points $z=X\beta \in \mathbb{R}^n$ to their closest net point.\\

\paragraph{Dilation bounds}
We will also show that the expected contribution of any weight class is at most $2\Vert W_q^+ \Vert_1$ or even less. To this end we increase
the number of buckets $N$ and apply a random shift at level $0$, i.e., we choose the number of buckets at level $0$ randomly.
We investigate again each level separately and prove that for each level $h$ there exists an 'outer' interval $Q_h'=[q_h(1), q_h(4)]$ such that for any $q \notin Q_h'$ the weight class $W_q$ makes no contribution at level $h$ at all.
More specifically we show that no element of $W_q$ appears at level $h$ for $q < q_h(1)$ and that for any bucket $B$ at level $h$ that contains only elements of $\bigcup_{q > q_h(4)} W_q$ it holds that $G(B)\leq 0$.
Then we show that if $N$ is large enough it holds that for each $q \in \mathbb{N}$ there are at most two levels $h$ such that $q \in Q_h'$ and that the expected contribution of any weight class at any level is bounded by $\Vert W_q^+\Vert_1$.
We conclude that the expected contribution of any weight class is at most $2\Vert W_q^+ \Vert_1$.
Increasing the size of $N$ increases the size of the 'inner' interval $[q_h(2), q_h(3)] =: Q_h \subset Q_h'$ while the size of $Q_h'$ remains (almost) unchanged such that $|Q_h'|/|Q_h|$ approaches $1$.
As a consequence, this also decreases the number of indices $q \in \mathbb{N}$ that appear in two intervals of the form $Q_h'$.
More precisely, we show that for each $k \in \mathbb{N}$ we can increase $N$ in such a way that only a ${1}/{k}$ fraction of the weight classes appear in two of those intervals.
Note that all weight classes that appear only in a single $Q_h'$ have an expected contribution of $\Vert W_q^+ \Vert_1$.
Recall that \emph{all} indices are considered on level 0.
This is handled by applying a random shift, implicitly setting $q_0(3)$ randomly in an appropriate way such that the expected contribution of \emph{any} weight class $W_q^+$ is bounded by at most $(1+{1}/{k})\Vert W_q^+ \Vert_1$.\\

\paragraph{Extension to variance-based regularized logistic regression}
We show
that our algorithm also approximates the variance well under the assumption that roughly $f_1(X\beta)\leq \ln(2)$. We stress that this assumption does not rule out the existence of good approximations. Indeed, even the minimizer is contained as observed in the preliminaries, since we have that $\min_{\beta \in \mathbb{R}^d}f(X\beta)\leq f(0)=f_1(0)=\ln(2).$ Focusing on a single $z=X\beta$, we need to show that $ \sum_{i:z_i>0} z_i^2$ is approximated well, which is done very similarly to the analysis for $ \sum_{i:z_i>0} z_i$ sketched above, but with several adaptions to account for the squared loss function.
We note that the increased sketching dimension in terms of $\sqrt{n}$ comes from the inter norm inequality $\|x\|_1 \leq \sqrt{n}\|x\|_2$. Lemma \ref{lem:lowerbound} in the appendix shows that this dependence can not be avoided using the \texttt{CountMin}-sketch.
It does not rule out other methods that may allow a lower sketching dimension.
We stress that other known standard sketches do not work for asymmetric functions since they confuse the signs of contributions leading to unbounded errors for our objective function or plain logistic regression, see \citep{MunteanuOW21}. 

\section{Experiments}\label{sec:experiments}
\begin{figure*}[ht!]
\vskip -0.15in
\begin{center}
\begin{tabular}{ccc}
\includegraphics[width=0.318\linewidth]{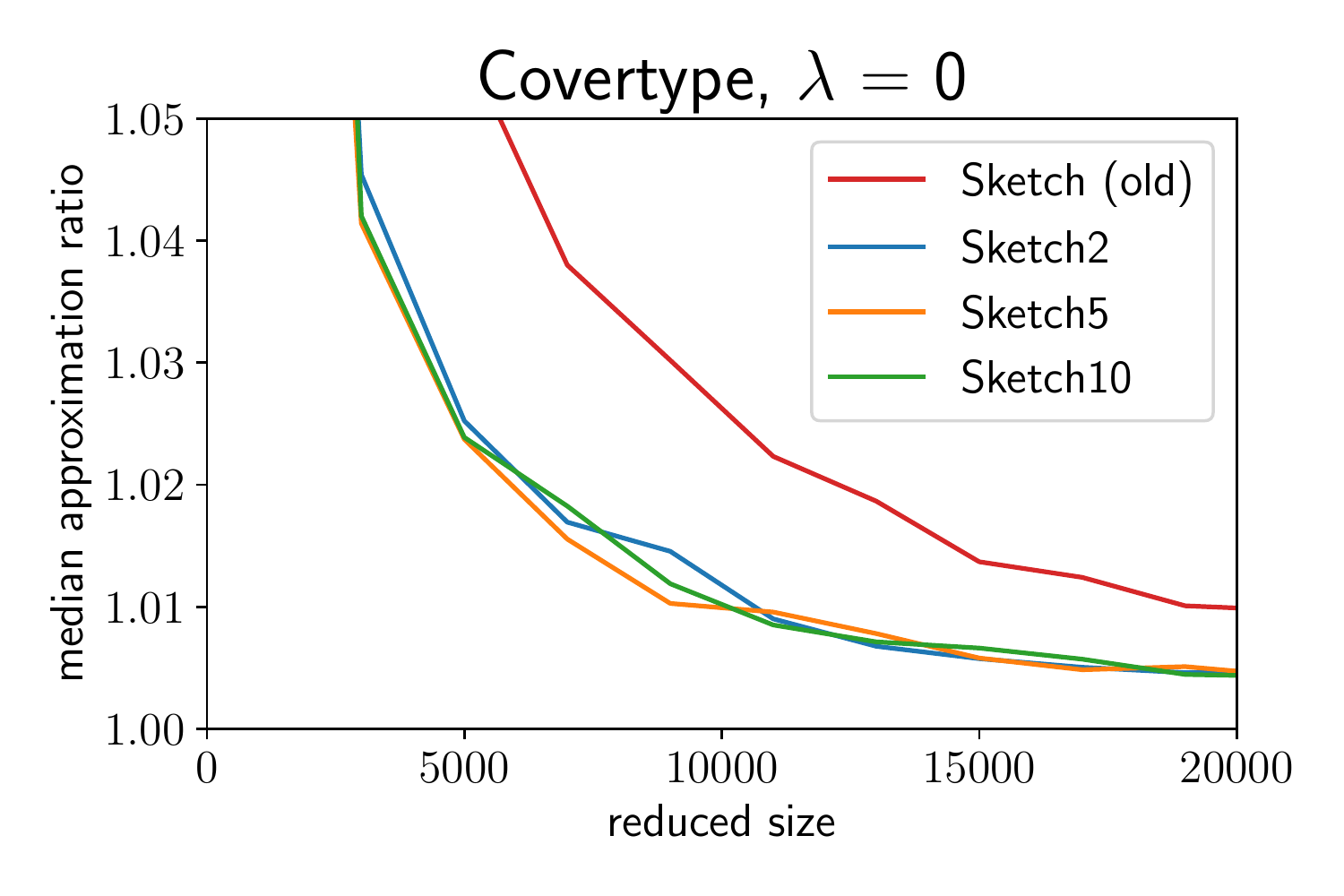}
\includegraphics[width=0.318\linewidth]{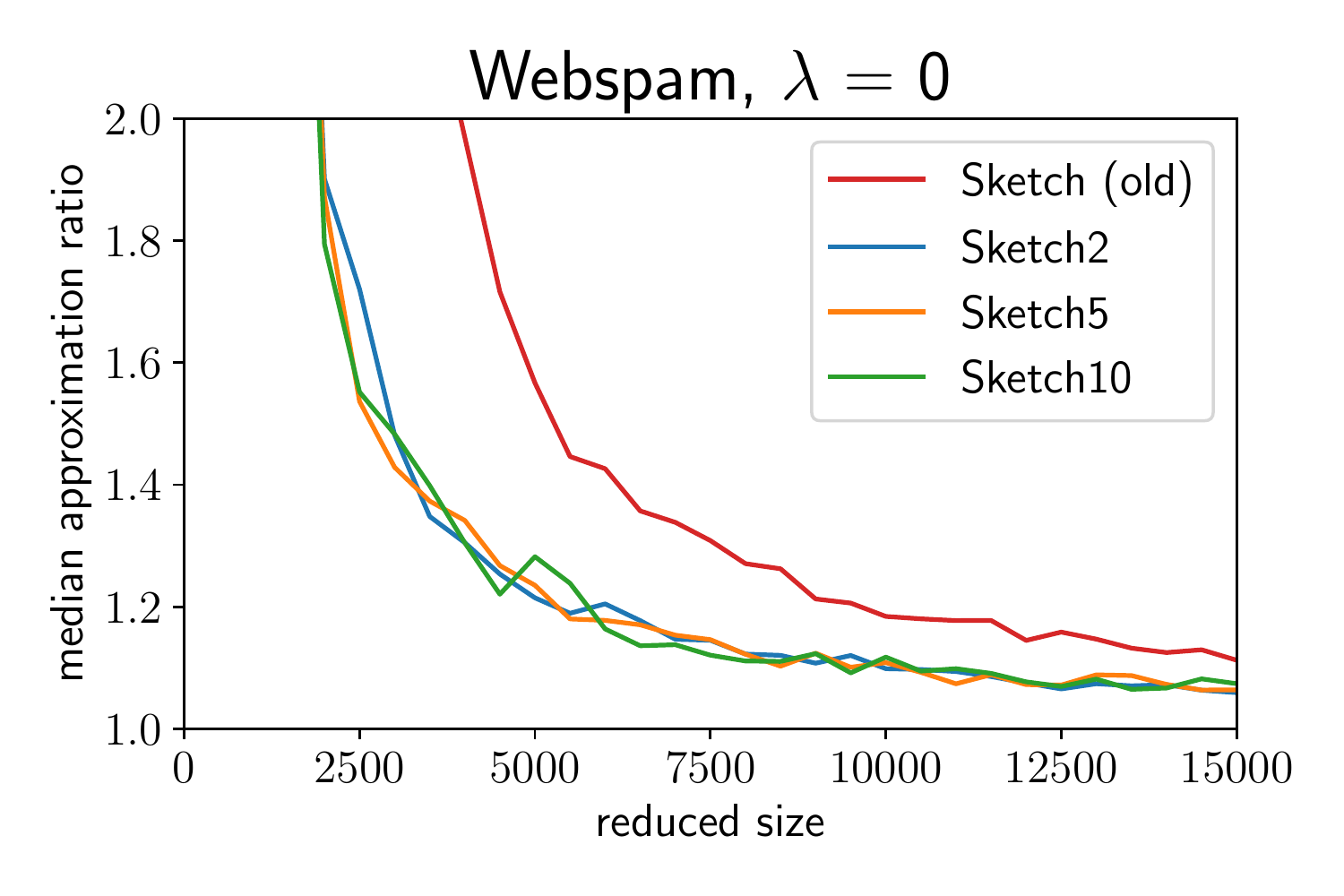}
\includegraphics[width=0.318\linewidth]{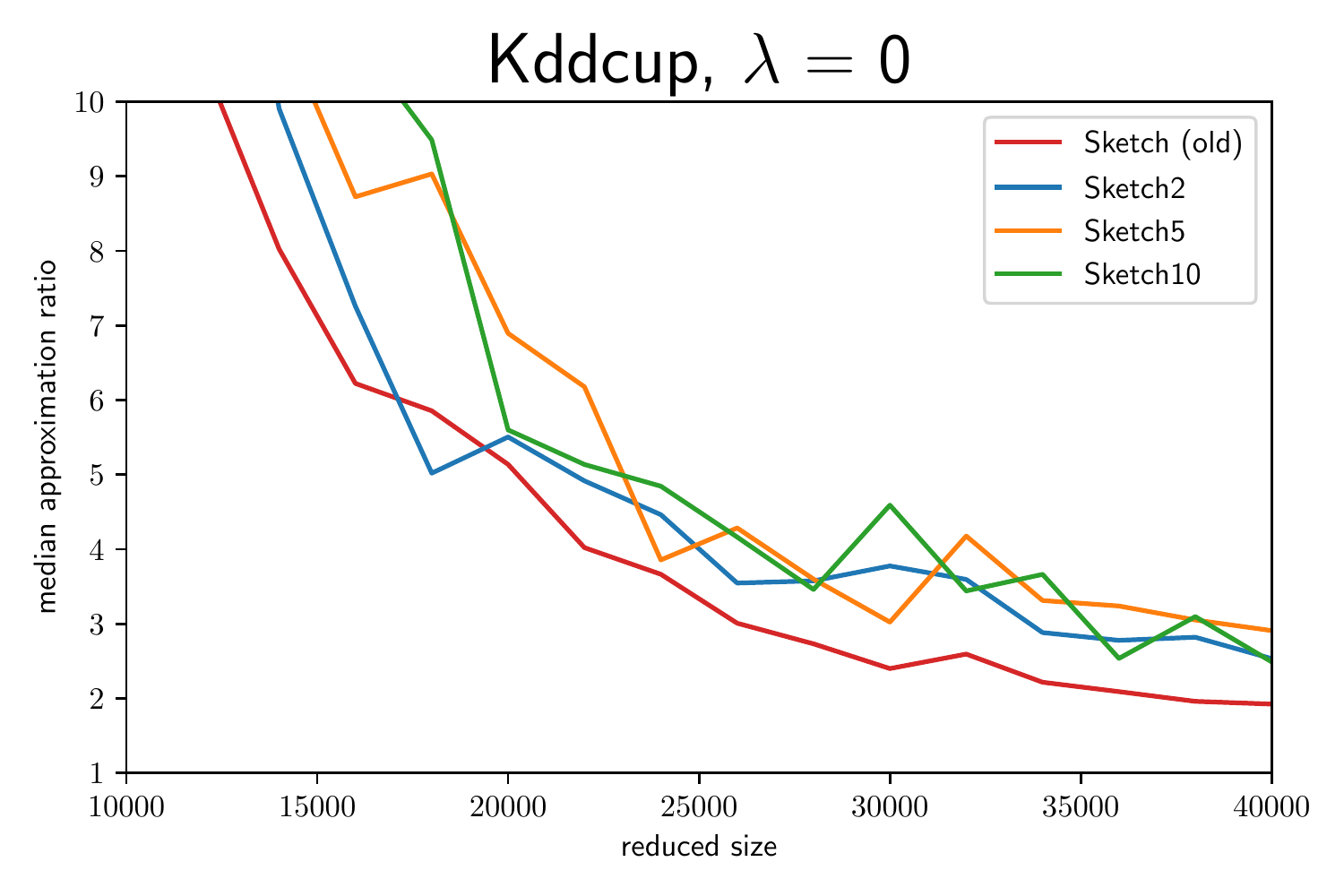}\\
\includegraphics[width=0.318\linewidth]{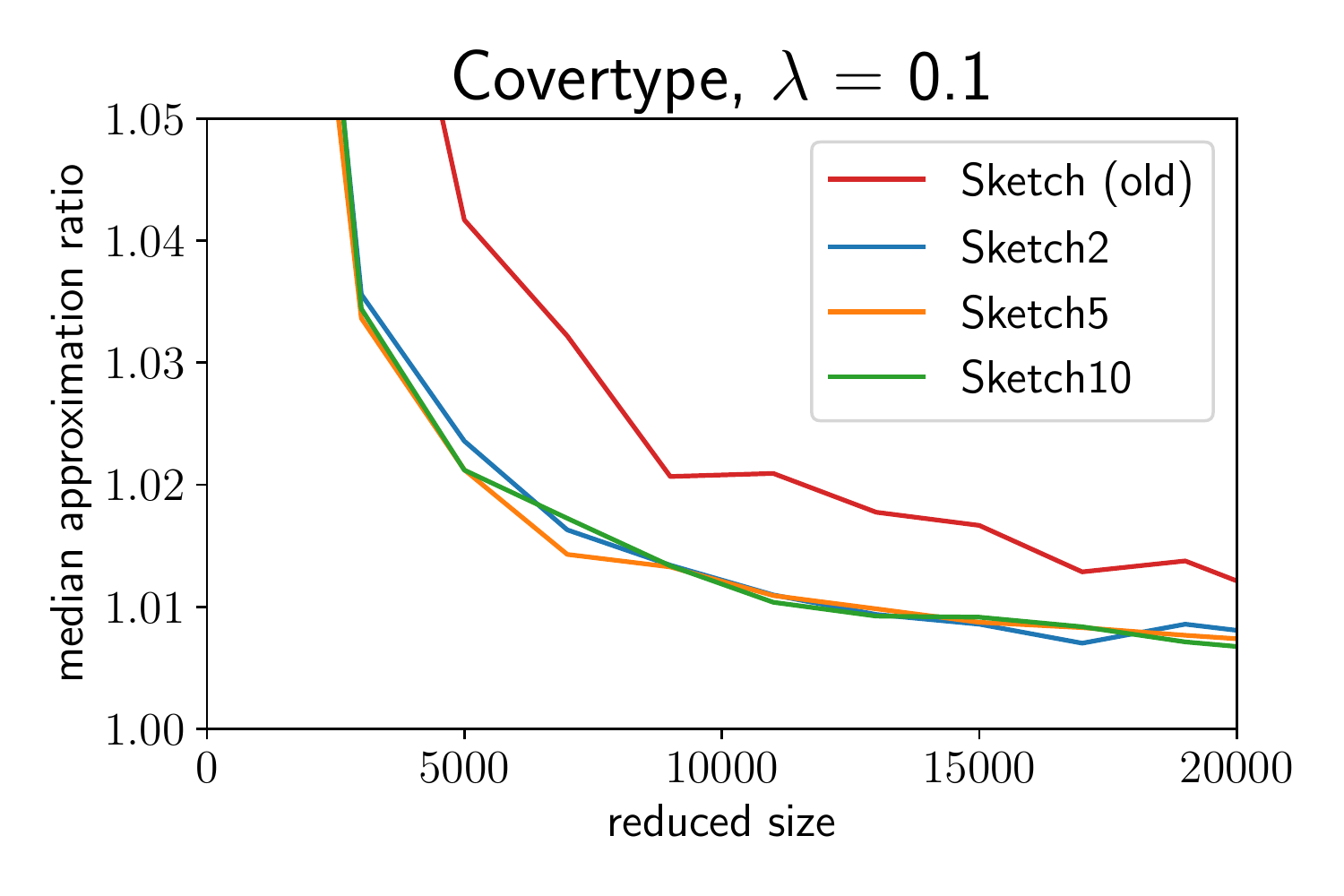}
\includegraphics[width=0.318\linewidth]{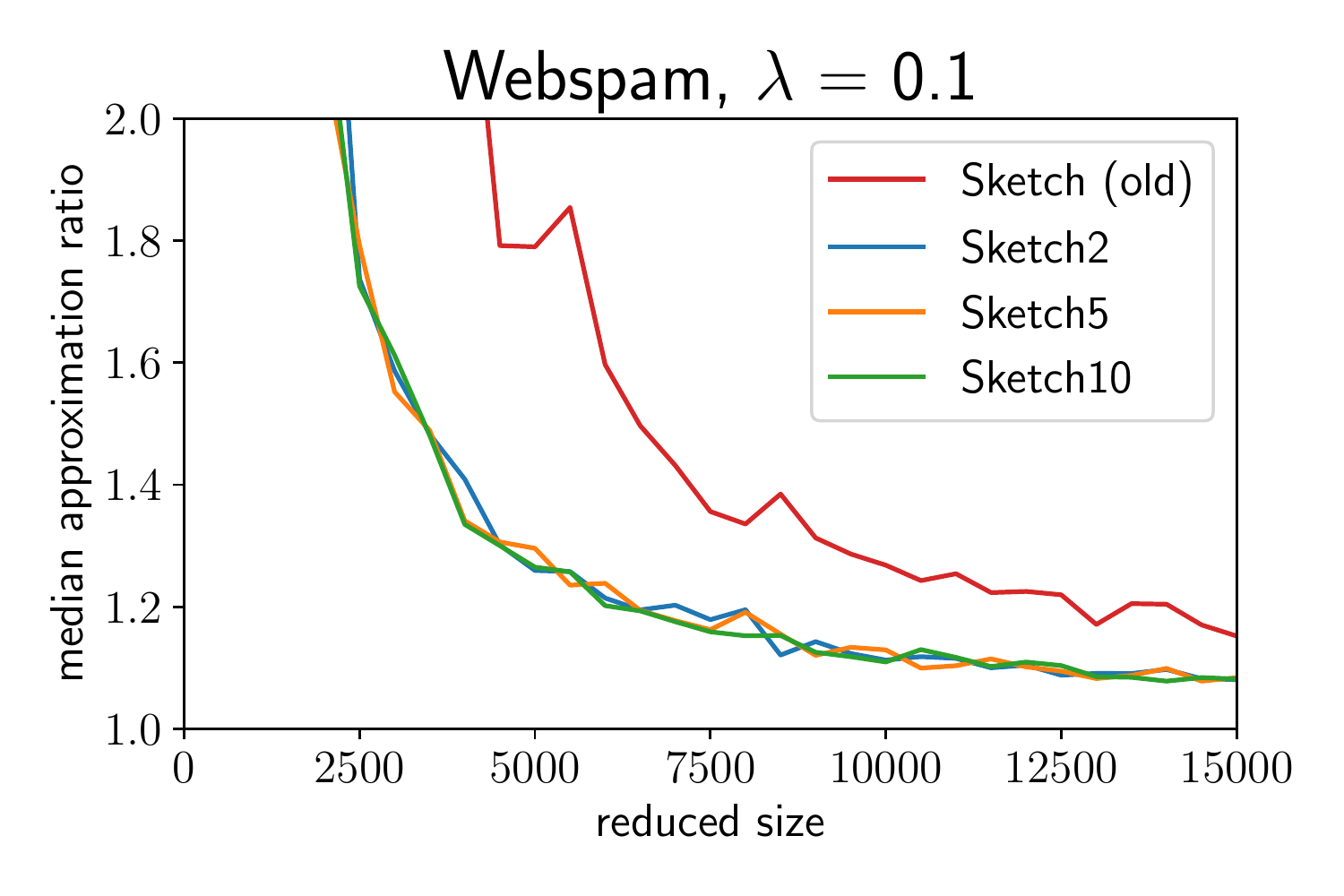}
\includegraphics[width=0.318\linewidth]{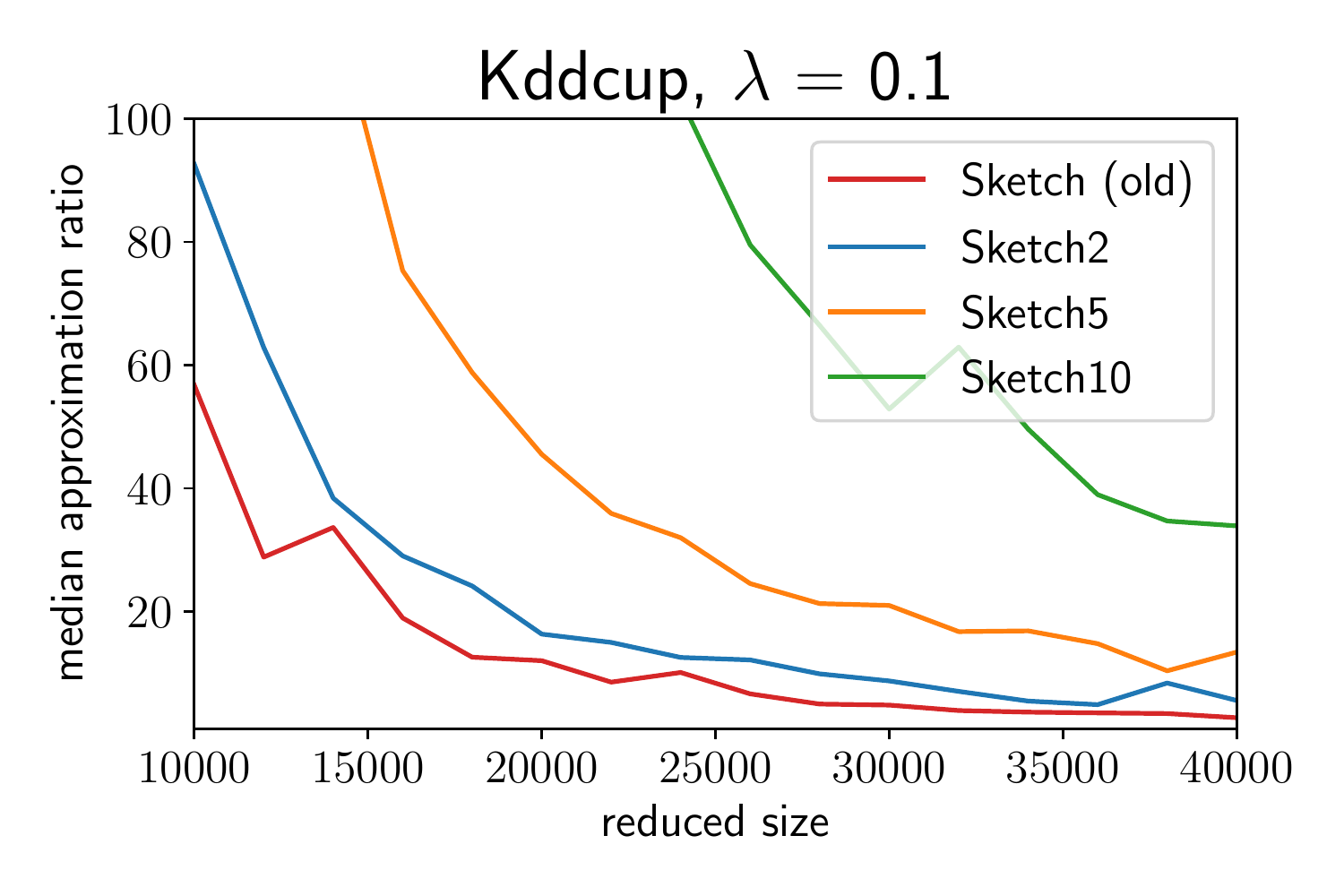}\\
\includegraphics[width=0.318\linewidth]{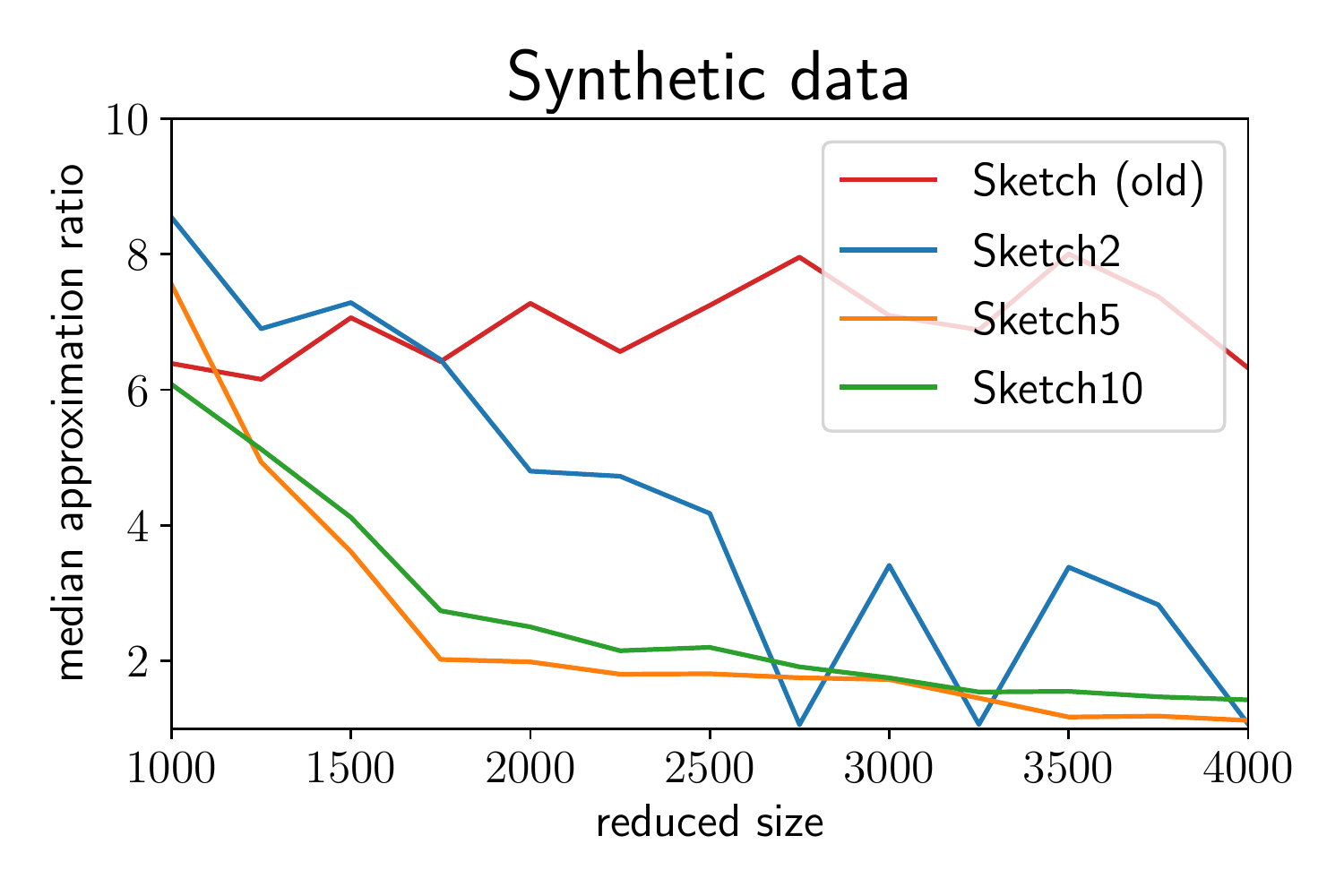}
\includegraphics[width=0.318\linewidth]{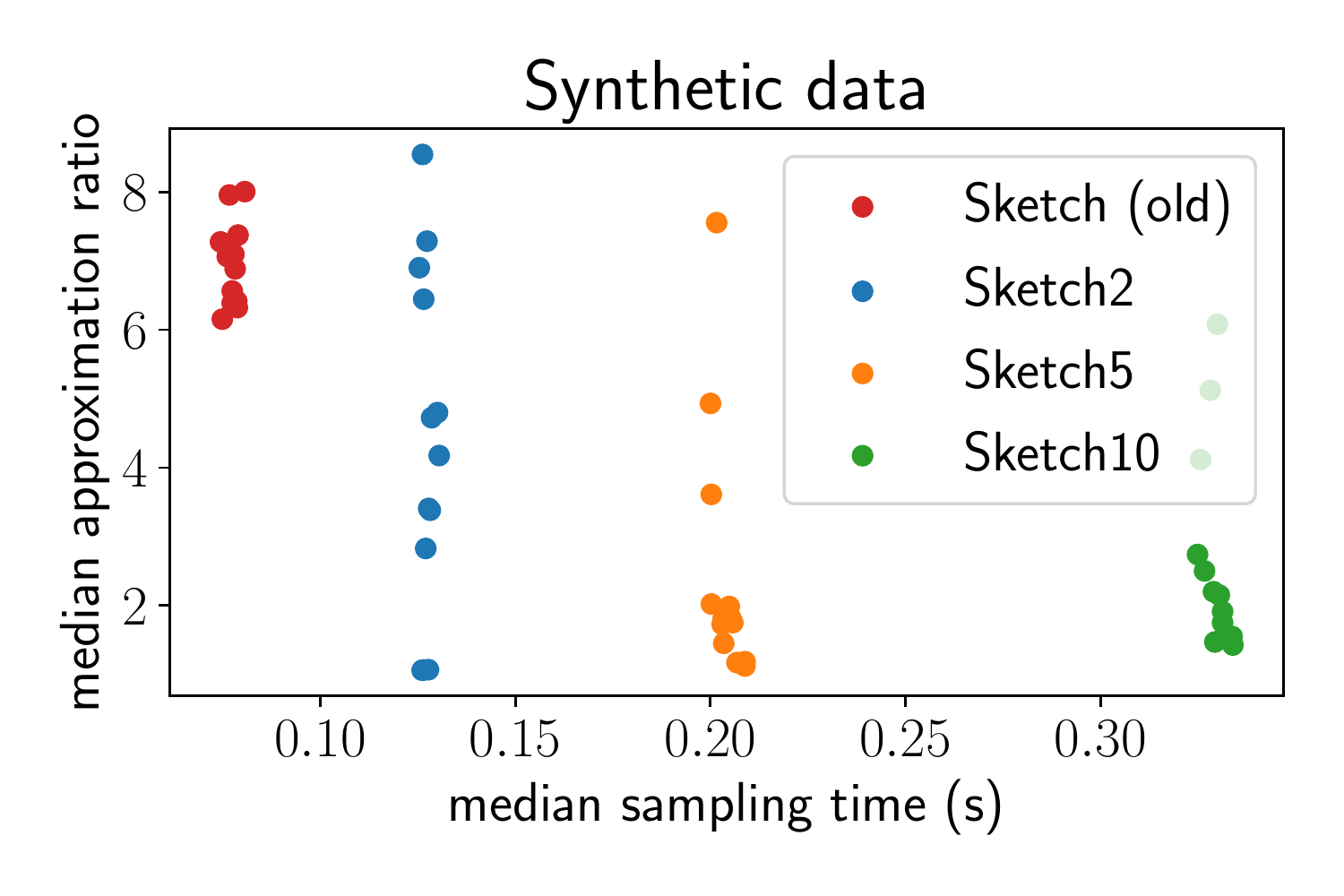}
\includegraphics[width=0.318\linewidth]
{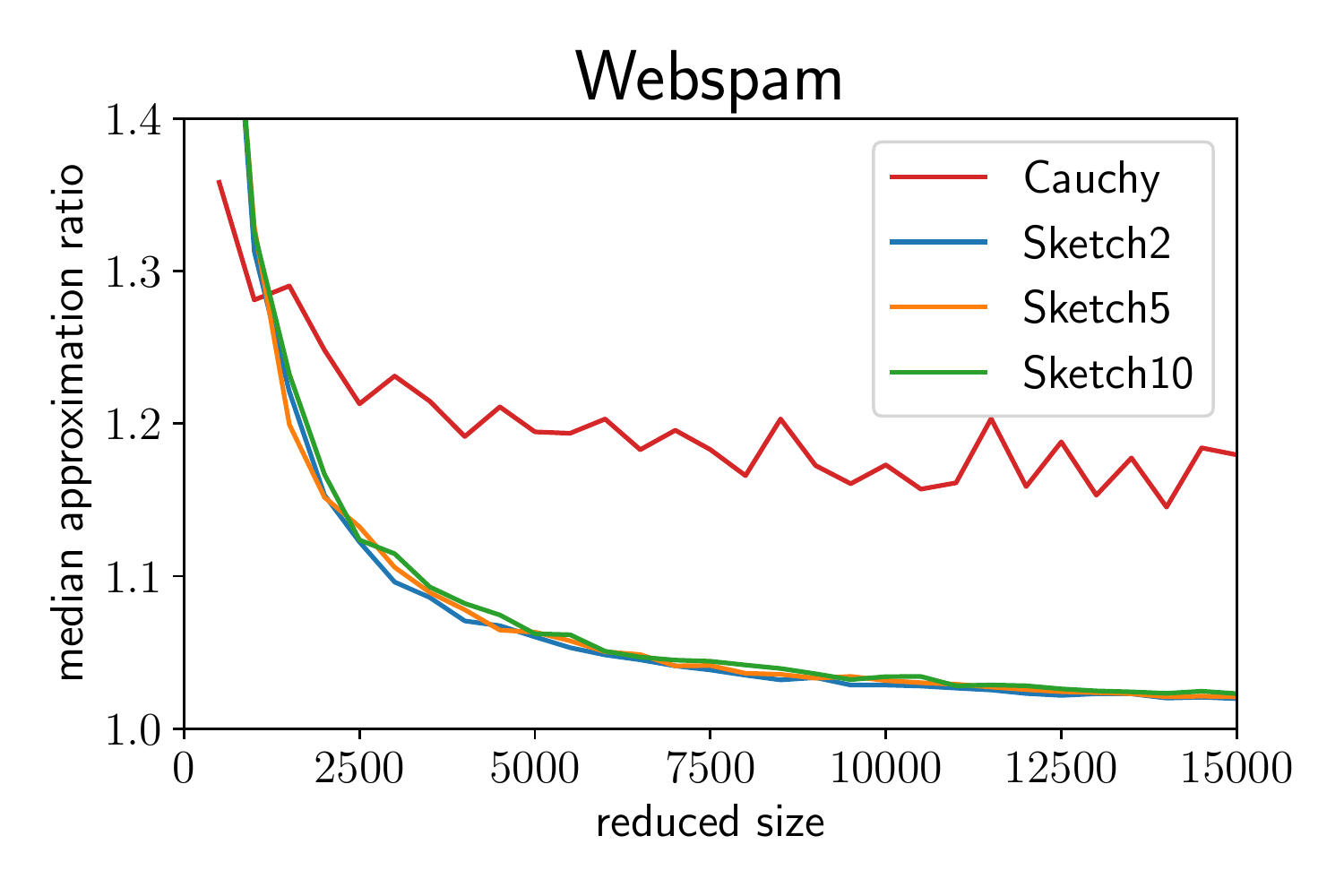}
\end{tabular}
\caption{Comparison of median approximation ratios of the old sketch vs. the new sketch with various settings for the sparsity $s\in\{2,5,10\}$ as well as for the regularization parameter $\lambda \in \{0,.1\}$ for real-world benchmark data (rows 1-2). Comparison of median approximation ratios and sketching times (row 3, left, middle) for our synthetic data. Comparison to the Cauchy sketch (row 3, right).}
\label{fig:strct_exp}
\end{center}
\vskip -0.15in
\end{figure*}

We implemented our new sketching algorithm into the framework of \citet{MunteanuOW21}\footnote{code available at \url{https://github.com/Tim907/oblivious_sketching_varreglogreg}}. Pseudocode can be found in Appendix \ref{apx:additional}. The crucial difference is that at level $0$ of our sketch, each element gets mapped to multiple buckets instead of only one. Sketch (old) denotes the sketch used in \citep{MunteanuOW21} and is highlighted in red in the plots. Sketch$s$ is the sketch where each entry is mapped to $s\in\{2,5,10\}$ buckets at level $0$. Each sketch was run with $40$ repetitions for various target sizes.
Real-world benchmark data was downloaded automatically by the Python scripts: the {Covertype}
  ~data consists of $581,012$ cartographic observations of different forests with $54$ features. 
  The {Webspam}
  ~data consists of $350,000$ unigrams with $127$ features from web pages.
 The {Kddcup}
 ~data consists of $494,021$ network connections with $41$ features.
On the real-world data we see in Figure \ref{fig:strct_exp} slightly improved performances over the previous sketch for Covertype and Webspam. On the Kddcup data we see a slightly weaker performance. We also see that increasing the sparsity parameter $s$ too much results in a worse performance, which is especially true for $\lambda > 0$ (partly in the appendix). This indicates that the variance term is large for Kddcup and the $\sqrt{n}$ dependence dominates the sketching size necessary to decrease the error in the squared variance-regularization term. 
We created a synthetic data set that has multiple heavy hitters for the sake of showing the benefits of the new sketch. 
A detailed description of our construction can be found in Appendix \ref{apx:additional}, along with intuition why it is complicated for the old sketch, while our new sketch can handle it much better.
The data set consists of $n=40,000$ points and the dimension is $d=100$. We see in Figure \ref{fig:strct_exp} (bottom left) that the increase in the number of buckets each elements is hashed to improves from approximation ratios between 7 and 8 for the old sketch to 3 or even 2 approximations for the modified sketches. The sketching times are only slightly increased for larger values of $s$, allowing for fast processing time (bottom middle). We omitted the time plots for the other data sets and parameterizations since the general picture is consistently as expected: Sketch$s$ is almost $s$ times slower than Sketch (old).
We added another comparison between our Sketch and the Cauchy sketch for an $\ell_1$ regression problem (bottom right). We see that the new sketch, using any degree of sparsity $s\in\{2,5,10\}$, outperforms the Cauchy sketch by a large margin in terms of approximation factor (while being a lot faster to apply than the dense matrix multiplication). More plots and discussion can be found in Appendix \ref{apx:additional}.
We also discuss stochastic gradient descent (SGD) together with supporting experiments in Appendix \ref{apx:additional}. While SGD performs well on real-world data (though not better than sketching), it suffers from arbitrarily bad errors when applied to our synthetic data.

\section{Conclusion}

We obtain significantly improved bounds on the number of rows that are sufficient for obliviously sketching logistic regression on $\mu$-complex data up to an $O(1)$ factor. Our bounds are almost linear in terms of the dimension $d$ and a data dependent complexity parameter $\mu$ that bounds the complexity of data reduction techniques for logistic regression and related loss functions. Our results are achieved by modifying the sketching approach of \citet{MunteanuOW21}, which allows a change of perspective and facilitates a fine-grained analysis of the contributions of single levels in the sketch. As a result, we also develop the first oblivious sketch for obtaining a $(1+\eps)$-approximation, albeit with an exponential dependence on $1/\eps$ which is likely to be required for our estimator due to corresponding hardness results on sketching $\ell_1$ norms.
We also extend the analysis to work for a variance-based regularized version of logistic regression which combines the $\ell_1$ and $\ell_2$ related loss functions and is of great practical relevance for reducing generalization error in statistical learning.
It remains a challenging open question whether we can further reduce the upper bounds to or below $O(\mu d)$ or increase the lower bounds from $\Omega(\mu + d)$ to or above $\Omega(\mu d)$. It would be interesting to study our sketching techniques under assumptions such as sparsity that allow to get below linear sketching dimension \citep{MaiMMRSW23}.

\clearpage
\section*{Acknowledgements}{We thank the anonymous ICLR 2023 reviewers for their valuable comments. We thank Tim Novak for helping with the experiments.
Alexander Munteanu \& Simon Omlor were supported by the German Research Foundation (DFG), Collaborative Research Center SFB 876, project C4 and by the Dortmund Data Science Center (DoDSc).
David P. Woodruff was partially supported by a Simons Investigator Award and by the Office of Naval Research (ONR) grant N00014-18-1-2562.}

\bibliographystyle{plainnat}
\bibliography{references}

\clearpage

\allowdisplaybreaks
\appendix

\section{Omitted details from Section \ref{sec:prelim}}

For technical reasons we make the following assumption:

\begin{ass}\label{ass:mainass}
We assume that: 
\begin{align}
    & h_m=\min\left\{i\in  \mathbb{N} \mid \frac{M_{i}}{bN}\leq 12\ln(n)\right\}\\
    & N \geq {32m_1^{1+c} q_m^{1+c} h_m^c \mu}/{ \varepsilon^6}\\
    & b = \frac{N \varepsilon^5}{32m_1 q_m \mu}\geq \frac{18\mu}{\varepsilon }\\
    & m_1= \ln(\delta^{-1})+O(d\ln(n)))\\
    & p_u \geq \frac{64 \mu m_1}{\varepsilon^2 n}.
\end{align}
Here $c$ is some constant used in the proof of Theorem \ref{thm:loglos}.
Since we want our sketch to have fewer than $n$ rows we will also assume that $n\geq \varepsilon^{-1}, \mu, d, \delta^{-1}$.
We also assume that $\varepsilon\leq 1/4$.
\end{ass}

We will further use the following probability tools:
\begin{pro}
	\label{thm:bernstein}[Bernstein's Inequality]\citep{Bernstein24}
Let $X_1, \ldots, X_n$ be independent zero-mean random variables. Suppose that $|X_i|\leq M$ holds almost surely for all $i$. Then, for all positive $t$ it holds that
\begin{equation*}\mathbb{P} \left (\sum_{i=1}^n X_i \geq t \right ) \leq \exp \left ( -\frac{\tfrac{1}{2} t^2}{\sum_{i = 1}^n \mathbb{E} \left[X_i^2 \right ]+\tfrac{1}{3} Mt} \right ).\end{equation*}
\end{pro}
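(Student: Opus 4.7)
The plan is to apply the standard Chernoff (exponential Markov) method. For any $\lambda > 0$, Markov's inequality applied to the monotone function $x \mapsto e^{\lambda x}$ together with independence of the $X_i$ gives
\[
\mathbb{P}\Bigl(\sum_{i=1}^n X_i \geq t\Bigr) \;\leq\; e^{-\lambda t}\,\mathbb{E}\Bigl[e^{\lambda \sum_i X_i}\Bigr] \;=\; e^{-\lambda t}\prod_{i=1}^n \mathbb{E}[e^{\lambda X_i}].
\]
The remaining task is to bound each single-variable moment generating function $\mathbb{E}[e^{\lambda X_i}]$ and then to optimize $\lambda$ against $t$.

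The key lemma is the per-variable MGF bound: for a zero-mean random variable $X$ with $|X| \leq M$ almost surely and $\sigma^2 := \mathbb{E}[X^2]$, and for all $\lambda \in (0, 3/M)$,
\[
\mathbb{E}[e^{\lambda X}] \;\leq\; \exp\!\left(\frac{\lambda^2 \sigma^2}{2\,(1 - \lambda M/3)}\right).
\]
I would prove this by Taylor expansion: $\mathbb{E}[e^{\lambda X}] = 1 + \sum_{k\geq 2}\tfrac{\lambda^k}{k!}\mathbb{E}[X^k]$, where the $k=1$ term vanishes because $\mathbb{E}[X] = 0$. Using $|X|^k \leq M^{k-2}\,X^2$ for $k \geq 2$ bounds $\mathbb{E}[X^k] \leq M^{k-2}\sigma^2$, and the elementary inequality $k! \geq 2 \cdot 3^{k-2}$ (easy induction from $k=2$) turns the tail sum into the geometric series $\sum_{j\geq 0}(\lambda M/3)^j = 1/(1 - \lambda M/3)$. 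Together these give $\mathbb{E}[e^{\lambda X}] \leq 1 + \tfrac{\lambda^2 \sigma^2}{2(1-\lambda M/3)}$, and applying $1 + u \leq e^u$ upgrades the right-hand side to an exponential.

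Combining the per-variable bound over the independent summands and writing $\sigma_{\mathrm{tot}}^2 := \sum_i \mathbb{E}[X_i^2]$, one obtains
\[
\mathbb{P}\Bigl(\sum_i X_i \geq t\Bigr) \;\leq\; \exp\!\left(-\lambda t + \frac{\lambda^2 \sigma_{\mathrm{tot}}^2}{2(1-\lambda M/3)}\right).
\]
To finish, one optimizes over $\lambda$. The choice $\lambda = t/(\sigma_{\mathrm{tot}}^2 + Mt/3)$ lies in $(0,3/M)$ and, by direct substitution, produces exponent $-\tfrac{1}{2}t^2/(\sigma_{\mathrm{tot}}^2 + Mt/3)$, matching the claimed bound.

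The only technical delicacy is the MGF bound with the sharp constants $1/2$ and $1/3$: obtaining them correctly hinges on the inequality $k! \geq 2\cdot 3^{k-2}$, which is routine but easy to botch. The optimization in $\lambda$ is a clean calculus exercise and can be checked by the direct substitution above. Since Bernstein's inequality is a classical result cited from \citep{Bernstein24}, the paper presumably invokes it without repeating the derivation, and no new ideas beyond the standard Chernoff-method argument are needed here.
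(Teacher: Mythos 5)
Your proposal is a correct rendition of the standard Chernoff-method proof of Bernstein's inequality: the per-variable MGF bound $\mathbb{E}[e^{\lambda X}]\le\exp\bigl(\tfrac{\lambda^2\sigma^2}{2(1-\lambda M/3)}\bigr)$ via $|X|^k\le M^{k-2}X^2$ and $k!\ge 2\cdot 3^{k-2}$ is sound, the optimizing choice $\lambda=t/(\sigma_{\mathrm{tot}}^2+Mt/3)$ indeed lies in $(0,3/M)$ whenever $\sigma_{\mathrm{tot}}^2>0$, and the substitution gives exactly the stated exponent $-\tfrac{1}{2}t^2/(\sigma_{\mathrm{tot}}^2+Mt/3)$. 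The paper states this as Proposition \ref{thm:bernstein} with a citation to \citet{Bernstein24} and provides no proof of its own, so there is nothing in the paper to compare against; your argument is the standard one and is complete.
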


\begin{pro}[Chernoff bound \citep{c52}]\label{lem:chernoff}
Let $X = \sum_{i=1}^n X_i$, where $X_i=1$ with probability $p_i$ and $X_i = 0$ with probability $1-p_i$, and all $X_i$ are independent. Let $\mu = \mathbb{E}(X) = \sum_{i=1}^n p_i$. Then for all $\delta\in [0, 1]$ it holds that
\[ P( |X-\mu| \geq \delta \mu ) \leq 2\exp ( - \delta^2 \mu / 3 )\]
and for any $\delta >1$ it holds that
\[ P( |X-\mu| \geq \delta \mu ) \leq 2\exp ( - \delta \mu / 3 )\]
\end{pro}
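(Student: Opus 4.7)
The plan is to deploy the standard exponential-moment (Chernoff) argument and then reduce the resulting transcendental exponent to the clean forms stated. I would handle the upper and lower tails separately and union-bound.

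First, I would apply Markov's inequality to $e^{tX}$. For the upper tail and any $t>0$,
\begin{equation*}
\mathbb{P}(X \ge (1+\delta)\mu) \;\le\; e^{-t(1+\delta)\mu}\,\mathbb{E}[e^{tX}].
\end{equation*}
Independence of the $X_i$ together with the elementary inequality $1 + p_i(e^t-1) \le e^{p_i(e^t-1)}$ (from $1+x \le e^x$) yields
\begin{equation*}
\mathbb{E}[e^{tX}] \;=\; \prod_{i=1}^n \bigl(1 + p_i(e^t-1)\bigr) \;\le\; e^{\mu(e^t - 1)}.
\end{equation*}
Minimizing the resulting exponent via $t = \ln(1+\delta)$ gives the compact form
\begin{equation*}
\mathbb{P}(X \ge (1+\delta)\mu) \;\le\; \exp\bigl(-\mu\,\psi(\delta)\bigr), \qquad \psi(\delta) := (1+\delta)\ln(1+\delta) - \delta.
\end{equation*}
Mirroring the argument with $t = \ln(1-\delta) < 0$ (for $\delta \in (0,1)$) gives the symmetric lower-tail bound $\mathbb{P}(X \le (1-\delta)\mu) \le \exp(-\mu\,\phi(\delta))$ with $\phi(\delta) := (1-\delta)\ln(1-\delta) + \delta$.

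What remains is purely analytic: replacing $\psi$ and $\phi$ by the clean exponents stated. I would verify the Bennett-Bernstein-style comparisons $\psi(\delta) \ge \delta^2/(2+\delta)$ and $\phi(\delta) \ge \delta^2/2$ for $\delta \ge 0$; both follow by checking that the corresponding difference vanishes at $\delta = 0$ together with its first derivative and has nonnegative second derivative on the relevant interval. Then noting that $\delta^2/(2+\delta) \ge \delta^2/3$ precisely when $\delta \le 1$, and $\delta^2/(2+\delta) \ge \delta/3$ precisely when $\delta \ge 1$, handles both regimes simultaneously and yields the stated $1/3$ constants. The main (mild) obstacle is exactly this bookkeeping: pinning down the constant $1/3$ in both regimes uniformly, for which the intermediate bound $\delta^2/(2+\delta)$ is the right tool since it interpolates sharply between the quadratic and linear regimes at $\delta = 1$.

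Finally, a union bound over the two tails contributes the factor of $2$ in the stated probability. For $\delta > 1$ the lower tail is in fact vacuous, since $X \ge 0$ forces the event $\{X \le (1-\delta)\mu\}$ to be empty, but the prefactor of $2$ is retained so that the bound takes a uniform form across the two cases in the statement.
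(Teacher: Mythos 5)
The paper does not prove this proposition; it cites it as a classical result from \citep{c52}, so there is no in-paper proof to compare against. Your argument is a correct and standard derivation of this variant of the Chernoff bound: the MGF computation $\E[e^{tX}]\le e^{\mu(e^t-1)}$, optimization at $t=\ln(1\pm\delta)$ to obtain the relative-entropy exponents $\psi(\delta)=(1+\delta)\ln(1+\delta)-\delta$ and $\phi(\delta)=(1-\delta)\ln(1-\delta)+\delta$, and the analytic comparisons are all sound. In particular the chain $\psi(\delta)\ge\delta^2/(2+\delta)$ (verified via $g''(\delta)=\tfrac{1}{1+\delta}-\tfrac{8}{(2+\delta)^3}\ge 0$ for $\delta\ge 0$ with $g(0)=g'(0)=0$), together with $\delta^2/(2+\delta)\ge\delta^2/3$ iff $\delta\le 1$ and $\delta^2/(2+\delta)\ge\delta/3$ iff $\delta\ge 1$, cleanly produces the stated $1/3$ constants in both regimes; the lower tail via $\phi(\delta)\ge\delta^2/2$ is even stronger than needed and vacuous for $\delta>1$. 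The union-bound factor of $2$ is as you describe. Nothing is missing.
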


\begin{lem}\label{lem:binlem}
Let $y$ be a binomially distributed random variable with parameters $n, p$.
Let $n'\in \mathbb{N}$.
Then if $n'\geq pn$ we have that
\begin{align*}
P(|y-pn|>n')\leq 2\exp\left( -n'/3 \right)
\end{align*}
Else if $n'= \varepsilon pn$ we have that
\begin{align*}
P(|y-pn|>n')\leq 2\exp\left( -\varepsilon n'/3 \right)
\end{align*}
\end{lem}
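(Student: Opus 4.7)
The plan is to derive both bounds as essentially immediate consequences of the Chernoff bound in Proposition \ref{lem:chernoff}, since $y = \sum_{i=1}^n X_i$ is a sum of independent Bernoulli$(p)$ random variables, so $\mu = \mathbb{E}[y] = pn$ and Proposition \ref{lem:chernoff} applies directly. The only work is identifying the right parameter $\delta$ for each case and matching the resulting exponent to the form asserted in the lemma.

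For the first case, assume $n' \geq pn$ and set $\delta = n'/(pn)$, so that $\delta \geq 1$. Then $\delta \mu = n'$, and I would invoke the second (linear-in-$\delta$) tail bound of Proposition \ref{lem:chernoff}, which gives
\[
P(|y - pn| \geq n') = P(|y-\mu| \geq \delta\mu) \leq 2\exp(-\delta\mu/3) = 2\exp(-n'/3),
\]
matching the claim. (The strict inequality $|y-pn| > n'$ in the statement is only weaker than $\geq n'$, so the same bound applies.)

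For the second case, $n' = \varepsilon p n$; since this is the "else" branch we implicitly have $n' < pn$, i.e., $\varepsilon < 1$, so I would set $\delta = \varepsilon \in [0,1]$ and use the first (quadratic-in-$\delta$) tail bound. This yields
\[
P(|y-pn| \geq n') = P(|y-\mu| \geq \delta\mu) \leq 2\exp(-\delta^2 \mu/3) = 2\exp(-\varepsilon^2 p n / 3) = 2\exp(-\varepsilon n'/3),
\]
using $\varepsilon p n = n'$ in the last step. This matches the second claim.

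There is no real obstacle; the proof is a mechanical application of Proposition \ref{lem:chernoff} with a careful choice of $\delta$ in each regime and a one-line algebraic rewriting of the exponent. The only subtlety worth flagging in the write-up is that the lemma's statement uses strict inequality $|y-pn|>n'$ while Proposition \ref{lem:chernoff} is stated with $\geq \delta\mu$; since $\{|y-pn|>n'\}\subseteq\{|y-pn|\geq n'\}$, the strict version inherits the same bound.
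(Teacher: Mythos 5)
Your proposal is correct and follows essentially the same route as the paper: a direct application of Proposition \ref{lem:chernoff} with $\delta=n'/(pn)\geq 1$ (linear tail) in the first case and $\delta=\varepsilon\in[0,1]$ (quadratic tail) in the second, followed by a one-line rewrite of the exponent. The paper's own proof is just a more terse version of the same computation.
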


\begin{proof}[Proof of Lemma \ref{lem:binlem}]
Note that $\mathbb{E}(y)=pn$.
Using the Chernoff bound we get that if $n'\geq pn$
\begin{align*}
P(|y-pn|>n')\leq 2\exp ( - (n'/np) np / 3 )= 2\exp ( - n' / 3 ).
\end{align*}
If $n'= \varepsilon pn$  the Chernoff bound implies that
\begin{align*}
P(|y-pn|>n')\leq 2\exp ( - \varepsilon^2 np / 3 )= 2\exp\left( -\varepsilon n'/3 \right).
\end{align*}
\end{proof}

Lemma \ref{lem:g-prop} in the main body splits the objective into 'large' and 'small' parts which we handle separately. 

\begin{proof}[Proof of Lemma \ref{lem:g-prop}]
Note that for $r\in \mathbb{R}$ it holds that
\begin{align*}
\ell(r)&=\ln\left(1+e^r\right)
=\ln\left(\left( e^{-r}+1 \right)e^r\right) \\
&=\ln\left( e^{-r}+1\right)+\ln(e^r)=\ell(-r)+r.
\end{align*}
Now the first equation follows immediately by $\ell(x_i\beta)=x_i\beta+ \ell(-x_i\beta)=|x_i\beta|+ \ell(-|x_i\beta|)$ for $x_i\beta>0$ and $\ell(x_i\beta)=\ell(-|x_i\beta|)$ for $x_i\beta\leq 0$.
Further we have that $(x_i\beta+ \ell(-x_i\beta))^2=(x_i\beta)^2 + 2\ell(-x_i\beta)x_i\beta+\ell(-x_i\beta)^2$
Thus the second equality follows by substituting $\ell(x_i\beta)^2$ with $|x_i\beta|^2 + 2\ell(-|x_i\beta|)|x_i\beta|+\ell(-|x_i\beta|)^2=(x_i\beta)^2 + 2\ell(-x_i\beta)x_i\beta+\ell(-x_i\beta)^2$ for $x_i\beta>0$ and $\ell(-|x_i\beta|)^2$ for $x_i\beta\leq 0$.
\end{proof}

\section{Estimating the small parts of \texorpdfstring{$f$}{f}}\label{sec:smallparts}

We can bound the 'small' parts using the following lemma:

\begin{lem}\label{lem:smallparthelp}
For arbitrary $i\in[n]$ it holds that 
$\ell(-|x_i \beta|)< 1$ and also $2 \ell(-|x_i \beta|)|x_i\beta|+ \ell(-|x_i \beta|)^2\leq 3.$
\end{lem}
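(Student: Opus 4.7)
The plan is to bound both expressions by straightforward calculus on the function $\ell(r)=\ln(1+e^{r})$, exploiting that its argument is nonpositive in this lemma.

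First I would treat the inequality $\ell(-|x_i\beta|)<1$. Since $|x_i\beta|\geq 0$, we have $e^{-|x_i\beta|}\leq 1$, and hence
\[
\ell(-|x_i\beta|)=\ln\bigl(1+e^{-|x_i\beta|}\bigr)\leq \ln 2<1,
\]
which handles the first claim.

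For the second inequality, set $t=|x_i\beta|\geq 0$. The key elementary bound is $\ln(1+x)\leq x$ for $x\geq 0$, applied with $x=e^{-t}$ to give $\ell(-t)\leq e^{-t}$. This yields $\ell(-t)\cdot t \leq t e^{-t}$, and a one-variable calculation (derivative $e^{-t}(1-t)$ vanishes at $t=1$) shows that $te^{-t}\leq 1/e$ for all $t\geq 0$. Hence $2\,\ell(-t)\,t \leq 2/e<2$. Combining this with $\ell(-t)^2\leq (\ln 2)^2<1$ from the first part gives
\[
2\,\ell(-|x_i\beta|)\,|x_i\beta|+\ell(-|x_i\beta|)^2 \;\leq\; \tfrac{2}{e}+(\ln 2)^2 \;<\; 3,
\]
as required. (A cruder route that still works is $2te^{-t}\leq 2$ and $\ell(-t)^2\leq 1$, summing to $3$.)

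There is essentially no obstacle here; the lemma is a pure calculus fact, and the only mild choice is picking a clean pair of bounds for the two summands so that the constants add to at most $3$. I would keep the write-up to these two short paragraphs.
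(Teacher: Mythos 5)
Your proof is correct and follows essentially the same route as the paper: both bound $\ell(-t)\leq\ln 2<1$ for the first claim, both bound $\ell(-t)\leq e^{-t}$ (the paper via an integral comparison, you via $\ln(1+x)\leq x$, which are interchangeable), and both control $t e^{-t}$ to handle the cross term. Your ``cruder route'' at the end ($2te^{-t}\leq 2$ plus $\ell(-t)^2\leq 1$) is precisely the paper's argument.
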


\begin{proof}
First observe that $\ell(-|x_i \beta|)\leq\ell(0)=\ln(2)<1$, proving the first part of the lemma.

Next note that
\begin{align*}
\ell(-|x_i \beta|)=\ln(1+\exp(-|x_i\beta|))&=\int_1^{1+\exp(-|x_i\beta|)} \frac{1}{t}\,dt\\
& \leq \int_1^{1+\exp(-|x_i\beta|)} 1\,  dt=\exp(-|x_i\beta|).
\end{align*}
Using that $\ln(t)\leq |t|$ for all $t>0$ we conclude that
\begin{align*}
\ell(-|x_i \beta|)|x_i\beta|\leq \exp(\ln(|x_i\beta|)-|x_i\beta|)\leq e^0=1.
\end{align*}
Now combining everything we get that
\begin{align*}
    2 \ell(-|x_i \beta|)|x_i\beta|+ \ell(-|x_i \beta|)^2 \leq 2+ 1^2 \leq 3.
\end{align*}
\end{proof}

Next we note that the optimal value of $f(X\beta)$ is bounded from below:
\begin{lem}\label{minvalue}\citep{MunteanuOW21}
For all $\beta\in \mathbb{R}^d$ it holds that $n f(X \beta) \geq n f_1(X \beta) \geq \frac{n}{2\mu} \left(1+\ln(\mu)\right) = \Omega \left(\frac{n}{\mu}(1+\ln(\mu))\right) .$
\end{lem}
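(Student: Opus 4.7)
The plan is to first dispatch the trivial reduction and then lower-bound the logistic sum via a Fenchel-type inequality together with $\mu$-complexity. For the first inequality, note that $f_2-f_3 = \tfrac{\lambda}{2}\Var_i(\ell(x_i\beta)) \geq 0$ as remarked in the preliminaries, so $f(X\beta) \geq f_1(X\beta)$. What remains is to show $\sum_{i=1}^n \ell(x_i\beta) \geq \tfrac{n(1+\ln\mu)}{2\mu}$ for every $\beta$.

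I would use the pointwise Fenchel-Young inequality $\ell(r) \geq sr + H(s)$ valid for every $r \in \mathbb{R}$ and $s \in [0,1]$, where $H(s) = -s\ln s - (1-s)\ln(1-s)$ is the binary entropy. This can be verified directly: the right-hand side is maximized in $s$ at $s = 1/(1+e^{-r})$, where it equals $\ell(r)$. Summing over $i$ gives, for any weight vector $s \in [0,1]^n$,
$$\sum_{i=1}^n \ell(x_i\beta) \geq \beta^\top X^\top s + \sum_{i=1}^n H(s_i).$$
The strategy is to produce $s^* \in [p, q]^n$ with $X^\top s^* = 0$ for $p = 1/(1+\mu)$ and $q = \mu/(1+\mu) = 1-p$, so the linear term vanishes.

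Existence of $s^*$ is where $\mu$-complexity enters and is the main technical step. The image $C = \{X^\top s : s \in [p,q]^n\}$ is compact convex in $\mathbb{R}^d$, so by the separation theorem, $0 \in C$ iff for every $\beta \in \mathbb{R}^d$ we have $\min_{s \in [p,q]^n} \beta^\top X^\top s \leq 0$. This minimum is achieved coordinatewise by taking $s_i = p$ on $\{x_i\beta > 0\}$ and $s_i = q$ on $\{x_i\beta < 0\}$, so it equals $p P_\beta - q N_\beta = (P_\beta - \mu N_\beta)/(1+\mu)$, which is non-positive by Definition \ref{def:mu_complex} (the degenerate case $N_\beta = 0$ forces $P_\beta = 0$ by applying the definition to $-\beta$). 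Hence $s^*$ exists; the box widths $p, q$ are engineered precisely so the dual condition matches $\mu_1(X) \leq \mu$.

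Finally, $H$ is concave with maximum at $1/2$, so on the symmetric subinterval $[p, 1-p]$ its minimum is attained at the endpoints. Using $(t\ln t)' = 1 + \ln t$,
$$H(p) = \frac{(1+\mu)\ln(1+\mu) - \mu\ln\mu}{1+\mu} = \frac{1}{1+\mu}\int_{\mu}^{\mu+1}(1+\ln t)\,dt \geq \frac{1+\ln\mu}{1+\mu} \geq \frac{1+\ln\mu}{2\mu}$$
for $\mu \geq 1$, where the last step uses $1+\mu \leq 2\mu$. Summing $H(s_i^*) \geq H(p)$ over $i \in [n]$ closes the proof. The main obstacle is the separation step: one must set the box with exactly the right width so that coordinatewise minimization reproduces the $\mu$-complexity threshold, and recognize that any slacker choice of $p, q$ would break the equivalence with Definition \ref{def:mu_complex}.
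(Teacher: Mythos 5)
Your proof is correct. The paper itself does not reprove Lemma~\ref{minvalue}; it simply cites \citet{MunteanuOW21}, so the comparison below is against the argument in that reference, which proceeds by a direct case analysis on $\|X\beta\|_1$ and the sign pattern of $X\beta$ rather than via duality.

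Every step checks out: (i) $f \geq f_1$ follows from non-negativity of the variance term as noted in the preliminaries; (ii) the pointwise bound $\ell(r) \geq sr + H(s)$ is exactly Fenchel--Young for $\ell(r)=\ln(1+e^r)$, whose convex conjugate is $-H$; (iii) the separation argument correctly reduces $0 \in X^\top[p,q]^n$ to the condition $p P_\beta - q N_\beta \leq 0$ for all $\beta \neq 0$, and with $p = 1/(1+\mu)$, $q = \mu/(1+\mu)$ this is precisely $\mu_1(X)\leq\mu$ (your treatment of $N_\beta = 0$ via $-\beta$ is the right way to handle the degenerate case, and $\mu \geq 1$ holds automatically, so $p\leq 1/2 \leq q$); (iv) the entropy estimate $H(p) = \frac{1}{1+\mu}\int_\mu^{\mu+1}(1+\ln t)\,dt \geq \frac{1+\ln\mu}{1+\mu} \geq \frac{1+\ln\mu}{2\mu}$ is clean and yields the stated constant. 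The approach is genuinely different from the cited reference: you replace an ad hoc threshold/case argument with a single Lagrangian-dual certificate. What this buys is conceptual clarity — the box $[p,q]^n$ is \emph{exactly} the dual feasible region encoded by $\mu$-complexity, and the bound $nH(p)$ is the sharpest value this duality can give (your final $\geq \frac{1+\ln\mu}{2\mu}$ then just makes it match the stated form). What it costs is reliance on a separating hyperplane argument, which is a heavier tool than the elementary estimates one might expect for a preliminary lemma, and it only exploits $\mu_1(X)\leq\mu$ (not $\mu_2$), which is fine here but worth saying explicitly.
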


We use the previous two lemmas to show that our sketch approximates the given parts of $f$ well enough with high probability.
To this end, we set $g_1(t)=\ell(-|t|)$, $g_2(t)=2 \ell(-|t|)|t|+ \ell(-|2t|)$ and
$g(t)=g_1(t)+ \lambda g_2(t)$.

\begin{lem}\label{lem:smallpart}
Given any $\beta \in \mathbb{R}^d$ with failure probability at most $2\exp(-m_1)$ the event $\mathcal{E}_0$ holds that
\begin{align*}
\left|\sum_{i=1}^{n'}w_ig(x_i'\beta)- \sum_{i=1}^{n}g(x_i\beta)\right| \leq \varepsilon \cdot \max \left\lbrace \sum_{i=1}^{n}g(x_i\beta) , \frac{n}{2\mu} \right\rbrace \leq  \varepsilon f(X\beta).
\end{align*}
\end{lem}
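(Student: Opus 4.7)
My plan is to reduce to the uniform-sampling level $h_m$ of the sketch—at which each row $i\in[n]$ is independently retained with probability $p_u$ and receives weight $1/p_u$—and to apply Bernstein's inequality (Proposition \ref{thm:bernstein}) to the resulting unbiased estimator $\sum_{i=1}^{n'} w_i g(x_i'\beta)$ of $\sum_{i=1}^n g(x_i\beta)$. Writing the deviation as $\sum_{i=1}^n Z_i$ with $Z_i = (\mathbf{1}\{i\ \text{sampled}\}/p_u - 1)\,g(x_i\beta)$, independence and zero-mean follow from independence of the sampling indicators, and Lemma \ref{lem:smallparthelp} supplies the ingredients $|Z_i| \leq C/p_u$ and $\sum_i \mathbb{E}[Z_i^2] \leq (C/p_u)\sum_i g(x_i\beta)$ for an absolute constant $C$ (see the last paragraph for how $\lambda$ is handled).

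I would set the Bernstein deviation to $t = \varepsilon\max\{\sum_i g(x_i\beta),\, n/(2\mu)\}$ and split into the cases $\sum_i g(x_i\beta) \geq n/(2\mu)$ and $\sum_i g(x_i\beta) < n/(2\mu)$. In the first case, both the variance $V$ and the term $Mt/3$ in the Bernstein denominator scale linearly in $\sum g$, so the exponent is of order $\Omega(\varepsilon^2 p_u \sum g / C) \geq \Omega(\varepsilon^2 p_u n/(\mu C))$. In the second case $V \leq Cn/(2\mu p_u)$ dominates $Mt/3 = C\varepsilon n/(6\mu p_u)$, and the same lower bound emerges. Plugging in $p_u \geq 64\mu m_1/(\varepsilon^2 n)$ from Assumption \ref{ass:mainass} yields an exponent at least $m_1$, giving the claimed failure probability $2\exp(-m_1)$.

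For the second inequality $\varepsilon\max\{\sum g,\, n/(2\mu)\} \leq \varepsilon\cdot nf(X\beta)$ (which I read as the intended form of the lemma's ``$\leq \varepsilon f(X\beta)$'', the factor $n$ reconciling with the $1/n$ in the definition of $f$), the $n/(2\mu)$ term is bounded by Lemma \ref{minvalue}. The $\sum g$ term is bounded by combining Lemma \ref{lem:g-prop} with Cauchy--Schwarz: dropping the non-negative positive-sign pieces in Lemma \ref{lem:g-prop} gives $nf_1 \geq \sum g_1$ and an analogous bound for the $f_2$-contribution, and the subtracted piece $-nf_3 = -(\lambda/(2n))(\sum \ell)^2$ is dominated by $nf_2$ via Cauchy--Schwarz, so in total $\sum g \leq nf(X\beta)$.

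The main obstacle I anticipate is keeping the $\lambda$-dependence in the Bernstein step under control, since Assumption \ref{ass:mainass} does not mention $\lambda$ in its bound on $p_u$. The clean workaround is to apply Bernstein separately to the sums $\sum g_1(x_i\beta)$ and $\sum g_2(x_i\beta)$: by Lemma \ref{lem:smallparthelp} each of $g_1$ and $g_2$ is pointwise bounded by an absolute constant ($1$ and $3$ respectively), so $\lambda$ factors out of the concentration argument entirely and only reappears in the comparison with $nf(X\beta)$. Recombining the two high-probability bounds by the triangle inequality, at the cost of adjusting absolute constants in $\varepsilon$ and the failure probability, closes the argument uniformly in $\lambda \geq 0$.
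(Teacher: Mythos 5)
Your proposal follows the paper's own proof closely: the paper also reduces to the uniform level $h_m$ (after noting that the total sampling probability $\sum_{h=1}^{h_m-1} p_h \leq 2/b \leq \varepsilon/(6\mu)$ makes the intermediate levels negligible), applies Bernstein's inequality to the rescaled indicator sum, uses Lemma \ref{lem:smallparthelp} to bound each $g_k$ by an absolute constant so the variance is controlled by the mean, plugs in $p_u \geq 64\mu m_1/(\varepsilon^2 n)$ from Assumption \ref{ass:mainass} to push the exponent below $-m_1$, and union-bounds over $k\in\{1,2\}$ to obtain $2e^{-m_1}$. The paper avoids your explicit case split by introducing $L := p_u \max\{\sum_i g_k(x_i\beta),\, n/(2\mu)\} \geq E$ and replacing $E$ by $L$ throughout the Bernstein denominator, which handles both regimes in one line; your case analysis is equivalent, just slightly less economical.

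Where you go beyond the paper is the last inequality $\varepsilon\max\{\sum_i g,\, n/(2\mu)\} \le \varepsilon n f(X\beta)$: the paper simply asserts $f(X\beta)\ge n/(2\mu)$ and $f(X\beta) \ge \sum_i g(x_i\beta)$ without justification, and you correctly observe that the intended comparison must be against $n f(X\beta)$ for the normalizations to match. Your use of Lemma \ref{minvalue} for the $n/(2\mu)$ branch is right. However, your argument for $\sum_i g(x_i\beta)\le n f(X\beta)$ has a gap that the paper sidesteps by fiat: $\sum_i g_2(x_i\beta)$ as written sums $2\ell(-|t|)|t|$ over \emph{all} indices, whereas the ``small'' piece of $n f_2$ in Lemma \ref{lem:g-prop} only sums it over $i$ with $x_i\beta>0$; dropping the positive-sign pieces therefore does not yield the analogous $f_2$-bound, and the Cauchy--Schwarz step $n f_3 \le n f_2$ (which is fine) does not rescue the extra $\lambda n f_2$ that appears. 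Fixing this likely requires reading $g_2$ as restricted to the positive-sign indices (plausibly the intent, given the apparent typos in its stated definition), at which point your outline closes; as written, though, that step is not justified, and it is worth noting that the paper's own proof does not supply a justification either.
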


\begin{proof}[Proof of Lemma \ref{lem:smallpart}]
The total weight of all buckets in a level less than $h_m$ is at most $ \sum_{h=1}^{h_{\max}}b^{-h}=b^{-1}\cdot \frac{1-b^{-h_{\max}}}{1-b^{-1}} \leq \frac{2}{b}\leq \frac{\varepsilon}{6\mu}$.
Now let $k \in \{1, 2\}$.
For $i \in [n]$, consider the random variable $X_i=g_k(z_i)$ if $z_i$ is at level $h_m$, and $X_i=0$ otherwise.
Then we have 
\begin{align*}
    E=\mathbb{E}\left(\sum_{i=1}^n X_i\right)= \sum_{i=1}^n p_u g_k(z_i)=p_u \sum_{i=1}^{n}g_k(x_i\beta).
\end{align*}
Further we have $X_i\leq 3$ by Lemma \ref{lem:smallparthelp}. It holds that 
\begin{align*}
    \mathbb{E}\left(\sum_{i=1}^n X_i^2\right)= \sum_{i=1}^n p_u g_k(z_i)^2\leq p_u \sum_{i=1}^{n}3g(x_i\beta) = 3 E.
\end{align*}
We set
\begin{align*}
    L= p_u \cdot \max \left\lbrace \sum_{i=1}^{n}g_k(x_i\beta) , \frac{n}{2\mu} \right\rbrace  \geq E.
\end{align*}
By Assumption \ref{ass:mainass} we have that $p_u \geq \frac{64 \mu m_1}{\varepsilon^2 n}$.
Thus, using Bernstein's inequality we get that
\begin{align*}
P\left(\left|\sum_{i=1}^n X_i - E\right| \geq \frac{\varepsilon}{2} \cdot L \right)
&\leq \exp\left( \frac{-\varepsilon^2L^2/8}{ 3 E + E} \right)\\
&= \exp\left( \frac{-\varepsilon^2 L}{32} \right)\\
&\leq \exp\left( \frac{-\varepsilon^2 p_u n/\mu}{64} \right)\\
&\leq \exp(-m_1).
\end{align*}
Using the union bound for $k=1$ and $k=2$ yields that
\begin{align*}
    P\left( \left|\sum_{i=1}^{n'}w_ig(x_i'\beta)- \sum_{i=1}^{n}g(x_i\beta)\right| > \varepsilon \cdot \max \left\lbrace \sum_{i=1}^{n}g(x_i\beta) , \frac{n}{2\mu} \right\rbrace \right)\leq 2\exp(-m_1).
\end{align*}
By Lemma \ref{minvalue} we have $ f(X\beta) \geq \frac{n}{2\mu}$.
It also holds that $f(X \beta) \geq \sum_{i=1}^{n}g(x_i\beta)$. We thus conclude that $\varepsilon \cdot \max \left\lbrace \sum_{i=1}^{n}g(x_i\beta) , \frac{n}{2\mu} \right\rbrace \leq  \varepsilon f(X\beta) $.
\end{proof}

\section{Estimating the large parts \texorpdfstring{$\Vert z \Vert_1 $}{|z|\_1} and \texorpdfstring{$\Vert z^+ \Vert_1 $}{|z+|\_1}}\label{sec:largeparts}

\begin{lem}\label{Lem3.7}
It holds that $\sum_{q  \in Q^{*}} \Vert W_q^+ \Vert_1 \geq (1-2\varepsilon)\Vert z^+\Vert_1$.
\end{lem}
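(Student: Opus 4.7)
Since $\{W_q^+\}_{q \in \mathbb{N}}$ partitions the positive coordinates of $z$, I would start from the clean identity
\begin{equation*}
\|z^+\|_1 \;-\; \sum_{q \in Q^*} \|W_q^+\|_1 \;=\; \underbrace{\sum_{q > q_m} \|W_q^+\|_1}_{\text{tail}} \;+\; \underbrace{\sum_{\substack{q \leq q_m \\ q \notin Q^*}} \|W_q^+\|_1}_{\text{non-important}}.
\end{equation*}
The plan is to bound the two pieces on the right in absolute terms using the definitions of $q_m$ and $\varepsilon'$, and then convert these into a multiplicative bound against $\|z^+\|_1$ using the $\mu$-complexity of $X$.

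For the tail, every index $i$ with $z_i$ falling in a class $W_q^+$ for $q > q_m$ satisfies $z_i \le 2^{-q_m} = \varepsilon/(n(\mu+1))$; summing over at most $n$ positive coordinates immediately gives $\sum_{q > q_m}\|W_q^+\|_1 \le \varepsilon/(\mu+1)$. For the non-important part, there are at most $q_m+1$ candidate weight classes and each one contributes strictly less than $\varepsilon' = \varepsilon/(\mu q_m)$ by definition, so the total is $(q_m+1)\varepsilon' = O(\varepsilon/\mu)$.

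The next step is to convert these two absolute bounds into a constant multiple of $\|z^+\|_1$. The crucial tool is the $\mu$-complexity of $X$: applying Definition \ref{def:mu_complex} with $-\beta$ in place of $\beta$ yields $\|z^-\|_1 \le \mu\|z^+\|_1$, and combining with the normalization $\|z^+\|_1 + \|z^-\|_1 = \|z\|_1 = 1$ gives the lower bound $\|z^+\|_1 \ge 1/(\mu+1)$. Plugging this in, the tail is bounded by $\varepsilon\|z^+\|_1$ and the non-important part by $O(\varepsilon)\|z^+\|_1$, so summing the two pieces yields $\|z^+\|_1 - \sum_{q \in Q^*}\|W_q^+\|_1 \le 2\varepsilon\|z^+\|_1$, which rearranges to the claim.

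The main obstacle is a purely bookkeeping one: the values of $q_m$ and $\varepsilon'$ are calibrated precisely so that both the tail and the non-important sum come out to the clean $\varepsilon$-scale against $\|z^+\|_1$ advertised in the statement, and getting the advertised constant $2$ (rather than some larger absolute constant) requires being careful about the $(\mu+1)/\mu$ factors produced when moving from the $\mu$-complexity lower bound to the relative error. No new technique is needed beyond the one-line $\mu$-complexity lower bound on $\|z^+\|_1$, which is exactly what converts the $O(\varepsilon/\mu)$ absolute slack into an $O(\varepsilon)\cdot\|z^+\|_1$ relative slack.
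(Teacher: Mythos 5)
Your proposal follows the paper's proof essentially verbatim: the same decomposition of $\|z^+\|_1$ into the $Q^*$ mass, the tail $q>q_m$ (bounded by $n\cdot 2^{-q_m}=\varepsilon/(\mu+1)$), and the non-important classes $q\le q_m$ (bounded by roughly $q_m\cdot\varepsilon'$), followed by the same $\mu$-complexity lower bound $\|z^+\|_1\ge 1/(\mu+1)$ to convert the absolute slack into a relative one. The bookkeeping caveat you flag about the $(\mu+1)/\mu$ factor is real — the paper's own proof silently uses $\varepsilon/((\mu+1)q_m)$ in place of the stated threshold $\varepsilon/(\mu q_m)$ to land exactly on the constant $2$ — but the argument structure is identical.
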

\begin{proof}[Proof of Lemma \ref{Lem3.7}]
First note that
\begin{align*}
    \sum_{z_i < 2^{-q_m} } z_i \leq n \cdot \frac{\varepsilon}{(\mu+1) n}=\varepsilon/(\mu+1).
\end{align*}
Second note that  $\sum_{q \leq q_m, q \notin Q^{*} } \Vert W_q^+ \Vert_1 \leq q_m \cdot \frac{\varepsilon}{(\mu+1) q_m} \leq \varepsilon/(\mu+1)$.
By the $\mu$-condition we have that $\Vert z^- \Vert_1 \leq \mu \Vert z^+ \Vert_1$ and thus we get that $1=\Vert z^- \Vert_1 + \Vert z^+ \Vert_1 \leq \mu \Vert z^+ \Vert_1 + \Vert z^+ \Vert_1$.
Consequently, $\Vert z^+ \Vert_1 \geq  \frac{1}{\mu+1}$ and $\sum_{q  \in Q^{*}} \Vert W_q^+ \Vert_1 \geq \Vert z^+ \Vert_1- \frac{2\varepsilon}{(\mu+1)}\geq (1-2\varepsilon)\Vert z^+\Vert_1 $.
\end{proof}

\subsection{Analysis for a single level}

Fix $h \in [0, h_m]$.
First consider the number of elements at a fixed level $h$.
We can view it as a binomial random variable with parameters $n$ and $p_h$ since the probability for any row to appear at level $h$ is $p_h$.
Since we fix $h$ in this subsection, we set $M=M_h=p_h n$, $p=p_h=\frac{M}{n}$ and $N=N_h$.
We set $U\subset [n]$ to be the set of elements that are sampled at level $h$.
We also set $\mu_z=\frac{\sum_{z_i <0}|z_i|}{\sum_{z_i >0}|z_i|}\leq \mu$.

This and the following subsection are dedicated to proving the existence of bounds $q_h(1)$, $q_h(2)$, $q_h(3)$ and $q_h(4)$ as described in the high level overview, Section \ref{sec:highlevel}.
More precisely we show the following:

\begin{lem}\label{cor:propcor}
With probability at least $1-\frac{\delta}{h_m}$ the weight classes $W_q$ for $q\geq q_{(M, N)}(4):=\log_2(\gamma_2^{-1}):=\log_2(\frac{2  N \ln(N h_{\max}/\delta )}{ p \varepsilon^2})$ and $q \leq q_{(M, N)}(1):= \log_2(\frac{\mu_z \delta}{p h_{\max}})$ have zero contribution to $ \sum_{B}  G^+(B)$, i.e., for any bucket $B$ we have $ \sum_{z_i \in B \setminus I_r} z_i \leq 0$ where $I_r=\{ i \in [n]~|~ z_i\in W_q, q \in [q_{(M, N)}(1), q_{(M, N)}(4)] \} $.
Further, with failure probability at most $\exp(-{\Omega}(m_1))$ there exists, for each $\log_2(\frac{8q_m \mu_z m_1 }{\varepsilon^3 p} ))=:q_{(M, N)}(2)\leq q \leq q_{(M, N)}(3):= \log_2( \frac{ N  \varepsilon^2}{4p } )$, a set $W_q^*$ such that $\sum_{i \in W_q^*}G(B_i)\geq (1 - \varepsilon)^2\Vert W_q^+ \Vert_1\cdot \frac{M}{n} $.
It thus holds that 
\begin{align*}
&q_{(M, N)}(2)-q_{(M, N)}(1)=\log_2\left(\frac{8q_m  m_1 h_m}{\varepsilon^3 \delta}\right) \\
&q_{(M, N)}(3)-q_{(M, N)}(2)=\log_2\left(\frac{ N  \varepsilon^5}{32 m_1 \mu q_m}\right)=:\log_2(b)\\
&q_{(M, N)}(4)-q_{(M, N)}(3)=\log_2\left(\frac{8  \ln(N h_m/\delta )}{\varepsilon^4  }\right).
\end{align*}
\end{lem}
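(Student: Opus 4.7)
} I would split the argument into three pieces, one per interval, and then combine them through union bounds with failure budgets $\delta/h_m$ (for the first claim) and $\exp(-\Omega(m_1))$ (for the second).

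\textbf{Very heavy classes ($q\leq q_{(M,N)}(1)$).} Since $\Vert z\Vert_1=1$, the trivial count bound gives $|\{i:|z_i|>2^{-q-1}\}|\leq 2^{q+1}$, so the total number of coordinates in weight classes $W_q$ with $q\leq q_{(M,N)}(1)$ is at most $2\cdot 2^{q_{(M,N)}(1)}$. Plugging in $2^{q_{(M,N)}(1)}=\mu_z\delta/(ph_m)$, the expected number of such coordinates sampled at level $h$ is at most $2\mu_z\delta/h_m$, and using $\Vert z^+\Vert_1\leq 1/(1+\mu_z)$ to tighten the count and Markov's inequality then yields that no such coordinate is sampled except with probability at most $\delta/h_m$. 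This handles the $q\leq q_{(M,N)}(1)$ half of the first claim.

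\textbf{Very tiny classes ($q\geq q_{(M,N)}(4)$).} For each bucket $B$ I would apply Bernstein (Proposition~\ref{thm:bernstein}) to $Y_B=\sum_{i:q(z_i)\geq q_{(M,N)}(4)} z_i\cdot\mathbbm{1}[i\text{ sampled at }h\text{ and hashed to }B]$. Since every such $z_i$ has magnitude at most $\gamma_2=2^{-q_{(M,N)}(4)}=p\varepsilon^2/(2N\ln(Nh_m/\delta))$, one gets variance bounded by $(p/N)\gamma_2\Vert z\Vert_1$ and per-term bound $\gamma_2$. Bernstein then yields $|Y_B-\mathbb E Y_B|\leq \gamma_2\cdot\tfrac{p\varepsilon^2}{N}$-scale deviations with failure probability $\exp(-\Omega(\ln(Nh_m/\delta)))=O(\delta/(Nh_m))$, and the expected bucket value $\mathbb E Y_B=(p/N)\sum_{\text{tiny}}z_i$ is controlled using the $\mu$-complex structure (the positive tiny mass is dominated by the negative tiny mass after accounting for what has been removed to higher classes). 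A union bound over all $N$ buckets gives $Y_B\leq 0$ simultaneously in every bucket with failure probability $\delta/h_m$, closing the first claim.

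\textbf{Middle classes ($q_{(M,N)}(2)\leq q\leq q_{(M,N)}(3)$).} Fix such a $q$ with $W_q^+$ important, so $\Vert W_q^+\Vert_1\geq \varepsilon'=\varepsilon/(\mu q_m)$. I would construct $W_q^*$ in three filtering steps, using an allowance of $\exp(-\Omega(m_1))$ per step:
\begin{enumerate}
\item \emph{Sampling survival.} Since $|W_q^+|\geq 2^q\Vert W_q^+\Vert_1$ and $q\geq q_{(M,N)}(2)=\log_2(8q_m\mu_z m_1/(\varepsilon^3 p))$, the expected sampled count $p|W_q^+|$ is $\Omega(m_1/\varepsilon^2)$; Chernoff (Proposition~\ref{lem:chernoff}) keeps a $(1-\varepsilon)$-fraction of $W_q^+$ sampled.
\item \emph{Heavy-collision avoidance.} Call $i$ \emph{ruined} if its bucket $B_i$ at level $h$ also contains some element from a class $q'<q_{(M,N)}(2)$. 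The expected number of ruined sampled $i$'s is at most $p|W_q^+|\cdot p\cdot(\text{count of heavier elements})/N$. Using $q\leq q_{(M,N)}(3)=\log_2(N\varepsilon^2/(4p))$ to bound the number of heavier elements, this is an $\varepsilon$-fraction of the sampled survivors, and Markov finishes the step.
\item \emph{Light-noise control.} For each non-ruined sampled $i$, the noise in $B_i$ from lighter classes $q'>q$ is a sum of independent random variables each of magnitude at most $2^{-q-1}$; another Bernstein application on this per-bucket noise gives deviation at most $\varepsilon\cdot 2^{-q}$ w.p.\ $1-\exp(-\Omega(m_1))$, hence $G(B_i)\geq (1-\varepsilon)z_i$ for these $i$.
\end{enumerate}
Defining $W_q^*$ as the set of $i\in W_q^+$ that survive all three filters, the size satisfies $|W_q^*|\geq (1-\varepsilon)^2 p|W_q^+|$ and $\sum_{i\in W_q^*}G(B_i)\geq (1-\varepsilon)^2\Vert W_q^+\Vert_1\cdot M/n$, as claimed. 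The three stated interval widths $q_{(M,N)}(i+1)-q_{(M,N)}(i)$ then follow by direct subtraction of the logarithmic expressions and Assumption~\ref{ass:mainass}.

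\textbf{Main obstacle.} The delicate point is step 3 of the middle-class argument: the noise in $B_i$ from \emph{both} sides (collisions with slightly-heavier non-top classes and aggregate light mass) must be controlled simultaneously with failure probability small enough to survive a union bound over all important $q$ and all sampled $i$. This is what forces the interval $[q_{(M,N)}(2),q_{(M,N)}(3)]$ to be defined logarithmically narrower than $[q_{(M,N)}(1),q_{(M,N)}(4)]$, and what dictates the $m_1=\widetilde\Omega(d\ln n)$ choice so the resulting $\exp(-\Omega(m_1))$ beats the net size in the subsequent global contraction argument.
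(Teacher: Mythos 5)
Your decomposition into three regimes is the right skeleton, and your treatments of the $q\leq q_{(M,N)}(1)$ range and of sampling survival/heavy-collision avoidance in the middle range are essentially the paper's Lemma~\ref{lem:dilprop}~2) and Lemma~\ref{lem:conprop}~1)--3). However there are two real gaps.

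\textbf{The tiny-class argument cannot be closed from $\mu$-complexity alone.} You assert that the expected bucket value over the tiny coordinates is negative ``using the $\mu$-complex structure (the positive tiny mass is dominated by the negative tiny mass after accounting for what has been removed to higher classes).'' This is not true for arbitrary $\mu$-complex $z$ with $\Vert z\Vert_1=1$: $\mu$-complexity constrains the ratio over \emph{all} coordinates, not the sub-vector that survives the truncation to $Y_2$, and one can arrange $\mu$-complex vectors whose small coordinates are predominantly positive (with the negative mass concentrated in a few large entries). The paper's Lemma~\ref{lem:dilprop}~1) is explicitly \emph{conditional} on $E\leq -\varepsilon/n$, and this condition is established separately by Lemma~\ref{lem:dil1} via a convexity argument for the derivative of the restricted loss $\phi(r)=\tilde f(rX\beta)$; that argument uses $f(X\beta)<(1-2\varepsilon)f(0)$, i.e.\ it is a property of good solutions $\beta$, not of $\mu$-complexity. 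Without this, the $q\geq q_{(M,N)}(4)$ half of the first claim does not follow.

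\textbf{The per-bucket Bernstein in step 3 does not recover $q_{(M,N)}(3)$.} You propose a separate Bernstein bound on the light-noise in each bucket $B_i$ with deviation $\varepsilon 2^{-q}$ and failure $\exp(-\Omega(m_1))$. Working out the exponent, the variance is $\tfrac{p}{N}\sum_{q'>q}\Vert W_{q'}\Vert_1\cdot 2^{-q-1}\lesssim \tfrac{p}{N}2^{-q}$ and the per-term bound is $2^{-q-1}$, so Bernstein yields confidence $\exp(-\Omega(m_1))$ only when $2^{-q}\gtrsim \tfrac{pm_1}{\varepsilon^2 N}$, i.e.\ $q\lesssim\log_2(\tfrac{N\varepsilon^2}{pm_1})$, which is weaker than the claimed $q_{(M,N)}(3)=\log_2(\tfrac{N\varepsilon^2}{4p})$ by a $\log_2 m_1$ gap; on top of that you still need to union-bound over all sampled $i$. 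The paper avoids this by a single Bernstein on the \emph{global} quantity $\sum_{i\in[n]\setminus Y_1}|z_i|\mathbbm{1}[i\in U]\leq 2p$ (Lemma~\ref{lem:conprop}~2)), which by counting shows that at most $\varepsilon N/2$ buckets have small-element mass exceeding $\tfrac{4p}{\varepsilon N}$; the per-element bound $G(B_i)\geq(1-\varepsilon)z_i$ then follows \emph{deterministically} for $i$ landing in a good bucket, using $q\leq q_{(M,N)}(3)\Leftrightarrow z_i\geq \tfrac{4p}{\varepsilon^2 N}$. That ordering---one Bernstein, then counting, then a deterministic inequality---is what makes the stated width $q_{(M,N)}(3)-q_{(M,N)}(2)=\log_2 b$ come out exactly.
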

If $N=M$ then we set $ q_{(M, N)}(3)=q_{(M, N)}(4)=\infty$.
If $M=n$ then we set $ q_{(M, N)}(1)=q_{(M, N)}(2)=0$.
We set $q_h(i)=q_{(M_h, N_h)}(i)$ for $i\in \{1 , 2 , 3 , 4\}$ and $Q_h=[q_{h}(2), q_{h}(3)] $ to be the well-approximated weight classes, and $R_h=[q_{h}(1), q_{h}(4)]$ to be the relevant weight classes at level $h$.

We further define the following threshold and set:
\begin{align*}
&\gamma_1:=\frac{p}{3m_1}\\
&Y_1:=\{ i \in [n] ~|~ |z_i|\geq \gamma_1 \} \end{align*}
Here $Y_1$ is the `set of large elements'.
We set $\mathcal{B}_h $ to be the set of all buckets at level $h$.
Recall that $ m_1 \in \mathbb{R}$ is a lower bound on the negative logarithm of the failure probability, which we will need later when union bounding over all failure probabilities.
Also recall that $ G(B)=\sum_{i \in B}z_i$ is the sum of all rows in a bucket $B$.
The following lemma yields the inner bounds, i.e., bounds for $q_h(2)$ and $q_h(3)$, which are the weight class indices that are well represented by $U$.
The first two items show that there are at most $\varepsilon N$ buckets at level $h$ that either contain a large element or have a large sum of small contributions.
The third item shows that if $W_q$ has sufficiently many elements, then there exists a large subset $W_q^*$ where each element is in a bucket with no other large entry such that $ \Vert W_q^* \Vert_1$ is close to $\Vert W_q^+ \Vert_1 \cdot \frac{M}{n}$.
The fourth item shows that
$\sum_{z_i \in W_q^*}G(B_i)$ is close to $\Vert W_q^* \Vert_1$.

\begin{lem}\label{lem:conprop}
The following hold:
\begin{itemize}
\item[1)] $|Y_1 \cap U| \leq \varepsilon N / 2$ with failure probability at most $\exp(-m_1)$;
\item[2)] Let $\mathcal{B}=\{ B \in \mathcal{B}_h ~|~ \sum_{i \in B \setminus Y_1 }|z_i| \leq \frac{4 p}{ \varepsilon N} \} $. Then $|\mathcal{B}|\geq (1-\frac \varepsilon 2)N$ with failure probability at most $\exp(-m_1)$;
\item[3)] Assume that $q \geq \log_2(\frac{8q_m \mu_z m_1 }{\varepsilon^3 p} )$ and that $W_q^+$ is important or $|W_q| \geq 8 m_1 \varepsilon^{-2} \cdot p^{-1}$.  Then with failure probability at most $\exp(-m_1)$ there exists $W_q^* \subset W_q^+ \cap \mathcal{B}$ such that $\Vert W_q^* \Vert_1\geq (1 - \varepsilon)^2\Vert W_q^+ \Vert_1\cdot p $ and each element of $W_q^*$ is in a bucket in $\mathcal{B}$ containing no other element of $Y_1$;
\item[4)] If $q \leq \log_2( \frac{ N  \varepsilon^2}{4p} ) $ and $W_q^*$ as in 3) exists, then with failure probability at most $\exp(-m_1)$ it holds that $\sum_{i \in W_q^*}G(B_i)\geq (1 - \varepsilon)\Vert W_q^* \Vert_1$.
\end{itemize}
\end{lem}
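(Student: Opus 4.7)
The plan is to prove the four parts in sequence, each via an application of Chernoff (Proposition \ref{lem:chernoff} or Lemma \ref{lem:binlem}) or Bernstein (Proposition \ref{thm:bernstein}) to random variables determined by the sampling and hashing at level $h$, with later parts conditioning on the high-probability events of the earlier ones. I work throughout with the normalization $\|z\|_1 = 1$ (absolute homogeneity), and use that $\gamma_1 = p/(3m_1)$ separates ``large'' from ``small'' entries, while Assumption \ref{ass:mainass} guarantees $N \gg m_1/\varepsilon^3$.

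For Part 1, Markov's inequality gives $|Y_1| \leq \|z\|_1/\gamma_1 = 3m_1/p$, so $|Y_1 \cap U|$ is dominated by a binomial with mean at most $3m_1$. Since $\varepsilon N/2$ greatly exceeds $3m_1$ by the size assumption on $N$, Lemma \ref{lem:binlem} applied with $n' = \varepsilon N/2$ yields a failure probability of at most $\exp(-m_1)$. For Part 2, I would apply Bernstein to the sum $S = \sum_{i \notin Y_1} |z_i|\,\mathbf{1}[i \in U]$, whose summands are bounded by $\gamma_1 = p/(3m_1)$, with $\mathbb{E}[S] \leq p$ and $\mathrm{Var}(S) \leq p\gamma_1\|z\|_1 \leq p^2/(3m_1)$. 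Choosing $t = p$, Bernstein gives $S \leq 2p$ with failure probability $\exp(-m_1)$. A deterministic Markov-type count then bounds the number of heavy buckets by $2p/(4p/(\varepsilon N)) = \varepsilon N/2$, yielding $|\mathcal{B}| \geq (1-\varepsilon/2)N$.

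For Part 3, I would condition on the events of Parts 1 and 2 so that the set $\mathcal{G}$ of buckets in $\mathcal{B}$ containing no element of $Y_1 \cap U$ satisfies $|\mathcal{G}|\geq (1-\varepsilon)N$. The key observation is that the hypothesis $q \geq \log_2(8 q_m \mu_z m_1/(\varepsilon^3 p))$ forces $2^{-q} < \gamma_1$, so $W_q^+ \cap Y_1 = \emptyset$ and the $Y_1$-isolation property of $\mathcal{G}$ is independent of the random hashing of $W_q^+$. A Chernoff bound on the binomial $|W_q^+ \cap U|$ of mean $p|W_q^+|$ yields $|W_q^+ \cap U|\geq (1-\varepsilon) p|W_q^+|$ with failure probability $\exp(-\Omega(m_1))$; the assumed lower bound on $|W_q|$ (either from importance via $|W_q^+| \geq 2^q\|W_q^+\|_1 \geq 2^q\varepsilon/(\mu q_m)$ combined with the lower bound on $q$, or directly) ensures this mean is $\Omega(m_1/\varepsilon^2)$. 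A second, independent Chernoff on the uniform hashing of these sampled elements shows a $(1-\varepsilon)$ fraction lands in $\mathcal{G}$. Defining $W_q^*$ to consist of (one representative per occupied good bucket among) sampled $W_q^+$-elements in $\mathcal{G}$, the narrow range $z_i \in (2^{-q-1}, 2^{-q}]$ lets me convert this cardinality bound into $\|W_q^*\|_1 \geq (1-\varepsilon)^2 p\|W_q^+\|_1$ up to absorbing constants into $\varepsilon$.

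For Part 4, I would decompose $G(B_i) = z_i + \sum_{j \in B_i,\, j \neq i} z_j$ for each $i \in W_q^*$. By construction $B_i \in \mathcal{B}$ and contains no element of $Y_1 \cap U$, so the cross-term has absolute value at most $\sum_{j \in B_i \setminus Y_1}|z_j| \leq 4p/(\varepsilon N)$. Summing, $\sum_{i \in W_q^*} G(B_i) \geq \|W_q^*\|_1 - |W_q^*| \cdot 4p/(\varepsilon N)$, and the assumption $q \leq \log_2(N\varepsilon^2/(4p))$ implies $z_i \geq 2^{-q-1} \geq 2p/(\varepsilon^2 N)$ for $i \in W_q^*$, so $\|W_q^*\|_1 \geq |W_q^*| \cdot 2p/(\varepsilon^2 N)$ and the error term is $O(\varepsilon)\|W_q^*\|_1$. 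The main obstacle is Part 3: one must decouple the quality structure of $\mathcal{G}$ from the hashing of $W_q^+$ (resolved by $W_q^+ \cap Y_1 = \emptyset$, which forces $W_q^+$ elements to be ``small'' and thus not to affect the $Y_1$-isolation criterion) and convert cardinality-level Chernoff concentration to the mass-level bound on $\|W_q^*\|_1$ (resolved by the factor-$2$ narrowness of $W_q^+$), while combining the two cases of the hypothesis (``important'' vs.\ explicit $|W_q| \geq 8m_1/(\varepsilon^2 p)$) into a single clean size bound on $|W_q^+|$.
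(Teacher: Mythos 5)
Parts 1, 2, and 4 of your proposal match the paper's argument closely: bounding $|Y_1|\leq \gamma_1^{-1}$ and applying Lemma~\ref{lem:binlem} to the binomial $|Y_1\cap U|$; applying Bernstein to $\sum_{i\notin Y_1}|z_i|\mathbf{1}[i\in U]$ with the variance bound from $|z_i|<\gamma_1$ and then counting heavy buckets by a Markov-type argument; and decomposing $G(B_i)=z_i+(\text{residual})$ with residual at most $4p/(\varepsilon N)$ against $z_i\geq 2^{-q-1}\geq 2p/(\varepsilon^2 N)$. Your observation in Part 3 that the hypothesis on $q$ forces $W_q^+\cap Y_1=\emptyset$ is a useful clarification the paper does not spell out; the paper instead conditions on the assignment of $Y_1\setminus W_q^+$ and passes to the set $\mathcal{B}'\subset\mathcal{B}$ of buckets free of $Y_1\setminus W_q^+$, and your observation is what makes the resulting ``no other element of $Y_1$'' phrasing in the conclusion transparent.

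However, Part 3 has a real gap in the mass bound. You run a Chernoff bound on the \emph{cardinality} $|W_q^+\cap U|$, then a second Chernoff on how many sampled elements land in good buckets, and claim the factor-$2$ narrowness of $W_q^+$ converts this to $\|W_q^*\|_1\geq(1-\varepsilon)^2 p\|W_q^+\|_1$ ``up to absorbing constants into $\varepsilon$.'' That conversion loses a genuine multiplicative factor of $2$: the selected subset could in the worst case be concentrated near $2^{-q-1}$ while $\|W_q^+\|_1$ is dominated by entries near $2^{-q}$, so a count bound only yields $\|W_q^*\|_1\geq 2^{-q-1}|W_q^*|\geq\frac{1}{2}(1-\varepsilon)^2 p\|W_q^+\|_1$. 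A constant factor $\frac{1}{2}$ cannot be hidden in $\varepsilon$: it would degrade the $(1-4\varepsilon)$ contraction of Lemma~\ref{lem:apppbound} to a constant strictly below $1$, which is not a $(1-O(\varepsilon))$ contraction. The fix, and what the paper does, is to run Bernstein \emph{directly on the weighted sum} $Z=\sum_{i\in W_q^+}z_i\,\mathbf{1}[i\text{ sampled and hashed into a bucket in }\mathcal{B}']$: its expectation is at least $(1-\varepsilon)p\|W_q^+\|_1$, each increment is bounded by $2^{-q}$, and $\mathrm{Var}(Z)\leq 2^{-q}\,\mathbb{E}(Z)$, so the deviation $\varepsilon\,\mathbb{E}(Z)$ fails with probability at most $\exp(-\varepsilon^2\,\mathbb{E}(Z)/(3\cdot 2^{-q}))\leq\exp(-m_1)$ once $\mathbb{E}(Z)\geq 3m_1 2^{-q}\varepsilon^{-2}$, which is exactly what the lower bound on $|W_q^+|$ supplies. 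Finally, your ``one representative per occupied good bucket'' restriction on $W_q^*$ is unnecessary and only weakens the count; the paper simply takes $W_q^*$ to be \emph{all} $W_q^+$ elements that are sampled and land in $\mathcal{B}'$.
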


\begin{proof}
1) Note that $|Y_1|\leq \gamma_1^{-1}$ since $\Vert z \Vert_1=1$ and that we can view $|Y_1 \cap U|$ as a binomial random variable with parameters $|Y_1|$ and $p=\frac{M}{n}$.
Thus, the expected number of elements of $Y_1$ at level $h$ is bounded by $|Y_1|\cdot \frac{M}{n}\leq \frac{p}{\gamma_1}= 3m_1\leq \frac{\varepsilon N}{4}$ since $N\geq 12m_1$ (see Assumption \ref{ass:mainass}).
Thus, we get by Lemma \ref{lem:binlem} that
\begin{align*}
P\left( |Y_1 \cap U| \geq \frac{\varepsilon N}{2}\right) &\leq P\left( |Y_1 \cap U|-|Y_1|\cdot p \geq \frac{\varepsilon N}{4}\right) 
\leq P\left( |Y_1 \cap U|-|Y_1|\cdot p \geq 3m_1\right) \\
 &\leq \exp\left(-3m_1/3\right)\leq\exp(-m_1).
\end{align*}

2) For $ i \in T= [n]\setminus Y_1 $ we set $X_i=|z_i|$ if $i \in U$ and $X_i=0$ otherwise.
Since $\sum_{i \in T}|z_i| \leq \Vert z \Vert_1=1$ we have that $\mathbb{E}(\sum_{i \in T} X_i)=p\cdot \sum_{i \in T}|z_i| \leq p$.
Since all `large elements' are in $Y_1$ we have that $X_i < \gamma_1$ for all $i \in [n]$ and thus
\begin{align*}
    \mathbb{E}\left(\sum_{i \in T} X_i^2\right)= \sum_{i \in T}p|z_i|^2
    \leq \sum_{i \in T}p \gamma_1|z_i|
    = p \gamma_1 \sum_{i \in T}|z_i| \leq p\gamma_1.
\end{align*}
Using Bernstein's inequality we get
\begin{align*}
P\left(\sum_{i\in T}X_i \geq 2p \right) \leq \exp\left( -\frac{p^2/2}{p\gamma_1 + p\gamma_1/3} \right)
\leq \exp\left(- \frac{p}{3\gamma_1} \right)= \exp(-m_1).
\end{align*}
This implies that $\sum_{i \in T}X_i \leq 2p$ with failure probability at most $  \exp(-m_1)$.
Now if $\sum_{i \in T}X_i \leq 2p$ then there can be at most $\frac{\varepsilon N}{2}$ buckets $B $ with $G(B\setminus Y_1)\geq \frac{4 p}{ \varepsilon N} $.

3) First note that if $q \geq \log_2(\frac{8q_m \mu_z m_1 }{\varepsilon^3 p} )$ is important then $2^{-q} \cdot |W_q^+| \geq\Vert W_q^+ \Vert_1\geq \frac{\varepsilon}{q_m \mu_z}$, which implies that $|W_q^+| \geq \frac{2^{q}\varepsilon}{q_m \mu_z}\geq 8m_1 \varepsilon^{-2} \cdot p^{-1}$.
Assume that all entries of $Y_1 \setminus W_q^+$ have been assigned and let $\mathcal{B}'\subset\mathcal{B}$ be the buckets of $ \mathcal{B} $ with no elements from $Y_1 \setminus W_q^+$.
By $1)$ and $2)$ there are at least $(1-\varepsilon)N$ buckets in $ \mathcal{B}'$.
For $z_i \in W_q^+$ consider the random variable that takes the value $Z_i=z_i$ if $i \in \bigcup_{B \in \mathcal{B}'} B$ and $Z_i=0$ otherwise. Set $Z=\sum_{z_i \in W_q^+}Z_i$.
We have $Z_i=z_i$ if element $i$ is sampled at level $h$ and sent to a bucket in $\mathcal{B}'$, which happens with probability at least $p\cdot \frac{(1-\varepsilon)N}{N} =(1-\varepsilon)p$.
We thus have for the expected value of $Z$ that
\begin{align*}
    \mathbb{E}(Z)\geq (1-\varepsilon)p \cdot \Vert W_q^+ \Vert_1\geq  
(1-\varepsilon)p \cdot  2^{-q-1} \cdot |W_q^+| 
&\geq  (1-\varepsilon)\cdot 2^{-q-1} \cdot 8m_1 \varepsilon^{-2} \\
&\geq 2^{-q} \cdot 3m_1 \varepsilon^{-2} .
\end{align*}
Further, the maximum value of any $Z_i$ is $2^{-q}$ and the probability that $Z_i=z_i$ is upper bounded by $p$.
Consequently, the variance of $Z$ is bounded by
\[\sum_{z_i \in W_q^+}\mathbb{E}(Z_i^2)
\leq \sum_{z_i \in W_q^+}p z_i^2
\leq 2^{-q}\sum_{z_i \in W_q^+}p z_i
=2^{-q} \mathbb{E}(Z). \]
Using Bernstein's inequality we get that
\begin{align*}
P\left(Z< (1-\varepsilon)^2 p \cdot \Vert W_q^+ \Vert_1 \right)
&\leq P\left(Z-\mathbb{E}(Z)> \varepsilon \mathbb{E}(Z) \right)\\
&\leq \exp\left(\frac{-\varepsilon^2 \mathbb{E}(Z)^2/2}{ 2^{-q} \mathbb{E}(Z)+2^{-q} \varepsilon \mathbb{E}(Z)/3}\right)\\
&\leq \exp\left(\frac{-\varepsilon^2 \mathbb{E}(Z)}{3 \cdot 2^{-q}}\right)\\
& \leq \exp\left( -m_1  \right).
\end{align*}
We set $W_q^*=\{z_i \in W_q^+ ~|~ Z_i=z_i\}$.

4) By $2)$ and $3)$ we have that any entry $z_i \in W_q^*$ is in a bucket $B$ with $\sum_{j \in B \setminus \{i\} }|z_j| \leq \frac{4 p}{ \varepsilon N}$.
Thus, we have for $z_i\geq \frac{4 p}{ \varepsilon^2 N} $ that $\sum_{j \in B_i } z_j \geq z_i-\frac{4 p}{ \varepsilon N}\geq (1-\varepsilon)z_i$.
Now we conclude
\begin{align*}
    \sum_{i \in W_q^*}G(B_i)
    \geq \sum_{i \in W_q^*}(1-\varepsilon)z_i
    = (1 - \varepsilon)\Vert W_q^* \Vert_1 .
\end{align*}
\end{proof}

Note that if all buckets contain only a single element then we can remove the condition $q \leq \log_2( \frac{ N \varepsilon^2}{4p} ) $. Hence, we can set $ q_{(M, N)}(3)=q_{(M, N)}(4)=\infty$ if $N=M$ (respectively, $h=h_m$).

For the outer bounds, i.e., the borders of the interval of weight classes that can have a non-negligible contribution to $U$, we need the following parameters defining the set of small elements:
\begin{align*}
&\gamma_2:= \frac{ p \varepsilon^2}{3 N  \ln(N h_{\max}/\delta )} \\
&Y_2=\{ i \in [n] ~|~ |z_i|\leq \gamma_2 \}\\
\end{align*}
We further set $E$ to be the expected value of an entry chosen uniformly at random from $Y_2$.
\begin{lem}\label{lem:dilprop}
The following hold:
\begin{itemize}
\item[1)] If $E \leq -\varepsilon/n$, then for any bucket $B$ that contains only elements of $Y_2$, we have $G(B)=\sum_{i\in B}z_i \leq 0$ with failure probability at most $\frac{\delta}{N h_{\max}}$.
\item[2)] $U$ contains no element $i$ with $z_i\geq \frac{p h_{\max}}{ \delta}$ with failure probability at most $\frac{\mu_z \delta}{h_{\max}}$.
\end{itemize}
\end{lem}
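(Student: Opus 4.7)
I will prove each part by a separate concentration argument.

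\textbf{Part 1.} The natural approach is Bernstein's inequality (Proposition \ref{thm:bernstein}) applied to the bucket sum restricted to $Y_2$. Fix a bucket $B$ at level $h$, and for each $i\in Y_2$ let $\xi_i$ be the indicator that $i$ is sampled at level $h$ and hashed to $B$; these are i.i.d.\ Bernoulli$(p/N)$ random variables. The sum $G_2(B):=\sum_{i\in Y_2}z_i\xi_i$ coincides with $G(B)$ on any bucket containing only $Y_2$-elements, has mean $\mu_G=(p/N)|Y_2|E\le -(p/N)|Y_2|\varepsilon/n$ by the hypothesis $E\le -\varepsilon/n$, and every summand is bounded by $\gamma_2$ in absolute value. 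The variance sum is controlled by
\[
\sum_{i\in Y_2}\mathbb{E}\bigl[(z_i\xi_i)^2\bigr]\;\le\; (p/N)\gamma_2\sum_{i\in Y_2}|z_i|\;\le\;(p/N)\gamma_2,
\]
using $|z_i|\le\gamma_2$ on $Y_2$ and the normalization $\|z\|_1=1$. Plugging these into Bernstein's inequality with deviation $t=|\mu_G|$ yields
\[
P\bigl(G(B)>0\bigr)\le \exp\!\left(-\frac{|\mu_G|^2/2}{(p/N)\gamma_2+\gamma_2|\mu_G|/3}\right),
\]
and the specific value $\gamma_2=p\varepsilon^2/(3N\ln(Nh_{\max}/\delta))$ is calibrated precisely so that the exponent is at least $\ln(Nh_{\max}/\delta)$, delivering the claimed per-bucket failure probability $\delta/(Nh_{\max})$.

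\textbf{Part 2.} A direct Markov-type bound suffices. Because $\|z\|_1=1$, the number of coordinates with $z_i\ge ph_{\max}/\delta$ is at most $\|z^+\|_1\cdot\delta/(ph_{\max})$. Each such coordinate is independently sampled at level $h$ with probability $p$, so by linearity of expectation the expected number of heavy positive coordinates appearing in $U$ is at most $\|z^+\|_1\cdot\delta/h_{\max}$. Combining with the $\mu_z$-condition $\|z^-\|_1\le\mu_z\|z^+\|_1$ (which, together with $\|z\|_1=1$, lets one absorb the factor $\|z^+\|_1$ into $\mu_z$ in the relevant regime), and applying Markov's inequality, gives the claimed bound.

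\textbf{Main obstacle.} The technical heart lies in Part 1: one must verify that the chosen $\gamma_2$ simultaneously handles the Gaussian and sub-exponential regimes of Bernstein. After substituting $\gamma_2$ and the lower bound $|\mu_G|\ge (p/N)|Y_2|\varepsilon/n$, both regimes reduce to an inequality of roughly the form $(\sum_{i\in Y_2}z_i)^2/\sum_{i\in Y_2}|z_i|\gtrsim \varepsilon^2$. This should follow from the hypothesis $E\le -\varepsilon/n$ together with the dual bounds $\sum_{i\in Y_2}|z_i|\le 1$ and $\sum_{i\in Y_2}|z_i|\le|Y_2|\gamma_2$; a careful case split on whether $|Y_2|$ is large or small relative to $1/\gamma_2$ is the only nontrivial step, with everything else being bookkeeping on the constants.
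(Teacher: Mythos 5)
Your overall strategy is identical to the paper's (Bernstein for part 1, counting heavy entries and multiplying by the sampling rate for part 2), and part 2 matches the paper closely.  However, there is a real gap in your part 1 that you flag but do not close.

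You set $\mu_G=(p/N)\lvert Y_2\rvert E$ and conclude $\lvert\mu_G\rvert\ge(p/N)\lvert Y_2\rvert\varepsilon/n$, which carries an extra factor $\lvert Y_2\rvert/n$.  The paper's own computation takes $E'=n\cdot(p/N)\cdot E\le -p\varepsilon/N$ directly, i.e., it effectively reads $E\le-\varepsilon/n$ as $\sum_{i\in Y_2}z_i\le-\varepsilon$ (so $E$ is normalized by $n$, not by $\lvert Y_2\rvert$).  With the paper's normalization the mean satisfies $\lvert E'\rvert\ge p\varepsilon/N$ unconditionally, and Bernstein with $\gamma_2=p\varepsilon^2/(3N\ln(Nh_{\max}/\delta))$ yields the exponent $\ge\ln(Nh_{\max}/\delta)$ in one line, with no case split: one substitutes $p/(N\lvert E'\rvert)\le\varepsilon^{-1}$ into the denominator, bounds $\varepsilon^{-1}+1/3\le(3/2)\varepsilon^{-1}$, and is done.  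With your normalization, when $\lvert Y_2\rvert\ll n$ the quantity $(\sum_{i\in Y_2}z_i)^2/\sum_{i\in Y_2}\lvert z_i\rvert$ can be as small as $\varepsilon^2\lvert Y_2\rvert^2/n^2$, which is not $\gtrsim\varepsilon^2$; the case split on $\lvert Y_2\rvert$ versus $1/\gamma_2$ does not rescue this, because neither of the two bounds $\sum\lvert z_i\rvert\le 1$ or $\sum\lvert z_i\rvert\le\lvert Y_2\rvert\gamma_2$ supplies the missing $n^2/\lvert Y_2\rvert^2$ factor in general.  So the ``Main obstacle'' you identify is a genuine one under your reading, and the fix is simply to adopt the paper's convention for $E$ (or, equivalently, to strengthen the hypothesis to $\sum_{i\in Y_2}z_i\le-\varepsilon$), after which your Bernstein argument goes through exactly.

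Two smaller remarks.  In part 2, after bounding the expected number of entries with $z_i\ge ph_{\max}/\delta$ that land in $U$, the step you call ``Markov's inequality'' is really the union bound / first-moment bound $\Pr[\text{at least one}]\le\mathbb{E}[\text{count}]$; that is fine, but worth stating precisely.  Also note that $\lVert z^+\rVert_1=1/(\mu_z+1)$ gives you $\delta/((\mu_z+1)h_{\max})$, which is in fact stronger than the stated $\mu_z\delta/h_{\max}$ when $\mu_z\ge1$, matching what the paper's calculation actually produces.
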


\begin{proof}
1) First consider a single bucket $B$ containing only elements of $Y_2$.
For $i \in [n]$, let $X_i$ be a random variable that attains the value $X_i=z_i$ if $i \in B$ and $X_i=0$ otherwise.
The expected value of $G(B)=\sum_{i \in [n]}X_i$ is $E':=n \cdot \frac{p}{ N} \cdot E \leq -\frac{p \varepsilon}{ N} $.
Further, we have that 
\[\mathbb{E}\left(\sum_{i \in [n]}X_i^2\right)=\sum_{i \in Y_2}\frac{p}{ N} \cdot z_i^2 
\leq \gamma_2\cdot \sum_{i \in Y_2} \frac{p}{N } \cdot |z_i| = \gamma_2 \frac{p}{ N} \]
since all $X_i$ are bounded by $\gamma_2$ by assumption.
Thus, applying Bernstein's inequality yields
\begin{align*}
P(G(B) > 0) \leq P\left(\sum_{i \in [n]}X_i - E' \geq |E'| \right)
&\leq \exp\left( \frac{-|E'|^2/2}{ \gamma_2 \frac{p}{ N} +  \gamma_2 |E'|/3} \right)\\
&\leq \exp\left( \frac{-\varepsilon\cdot p/( N)}{2\gamma_2 (p/( N |E'|) + 1/3) } \right)\\
&\leq \exp\left( \frac{-\varepsilon\cdot p/( N)}{2\gamma_2 ( \varepsilon^{-1} + 1/3) } \right)\\
& \leq \exp\left( \frac{-\varepsilon^2 \cdot p/( N)}{3\gamma_2} \right)\\
&\leq \exp\left(-\ln\left(\frac{N h_{\max}}{\delta} \right)\right)\\
&=\frac{\delta}{N h_{\max}}.
\end{align*}

2) Recall that $\sum_{z_i > 0} z_i \leq 1/\mu_z$. Thus, there are at most $ \frac{n  \delta}{\mu_z M h_{\max}}$ entries with $z_i \geq \frac{M h_{\max}}{n \delta}$.
The expected number of those entries in $U$ is thus at most $\frac{n \delta}{\mu_z M h_{\max}} \cdot \frac{M}{n}\leq \frac{\delta}{h_{\max}}$, which also upper bounds the probability of at least one entry with $z_i \geq \frac{M h_{\max}}{n \delta}$  being contained in $U$.
\end{proof}

Putting both lemmas together we get all bounds $q_h(i)$ except $q_0(2) $, which will be handled in the next subsection.

\subsection{Heavy hitters}

In this subsection we will analyze the level containing all entries.
Our goal is to show that we can indeed set $q_0(2)=0$ in Lemma \ref{cor:propcor}.
Let $U$ be as before and assume that $M=n$.
Let $Q_H=\{q \in Q_0 ~|~ |W_q|\geq 8 m_1 \varepsilon^{-2}\}$ where $Q_0=\{ q \leq \log_2(\frac{8q_m \mu m_1}{\varepsilon^3} ) \}$.
We set $H=\bigcup_{q \in Q_H}W_q$ to be the class of heavy hitters.

We let $u \in \mathbb{R}_{\geq 0}^n$ denote the vector whose coordinates $u_i$ denote the $i$-th $\ell_1$-leverage scores, i.e., $u_i= \max_{\beta \in \mathbb{R}^d}\frac{|x_i \beta|}{\sum_{j \in [n]}|x_j\beta|}$.

\begin{lem}\label{Lem3.5new}\citep{MunteanuOW21}
If $u_i$ is the $k$-th largest coordinate of $u$, then for $z$ in the subspace spanned by the columns of $X$ it holds that $|z_i|\leq \frac{d}{k}\Vert z \Vert_1$.
Further, it holds that $\sum_{i =1}^n u_i \leq d$.
\end{lem}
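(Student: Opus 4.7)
The plan is to handle the two claims in order, with the bound $\sum_i u_i \le d$ as the key tool that also implies the first claim. For the first claim, fix any $z$ in the column span, write $z = X\beta$ for some $\beta$, and observe that directly from the definition of $u_i$ we have $|z_i|/\|z\|_1 = |x_i\beta|/\|X\beta\|_1 \le u_i$. So it suffices to prove that if $u_i$ is the $k$-th largest coordinate of $u$, then $u_i \le d/k$. This will be immediate from the second claim, since the $k$ largest coordinates of $u$ each have value at least $u_i$, so $k \cdot u_i \le \sum_{j : u_j \ge u_i} u_j \le \sum_j u_j \le d$.

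For the second claim, the plan is to use an Auerbach basis for the $d$-dimensional subspace $\mathrm{colspan}(X) \subseteq \mathbb{R}^n$ equipped with the $\ell_1$ norm. Concretely, I would invoke the standard Auerbach lemma to obtain a basis $v_1, \ldots, v_d$ of $\mathrm{colspan}(X)$ together with dual functionals $\phi_1, \ldots, \phi_d$ satisfying $\|v_j\|_1 = 1$, $\|\phi_j\|_\infty = 1$ in the dual pairing, and $\phi_j(v_k) = \delta_{jk}$. Then any $z \in \mathrm{colspan}(X)$ expands uniquely as $z = \sum_j \alpha_j v_j$ with $\alpha_j = \phi_j(z)$, and the Auerbach property gives $|\alpha_j| \le \|z\|_1$.

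Using this expansion coordinate-wise, $|z_i| = \bigl|\sum_j \alpha_j v_{j,i}\bigr| \le \sum_j |\alpha_j| \cdot |v_{j,i}| \le \|z\|_1 \sum_j |v_{j,i}|$. Taking the supremum over $z = X\beta \ne 0$ yields
\[
u_i = \sup_{\beta \ne 0} \frac{|x_i\beta|}{\|X\beta\|_1} \le \sum_{j=1}^d |v_{j,i}|.
\]
Summing over $i$ and swapping the order of summation gives $\sum_{i=1}^n u_i \le \sum_{j=1}^d \sum_{i=1}^n |v_{j,i}| = \sum_{j=1}^d \|v_j\|_1 = d$, which is the desired bound.

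The only real obstacle is appealing cleanly to the Auerbach basis, which is a standard fact but not developed in the paper; I would either cite it directly or, to be self-contained, construct such a basis by choosing a $d \times d$ submatrix of a basis matrix $V$ (whose columns span $\mathrm{colspan}(X)$ viewed in coordinates of some fixed basis) that maximizes $|\det|$, then rescale so that the corresponding $d$ selected rows of $V$ form the identity, which via Cramer's rule forces the remaining rows to have $\ell_\infty$ norm at most $1$. Beyond this, both claims follow mechanically as above.
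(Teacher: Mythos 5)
Your main proof is correct. The paper itself does not reproduce a proof of this lemma---it is cited from \citep{MunteanuOW21}---but the remark the paper makes after the parallel $\ell_2$ statement (Lemma~\ref{Lem3.5'}, ``follows as in the case of $\ell_1$ leverage scores by using an orthonormal basis'') makes clear that the intended argument is exactly yours: reduce the pointwise bound to the counting inequality $k\,u_i \le \sum_j u_j \le d$, and prove $\sum_j u_j \le d$ by expanding $z$ in a basis that is ``well-conditioned'' for $\ell_1$, i.e.\ an Auerbach basis. Your derivation of $u_i \le \sum_j |v_{j,i}|$ and the final double-count $\sum_i u_i \le \sum_j \|v_j\|_1 = d$ are both correct.

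The self-contained construction you sketch at the end, however, does not produce an Auerbach basis and would break the final step. Picking the $d\times d$ row-submatrix $V_{I,:}$ of maximal $|\det|$ and setting $\tilde V = V\,V_{I,:}^{-1}$ gives, via Cramer's rule, $|\tilde V_{j,k}| \le 1$ for all $j,k$, with $\tilde V_{I,:} = I_d$. This bounds the rows in $\ell_\infty$, not the columns in $\ell_1$: one only gets $\|\tilde V_{:,k}\|_1 \le n$ for each column, so the crucial identity $\sum_j \|v_j\|_1 = d$ is replaced by a bound of order $nd$, which is useless here. The genuine Auerbach construction is a different optimization: you maximize $|\det(v_1,\dots,v_d)|$ over $d$-tuples $(v_1,\dots,v_d)$ drawn from the $\ell_1$ unit ball of the subspace (so the $\|v_j\|_1 = 1$ normalization is built in), and obtain the dual functionals $\phi_j$ as ratios of determinants, with $\|\phi_j\|_* \le 1$ following from maximality. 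If you want to be self-contained, use that construction; otherwise simply cite Auerbach's lemma. The rest of the argument is sound.
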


\begin{lem}\label{Lem3.4new}
Let $Y_3=\{ i ~|~ u_i \geq \gamma_3 \}$ and $N_1=|Y_3|$ where $\gamma_3= \frac{\varepsilon^3} {8q_m \mu m_1}$.
Further, for $j \in Y_3$ let $\mathcal{C}_j=\{ B ~|~ \sum_{i \in B \setminus \{j\}}u_i \geq  \varepsilon \gamma_3 \}$.
Then for all $j \in Y_3$ we have that $ |\mathcal{C}_j| $ is bounded by $N_2=d(\varepsilon \gamma_3)^{-1}$.
Further if $N\geq  N_1 N_2 \kappa^{-1}$ for $\kappa \in (0, 1/2)$, then with probability $1-2\kappa$, each member of $Y_3$ is in a bucket in $\mathcal{B}_0 \setminus \mathcal{C}_j $.
\end{lem}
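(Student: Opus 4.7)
The plan is to handle the two assertions separately: the bound on $|\mathcal{C}_j|$ is a deterministic pigeonhole argument, and the placement guarantee is a union bound over two natural bad events.

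First I would bound $|\mathcal{C}_j|$ deterministically. At level $0$ each index is placed in exactly one bucket, so
\[
\sum_{B \in \mathcal{B}_0}\sum_{i \in B\setminus\{j\}} u_i \;=\; \sum_{i \neq j} u_i \;\leq\; d,
\]
where the final inequality is the leverage-score total from Lemma \ref{Lem3.5new}. Each bucket in $\mathcal{C}_j$ contributes at least $\varepsilon\gamma_3$ to this sum, hence $|\mathcal{C}_j|\cdot \varepsilon\gamma_3\leq d$, i.e., $|\mathcal{C}_j|\leq d/(\varepsilon\gamma_3)=N_2$, with probability one.

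The placement claim is harder because $\mathcal{C}_j$ itself depends on the random hashing, including on where the other heavy hitters of $Y_3$ are sent. I would decouple this via two events. Let $\mathcal{A}$ be the event that no two elements of $Y_3$ share a bucket, and let $\mathcal{B}$ be the event that for every $j\in Y_3$ the purely background leverage mass in $j$'s bucket $B_j$, namely $\sum_{i \in B_j\setminus (Y_3\cup\{j\})} u_i$, is strictly less than $\varepsilon\gamma_3$. Under $\mathcal{A}\cap \mathcal{B}$ the entire non-$j$ mass in $B_j$ consists of background contributions and is below $\varepsilon\gamma_3$, so $B_j\in\mathcal{B}_0\setminus\mathcal{C}_j$ for every $j$, as required.

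To bound $\Pr[\neg\mathcal{A}]$, any fixed pair in $Y_3$ collides with probability $1/N$, and a union bound over $\binom{N_1}{2}\leq N_1^2/2$ pairs yields $\Pr[\neg\mathcal{A}]\leq N_1^2/(2N)$. Since $u_j\geq \gamma_3$ for $j\in Y_3$ gives $N_1\leq d/\gamma_3 = \varepsilon N_2\leq N_2$, the assumption $N\geq N_1 N_2/\kappa$ implies $\Pr[\neg\mathcal{A}]\leq \kappa/2$. For $\mathcal{B}$, the same pigeonhole as in the first step, now restricted to the hashing of the non-$Y_3$ elements, gives that the set $\mathcal{C}_j':=\{B:\sum_{i\in B\setminus (Y_3\cup\{j\})} u_i \geq \varepsilon\gamma_3\}$ has $|\mathcal{C}_j'|\leq N_2$ deterministically. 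Conditioning on the background hashing, each $j\in Y_3$ is independently uniform over the $N$ buckets, so $\Pr[B_j\in\mathcal{C}_j']\leq N_2/N$, and a union bound over the $N_1$ heavy hitters gives $\Pr[\neg\mathcal{B}]\leq N_1 N_2/N \leq \kappa$. Combining the two estimates yields the desired $1-2\kappa$ bound, with slack $\kappa/2$.

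The main obstacle I anticipate is precisely the randomness of $\mathcal{C}_j$: one cannot directly estimate $\Pr[B_j\in\mathcal{C}_j]$ by a single Markov/union bound because $\mathcal{C}_j$ and $B_j$ are correlated through the hashing of the other heavy hitters. Splitting into the within-$Y_3$ collision event $\mathcal{A}$ and the background event $\mathcal{B}$ is what makes the remaining estimates independent in the required conditional sense and gives the stated probability.
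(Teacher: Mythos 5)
Your proposal is correct, but it takes a longer route than the paper and is motivated by an obstacle that does not actually exist. You write that ``one cannot directly estimate $\Pr[B_j\in\mathcal{C}_j]$ by a single Markov/union bound because $\mathcal{C}_j$ and $B_j$ are correlated through the hashing of the other heavy hitters.'' In the fully random hashing used here (Algorithm~\ref{alg:main}), the bucket $B_j$ of item $j$ is determined solely by $j$'s own hash, and $\mathcal{C}_j$ is determined solely by the hashes of all items $i\neq j$ (heavy hitters included); these are independent. Conditioning on the hashes of all $i\neq j$ freezes $\mathcal{C}_j$ as a deterministic set of size at most $N_2$, and $j$ then lands in it with probability at most $N_2/N$; a union bound over the at most $N_1$ members of $Y_3$ gives failure probability $N_1N_2/N\leq\kappa$, which is exactly the paper's argument (and in fact gives a tighter $1-\kappa$ than the stated $1-2\kappa$). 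Your decomposition into the no-collision event $\mathcal{A}$ and the small-background event $\mathcal{B}$ is a valid alternative: your pigeonhole bounds on $|\mathcal{C}_j|$ and on $|\mathcal{C}'_j|$ are correct, the bookkeeping $N_1\leq d/\gamma_3=\varepsilon N_2\leq N_2$ is right, and the two union bounds give $\Pr[\neg\mathcal{A}]\leq\kappa/2$ and $\Pr[\neg\mathcal{B}]\leq\kappa$, so $\mathcal{A}\cap\mathcal{B}$ holds with probability at least $1-3\kappa/2\geq 1-2\kappa$. What you gain is a proof that does not rely on noticing the independence of $B_j$ from the remaining hashes; what you lose is brevity, since the within-$Y_3$ collision event is superfluous once that independence is observed.
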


\begin{proof}
Since by Lemma \ref{Lem3.5new} it holds that $\sum_{i=1}^n u_i \leq d$, there can be at most $d\frac{1 }{\varepsilon \gamma_3}=N_2 $ buckets $B$ with $\sum_{i \in B}u_i \geq \varepsilon \gamma_3$.
In particular this implies that $|\mathcal{C}_j| \leq N_2$.
The probability of any element of $j \in Y_3$ getting assigned to a bucket in $ \mathcal{C}_j$ is at most $\frac{N_2}{N}\leq \frac{\kappa}{N_1}$.
Using the union bound the probability that any element $j \in Y_3$ is assigned to a bucket in $\mathcal{B}_0 \setminus \mathcal{C}_j $ is at most $\kappa$.
\end{proof}

We apply Lemma \ref{Lem3.4new} with $\kappa = \delta$.
We denote by $\mathcal{E}_1$ the event that all coordinates in $j \in Y_3$ are in a bucket in $\mathcal{B}_0 \setminus \mathcal{C}_j $.
By Lemma \ref{Lem3.4new} $\mathcal{E}_1$ holds with probability at least $1-\delta$ for an appropriate $N=N_0' =N_1N_2 \delta^{-1}=\frac{64 d^2 q_m^2\mu^2 m_1^2 }{\delta\varepsilon^7 }= \mathcal{O}(\frac{d^2 q_m^2\mu^2 m_1^2 }{\delta\varepsilon^7 } )$.
For any entry $z_i \in H$ we have $z_i\geq \gamma_3$ and thus by Lemma \ref{Lem3.5new}, we have $i \in Y_3$.
It remains to show that the remaining entries in the buckets containing a heavy hitter only have a small contribution.

\begin{lem} \label{Lem3.10new}
Assume $\mathcal{E}_1$ holds.
Then for any $z_i \in H$ we have $G(B_i) \geq (1-\varepsilon)  z_i $.
\end{lem}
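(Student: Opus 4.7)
\textbf{Proof plan for Lemma \ref{Lem3.10new}.}
My plan is to bound the ``pollution'' in the bucket containing a heavy hitter using the $\ell_1$-leverage scores and the defining property of the event $\mathcal{E}_1$.

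\textbf{Step 1: relating coordinates of $z$ to leverage scores.}
The first step is to recall that we have normalized $\Vert z \Vert_1 = 1$ and that $z = X\beta$ for some $\beta$. By the very definition of the $\ell_1$-leverage score $u_j = \max_{\beta'\neq 0}\frac{|x_j \beta'|}{\Vert X\beta' \Vert_1}$, plugging in the particular $\beta$ that yields $z$ gives $|z_j| = |x_j\beta| \leq u_j \cdot \Vert X\beta \Vert_1 = u_j$. This is the only place where we use that $z$ lies in the column space of $X$.

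\textbf{Step 2: using $\mathcal{E}_1$ to control the bucket.}
Fix $z_i \in H$. Since $z_i \geq 2^{-q-1}$ for some $q \in Q_H$ and $q \leq \log_2(\gamma_3^{-1})$, we have $z_i \geq \gamma_3/2$; a mild rescaling of the threshold in the definition of $\gamma_3$ (absorbed into the constant in front of $\varepsilon$) gives $z_i \geq \gamma_3$. In particular $u_i \geq z_i \geq \gamma_3$, so $i \in Y_3$. By the assumption that $\mathcal{E}_1$ holds, $i$ is mapped into a bucket $B_i \in \mathcal{B}_0 \setminus \mathcal{C}_i$, which by the definition of $\mathcal{C}_i$ in Lemma \ref{Lem3.4new} means
\begin{equation*}
\sum_{j \in B_i \setminus \{i\}} u_j \;<\; \varepsilon\, \gamma_3.
\end{equation*}

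\textbf{Step 3: combining and concluding.}
Combining Steps 1 and 2,
\begin{equation*}
\sum_{j \in B_i \setminus \{i\}} |z_j| \;\leq\; \sum_{j \in B_i \setminus \{i\}} u_j \;<\; \varepsilon\, \gamma_3 \;\leq\; \varepsilon\, z_i.
\end{equation*}
Therefore
\begin{equation*}
G(B_i) \;=\; z_i + \sum_{j \in B_i \setminus \{i\}} z_j \;\geq\; z_i - \sum_{j \in B_i \setminus \{i\}} |z_j| \;\geq\; (1-\varepsilon)\, z_i,
\end{equation*}
which is exactly the claim.

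\textbf{Main obstacle.}
The argument is essentially a bookkeeping combination of the leverage-score-based domination $|z_j|\leq u_j$ with the quantitative isolation of heavy hitters guaranteed by $\mathcal{E}_1$. The only subtle point is matching the cutoff $\gamma_3$ in the definition of $H$ with the cutoff in $Y_3$, since a weight class $W_q$ only guarantees $z_i > 2^{-q-1}$ rather than $z_i \geq 2^{-q}$; this loses a factor of two, which can be absorbed into the constant in $\varepsilon$ (or, equivalently, by redefining $\gamma_3$ with an extra factor of $2$). Beyond this, no probabilistic argument is needed in the proof of this lemma itself --- all randomness has been paid for in Lemma \ref{Lem3.4new} via the event $\mathcal{E}_1$.
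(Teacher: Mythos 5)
Your proof is correct and follows the same route as the paper's: use $|z_j|\leq u_j$ (from the definition of $\ell_1$-leverage scores and $\|z\|_1=1$), then invoke $\mathcal{E}_1$ to bound $\sum_{j\in B_i\setminus\{i\}}u_j<\varepsilon\gamma_3\leq\varepsilon z_i$, and conclude. You also correctly flag a factor-of-two slack between the weight-class threshold $2^{-q-1}$ and $\gamma_3$ that the paper's terse statement ``$z_i\geq\gamma_3$'' glosses over; as you say, this is absorbed into the constant and does not affect the argument.
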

{
\begin{proof}
Let $z_i \in H$.
Note that by Lemma \ref{Lem3.5new} $ i \in Y_3$.
By $\mathcal{E}_1$ we have that $\sum_{j \in B \setminus \{i\}}u_j \leq \varepsilon \gamma_3 \leq \varepsilon z_i$.
We conclude that
\begin{align*}
   G(B_i)\geq z_i-\sum_{j \in B_i \setminus \{i\}}|z_j|
   \geq z_i - \sum_{j \in B_i \setminus \{i\}}u_j
   \geq z_i - \varepsilon z_i \geq (1-\varepsilon)  z_i.
\end{align*}
\end{proof}
}

\subsection{Contraction bounds for a single point}

We set $U_h$ to be the rows $z_i$ sampled at level $h$.
Combining previous subsections we get the following lemma:

\begin{lem}\label{lem:apppbound}
Assume that $\mathcal{E}_1$ holds.
Denote by $z_i'$ the $i$-th row of $SX\beta$ for $i \in n'$.
Then with failure probability at most $(2h_m+2q_m)e^{-m_1}$ it holds that
\begin{align*}
\sum_{i\in n', z_i' \geq 0}w_i z_i' \geq (1-4\varepsilon)\|(X\beta)^+\|_1.
\end{align*}
\end{lem}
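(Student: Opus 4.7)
The plan is to decompose $\Vert z^+ \Vert_1$, where $z = X\beta$, into contributions from important weight classes via Lemma \ref{Lem3.7}, and then to match each important weight class to a single level of the sketch that approximately preserves its contribution to the positive part of $SX\beta$. Summing those contributions across levels, with care to avoid double counting, should yield the claimed $(1-4\varepsilon)$ lower bound.

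First I would set up the overall union bound. For each of the $h_m$ levels, items 1 and 2 of Lemma \ref{lem:conprop} each fail with probability at most $e^{-m_1}$, together contributing $2 h_m e^{-m_1}$. For each of the at most $q_m$ important weight classes $q \in Q^*$, items 3 and 4 of Lemma \ref{lem:conprop} applied at a chosen covering level each fail with probability at most $e^{-m_1}$, together contributing $2 q_m e^{-m_1}$. Combined with the hypothesis that $\mathcal{E}_1$ holds, this yields the stated failure probability $(2 h_m + 2 q_m) e^{-m_1}$.

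Second I would exploit the coverage provided by the parameter choices. By Lemma \ref{cor:propcor} and Assumption \ref{ass:mainass}, the intervals $[q_h(2), q_h(3)]$ for $h \geq 1$ fit together consecutively with $q_{h+1}(2) = q_h(3)$, and together with the heavy-hitter range at level $0$ they jointly cover $[0, q_m] \supseteq Q^*$. For each $q \in Q^*$ I would select a unique covering level $h(q)$. When $h(q) = 0$ (heavy-hitter case), Lemma \ref{Lem3.10new} under $\mathcal{E}_1$ gives $G(B_i) \geq (1-\varepsilon) z_i$ for every $z_i \in W_q^+$, these heavy hitters lie in pairwise distinct buckets, and with the weight $1/p_0 = 1$ this produces a contribution of at least $(1-\varepsilon) \Vert W_q^+ \Vert_1$ to the target sum. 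When $h(q) \geq 1$, Lemma \ref{cor:propcor} yields a set $W_q^* \subseteq W_q^+$ with $\sum_{i \in W_q^*} G(B_i) \geq (1-\varepsilon)^2 \Vert W_q^+ \Vert_1 \cdot p_{h(q)}$; multiplying by the bucket weight $1/p_{h(q)}$ gives a contribution of at least $(1-\varepsilon)^2 \Vert W_q^+ \Vert_1$.

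Finally I would sum over $q \in Q^*$. Different levels use disjoint bucket sets of $SX\beta$, and within a single level the selected $W_q^*$ sets for different $q$ all lie in the large-element set $Y_1$ and so by items 1--3 of Lemma \ref{lem:conprop} occupy pairwise distinct buckets. There is thus no double counting, and each contributing bucket has $G(B) > 0$ so is retained after the restriction to $z_i' \geq 0$. Combining with Lemma \ref{Lem3.7} gives
\[ \sum_{i:\, z_i' \geq 0} w_i z_i' \;\geq\; (1-\varepsilon)^2 \sum_{q \in Q^*} \Vert W_q^+ \Vert_1 \;\geq\; (1-\varepsilon)^2 (1-2\varepsilon) \Vert z^+ \Vert_1 \;\geq\; (1-4\varepsilon) \Vert z^+ \Vert_1, \]
where the last step uses $\varepsilon \leq 1/4$. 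The main obstacle I anticipate is the careful no-double-counting argument: one must verify that across all chosen pairs $(q, h(q))$ the selected buckets stay disjoint, and that contributions from unselected weight classes or negative-sum buckets cannot offset the bound. This is handled by simply discarding unselected contributions (which only weakens the lower bound) and by the construction of $W_q^*$ in Lemma \ref{lem:conprop}, which isolates each selected heavy element in a bucket free of other elements of $Y_1$.
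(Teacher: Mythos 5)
Your overall plan matches the paper's proof closely: both decompose $\Vert z^+\Vert_1$ via Lemma~\ref{Lem3.7}, cover $Q^*$ by the intervals $[q_h(2),q_h(3)]$ plus the heavy-hitter level, lower-bound each covered weight class by $(1-\varepsilon)^2\Vert W_q^+\Vert_1$ using Lemma~\ref{lem:conprop}/\ref{cor:propcor} at levels $h\geq 1$ and Lemma~\ref{Lem3.10new} at level $0$, and then chain $(1-\varepsilon)^2(1-2\varepsilon)\geq 1-4\varepsilon$. Your accounting for the $(2h_m+2q_m)e^{-m_1}$ failure probability is reasonable and makes explicit what the paper leaves implicit.

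There is, however, a concrete error in your no-double-counting step. You assert that ``within a single level the selected $W_q^*$ sets for different $q$ all lie in the large-element set $Y_1$,'' and conclude disjointness from items 1--3 of Lemma~\ref{lem:conprop}. This is not what the construction gives. At a level $h\geq 1$, item 3 applies only to weight classes with $q\geq q_h(2)=\log_2\bigl(\tfrac{8 q_m \mu_z m_1}{\varepsilon^3 p_h}\bigr)$, which forces $z_i\leq 2^{-q}\leq \tfrac{\varepsilon^3 p_h}{8 q_m \mu_z m_1}$, a quantity that is (for generic $\mu_z$) strictly \emph{below} the $Y_1$ threshold $\gamma_1=p_h/(3m_1)$. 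So the elements of $W_q^*$ for $h\geq 1$ are generically \emph{not} in $Y_1$; indeed, item 3's bucket condition ``no \emph{other} element of $Y_1$'' is vacuous precisely because these elements are excluded from $Y_1$ by design. Your premise therefore fails, and the deduction from items 1--3 does not go through as stated. The relevant disjointness property needs a different justification: at a fixed level, any two selected elements $z_i\in W_q^*$, $z_j\in W_{q'}^*$ both exceed $2^{-q_h(3)-1}=\tfrac{2p_h}{N\varepsilon^2}>\tfrac{4p_h}{\varepsilon N}$, so if they shared a bucket $B$, the constraint $\sum_{k\in B\setminus Y_1}|z_k|\leq \tfrac{4p_h}{\varepsilon N}$ defining $\mathcal{B}$ (from item 2) would be violated whether or not one of them is in $Y_1$ (a short case check is required). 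This is a subtler argument than the one you gave, and it is the piece the paper's terse proof also elides; getting it right requires using the numerical relationship between $q_h(3)$, the $\mathcal{B}$-threshold $\tfrac{4p_h}{\varepsilon N}$, and $\gamma_1$, rather than a blanket $Y_1$-membership claim.

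One smaller note: at level $0$ the pairwise-distinct-bucket property of the heavy hitters you invoke actually comes from the event $\mathcal{E}_1$ of Lemma~\ref{Lem3.4new} (no collision with other high $\ell_1$-leverage coordinates in $Y_3$), not from items 1--3 of Lemma~\ref{lem:conprop}; your phrasing there is correct, but the parenthetical attribution later in the paragraph conflates the two mechanisms.
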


\begin{proof}
By Lemma \ref{lem:conprop} and Lemma \ref{Lem3.10new} we have that for each important weight class $W_q^+$ there exists a subset $W_q^*\subseteq U_h$ with $ \sum_{i \in W_q^*}G(B_i)\geq (1 - \varepsilon)^2\Vert W_q^+ \Vert_1 p_h$. For $q \in Q_H$ we can set $W_q^*=W_q^+$.
Then using Lemma \ref{Lem3.7} we get
\begin{align*}
    \sum_{i\in n', z_i' \geq 0}w_i z_i'&\geq \sum_{q \in Q^*} p_h^{-1}\sum_{i \in W_q^*}G(B_i)\\
    &\geq \sum_{q \in Q^*}(1 - \varepsilon)^2\Vert W_q^+ \Vert_1\\
    &\geq (1-2\varepsilon)(1-\varepsilon)^2\|(X\beta)^+\|_1
    \geq (1-4\varepsilon)\|(X\beta)^+\|_1.
\end{align*}
\end{proof}

\subsection{Dilation bounds}

Given $\beta  \in \mathbb{R}^d$ and $z=X\beta$ set $Z_0=Z_0(\beta)\subset Z=\{z_1, \dots, z_n\}$ to be the set of the $(1-\varepsilon)n$ largest entries ordered by absolute value. In other words, we remove the $\varepsilon n$ smallest entries.
Similarly we set $Z_1=Z_1(\beta)\subset Z$ to be the set of the $(1-2\varepsilon)n$ largest entries.
Again we assume that $\Vert z \Vert_1=1$.
Our next goal is to show that if $f(z)$ is small then $\sum_{z_i \in Z_0} z_i$ remains negative even if we remove the smallest entries. Here small means negative with large absolute value.
This shows that the assumption of Lemma \ref{lem:dilprop} 1) is fulfilled.

\begin{lem}\label{lem:dil1}
If $f(X\beta)< (1-2\varepsilon)f(0)$ then it holds that
\begin{align*}
\sum_{z_i \in Z_0, z_i\leq 0} |z_i| \geq (1+ \varepsilon) \sum_{z_i\geq 0} |z_i|
\end{align*}
\end{lem}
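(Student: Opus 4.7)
The plan is to convert the hypothesis on $f$ into a quantitative upper bound on $P := \|z^+\|_1$, where $z=X\beta$ is normalized to $\|z\|_1=1$ as in the surrounding discussion. Since the variance term $f_2 - f_3$ is non-negative (as noted in the preliminaries, via $(\tfrac{1}{n}\sum \ell(z_i))^2 \leq \tfrac{1}{n}\sum \ell(z_i)^2$), the hypothesis $f(X\beta) < (1-2\varepsilon)f(0) = (1-2\varepsilon)\ln 2$ already forces $f_1(X\beta) < (1-2\varepsilon)\ln 2$. By Lemma \ref{lem:g-prop} this rewrites as $\|z^+\|_1 + \sum_i \ell(-|z_i|) < n(1-2\varepsilon)\ln 2$.

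Next I will lower-bound $\sum_i \ell(-|z_i|)$ using a single tangent line of $\ell$ at the origin. Since $\ell$ is convex with $\ell(0)=\ln 2$ and $\ell'(0)=\tfrac{1}{2}$, we have $\ell(r) \geq \ln 2 + r/2$ for every $r\in\mathbb{R}$, so $\ell(-|z_i|) \geq \ln 2 - |z_i|/2$. Summing over $i$ and using $\|z\|_1=1$ yields $\sum_i \ell(-|z_i|) \geq n\ln 2 - \tfrac{1}{2}$, and substituting this into the previous display gives $P := \|z^+\|_1 < \tfrac{1}{2} - 2n\varepsilon \ln 2$. If this right-hand side is already negative (which happens for moderately large $n$), the hypothesis is vacuous and there is nothing to prove.

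For the final step I will trim and compare. Let $M := \sum_{z_i\leq 0}|z_i| = 1-P$. The $\varepsilon n$ entries removed in passing from $Z$ to $Z_0$ are the smallest in absolute value; by the standard sorted-tail bound $|z_{(k)}| \leq \|z\|_1/k = 1/k$, applied at $k=(1-\varepsilon)n+1$, their total absolute mass is at most $\varepsilon n \cdot \tfrac{1}{(1-\varepsilon)n} = \tfrac{\varepsilon}{1-\varepsilon}$. Hence $\sum_{z_i \in Z_0,\, z_i \leq 0}|z_i| \geq M - \tfrac{\varepsilon}{1-\varepsilon} = 1 - P - \tfrac{\varepsilon}{1-\varepsilon}$, and it suffices to check $1 - P - \tfrac{\varepsilon}{1-\varepsilon} \geq (1+\varepsilon) P$, i.e., $P \leq \tfrac{1-2\varepsilon}{(1-\varepsilon)(2+\varepsilon)}$. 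Under Assumption \ref{ass:mainass} we have $\varepsilon \leq 1/4$, so $(1-\varepsilon)(2+\varepsilon)\leq 2$ and the right-hand side is at least $\tfrac{1}{2}-\varepsilon$; the bound $P < \tfrac{1}{2} - 2n\varepsilon \ln 2$ from the previous paragraph is strictly stronger whenever $n\geq 1$, since $2\ln 2 > 1$.

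The proof is essentially a chain of elementary inequalities, so there is no single hard step; the only thing to be careful about is to use the tangent-line bound at $r=0$ rather than a sharper exponential-decay bound like $\ell(-t)\approx e^{-t}$, because the \emph{linear} slack in $|z_i|$ is precisely what produces a $\Theta(n\varepsilon)$ margin in the bound on $P$, which is more than enough to absorb the $\tfrac{\varepsilon}{1-\varepsilon}$ mass of the trimmed entries and still beat the $(1+\varepsilon)$ factor on the positive side.
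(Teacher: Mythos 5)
Your route is genuinely different from the paper's: the paper restricts the loss sum to the $(1-2\varepsilon)n$ largest coordinates $Z_1$, then uses convexity of $r \mapsto \sum_{i\in Z_1}\ell(rz_i)$ (the derivative at $0$ equals $\tfrac12\sum_{Z_1}z_i$) to conclude that the restricted sum of $z_i$ is negative, and finishes by an averaging step. You instead apply the tangent-line bound to \emph{all} coordinates and pair it with a sorted-tail bound. That is a reasonable idea, but there is a genuine error in how you carry it out.

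The hypothesis $f(X\beta)<(1-2\varepsilon)f(0)$ is \emph{not} scale-invariant, so you are not free to normalize $z=X\beta$ to $\|z\|_1=1$ when you use that hypothesis. Concretely, you rewrite the hypothesis via Lemma~\ref{lem:g-prop} as $\|z^+\|_1+\sum_i\ell(-|z_i|)<n(1-2\varepsilon)\ln 2$, which is a statement about the \emph{unnormalized} $z$, and then in the very next line you substitute $\|z\|_1=1$ into the lower bound $\sum_i\ell(-|z_i|)\geq n\ln 2-\|z\|_1/2$. The two uses of $z$ are at different scales, and the resulting bound $P<\tfrac12-2n\varepsilon\ln 2$ is an artifact of this mixing. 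It is not that the hypothesis is vacuous for large $n$: for example $z_i=-c$ for all $i$ with $c$ large has $f(X\beta)=\ell(-c)\to 0<(1-2\varepsilon)\ln 2$, and here $\|z\|_1=cn$ can be arbitrarily large. What vanishes in the normalized picture is the $\varepsilon$-margin, not the hypothesis. Your proof as written therefore proves nothing in the regime where the lemma is actually used.

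The good news is that your idea does work if you keep $z$ unnormalized and track $\|z\|_1=P+N$ with $P=\|z^+\|_1$ and $N=\|z^-\|_1$. The tangent bound then gives $P+n\ln 2-\tfrac{P+N}{2}<n(1-2\varepsilon)\ln 2$, i.e.\ $N>P+4n\varepsilon\ln 2$, and $\sum_i\ell(-|z_i|)\geq 0$ gives $P<(1-2\varepsilon)n\ln 2$. The sorted-tail bound removes at most $\tfrac{\varepsilon(P+N)}{1-\varepsilon}$ of negative mass, so $\sum_{z_i\in Z_0,\,z_i\leq 0}|z_i|\geq N-\tfrac{\varepsilon(P+N)}{1-\varepsilon}$, and the desired inequality reduces after rearranging to $P(3-\varepsilon)\leq 4n(1-2\varepsilon)\ln 2$, which follows from $P<(1-2\varepsilon)n\ln 2$ since $3-\varepsilon<4$. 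So the fix is simply to stop normalizing; the tangent-line-plus-tail strategy is a valid alternative to the paper's restriction-to-$Z_1$ argument once the scales are kept consistent.
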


\begin{proof}
Let $X_1$ denote the matrix $X$ where the columns not corresponding to an entry of $Z_1$ are removed.
We denote by $\tilde{f}$ the function $f$ restricted to $|Z_1|$ entries, i.e., $\tilde{f}(X\beta)=\sum_{x_i \in X_1}\ell(x_i\beta) $.
Since $\ell$ is always larger than $0$, removing $2\varepsilon n$ entries can only reduce $f$. We thus have that
\begin{align*}
\tilde{f}(0)=(1-2\varepsilon)f(0)\geq f(X\beta)= f(Z) \geq \tilde{f}(Z_1).
\end{align*}
Now consider the function $\phi(r)=\tilde{f}(r \cdot X\beta)$.
Note that the derivative of $\phi$ at zero is given by $ \phi'(0)=\sum_{x_i \in X_1}\frac{e^0}{e^0+1}\cdot x_i\beta=\frac{1}{2} \cdot \sum_{z_i \in Z_1} z_i$.
Since $\tilde{f}$ is convex $\phi$ is also convex.
In particular this means that $\tilde{f}(X\beta)< \tilde{f}(0) $ implies $\phi'(0)<0$.
Thus it must hold that $\sum_{z_i \in Z_1} z_i <0$, or equivalently, $\sum_{z_i \in Z_1, z_i < 0} |z_i|> \sum_{z_i \in Z_1, z_i > 0} |z_i|$.
Since all entries in $Z_0 \setminus Z_1$ are less than or equal to any entry in $Z_0$, we have that
\begin{align*}
\sum_{z_i \in Z_0, z_i < 0} |z_i| \geq \frac{1}{1- \varepsilon} \sum_{z_i \in Z_1, z_i < 0} |z_i|\geq (1+ \varepsilon) \sum_{z_i > 0} |z_i|.
\end{align*}
\end{proof}

The following lemma gives us an upper bound on the expected value of $G^+(Z)$.

\begin{lem}\label{lem:constdil}
If for all $i\leq h_m-1$ it holds that $q_{(M_i N_i)}(4)< q_{(M_{i+k} N_{i+k})}(1) $ and $N_0\geq N_0'$, then the expected contribution of any weight class $W_q^+$ is at most $k\cdot \Vert W_q^+ \Vert_1$.
\end{lem}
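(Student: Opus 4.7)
The plan is to bound the expected contribution of $W_q^+$ at each level separately, and then use the hypothesis $q_{(M_i,N_i)}(4) < q_{(M_{i+k},N_{i+k})}(1)$ to limit the number of levels where $W_q^+$ meaningfully contributes. Fix $\beta \in \mathbb{R}^d$, write $z = X\beta$ with $\Vert z \Vert_1 = 1$, let $U_h$ be the indices sampled at level $h$, and define the contribution of $W_q^+$ at level $h$ as
\[ C_{h,q}^+ \;:=\; \frac{1}{p_h}\sum_{i \in W_q^+ \cap U_h} z_i\,\mathbf{1}[G(B_i) > 0]. \]
Since $G^+(B) \leq \sum_{i \in B,\, z_i > 0} z_i\,\mathbf{1}[G(B) > 0]$, the quantity $\sum_q C_{h,q}^+$ upper-bounds $p_h^{-1}\sum_B G^+(B)$, so controlling $\sum_h \mathbb{E}[C_{h,q}^+]$ for each $q$ controls the overall dilation attributable to $W_q^+$.

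First I would establish the per-level bound $\mathbb{E}[C_{h,q}^+] \leq \Vert W_q^+ \Vert_1$ for every $h$. By linearity of expectation, since each $i$ is sampled at level $h$ with probability $p_h$,
\[ \mathbb{E}[C_{h,q}^+] \;=\; \sum_{i \in W_q^+} z_i\,\Pr[G(B_i) > 0 \mid i \in U_h] \;\leq\; \Vert W_q^+ \Vert_1. \]
Next, Lemma \ref{lem:dilprop} implies that $\mathbb{E}[C_{h,q}^+]$ is negligible outside the interval $R_h := [q_h(1), q_h(4)]$. For $q < q_h(1)$, elements of $W_q$ have magnitude at least $p_h h_m/(\mu_z\delta)$, so Lemma \ref{lem:dilprop}~(2) yields $W_q \cap U_h = \emptyset$ with high probability, forcing $C_{h,q}^+ = 0$ on that event. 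For $q > q_h(4)$, all elements of $W_q$ lie in $Y_2$; any bucket consisting solely of $Y_2$-elements satisfies $G(B) \leq 0$ by Lemma \ref{lem:dilprop}~(1) (whose hypothesis is provided by Lemma \ref{lem:dil1}), and $W_q^+$-elements that share a bucket with a heavier element are instead accounted for under that heavier class through the positive-part inequality above.

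Finally I would count how many levels can host a given $q$. Because $q_i(4) < q_{i+k}(1)$ for every $i$, the intervals $R_i$ and $R_{i+k}$ are disjoint, so any fixed $q$ lies in at most $k$ intervals $R_h$ across $h \in \{0, 1, \dots, h_m\}$; this caps the number of levels whose expected contribution is non-negligible. The hypothesis $N_0 \geq N_0'$ handles level $0$ (where $p_0 = 1$ and $q_0(1) = 0$ by convention) via Lemma \ref{Lem3.4new}, which guarantees that heavy hitters are isolated well enough for the per-level bound to apply at level $0$. Summing the per-level bounds over the at most $k$ relevant levels yields $\sum_h \mathbb{E}[C_{h,q}^+] \leq k\Vert W_q^+ \Vert_1$, which is the claim.

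The main obstacle is the boundary case $q > q_h(4)$: Lemma \ref{lem:dilprop}~(1) only controls buckets composed entirely of $Y_2$-elements, so when a $W_q^+$-element shares a bucket with a heavier element the bucket can have $G^+(B) > 0$ and the naive $W_q^+$-contribution is not zero. The remedy is to route such contributions to the heavier element's class via the positive-part decomposition $G^+(B) \leq \sum_{i \in B,\, z_i > 0} z_i\,\mathbf{1}[G(B)>0]$, but this requires careful bookkeeping to ensure weight classes are not double-counted across $\sum_q C_{h,q}^+$ and that the attribution is consistent across all levels.
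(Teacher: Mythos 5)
Your overall structure matches the paper's: a per-level expectation bound of $\Vert W_q^+\Vert_1$, zero contribution outside the interval $[q_h(1),q_h(4)]$, and a counting argument (using monotonicity of $q_h(1),q_h(4)$ in $h$ together with $q_i(4)<q_{i+k}(1)$) showing at most $k$ levels can contain a given $q$. The counting step and the per-level bound $\mathbb{E}[C_{h,q}^+]\leq\Vert W_q^+\Vert_1$ are both correct.

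The genuine gap is exactly where you flag it, and your proposed remedy does not close it. With the decomposition $G^+(B)\leq\sum_{i\in B,\,z_i>0}z_i\,\mathbf{1}[G(B)>0]$, the quantity $C_{h,q}^+$ can be strictly positive even for $q>q_h(4)$: a small $W_q^+$-element can land in a bucket $B$ together with a large positive element, forcing $G(B)>0$ and hence $\mathbf{1}[G(B_i)>0]=1$. Your suggestion of ``routing'' these contributions to the heavier class would inflate the heavier class's per-level contribution above $\Vert W_q^+\Vert_1$, breaking the per-level bound there, and you explicitly leave the bookkeeping unresolved. As written, the argument therefore does not establish $\sum_h\mathbb{E}[C_{h,q}^+]\leq k\Vert W_q^+\Vert_1$.

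The paper sidesteps this entirely by invoking Lemma~\ref{cor:propcor} rather than Lemma~\ref{lem:dilprop} directly. Lemma~\ref{cor:propcor} asserts that for \emph{any} bucket $B$ (not only all-$Y_2$ buckets), $\sum_{z_i\in B\setminus I_r}z_i\leq 0$, where $I_r$ collects the indices with $q\in[q_h(1),q_h(4)]$. Consequently $G(B)\leq\sum_{i\in B\cap I_r}z_i$ and so $G^+(B)\leq\sum_{i\in B\cap I_r,\,z_i>0}z_i$: with this sharper decomposition the contribution of any $W_q^+$ with $q\notin[q_h(1),q_h(4)]$ is zero by construction, so no routing and no indicator $\mathbf{1}[G(B_i)>0]$ are needed. (That Lemma~\ref{cor:propcor} can be stated this strongly is because the Bernstein argument in the proof of Lemma~\ref{lem:dilprop}(1) actually bounds $\sum_{i\in B\cap Y_2}z_i$ regardless of whether $B$ also holds non-$Y_2$ elements, since the hashing of $Y_2$-elements into $B$ is independent of what else lands there.) If you replace your decomposition with $G^+(B)\leq\sum_{i\in B\cap I_r,\,z_i>0}z_i$ and define the per-class contribution over $W_q^+\cap I_r$, the rest of your argument closes identically to the paper's.
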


\begin{proof}
Consider any weight class $W_q^+$.
For any level $h$ it follows by Lemma \ref{cor:propcor} that if $q \notin [q_{(M_{h} N_{h})}(1), q_{(M_h N_h)}(4)]$ then $W_q^+$ has zero contribution at level $h$, i.e., either there are no elements of $W_q^+$ at level $h$ or we have $W_q^+ \subset Y_1$ and for any bucket $B$ of level $h$ it holds that $\sum_{i \in Y_1 \cap B}z_i \leq 0 $.
At any level the expected contribution of $W_q^+$ is bounded by $ p_h^{-1} \cdot \sum_{i \in W_q^+} p_h z_i=\Vert W_q^+ \Vert_1$. This upper bound would be tight if all entries of $Z$ were positive.
Hence, the expected contribution of $W_q^+$ is upper bounded by the number of levels $h$ with $q \in [q_{(M_{h} N_{h})}(1), q_{(M_h N_h)}(4)]$.
Since $q_{(M_{h} N_{h})}(1)$ and $q_{(M_h N_h)}(4)$ are monotonically increasing in $h$, it follows that if $q_{(M_i N_i)}(4)< q_{(M_{i+k} N_{i+k})}(1) $ then any $q$ can be contained in at most $k$ intervals of the form $[q_{(M_{h} N_{h})}(1), q_{(M_h N_h)}(4)]$, concluding the lemma. See Figure \ref{fig:dilbounds} for an illustration.
\end{proof}

\begin{figure*}[ht!]
\begin{center}
\begin{tabular}{cccc}
\includegraphics[width=0.45\linewidth]{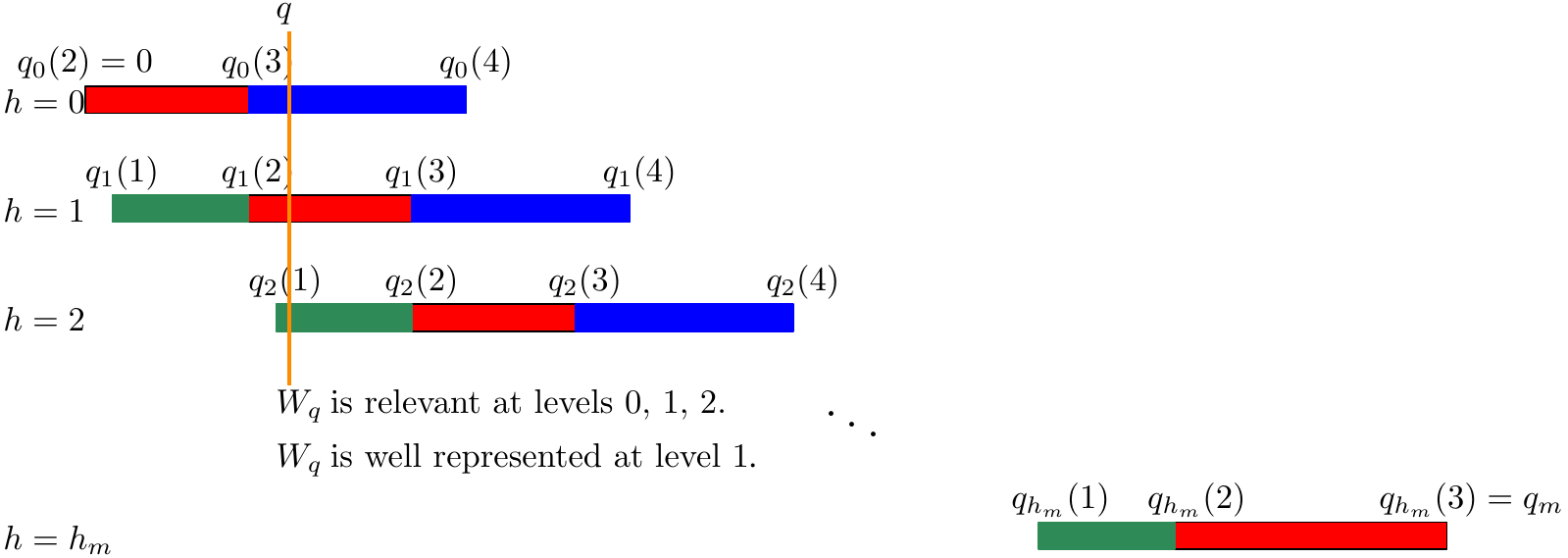}\label{fig:dilbounds1}&
\includegraphics[width=0.45\linewidth]{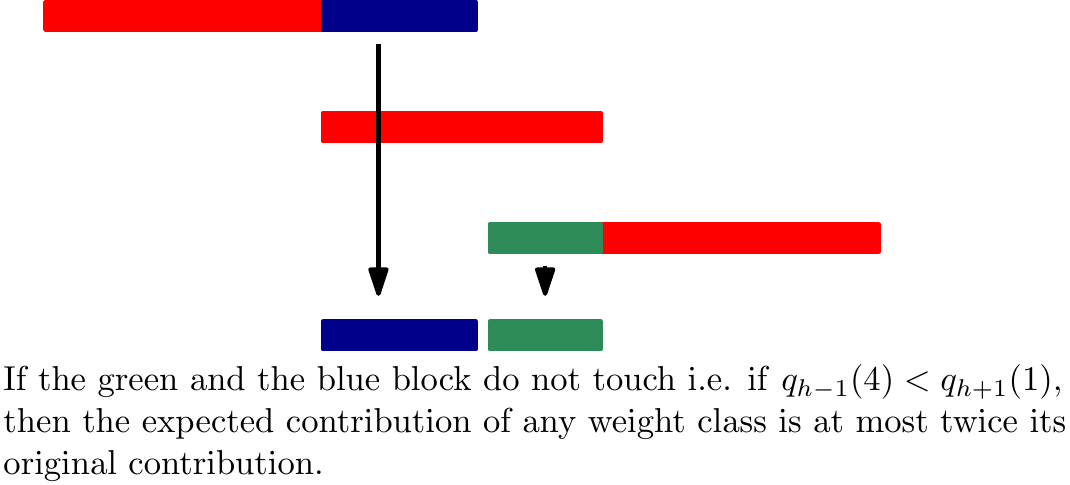}\label{fig:dilbounds2}&
\end{tabular}
\caption{Illustration of Lemma \ref{lem:constdil} and Lemma \ref{lem:constdil2}. }
\label{fig:dilbounds}
\end{center}
\end{figure*}

Lemma \ref{lem:constdil} can be used to show that the expected contribution of any weight class to $G^+(Z)$ is at most twice its total weight:

\begin{lem}\label{lem:constdil2}
If we choose $N_i=N:=\max\{N_0', \frac{2048 m_1^2 \mu \ln(N h_m/\delta ) q_m^2 h_m}{   \varepsilon^{12} \delta}\} $ for all $i \in [h_m]$, and $M_i$ solving the equation $  q_{(M_{i-1}, N)}(3)  =  q_{(M_{i}, N)}(2) $ then the expected contribution of any weight class $W_q^+$ is at most $2\Vert W_q^+ \Vert_1$.
\end{lem}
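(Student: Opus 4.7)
The plan is to invoke Lemma \ref{lem:constdil} with the parameter $k=2$. By Lemma \ref{lem:constdil}, it suffices to verify two conditions: (i) $N_0\geq N_0'$, which is immediate because we take $N=\max\{N_0',\cdot\}$; and (ii) for every level index $i\leq h_m-1$,
\[
q_{(M_i,N)}(4) \;<\; q_{(M_{i+2},N)}(1).
\]
Once (ii) is established, each weight class $W_q^+$ lies in at most two of the relevance intervals $[q_{(M_h,N)}(1),q_{(M_h,N)}(4)]$, so the expected contribution summed across levels is bounded by $2\|W_q^+\|_1$, exactly as claimed.

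To check (ii), I would unfold the four offsets recorded in Lemma \ref{cor:propcor}. First, from the defining equation for $M_i$ we have $q_{(M_i,N)}(3)=q_{(M_{i+1},N)}(2)$ and, applied twice, $q_{(M_{i+2},N)}(2)=q_{(M_{i+1},N)}(3)=q_{(M_{i+1},N)}(2)+\log_2 b$. Combined with $q_{(M_i,N)}(4)=q_{(M_i,N)}(3)+\log_2\bigl(\tfrac{8\ln(Nh_m/\delta)}{\varepsilon^4}\bigr)$ and $q_{(M_{i+2},N)}(1)=q_{(M_{i+2},N)}(2)-\log_2\bigl(\tfrac{8q_m m_1 h_m}{\varepsilon^3\delta}\bigr)$, the inequality (ii) becomes, after cancelling the common term $q_{(M_{i+1},N)}(2)$,
\[
\log_2\!\Bigl(\tfrac{8\ln(Nh_m/\delta)}{\varepsilon^4}\Bigr)+\log_2\!\Bigl(\tfrac{8q_m m_1 h_m}{\varepsilon^3\delta}\Bigr) \;<\; \log_2 b \;=\; \log_2\!\Bigl(\tfrac{N\varepsilon^5}{32 m_1\mu q_m}\Bigr).
\]
Exponentiating and rearranging, (ii) is equivalent to
\[
N \;>\; \frac{2048\, m_1^2\, \mu\, q_m^2\, h_m\, \ln(Nh_m/\delta)}{\varepsilon^{12}\,\delta},
\]
which is exactly the second term inside the $\max$ defining $N$ in the statement.

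Therefore the lower bound on $N$ given in the lemma precisely enforces (ii), and the prescribed choice $N_i=N$ together with the recursion for $M_i$ satisfies the full hypothesis of Lemma \ref{lem:constdil} with $k=2$. That lemma then directly delivers the conclusion that the expected contribution of any weight class $W_q^+$ to $\sum_B G^+(B)$ across all levels is at most $2\|W_q^+\|_1$. The only mildly delicate point (not really an obstacle) is that $N$ appears implicitly inside the $\ln(Nh_m/\delta)$ factor, but since the bound is only logarithmic in $N$ the fixed-point choice in the statement is well defined by a standard inflation argument, and the strict inequality in (ii) can be maintained by absorbing constants.
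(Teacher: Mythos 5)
Your proof is correct and takes essentially the same approach as the paper: you invoke Lemma \ref{lem:constdil} with $k=2$, use the defining recursion $q_{(M_{i},N)}(3)=q_{(M_{i+1},N)}(2)$ to cancel the telescoping terms, and reduce the required inequality $q_{(M_i,N)}(4)<q_{(M_{i+2},N)}(1)$ to the condition $\log_2 b$ exceeds the sum of the two fixed offsets from Lemma \ref{cor:propcor}, which is exactly the bound on $N$ in the statement. The algebraic bookkeeping differs only cosmetically from the paper's.
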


\begin{proof}
We set $q_{i}(j)=q_{(M_{i} N_{i})}(j) $.
We first show that $ q_{(i+2)}(1)-q_{i}(4)$ can be expressed using the terms $ q_{i+1}(3)-q_{i+1}(2)$, $ (q_{i+2}(2)-q_{i+2}(1))$ and $ (q_{i}(4)-q_{i}(3))$, which are the same for each $i$ if the number of buckets at each level is identical, i.e., for all $j \leq h_q$ it holds that $N_j=N_i$.
Observe that 
\begin{align*}
q_{(i+2)}(1)-q_{i}(4)
&= q_{i+2}(2)+q_{i+2}(1)-q_{i+2}(2)
-\left( q_{i}(3)+q_{i}(4)-q_{i}(3) \right)\\
&= q_{i+2}(2)-q_{i}(3)- (q_{i+2}(2)-q_{i+2}(1))- (q_{i}(4)-q_{i}(3))\\
&=q_{i+1}(3)-q_{i+1}(2)- (q_{i+2}(2)-q_{i+2}(1))- (q_{i}(4)-q_{i}(3)).
\end{align*}
Figure \ref{fig:dilbounds} illustrates those three terms.
Using Lemma \ref{cor:propcor} we can bound the sum of the two subtracted terms by
\begin{align*}
(q_{i+2}(2)-q_{i+2}(1))+ (q_{i}(4)-q_{i}(3))
&=\log_2\left(\frac{8q_m  m_1 h_m}{\varepsilon^3 \delta}\right)+\log_2\left(\frac{8  \ln(N h_m/\delta )}{\varepsilon^3  }\right)\\
&=\log_2\left(\frac{64 m_1 \ln(N h_m/\delta ) q_m h_m}{\varepsilon^7 \delta  }\right).
\end{align*}
By Lemma \ref{cor:propcor} we have that $q_{i+1}(3)-q_{i+1}(2)\geq  \log_2\left(\frac{ N  \varepsilon^5}{32 m_1 \mu q_m}\right)$.
Thus, combining both equations we get that
\begin{align*}
q_{(i+2)}(1)-q_{i}(4)
&=\log_2\left(\frac{ N  \varepsilon^5}{32 m_1 \mu q_m}\right)-\log_2\left(\frac{64 m_1  \ln(N h_m/\delta ) q_m h_m}{\varepsilon^7 \delta  }\right)\\
&=\log_2\left(\frac{ N  \varepsilon^{12} \delta}{2048 m_1^2 \mu \ln(N h_m/\delta ) q_m^2 h_m}\right).
\end{align*}
If $N \geq  \frac{2048 m_1 \mu \ln(N h_m/\delta ) q_m^2 h_m}{   \varepsilon^{12} \delta}$ then we have $ q_{(i+2)}(1)-q_{i}(4)\geq 0$ and thus by Lemma \ref{lem:constdil}, the expected contribution of any weight class $W_q^+$ is at most $2\Vert W_q^+ \Vert_1$.
\end{proof}

If $N < \frac{2048 m_1^2 \mu \ln(N h_m/\delta ) q_m^2 h_m}{   \varepsilon^{12} \delta}$ we have the following adaptation of the lemma:

\begin{lem}\label{lem:constdil3}
If for some $k \in \mathbb{N}$ we choose $N_i=N\geq \frac{32 m_1 \mu q_m}{  \varepsilon^5}\cdot \left(\frac{64 m_1  \ln(N h_m/\delta ) q_m h_m}{\varepsilon^7 \delta  }  \right)^{1/(k-1)} $ for all $i \in [h_m]$, and $M_i$ solving the equation $  q_{(M_{i-1}, N)}(3)  =  q_{(M_{i}, N)}(2) $, then the expected contribution of any weight class $W_q^+$ is at most $k\Vert W_q^+ \Vert_1$.
\end{lem}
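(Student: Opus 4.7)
\textbf{Proof Proposal for Lemma \ref{lem:constdil3}.}

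The plan is to imitate the proof of Lemma \ref{lem:constdil2} but telescope across $k$ consecutive levels instead of two, and then invoke Lemma \ref{lem:constdil} with the corresponding $k$. The target is to show that under the stated lower bound on $N$, for every level index $i$ with $i+k \leq h_m$ we have $q_{(i+k)}(1) - q_i(4) \geq 0$, so that any weight class $W_q^+$ falls into the relevant interval $[q_h(1), q_h(4)]$ for at most $k$ levels in expectation. Lemma \ref{lem:constdil} then delivers the claimed bound of $k\, \Vert W_q^+\Vert_1$ on the expected contribution.

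First I would use the chaining constraint $q_{(M_{i-1},N)}(3) = q_{(M_i,N)}(2)$ to rewrite $q_{(i+k)}(1)$ as $q_{(i+k-1)}(3) - (q_{(i+k)}(2) - q_{(i+k)}(1))$, then iterate: applying the chaining identity $k-1$ times gives
\begin{align*}
q_{(i+k)}(1) \;=\; q_i(3) + \sum_{j=1}^{k-1}\bigl(q_{(i+j)}(3) - q_{(i+j)}(2)\bigr) - \bigl(q_{(i+k)}(2) - q_{(i+k)}(1)\bigr).
\end{align*}
Subtracting $q_i(4) = q_i(3) + (q_i(4) - q_i(3))$ yields
\begin{align*}
q_{(i+k)}(1) - q_i(4) \;=\; \sum_{j=1}^{k-1}\bigl(q_{(i+j)}(3) - q_{(i+j)}(2)\bigr) \;-\; \bigl(q_i(4) - q_i(3)\bigr) \;-\; \bigl(q_{(i+k)}(2) - q_{(i+k)}(1)\bigr).
\end{align*}

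Next I would substitute the explicit values from Lemma \ref{cor:propcor}, namely $q_h(3) - q_h(2) = \log_2\bigl(\tfrac{N\varepsilon^5}{32 m_1 \mu q_m}\bigr)$, $q_h(4) - q_h(3) = \log_2\bigl(\tfrac{8\ln(Nh_m/\delta)}{\varepsilon^4}\bigr)$, and $q_h(2) - q_h(1) = \log_2\bigl(\tfrac{8 q_m m_1 h_m}{\varepsilon^3 \delta}\bigr)$. Combining the two subtracted terms into a single logarithm gives $\log_2\bigl(\tfrac{64 q_m m_1 h_m \ln(N h_m/\delta)}{\varepsilon^7 \delta}\bigr)$, so the whole expression simplifies to
\begin{align*}
q_{(i+k)}(1) - q_i(4) \;=\; (k-1)\log_2\!\left(\frac{N \varepsilon^5}{32 m_1 \mu q_m}\right) \;-\; \log_2\!\left(\frac{64 q_m m_1 h_m \ln(Nh_m/\delta)}{\varepsilon^7 \delta}\right).
\end{align*}
This quantity is non-negative precisely when $\bigl(\tfrac{N\varepsilon^5}{32 m_1 \mu q_m}\bigr)^{k-1} \geq \tfrac{64 q_m m_1 h_m \ln(Nh_m/\delta)}{\varepsilon^7 \delta}$, i.e. exactly the hypothesized lower bound on $N$ in the statement of the lemma.

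Having established $q_{(i+k)}(1) \geq q_i(4)$ for every $i$, any fixed weight-class index $q$ belongs to at most $k$ of the intervals $\{[q_h(1), q_h(4)]\}_{h=0}^{h_m}$, since the intervals $k$ steps apart are already disjoint (or just touching). Exactly as in the proof of Lemma \ref{lem:constdil2}, invoking Lemma \ref{lem:constdil} yields that the expected contribution of each weight class $W_q^+$ is at most $k\, \Vert W_q^+\Vert_1$, as required. The only subtlety worth double-checking is the boundary behaviour at $h=0$ and $h=h_m$ where $q_0(2)$ and $q_{h_m}(3)$ are replaced by $0$ and $\infty$ respectively; these boundary cases only make the relevant intervals smaller, so the bound from the telescoping argument is unaffected. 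I do not expect any real obstacle here: the content is purely arithmetic telescoping, with Lemma \ref{lem:constdil} doing the probabilistic work.
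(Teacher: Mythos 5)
Your proposal is correct and follows essentially the same route as the paper's own proof: iterate the chaining constraint to telescope $q_{(i+k)}(1)-q_i(4)$ into $\sum_{j=1}^{k-1}\bigl(q_{(i+j)}(3)-q_{(i+j)}(2)\bigr)$ minus the two boundary gaps, substitute the explicit values from Lemma \ref{cor:propcor}, observe that the hypothesized bound on $N$ makes the telescoped sum dominate the subtracted terms, and then invoke Lemma \ref{lem:constdil}. The only difference is cosmetic bookkeeping (your $\varepsilon^4$ in the $q_h(4)-q_h(3)$ gap matches Lemma \ref{cor:propcor}, whereas the paper's intermediate line in Lemma \ref{lem:constdil2} has a minor typo writing $\varepsilon^3$, though it arrives at the same $\varepsilon^7$ total), plus your extra remark about the boundary levels, which is harmless.
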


\begin{proof}
We generalize the proof of Lemma \ref{lem:constdil2}.
We can substitute $ q_{(i+k)}(1)-q_{i}(4)$ as follows:
\begin{align*}
q_{(i+k)}(1)-q_{i}(4)
&= q_{(i+k)}(1)-q_{(i+k)}(2)+q_{i}(3)-q_{i}(4)+q_{(i+k)}(2)-q_{i}(3)\\
&= q_{(i+k)}(2)-q_{i}(3) - (q_{(i+k)}(2) - q_{(i+k)}(1)) - ( q_{i}(4) - q_{i}(3) )\\
&= q_{(i+k-1)}(3)-q_{i+1}(2) - (q_{(i+k)}(2) - q_{(i+k)}(1)) - ( q_{i}(4) - q_{i}(3) )\\
&= \sum_{j=1}^{k-1} q_{(i+j)}(3)-q_{i+j}(2) - (q_{(i+k)}(2) - q_{(i+k)}(1)) - ( q_{i}(4) - q_{i}(3) ).
\end{align*}
The difference to the proof of Lemma \ref{lem:constdil2} is the telescoping sum.
We have that \begin{align*}
    \sum_{j=1}^{k-1} q_{(i+j)}(3)-q_{i+j}(2)=(k-1)\cdot \log_2\left(\frac{ N  \varepsilon^5}{32 m_1 \mu q_m}\right)=\log_2\left(\left(\frac{ N  \varepsilon^5}{32 m_1 \mu q_m}\right)^{k-1}\right).
\end{align*}
Thus if $N\geq \frac{32 m_1 \mu q_m}{  \varepsilon^5}\cdot \left(\frac{64 m_1  \ln(N h_m/\delta ) q_m h_m}{\varepsilon^7 \delta  }  \right)^{1/(k-1)} $ we have that $ \sum_{j=1}^{k-1} q_{(i+j)}(3)-q_{i+j}(2)\geq \log_2\left(\frac{64 m_1  \ln(N h_m/\delta ) q_m h_m}{\varepsilon^7 \delta  }\right)$.

Further note that $ (q_{i+2}(2)-q_{i+2}(1))+ (q_{i}(4)-q_{i}(3))=\log_2\left(\frac{64 m_1  \ln(N h_m/\delta ) q_m h_m}{\varepsilon^7 \delta  }\right)$ as before.
We conclude that $q_{(i+k)}(1)-q_{i}(4)>0$.
Consequently, applying Lemma \ref{lem:constdil} finishes the proof.
\end{proof}

Next we want to show how we can reduce the expected contribution of all weight classes below $2\Vert W_q^+ \Vert_1$.
To this end we first increase the number of buckets at each level so as to get
\begin{align*}
    \log_2\left(\frac{ N  \varepsilon^5}{32 m_1 \mu q_m}\right)\geq k \log_2\left(\frac{64 m_1  \ln(N h_m/\delta ) q_m h_m}{\varepsilon^7 \delta  }\right).
\end{align*}
Note that the expected contribution of any important weight class $W_q^+$ is at least $ \Vert W_q^+ \Vert_1$.
Moreover, the above choice ensures that all but a $k$-th fraction of weight classes have an expected contribution of exactly $\Vert W_q^+ \Vert_1$, and only the remaining $k$-th fraction has a larger expected contribution that crucially is still bounded by $2\Vert W_q^+ \Vert_1$.
Then the last step is to add a random shift so that the probability of each weight class $W_q^+$ for having an expected contribution of $2\Vert W_q^+ \Vert_1$ is at most $\frac{1}{k}$.
To simplify notation we set $N_1'=\frac{32 m_1 \mu q_m}{  \varepsilon^5} $ and $N_2'=\frac{64 m_1  \ln(n ) q_m h_m}{\varepsilon^7 \delta  }$ and assume that $n \geq N^k h_m/\delta$.

\begin{lem}\label{lem:rndshift}
Let $\gamma=\frac{1}{k}<1$ for some $k \in  \mathbb{N}$.
Assume that $N_0$ is chosen uniformly at random from $N^{(1)}, \dots N^{(1/\gamma)}$ where $N^{(i)}=N_0'\cdot N_2'^{i} $.
Further let $N_i=N=N_1'\cdot N_2'^{k+1}$ for any $i >0$.
Then the expected contribution of any weight class $W_q^+$ is at most $(1+\gamma)\Vert W_q^+ \Vert_1$.
\end{lem}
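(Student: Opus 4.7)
The plan is to track, for each weight class $q$, the (random) number of relevant intervals $X_q := |\{h : q \in [q_h(1), q_h(4)]\}|$, and show $\mathbb{E}[X_q] \leq 1 + \gamma$; combined with the per-level expected contribution bound of $\Vert W_q^+ \Vert_1$ that underlies Lemma \ref{lem:constdil}, this will yield the claim. First I would observe that the chosen $N = N_1' \cdot N_2'^{k+1}$ satisfies the hypothesis of Lemma \ref{lem:constdil3} with parameter $2$ (which only requires $N \geq N_1' \cdot N_2'$), so we already have the deterministic bound $X_q \leq 2$ for every realization of $N_0$. Hence it suffices to prove $\Pr[X_q = 2] \leq \gamma = 1/k$.

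The heart of the argument will be a careful analysis of how the intervals $I_h^{(i)} := [q_h(1)^{(i)}, q_h(4)^{(i)}]$ depend on the random choice $N_0 = N^{(i)}$. Using the formulas in Lemma \ref{cor:propcor} together with the recursion $q_{(M_{h-1}, N)}(3) = q_{(M_h, N)}(2)$, replacing $N^{(i)}$ by $N^{(i+1)} = N^{(i)} \cdot N_2'$ propagates a uniform shift of $\log_2(N_2')$ through every $p_h$ (and hence through every $q_h(\cdot)$) for $h \geq 1$, while preserving the inner width $q_h(3) - q_h(2) = (k+1)\log_2(N_2')$ and the overlap width $q_h(4) - q_{h+1}(1) = \log_2(N_2')$. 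The boundary value $q_0(4)$ shifts by the same amount up to a negligible $\log\log$ correction, which can be absorbed into constants.

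From these shift rules, the overlap $O_h^{(i)} := I_h^{(i)} \cap I_{h+1}^{(i)}$ has constant width $\log_2(N_2')$, and as $i$ ranges over $\{1, \ldots, k\}$ it sweeps out a contiguous region of total width $k \log_2(N_2')$ around the reference position of $O_h$. Since the deterministic spacing between reference positions of $O_h$ and $O_{h+1}$ is $(k+1)\log_2(N_2')$, the swept regions for different $h$ are separated by gaps of $\log_2(N_2')$ and are therefore pairwise disjoint; the same argument covers the boundary overlap $O_0 = [q_1(1), q_0(4)]$ since both of its endpoints move together at rate one even though $I_0$ grows rather than shifts. Consequently for any fixed $q$ there is at most one pair $(h,i)$ with $q \in O_h^{(i)}$, and by uniformity of $i$ we obtain $\Pr[X_q = 2] \leq 1/k = \gamma$. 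Combined with $X_q \in \{0,1,2\}$ this gives $\mathbb{E}[X_q] \leq \Pr[X_q \geq 1] + \Pr[X_q \geq 2] \leq 1 + \gamma$, which multiplied by $\Vert W_q^+ \Vert_1$ proves the lemma.

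The main obstacle will be the shift analysis: cleanly tracking the propagation of the multiplicative change in $N_0$ through the recursion defining $M_h$ from $M_{h-1}$, and confirming that the mild nonlinearity of $q_0(4)$ coming from the $\ln(N_0 h_m/\delta)$ factor does not spoil the disjointness of the swept overlap regions; modulo this, the remainder is a direct calculation on interval arithmetic.
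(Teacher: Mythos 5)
Your proof follows essentially the same route as the paper's: you exploit the deterministic bound (at most two relevant levels per weight class), analyze how the intervals translate under the random shift of $N_0$ by $i\log_2(N_2')$, argue that the swept overlap regions for distinct $h$ are disjoint so at most one value of $i$ places a given $q$ in an overlap, and conclude $\mathbb{E}[X_q] \leq 1 + 1/k$. Your write-up is somewhat more explicit than the paper's (which states the "at most one $i$" claim tersely) and your note about absorbing the $\ln(N_0 h_m/\delta)$ correction to $q_0(4)$ is a reasonable clarification, but the argument is the same.
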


\begin{proof}
First note that
\begin{align*}
\log_2\left(\frac{ N  \varepsilon^5}{32 m_1 \mu q_m}\right)-k\log_2\left(\frac{64 m_1 \mu \ln(n) q_m h_m}{\varepsilon^7 \delta  }\right)
&=\log_2(N/N_1')-\log_2(N_2'^{k})\geq 0.\end{align*}
This shows that the relation of weight classes that are relevant on two levels to the weight classes that are relevant on only one level is $1:k$.
By choosing $N_0$ at random we introduce a shift by $i \log_2(N_2')$, which is the maximal length of a block $[q_{i-1}(1), q_i(4)]$.
Hence, for each $q \in \mathbb{N}$ there can be only one $i$ such that $q$ is relevant in two levels.
This implies that the expected contribution of $W_q^+$ is at most $\frac{k-1}{k}\cdot \Vert W_q^+ \Vert_1+\frac{1}{k}\cdot 2\Vert W_q^+ \Vert_1=(1+\frac{1}{k})\Vert W_q^+ \Vert_1$.
\end{proof}

\subsection{Net argument}

To get a weak weighted sketch we need the contraction bounds not just for a single solution but for all $\beta \in \mathbb{R}^d$.
For now we ignore the variance regularization and focus only on $f_1$, i.e., on plain logistic regression.
We first show that if the distance
of two vectors $v,v'\in \mathbb{R}^n$ is small then $|f_1(v)-f_1(v')|$ is also small.

\begin{lem}\label{lem:netlem1}
For any $v, v' \in \mathbb{R}^n$ with $\Vert v-v'\Vert_1 \leq \varepsilon$ it holds that $|f_1(v)-f_1(v')|\leq \varepsilon $.
\end{lem}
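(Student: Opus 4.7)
The plan is to reduce the statement to a pointwise Lipschitz bound on the scalar function $\ell(r) = \ln(1 + \exp(r))$ and then apply the triangle inequality coordinatewise. Concretely, I would first compute $\ell'(r) = \frac{\exp(r)}{1+\exp(r)}$ and observe that this sigmoid-type derivative lies in $(0,1)$ for all $r \in \mathbb{R}$. By the mean value theorem, this gives $|\ell(a) - \ell(b)| \leq |a - b|$ for all $a, b \in \mathbb{R}$, i.e., $\ell$ is $1$-Lipschitz on the real line.

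Next, I would apply this pointwise to each coordinate. Recalling that $f_1(v) = \frac{1}{n}\sum_{i=1}^n \ell(v_i)$ (extending the definition from $v = X\beta$ to arbitrary $v \in \mathbb{R}^n$ in the natural way), I would write
\begin{align*}
|f_1(v) - f_1(v')| &= \left|\frac{1}{n}\sum_{i=1}^n \bigl(\ell(v_i) - \ell(v'_i)\bigr)\right| \\
&\leq \frac{1}{n}\sum_{i=1}^n |\ell(v_i) - \ell(v'_i)| \\
&\leq \frac{1}{n}\sum_{i=1}^n |v_i - v'_i| \\
&= \frac{1}{n}\|v - v'\|_1 \leq \frac{\varepsilon}{n} \leq \varepsilon,
\end{align*}
where the second inequality uses the $1$-Lipschitz property of $\ell$ established above, and the final step uses $n \geq 1$.

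There is no real obstacle: the only subtlety is to check that $\ell$ is globally $1$-Lipschitz rather than, say, requiring a bound on the domain, and this follows immediately from the fact that $\ell'$ is the standard logistic sigmoid, which is uniformly bounded by $1$ in absolute value. In fact the bound we obtain is stronger than claimed (by a factor of $n$), but the stated weaker form is what is needed in the subsequent net argument to conclude the contraction bound for all $\beta \in \mathbb{R}^d$ from the contraction bound at the net points.
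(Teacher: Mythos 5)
Your proof is correct and follows essentially the same approach as the paper: both establish that $\ell$ is globally $1$-Lipschitz via the bound $\ell'(r) = \frac{e^r}{1+e^r} \leq 1$ and then apply the triangle inequality coordinatewise. The only minor difference is that you carry the $\frac{1}{n}$ normalization factor from the definition of $f_1$ through the computation (yielding the stronger bound $\frac{\varepsilon}{n}$), whereas the paper's proof tacitly drops it; both conclude the stated bound.
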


\begin{proof}
Since $\ell'(v)=\frac{e^v}{e^v+1}\leq 1$ we get that
\begin{align*}
    |f_1(v)-f_1(v')|\leq \sum_{i=1}^n |\ell(v_i)-\ell(v_i')| \leq \sum_{i=1}^n |v_i-v_i'| = \Vert v-v'\Vert_1
\end{align*}
which proves the lemma.
\end{proof}

\begin{lem}\label{lem:netlem2}
Assume that for $\beta \in \mathbb{R}^d$ it holds that $|f_1(X'\beta)-f_1(X \beta)|\leq \varepsilon $.
Then for any $\beta' \in \mathbb{R}^d$ with $\Vert X\beta-X\beta'\Vert_1 \leq \varepsilon/( b^{h_m}h_m)$ it holds that $|f_1(X \beta')-f_1(X'\beta')|\leq 3\varepsilon $.
\end{lem}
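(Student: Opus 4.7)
The natural approach is the standard triangle inequality decomposition against the anchor point $\beta$:
\[
|f_1(X\beta') - f_1(X'\beta')| \le |f_1(X\beta') - f_1(X\beta)| + |f_1(X\beta) - f_1(X'\beta)| + |f_1(X'\beta) - f_1(X'\beta')|.
\]
The middle term is already bounded by $\varepsilon$ by hypothesis, so it suffices to show that each of the outer terms is at most $\varepsilon$.

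For the first outer term I would apply Lemma \ref{lem:netlem1} directly to $v = X\beta$ and $v' = X\beta'$ in $\mathbb{R}^n$: the hypothesis $\|X\beta - X\beta'\|_1 \le \varepsilon/(b^{h_m} h_m) \le \varepsilon$ (recall $b,h_m \ge 1$ by Assumption \ref{ass:mainass}) yields the desired bound at once.

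The third outer term carries the real content. The same 1-Lipschitz argument used in the proof of Lemma \ref{lem:netlem1}, now carried out against the weighted $\ell_1$ norm, gives
\[
|f_1(X'\beta) - f_1(X'\beta')| \le \sum_j w_j \bigl|(S(X\beta-X\beta'))_j\bigr| = \|S(X\beta-X\beta')\|_w,
\]
so it remains to prove the deterministic bound $\|Sv\|_w \le O(b^{h_m} h_m)\cdot \|v\|_1$ for any $v \in \mathbb{R}^n$. For each fixed level $h$, every sampled coordinate of $v$ is hashed to a constant number $s_h$ of buckets ($s_h=1$ for $h > 0$ and $s_h = s$ at the densified level $0$), so by the triangle inequality $\sum_{B \in \mathcal{B}_h} |(Sv)_B| \le s_h \|v\|_1$. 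Each bucket at level $h$ carries weight $p_h^{-1}$, and summing over levels with the geometric estimate $\sum_{h=0}^{h_m} p_h^{-1} \le 2b^{h_m}$ yields $\|Sv\|_w \le O(b^{h_m} h_m)\cdot\|v\|_1$. Plugging in $v = X(\beta-\beta')$ with $\|v\|_1 \le \varepsilon/(b^{h_m} h_m)$ bounds the third term by $\varepsilon$, and combining the three pieces gives the claimed $3\varepsilon$.

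The main obstacle is absorbing the densification factor $s$ at level $0$ and the additional polylog factors from Assumption \ref{ass:mainass} into the target bound $\varepsilon/(b^{h_m} h_m)$; this is routine because $s$ and $h_m$ are only polylogarithmic in $n$, but it has to be done in a way that is consistent with the parameter settings used later in the proofs of Theorems \ref{thm:l1-loss}--\ref{thm:var} so that the net cardinality and the contraction failure probability remain under control.
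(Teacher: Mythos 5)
Your proof takes the same route as the paper: the triangle inequality against the anchor $\beta$, the $1$-Lipschitz bound of Lemma \ref{lem:netlem1} for the two outer terms, and the deterministic bound $\|S v\|_1 \leq b^{h_m} h_m \|v\|_1$ (from each column of $S$ having at most $h_m$ non-zero entries each bounded by $b^{h_m}$, which the paper states directly rather than via your per-level geometric-sum accounting). Your final caveat about the densification factor $s$ and the weight vector is reasonable but not a gap the paper addresses either — the paper folds weights into $S$ and ignores $s$, obtaining the same bound.
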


\begin{proof}
It holds that $\|X'(\beta-\beta')\|_1=\|SX(\beta-\beta') \|_1 \leq b^{h_m}h_m \| X(\beta-\beta')\|_1\leq \varepsilon$ since for each $i \in [n]$ there are at most $h_m$ columns $j$ such that $S_{ij}\neq 0$ and each entry of $S$ is bounded by $ b^{h_m} $.
Thus, using the triangle inequality and applying Lemma \ref{lem:netlem1} yields
\begin{align*}
|f_1(X \beta')-f_1(X'\beta')|&\leq  
|f_1(X' \beta')-f_1(X'\beta)|+|f_1(X' \beta)-f_1(X\beta)|+ |f_1(X \beta)-f_1(X\beta')|\\
&\leq  \varepsilon+ \varepsilon + \varepsilon \leq 3\varepsilon.
\end{align*}
\end{proof}

\begin{lem}\label{lem:netlem3}
There exists a net $\mathcal{N} \subset \mathbb{R}^d$ of size $|\mathcal{N}|=\exp\left( \mathcal{O}(d\ln(n) )\right)$ such that for any point $y \in \mathbb{R}^d$ with $\Vert Xy \Vert_1\leq n  \mu$ there exists a point $y' \in \mathcal{N}$ such that $\Vert Xy'-Xy \Vert_1\leq \frac{\varepsilon}{\mu b^{h_{\max}}h_m}$.
\end{lem}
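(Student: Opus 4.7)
The plan is a standard $\ell_1$-net construction on the column space of $X$ using a well-conditioned (Auerbach) basis. The column span $E = \{Xy : y \in \mathbb{R}^d\}$ is a $d$-dimensional subspace of $(\mathbb{R}^n, \|\cdot\|_1)$, so by the classical Auerbach basis theorem there exist vectors $u_1,\ldots,u_d \in E$ with $\|u_i\|_1 = 1$ such that every $z = \sum_{i=1}^d \alpha_i u_i \in E$ satisfies $|\alpha_i| \le \|z\|_1$ for each $i$. I would invoke this as a black box, since it always exists in finite-dimensional normed spaces.

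Next, I would translate the constraint into coefficient space. Any $y$ with $\|Xy\|_1 \le n\mu$ can be written as $Xy = \sum_i \alpha_i u_i$ with $|\alpha_i| \le n\mu$. Setting the target resolution $\eta := \varepsilon/(d\,\mu\, b^{h_m} h_m)$, I would place a uniform grid $G \subset [-n\mu, n\mu]$ of spacing $\eta$, giving $|G| = O(n\mu/\eta) = O(n\mu^2 d\, b^{h_m} h_m/\varepsilon)$ points per coordinate. For each gridpoint $(\alpha_1',\ldots,\alpha_d') \in G^d$, the vector $z' = \sum_i \alpha_i' u_i$ lies in $E$, hence equals $Xy'$ for some $y' \in \mathbb{R}^d$; pick one such preimage and add it to $\mathcal{N}$.

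For the size bound, $|\mathcal{N}| \le |G|^d = \bigl(O(n\mu^2 d\, b^{h_m} h_m/\varepsilon)\bigr)^d$. Using Assumption \ref{ass:mainass} and $n \ge \max\{\mu, d, \varepsilon^{-1}, \delta^{-1}\}$, together with $h_m = O(\log n)$ and $b$ polynomially bounded in $n$, each factor inside is $\mathrm{poly}(n)$, so $|\mathcal{N}| = \exp(O(d \ln n))$ as required. For the approximation quality, given $y$ with $\|Xy\|_1 \le n\mu$, round each $\alpha_i$ to the nearest gridpoint $\alpha_i'$; by the triangle inequality,
\begin{align*}
\|Xy - Xy'\|_1 = \Bigl\|\sum_{i=1}^d (\alpha_i - \alpha_i')\, u_i\Bigr\|_1 \le \sum_{i=1}^d |\alpha_i - \alpha_i'|\cdot \|u_i\|_1 \le d \eta = \frac{\varepsilon}{\mu\, b^{h_m} h_m}.
\end{align*}

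There is no real obstacle; the only point that needs care is bookkeeping to ensure that the logarithm of the net size stays within $O(d \ln n)$, which amounts to verifying that $b^{h_m}$, $\mu$, and $\varepsilon^{-1}$ are all polynomial in $n$ under the standing assumptions. If one wanted to avoid the abstract Auerbach basis, an explicit well-conditioned basis (e.g., via an $\ell_1$ QR-type decomposition, or Lewis weights) would serve equally well, but the Auerbach route is the cleanest since all distortion factors are at most $d$, which only affects the net size by a $\mathrm{poly}(d)$ factor absorbed into the exponent.
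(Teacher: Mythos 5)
Your proof is correct, and it takes a genuinely different and more rigorous route than the paper. The paper's own proof defines $\mathcal{N}$ as a grid of coefficient vectors $\beta=v\cdot\tfrac{\varepsilon}{d\,b^{h_m}h_m}$ in $\mathbb{R}^d$, then rounds $Xy$ coordinate-wise in $\mathbb{R}^n$ to the same spacing and asserts that this rounded point ``is in $\mathcal{N}$'' and that $\|Xy-Xy'\|_1 \le d\cdot \tfrac{\varepsilon}{d\,b^{h_m}h_m}$. As written this mixes a grid in parameter space with a rounding in ambient space, and the claimed $\ell_1$-error bound of $d\eta$ does not follow directly from either reading: coordinate-wise rounding in $\mathbb{R}^n$ gives $n\eta$, while rounding $\beta$ gives an error that depends on $\|X\|_{1\to 1}$. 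Your Auerbach-basis argument supplies exactly the missing bridge: by working in the column span with a normalized, $1$-unconditional basis, the coefficient bound $|\alpha_i|\le\|z\|_1\le n\mu$ gives the correct size of the grid, and the error estimate $\|\sum_i(\alpha_i-\alpha_i')u_i\|_1\le\sum_i|\alpha_i-\alpha_i'|\le d\eta$ is genuinely valid because each $\|u_i\|_1=1$. Both routes yield $|\mathcal{N}|=\exp(O(d\ln n))$ once one checks $b^{h_m}=O(n)$, $h_m=O(\ln n)$, and $\mu,d,\varepsilon^{-1}\le n$, which you do.

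One small point worth flagging: the paper sets the grid spacing to $\tfrac{\varepsilon}{d\,b^{h_m}h_m}$ and derives $\|Xy-Xy'\|_1\le\tfrac{\varepsilon}{b^{h_m}h_m}$, which is \emph{weaker} than the statement's claimed bound of $\tfrac{\varepsilon}{\mu\,b^{h_m}h_m}$ by a factor of $\mu$; you correctly choose $\eta=\tfrac{\varepsilon}{d\mu\,b^{h_m}h_m}$, which actually matches the lemma as stated. This costs only an extra $\mathrm{poly}(\mu)=\mathrm{poly}(n)$ factor per coordinate of the grid, absorbed into $\exp(O(d\ln n))$, so your version is the one that is literally consistent with the statement.
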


\begin{proof}
We set
\begin{align}
\mathcal{N}= \left\{ \beta=v\cdot \frac{\varepsilon}{d b^{h_m}h_m} ~|~  v \in \mathbb{Z}^d \text{ with } \Vert v \Vert_\infty \leq \frac{d n \mu b^{h_{\max}}h_m} {\varepsilon}\right\} .
\end{align} 
Then for any $y \in \mathbb{R^d}$ with $ \Vert Xy \Vert_1\leq n  \mu $ the point $Xy'=\lfloor   \frac{d b^{h_m}h_m}{\varepsilon}\cdot Xy \rfloor \cdot \frac{\varepsilon}{d b^{h_m}h_m}$ is in $\mathcal{N}$ and it holds that $\|Xy-Xy'\|_1\leq d \cdot \frac{\varepsilon}{d b^{h_m}h_m}=\frac{\varepsilon}{ b^{h_m}h_m}$.
Further we have $|\mathcal N|\leq \left( \frac{d^2 n \mu b^{2h_m}h_m^2} {\varepsilon^2} \right)^d=\exp\left( \mathcal{O}(d\ln( n) \right)$.
\end{proof}

Combining Lemma \ref{lem:netlem2} and Lemma \ref{lem:netlem3} we get:

\begin{lem}\label{lem:netlem}
There exists a net $\mathcal{N} \subset \mathbb{R}^d$ with $|\mathcal{N}|=\exp\left( \mathcal{O}(d\ln(n) )\right)$ such that if $|f_1(X'\beta)-f_1(X \beta)|\leq \varepsilon $ holds for any $\beta \in \mathcal{N}$, then for any $\beta' \in \mathbb{R}^d$ with $\Vert X\beta' \Vert_1\leq n  \mu$ it holds that $|f_1(X'\beta')-f_1(X \beta')|\leq 3\varepsilon $.
\end{lem}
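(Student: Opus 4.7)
The plan is to chain the two preceding results directly; no new ingredient is needed. First I would take the net $\mathcal{N}$ constructed in Lemma \ref{lem:netlem3}, whose cardinality is already $\exp(O(d\ln n))$, and note that for any $y\in\mathbb{R}^d$ with $\|Xy\|_1\leq n\mu$ it provides a net point at $\ell_1$-distance at most $\varepsilon/(\mu b^{h_m}h_m)\leq \varepsilon/(b^{h_m}h_m)$. This is precisely the closeness condition that Lemma \ref{lem:netlem2} requires as its hypothesis.

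Next, given any $\beta'\in\mathbb{R}^d$ with $\|X\beta'\|_1\leq n\mu$, I would pick the corresponding net point $\beta\in\mathcal{N}$ with $\|X\beta-X\beta'\|_1\leq \varepsilon/(b^{h_m}h_m)$. By the assumption of the lemma, the sketch already satisfies $|f_1(X'\beta)-f_1(X\beta)|\leq\varepsilon$ at the net point $\beta$. Invoking Lemma \ref{lem:netlem2} with this $\beta$ and the perturbation $\beta'$ then immediately yields the desired bound $|f_1(X'\beta')-f_1(X\beta')|\leq 3\varepsilon$.

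The only thing to be mindful of is that Lemma \ref{lem:netlem3} is stated for vectors $y$ with $\|Xy\|_1\leq n\mu$, matching exactly the quantifier in the conclusion we need to prove, so no extra rescaling or truncation is necessary. Since the hypothesis of the present lemma universally quantifies over all $\beta\in\mathcal{N}$, the union bound over the net is implicit rather than something we have to pay for here; that cost is absorbed later when this lemma is combined with the single-point contraction/dilation estimates (Lemma \ref{lem:apppbound}, Lemma \ref{lem:smallpart}, etc.), since $\ln|\mathcal{N}|=O(d\ln n)$ is exactly the factor that $m_1$ is set to dominate in Assumption \ref{ass:mainass}. Thus there is no real obstacle in the proof of Lemma \ref{lem:netlem}: it is a two-line composition of Lemmas \ref{lem:netlem2} and \ref{lem:netlem3}.
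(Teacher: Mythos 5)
Your proposal is correct and follows exactly the paper's own one-line proof, which combines Lemma~\ref{lem:netlem3} (existence of the net with the $\ell_1$-closeness guarantee) with Lemma~\ref{lem:netlem2} (transfer of the approximation from a net point to a nearby point). You also correctly note that the $\mu$-factor in Lemma~\ref{lem:netlem3}'s bound $\varepsilon/(\mu b^{h_m} h_m)$ only strengthens the hypothesis of Lemma~\ref{lem:netlem2}, so no further work is needed.
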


\subsection{Constant factor approximation changes}

To prove the first part of Theorem \ref{thm:loglos} we need one more tweak in level $0$.
Since we only aim to achieve a constant factor approximation with constant probability we can assume that $\varepsilon$ and $\delta$ are constant. 

\paragraph{Heavy hitters - alternative version}

There is another way of handling heavy hitters.
Using it we can reduce the sketch size at the cost of running time.
The idea is that each row gets sampled multiple times.
More precisely, we replace level $0$ by the following sketch:
At level $0$ we map each element to $s =8m_1 q_m \mu/\varepsilon^2 $ rows.
Technically we are getting rid of heavy hitters this way.
To compensate the fact that each element appears multiples times, we set the weight of buckets of level $0$ to $w_0=1/s$.

\subsection{Proof of Theorem \ref{thm:loglos}}

We are now ready to prove Theorem \ref{thm:loglos}:

\begin{proof}[Proof of Theorem \ref{thm:loglos}]
If $\beta=0$ is a $1-2\varepsilon$ approximation, then we get the dilation bounds for free since $f_{1w}(X'\beta)=\ln(2)=f_1(X\beta)$.
Otherwise let $\beta^*$ be the minimizer of $f_1(X \beta) $.
Note that $\beta^*$ satisfies the assumption of Lemma \ref{lem:dil1}.

1) We fix constants $\varepsilon=1/8$ and $\delta=1/8$.

We use the alternative approach for handling heavy hitters and define $M_i$ and $N_i$ as in Lemma \ref{lem:constdil3} for some constant $k=1+\frac{1}{c}$ and set $h_m=\min\{i ~|~ M_i\leq N\}$.

By Lemma \ref{lem:constdil3} the expected contribution of any weight class is at most $k\Vert W_q^+ \Vert_1$. Thus using Markov's inequality we can bound $f_{1w}(SX\beta^*)\leq a k f_1(X \beta^*)$ with probability $\frac{1}{a}$ for any $a \in \mathbb{N}$. In other words, it is constant with constant probability.
By our choice of $M_i$ and $N_i$, the contraction bounds hold for any $X\beta$ with failure probability at most $(2h_m+2q_m+2)e^{-m_1}$ by combining Lemma \ref{lem:apppbound} and Lemma \ref{lem:smallpart}.
Setting $m_1=\mathcal{O}(d\ln(n) ) $ and using Lemma \ref{lem:netlem} we get that the contraction bounds hold for all $\beta \in \mathbb{R}^d$ with $\Vert X\beta \Vert_1\leq n  \mu$.
We note that the contraction bounds can be extended to any $\beta \in \mathbb{R}^d$ since $f_1(X\beta) \approx \Vert X\beta \Vert_1$ if $\Vert X\beta \Vert_1> n  \mu$. We refer to \citep{MunteanuOW21} for details.
Further note that $q_i(2)< q_i(3)$, and thus $h_m \leq \log_2(2^{q_m})=O(\ln(n))$.
The number of buckets at each level is $N=\frac{32 m_1 \mu q_m}{(1/8)^5} \cdot (\frac{64 m_1  \ln(N h_m/\delta ) q_m h_m}{(1/8)^7 } )^{c}$.
We specify the number $r$ of rows of $SX$,  which is $r=h_m N$.
Since $h_m, q_m=O(\ln(n))$ and $m_1=\mathcal{O}(d\ln(n) ) $ we get that $r=O(\mu d^{1+c}\ln(n)^{2+4c})$.
The running time of our algorithm is $O(\mu d\ln(n) \nnz(X))$ since each row $x_i$ gets assigned to $O(\mu d\ln(n)) $ buckets.

2') Before proving the second part we show that with $r=O(\frac{ \mu^2 d^4 \ln(n)^7}{\varepsilon^{12}\delta})$ and $T=O(\nnz(X))$ we can get an approximation factor of $\alpha=1+(1+\varepsilon)a$ and failure probability of $P=\delta+\frac{1}{a}$.
There are only a few differences compared to the proof of the first part: instead of Lemma \ref{lem:constdil3} we use Lemma \ref{lem:constdil2}.
Hence we need the number of buckets to be 
\[N=\max\left\{N_0', \frac{2048 m_1^2 \mu^2 \ln(N h_m/\delta ) q_m^2 h_m}{   \varepsilon^{12} \delta}\right\}=\frac{2048 m_1^2 \mu^2 \ln(N h_m/\delta ) q_m^2 h_m}{   \varepsilon^{12} \delta}.\]
Consequently we have that $r=h_m N=O(\frac{ \mu^2 d^4 \ln(n)^7}{\varepsilon^{12}\delta})$.
Since every row gets assigned to ${O}(1)$ buckets the running time is $O(\nnz(X))$.
Now assume that the contraction bound holds for $\beta^*$. Then $Y=f_{1w}(SX\beta^*)-(1-\varepsilon)f_1(X\beta^*)$ is a positive random variable with expected value at most $(1+\varepsilon)f_1(X\beta^*)$, and thus using Markov's inequality gives us that $Y>a(1+\varepsilon)f_1(X\beta^*)$ holds with probability at most $\frac{1}{a}$.
Hence it follows that $f_{1w}(SX\beta^*)\leq f_1(X\beta^*)+a(1+\varepsilon)f_1(X\beta^*)$ with failure probability at most $\frac{1}{a}$.
\\
2) The proof is again similar to 2'). The only difference is that we use Lemma \ref{lem:rndshift} instead of Lemma \ref{lem:constdil2}.
Hence the number of buckets at each level is bounded by $N=\max\{N_0',  N_1'\cdot N_2'^{1+\varepsilon^{-1}} \}$.
Thus $r= h_m N=O( \frac{d^2 h_m q_m^2\mu^2 m_1^2 }{\delta\varepsilon^7 } + \frac{32 d \mu \ln(n)^2}{  \varepsilon^5}\cdot (\frac{64 d \ln(n )^4 }{\varepsilon^7 \delta  })^{1 + \varepsilon^{-1}})$.
\end{proof}

\section{Approximating \texorpdfstring{$\Vert X \beta - Y \Vert_1$}{|X - Y|}}

The sketching algorithm is the same as before and also the analysis is very similar to the previous part.
We start with a fixed point $z=(X, -Y)\beta'$, where $\beta'=(\beta, 1) \in \mathbb{R}^d$ and analyze $S z$.
Again we assume that $\Vert z \Vert_1=1$.
Instead of weight classes $W_q^+$ we use weight classes $W_q= \{ i\in [n] ~|~ |z_i|\in (2^{-q-1}, 2^{-q}]  \}$.
Since we are only dealing with absolute values, which are symmetric, we no longer need to parameterize by $\mu$.
We can continue to use the same definitions for  $q_h(1)$, $q_h(2)$ and $q_h(3)$ when setting $\mu$ in those bounds to be $1$. We will only slightly change $q_h(3)$ since we will need another trick to prove the second outer bound $q_h'(4)$.

\subsection{Dilation bounds for \texorpdfstring{$\ell_1$}{}}

For approximating $\ell_1$ we need a different approach for $q_h(4)$ when bounding the contribution of small entries at each level.
The idea is to use a Ky-Fan norm argument to remove the smallest contributions from the  $\ell_1$-norm.
At a fixed level $h$ we put $\mathcal{B}_h$ to be the set of buckets at level $h$ and $\mathcal{B}_h'$ to be the set of buckets with the $p2^{q_h(3)}\leq \frac{\varepsilon N}{h_m} $ largest entries with respect to $|G(B)|$ where $q_h(3):=\ln(\min\{ \frac{\varepsilon  N}{2 h_m}, \frac{ N  \varepsilon^2}{4p } \})$.
We further define
\begin{align*}
    K(h)=\sum_{B \in \mathcal{B}_h'}|G(B)|.
\end{align*}
Since $\bigcup_{q \in [q_h(2), q_h'(3)]} W_q^*$ contains at most $p2^{q_h'(3)}$ elements, we have that $K(h)\geq \Vert W_q^* \Vert_1$.
We set $q_h'(4)=\ln(\frac{3 N h_m \ln{Nh_m/\delta}}{ p \varepsilon}) $.
Set $Y_2=Y_2(h)= \{i \in [n] ~|~ |z_i| \leq \gamma_2:=\frac{c p}{N\ln(N h_m/\delta)}\}$.

\begin{lem}\label{lem:l1dil1}
With failure probability at most $\frac{\delta }{h_m N} $ it holds that for any bucket $B$ at level $h$ we have that
\begin{align*}
    \sum_{i \in B \cap Y_2} |z_i| \leq \max\{2 \cdot \frac{p \cdot \Vert Y_2 \Vert_1}{N}, \frac{p}{N}\left(\Vert Y_2 \Vert_1+\frac{\varepsilon}{h_m}\right) \}
\end{align*}
\end{lem}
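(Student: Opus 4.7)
The plan is to fix a bucket $B$ at level $h$, apply Bernstein's inequality to the random variable $\sum_{i\in Y_2 \cap B}|z_i|$, and then union bound over the $N$ buckets. Concretely, for each $i\in Y_2$ let $X_i = |z_i|$ if $i$ is sampled at level $h$ \emph{and} hashed into bucket $B$, and $X_i=0$ otherwise. Since sampling is independent of the hash, each event occurs with probability $p/N$, so $\mathbb{E}[X_i]=\frac{p}{N}|z_i|$ and the expectation of the sum is $E := \frac{p}{N}\Vert Y_2 \Vert_1$. The definition of $Y_2$ gives the almost-sure bound $|X_i|\le \gamma_2$, and this bound also yields the variance estimate $\sum_{i\in Y_2}\mathbb{E}[X_i^2]\le \gamma_2 \sum_{i\in Y_2}\mathbb{E}[X_i] = \gamma_2 E$.

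To handle both branches of the max simultaneously, I would set the deviation threshold to $t=\max\{E,\ \tfrac{p\varepsilon}{Nh_m}\}$. Then the event $\sum_i X_i \le E+t$ is exactly the event stated in the lemma: when $E\ge \tfrac{p\varepsilon}{Nh_m}$ it collapses to $\sum_i X_i \le 2E = 2\tfrac{p}{N}\Vert Y_2\Vert_1$, and otherwise to $\sum_i X_i \le E + \tfrac{p\varepsilon}{Nh_m} = \tfrac{p}{N}(\Vert Y_2\Vert_1 + \tfrac{\varepsilon}{h_m})$. Because $t\ge E$, Bernstein's inequality simplifies nicely:
\[
\Pr\Bigl(\sum_i X_i - E \ge t\Bigr) \le \exp\Bigl(-\tfrac{t^2/2}{\gamma_2 E + \gamma_2 t/3}\Bigr) \le \exp\Bigl(-\tfrac{3t}{8\gamma_2}\Bigr).
\]

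Substituting $\gamma_2 = \tfrac{cp}{N\ln(Nh_m/\delta)}$ and using $t\ge \tfrac{p\varepsilon}{Nh_m}$, the exponent becomes at least $\tfrac{3\varepsilon\ln(Nh_m/\delta)}{8 c h_m}$. Choosing the absolute constant $c$ in the definition of $\gamma_2$ small enough (e.g.\ $c\le \varepsilon/(16 h_m)$, which is essentially the role played by the $\varepsilon^2/\ln(Nh_m/\delta)$ factor already appearing in $\gamma_2$ in Section~\ref{sec:largeparts}) makes this at least $2\ln(Nh_m/\delta)$, so the per-bucket failure probability is at most $\delta/(h_m N^2)$. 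A union bound over the $N$ buckets at level $h$ yields the claimed failure probability $\delta/(h_m N)$.

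The main subtlety is not the Bernstein computation itself, which is standard, but coordinating the constant hidden in the definition of $\gamma_2$ with the additive slack $\tfrac{p\varepsilon}{Nh_m}$ so that a \emph{single} Bernstein application covers both the multiplicative regime ($E$ large) and the additive regime ($E$ small). The asymmetry in the \texttt{max} in the statement is precisely what allows one threshold $t$ to work for both cases; apart from verifying the constants, the rest of the argument is bookkeeping.
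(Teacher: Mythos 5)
Your proof is correct and follows essentially the same route as the paper: fix a bucket, apply Bernstein to the indicator-weighted sum over $Y_2$, and choose the deviation threshold $t = \max\{E, \tfrac{p\varepsilon}{N h_m}\}$ so that one Bernstein application covers both branches of the max. Two small points of comparison. First, your choice $t = \max\{E, \tfrac{p\varepsilon}{N h_m}\}$ is the one that actually matches the lemma statement; the paper's displayed $\lambda = \max\{E, \tfrac{\varepsilon}{N h_m}\}$ is missing the factor $p$ and is inconsistent with the next line of its own chain, so you have silently corrected a typo. Likewise you correctly set $X_i = |z_i|$ rather than the paper's $z_i$, which is needed since the quantity being bounded is $\sum_{i\in B\cap Y_2}|z_i|$. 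Second, the paper's lemma is intended and used as a \emph{per-bucket} bound: the union bound over the $N h_m$ buckets is carried out in the following lemma, so the concluding union bound over the $N$ buckets at level $h$ in your argument is redundant (though harmless). With the per-bucket reading, the requirement on the unspecified constant $c$ in $\gamma_2 = \tfrac{cp}{N\ln(Nh_m/\delta)}$ relaxes from your $c \le \varepsilon/(16 h_m)$ to $c \le \varepsilon/(3 h_m)$. Apart from these bookkeeping differences, the argument is the same.
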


\begin{proof}
Fix a bucket $B$ at level $h$.
For $i \in Y_2$ let $X_i=z_i$ if $i \in B$ and $X_i=0$ otherwise.
Then we have $E:=\mathbb{E}(\sum_{i \in Y_2} X_i)=\frac{p \cdot \Vert Y_2 \Vert_1}{N}$.
Further we have $ \mathbb{E}(\sum_{i \in Y_2} X_i^2)=\sum_{i \in Y_2}\frac{p}{N}\cdot z_i^2\leq \frac{\gamma_2 p}{N}\cdot\sum_{i \in Y_2} |z_i|=\gamma_2 E$.
We set $\lambda = \max\{ E, \frac{\varepsilon}{N h_m} \} $
Then using Bernstein's inequality we get that
\begin{align*}
    P(\sum_{i \in Y_2} X_i\geq E+\lambda)
    &\leq \exp\left( \frac{-\lambda^2 /2}{\gamma_2 E + \gamma_2 E/3} \right)\\
    &\leq \exp\left( \frac{-\lambda^2 /2}{\gamma_2 \lambda + \gamma_2 \lambda/3} \right)\\
    &\leq \exp\left( \frac{-\lambda}{3 \gamma_2} \right)\\
    &\leq \exp\left( \frac{-p \varepsilon}{3 N h_m \gamma_2} \right)\\
    &\leq \exp\left( -\ln(N h_m/\delta) \right)\leq \frac{\delta }{h_m N}.
\end{align*}
\end{proof}

\begin{lem}
With failure probability at most $\delta $ it holds that
\begin{align*}
    \sum_{h \leq h_m} \sum_{ i \in Y_2(h) \cap \bigcup_{B \in \mathcal{B}_h'}B} z_i \leq \varepsilon 
\end{align*}
\end{lem}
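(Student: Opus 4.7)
The plan is to combine the per-bucket concentration bound from Lemma \ref{lem:l1dil1} with the cardinality bound $|\mathcal{B}_h'| \leq \varepsilon N/h_m$ (which is built into the choice of $q_h(3)$), and then exploit the geometric decay of the sampling probabilities $p_h \propto 1/b^h$ so that summing over levels costs only a constant factor.

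First, I apply Lemma \ref{lem:l1dil1} to every single bucket at every level. Since the failure probability of that lemma is $\delta/(h_m N)$ per bucket, a union bound over all $h \leq h_m$ and all $N$ buckets per level gives a total failure probability of at most $h_m \cdot N \cdot \delta/(h_m N) = \delta$. (If different levels use different $N_h$, replace $N$ by $\max_h N_h$ and adjust $\delta$ by a constant.) Condition on this global good event henceforth.

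Second, fix a level $h$. By definition $|\mathcal{B}_h'| \leq p_h 2^{q_h(3)} \leq \varepsilon N/h_m$, and for every bucket $B \in \mathcal{B}_h'$ we may invoke Lemma \ref{lem:l1dil1} with $\|Y_2(h)\|_1 \leq \|z\|_1 = 1$ to obtain
\[
\sum_{i \in B \cap Y_2(h)} z_i \;\leq\; \sum_{i \in B \cap Y_2(h)} |z_i| \;\leq\; \max\!\left\{\tfrac{2 p_h \|Y_2(h)\|_1}{N},\; \tfrac{p_h}{N}\!\left(\|Y_2(h)\|_1 + \tfrac{\varepsilon}{h_m}\right)\right\} \;\leq\; \tfrac{2 p_h}{N}.
\]
Summing this over the at most $\varepsilon N/h_m$ buckets in $\mathcal{B}_h'$ yields
\[
\sum_{B \in \mathcal{B}_h'} \sum_{i \in B \cap Y_2(h)} z_i \;\leq\; \frac{\varepsilon N}{h_m} \cdot \frac{2 p_h}{N} \;=\; \frac{2\varepsilon\, p_h}{h_m}.
\]

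Third, I sum over all levels. Since $p_0 = 1$ and $p_h = b^{-h}$ with $b > 1$ (Assumption \ref{ass:mainass}), the geometric series gives $\sum_{h=0}^{h_m} p_h \leq b/(b-1) = O(1)$. Therefore
\[
\sum_{h \leq h_m} \sum_{i \in Y_2(h) \cap \bigcup_{B \in \mathcal{B}_h'} B} z_i \;\leq\; \frac{2\varepsilon}{h_m} \sum_{h \leq h_m} p_h \;=\; O\!\left(\frac{\varepsilon}{h_m}\right) \;\leq\; \varepsilon,
\]
where the final inequality absorbs the constant by rescaling $\varepsilon$ by a constant factor (or equivalently by tightening the $\varepsilon/h_m$ appearing in the definition of $q_h(3)$).

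The main obstacle, though largely bookkeeping, is ensuring that the budget $\varepsilon/h_m$ appearing in Lemma \ref{lem:l1dil1} and in $|\mathcal{B}_h'|$ are consistent: the factor $1/h_m$ in both places is exactly what makes the per-level bound of order $\varepsilon p_h/h_m$ and lets the geometric series over levels stay below $\varepsilon$. If one is sloppy and uses $|\mathcal{B}_h'| \leq \varepsilon N$ with an absolute $\varepsilon$ (not $\varepsilon/h_m$), the summation picks up an extra $h_m$ factor and the bound fails. Handling the sign (the lemma bounds $\sum |z_i|$ whereas we only need $\sum z_i$) is immediate by dropping absolute values, so no separate argument is needed there.
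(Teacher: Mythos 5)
Your proposal is essentially the same argument as the paper's: union bound Lemma~\ref{lem:l1dil1} over all $N h_m$ buckets, multiply the per-bucket bound $\frac{2p_h}{N}$ by the cardinality bound $|\mathcal{B}_h'|\le\frac{\varepsilon N}{2h_m}$ to get a per-level contribution of roughly $\frac{\varepsilon p_h}{h_m}$, then sum over levels. The only difference is cosmetic: at the final summation the paper simply bounds $p_h\le 1$ so the $h_m$ in the denominator cancels against the $h_m$ levels, whereas you invoke the geometric decay $\sum_h p_h = O(1)$, which yields the slightly tighter $O(\varepsilon/h_m)$ and then is relaxed back to $\varepsilon$. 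Both are correct; the paper's version avoids the superfluous constant-rescaling remark you need at the end.
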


\begin{proof}
Using the union bound over the event from Lemma \ref{lem:l1dil1} over all $N h_m $ buckets, using that $ |\mathcal{B}_h'|\leq {\varepsilon  N}/{2 h_m}$ and $\max\{2 \cdot \Vert Y_2 \Vert_1, \left(\Vert Y_2 \Vert_1+\frac{\varepsilon}{h_m}\right)\}\leq 2 $ we get that
\begin{align*}
    \sum_{ i \in Y_2(h) \cap \bigcup_{B \in \mathcal{B}_h'}B} z_i \leq \frac{\varepsilon  N}{2 h_m} \cdot \frac{2p}{N} \leq \frac{\varepsilon  }{ h_m}.
\end{align*}
holds for every level $h$ with failure probability at most $\delta$.
Summing up over all levels we get $\sum_{h \in h_m} \sum_{ i \in Y_2(h) \cap \bigcup_{B \in \mathcal{B}_h'}B} z_i \leq \varepsilon $.
\end{proof}

We have the following lemmas using similar proofs as in the previous section:

\begin{lem}\label{lem:l1constdil3}
If for some $k \in \mathbb{N}$ we choose $N_i=N\geq \frac{32 m_1 q_m h_m}{  \varepsilon^5}\cdot \left(\frac{64 m_1 \ln(N h_m/\delta ) q_m h_m^2}{\varepsilon^6 \delta  }  \right)^{1/(k-1)} $ for all $i \in [h_m]$ and $M_i$ solving the equation $  q_{(M_{i-1}, N)}(3)  =  q_{(M_{i}, N)}(2) $,  then the expected contribution of any weight class $W_q$ is at most $(k+\varepsilon)\Vert W_q \Vert_1$.
\end{lem}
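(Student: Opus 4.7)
My plan is to mirror the telescoping argument used in Lemma \ref{lem:constdil3}, making three adaptations that are needed to pass from $W_q^+$ and the logistic objective to $W_q$ and the $\ell_1$ objective.

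First, I would generalize the per-level bound from Lemma \ref{lem:constdil}. Since $\sum_{B} |G(B)| \le \sum_B \sum_{i \in B} |z_i|$, the expected contribution of $W_q$ at a single level $h$, after reweighting by $p_h^{-1}$, is at most $\Vert W_q \Vert_1$ regardless of signs. Hence the expected total contribution of $W_q$ is at most $\Vert W_q \Vert_1$ times the number of levels at which $W_q$ is \emph{relevant}, where relevance at level $h$ now means $q \in [q_h(1), q_h'(4)]$ with $q_h'(4) = \log_2(3Nh_m \ln(Nh_m/\delta)/(p\varepsilon))$ as redefined above.

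Second, I would justify using $q_h'(4)$ as the new outer bound via the Ky-Fan cutoff already in place: at each level only the $|\mathcal{B}_h'| \le \varepsilon N/h_m$ heaviest buckets survive, and Lemma \ref{lem:l1dil1} combined with a union bound over all $h_m N$ buckets shows that the total $z$-mass of elements of $Y_2(h)$ appearing inside these surviving buckets, summed across all $h_m$ levels, is at most $\varepsilon$. Because $\Vert z \Vert_1 = 1$, this slack contributes at most $\varepsilon \Vert W_q \Vert_1$ on top of the $k\Vert W_q \Vert_1$ produced by the core argument, yielding the stated $(k+\varepsilon)\Vert W_q \Vert_1$ bound.

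Third, I would show that each $q$ lies in at most $k$ of the intervals $[q_h(1), q_h'(4)]$ by the telescoping identity
\begin{align*}
q_{i+k}(1) - q_i'(4) &= \sum_{j=1}^{k-1}\bigl(q_{i+j}(3) - q_{i+j}(2)\bigr) - \bigl(q_{i+k}(2) - q_{i+k}(1)\bigr) - \bigl(q_i'(4) - q_i(3)\bigr).
\end{align*}
Using Lemma \ref{cor:propcor} with $\mu = 1$, each term in the sum equals $\log_2(N\varepsilon^5/(32 m_1 q_m))$, and the inner gap $q_{i+k}(2) - q_{i+k}(1) = \log_2(8q_m m_1 h_m/(\varepsilon^3 \delta))$. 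Together with the explicit gap $q_i'(4) - q_i(3)$ read off from the new definitions, the hypothesis on $N$ is exactly what makes the right-hand side non-negative, forcing $q_{i+k}(1) \ge q_i'(4)$ and hence at most $k$ relevant levels per weight class.

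\textbf{Main obstacle.} The delicate step is estimating $q_i'(4) - q_i(3)$, because $q_h(3)$ is defined as $\log_2$ of a minimum of two quantities whose order depends on $p_h$. I would split into cases according to which branch of the minimum is active, producing a gap of either $\log_2(6h_m^2 \ln(Nh_m/\delta)/(p\varepsilon^2))$ or $\log_2(12 h_m \ln(Nh_m/\delta)/\varepsilon)$, and check that under the stated choice of $N$ the telescoping inequality survives uniformly in $h$; the worst case is precisely what produces the extra factor $h_m^2 /\varepsilon^6 \delta$ appearing in the hypothesis (compared with $h_m /\varepsilon^7 \delta$ in Lemma \ref{lem:constdil3}). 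Once this case analysis is carried out, the rest of the proof follows verbatim from Lemma \ref{lem:constdil3}, and combining with the Ky-Fan tail bound of Step~2 gives the claim.
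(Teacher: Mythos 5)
Your proposal takes the same route the paper intends: the paper's own ``proof'' of this lemma is the one-line remark that the argument is ``similar'' to Lemma~\ref{lem:constdil3} ``using similar proofs as in the previous section,'' with ``the additional $\varepsilon$ com[ing] from Lemma~\ref{lem:l1dil1}.'' You correctly identify all three ingredients: the per-level expectation bound of $\Vert W_q\Vert_1$ (now applied to signed $|z_i|$ rather than $z_i^+$, which is in fact simpler since no cancellation argument is needed), the telescoping identity inherited from Lemma~\ref{lem:constdil3} with $\mu=1$, and the Ky-Fan-style tail cutoff via Lemma~\ref{lem:l1dil1} together with the follow-up lemma to account for the extra $\varepsilon$.

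Two small points are worth flagging. First, in your case analysis of $q_i'(4)-q_i(3)$ for the branch $q_h(3)=\log_2\bigl(\tfrac{N\varepsilon^2}{4p}\bigr)$, the gap is
\begin{align*}
q_i'(4)-q_i(3)=\log_2\!\Bigl(\tfrac{3Nh_m\ln(Nh_m/\delta)}{p\varepsilon}\Bigr)-\log_2\!\Bigl(\tfrac{N\varepsilon^2}{4p}\Bigr)=\log_2\!\Bigl(\tfrac{12h_m\ln(Nh_m/\delta)}{\varepsilon^3}\Bigr),
\end{align*}
i.e., $\varepsilon^3$ rather than $\varepsilon$ in the denominator; with this correction the sum of outer gaps becomes $\log_2\bigl(\Theta(q_m m_1 h_m^2\ln(Nh_m/\delta)/(\varepsilon^6\delta))\bigr)$, which is exactly the quantity $N_2''$ appearing in the lemma's hypothesis. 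Note also that in the other branch $q_h(3)=\log_2(\varepsilon N/(2h_m))$ the difference $q_i'(4)-q_i(3)$ retains a $p$-dependence, which would break the level-uniformity of the telescope; but that branch is not the one determining $M_i$ via $q_{(M_{i-1},N)}(3)=q_{(M_i,N)}(2)$, so this is a presentation subtlety rather than a real gap. Second, the $\varepsilon$ slack from Lemma~\ref{lem:l1dil1} is a \emph{global} mass bound ($\sum_{h}\sum_{i\in Y_2(h)\cap\bigcup_{B\in\mathcal{B}_h'}B}z_i\le\varepsilon$ against $\Vert z\Vert_1=1$), not a per-weight-class multiplier $\varepsilon\Vert W_q\Vert_1$; your phrasing collapses the two. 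Since the lemma is only ever used by summing over $q$, this is harmless and matches the paper's own slightly informal wording, but stated per-weight-class it is not literally true. Neither issue changes the structure of the argument.
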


Here the additional $\varepsilon$ comes from Lemma \ref{lem:l1dil1}.

We set $N_0''=N_0'$, $N_1''=\frac{32 m_1 q_m h_m}{   \varepsilon^5} $ and $N_2''=\frac{64 m_1 \ln(N h_m/\delta ) q_m h_m^2}{\varepsilon^6 \delta  }$ and assume that $n \geq N^k h_m/\delta$.

\begin{lem}\label{lem:l1rndshift}
Let $\gamma=\frac{1}{k}<1$ for some $k \in  \mathbb{N}$.
Assume that $N_0$ is chosen uniformly at random from $N^{(1)}, \dots N^{(1/\gamma)}$ where $N^{(i)}=N_0''\cdot N_2''^{i} $.
Further let $N_i=N=N_1''\cdot N_2''^{k+1}$ for any $i >0$.
Then the expected contribution of any weight class $W_q$ is at most $(1+\gamma)\Vert W_q \Vert_1$.
\end{lem}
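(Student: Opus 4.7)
The approach mirrors that of Lemma \ref{lem:rndshift}, which handled the logistic case, but uses the $\ell_1$-specific parameters $N_0''$, $N_1''$, $N_2''$ supplied by Lemma \ref{lem:l1constdil3}.

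The plan is first to compare the size of the inner-interval growth on one level with the outer-interval length across two levels. A direct computation analogous to the one in the proof of Lemma \ref{lem:constdil2} gives
\begin{align*}
 \log_2\left( N/N_1''\right) - k\log_2(N_2'')
 &= (k+1)\log_2(N_2'') - k\log_2(N_2'') = \log_2(N_2''),
\end{align*}
using $N=N_1''\cdot N_2''^{\,k+1}$. Thus the inner gap $q_{h+1}(2)-q_h(3) = \log_2(N/N_1'')$ exceeds $k$ times the block length $\log_2(N_2'')$ of the overlap region $[q_{h-1}(4), q_h(1)]$ by one extra $\log_2(N_2'')$. Interpreting this geometrically on the $q$-axis: among the weight classes falling into the union of consecutive outer intervals $[q_{h-1}(1), q_h(4)]$, at most a $1/k$-fraction lies in the overlap $[q_{h-1}(1), q_{h-1}(4)] \cap [q_h(1), q_h(4)]$ and the remaining $(1-1/k)$-fraction lies in only a single $[q_i(1), q_i(4)]$.

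Next I would exploit the random shift. The choice of $N_0$ uniformly from $\{N_0''\cdot N_2''^{\,i}\}_{i=1}^{1/\gamma}$ induces a shift of $q_0(2)$ by an additive term $i\log_2(N_2'')$, which by the recursive definition $q_{(M_{h-1},N)}(3)=q_{(M_h,N)}(2)$ propagates identically to all higher levels. Since a full block $[q_{i-1}(1), q_i(4)]$ has maximal length $\log_2(N_2'')$, shifting by $i\log_2(N_2'')$ over the $k$ possible choices sweeps every fixed weight class $q\in\mathbb{N}$ exactly through the overlap region for at most one of the $k$ values of $i$. Hence, for every fixed $q$,
\begin{align*}
 \Pr_{\,N_0}[\,q\text{ lies in two consecutive outer intervals}\,] \leq 1/k = \gamma.
\end{align*}

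Finally I would combine these ingredients as in Lemma \ref{lem:rndshift}. By Lemma \ref{lem:l1constdil3} the expected contribution of $W_q$ at any single level is bounded by $\Vert W_q\Vert_1$ (the $\varepsilon$-term being absorbed into the small-part analysis), so when $q$ is relevant at only one level (probability $\geq 1-\gamma$) its expected contribution is at most $\Vert W_q\Vert_1$, and when $q$ is relevant at two levels (probability $\leq \gamma$) its expected contribution is at most $2\Vert W_q\Vert_1$. Taking expectations over the random choice of $N_0$ yields
\begin{align*}
 \mathbb{E}[\,\text{contribution of } W_q\,]
 \leq (1-\gamma)\Vert W_q \Vert_1 + \gamma\cdot 2\Vert W_q \Vert_1 = (1+\gamma)\Vert W_q \Vert_1,
\end{align*}
as desired. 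The main obstacle I anticipate is the boundary behaviour at level $0$, where $q_0(1)=q_0(2)=0$ by convention: one has to verify that the random shift of $N_0$ truly realigns $q_0$'s effective interval in the same multiplicative way as on the higher levels, so that the ``only one $i$ per $q$'' argument goes through uniformly across all levels rather than just the interior ones.
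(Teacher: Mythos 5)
Your proposal is correct and follows the same route as the paper, which does not spell out a separate argument for Lemma \ref{lem:l1rndshift} but simply inherits the proof of Lemma \ref{lem:rndshift} with the $\ell_1$-specific parameters $N_0'', N_1'', N_2''$ from Lemma \ref{lem:l1constdil3} substituted in. Your numerical check $\log_2(N/N_1'') - k\log_2(N_2'') = \log_2(N_2'')$, the $1:k$ ratio between the overlap block and the non-overlapping part of the inner interval, and the case split $(1-\gamma)\cdot 1 + \gamma\cdot 2 = 1+\gamma$ are precisely the ingredients of the paper's argument.

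A few small slips worth correcting: the quantity $\log_2(N/N_1'')$ is the \emph{length of the inner interval} $q_h(3)-q_h(2)$, not $q_{h+1}(2)-q_h(3)$ (the latter is $0$ by the defining recursion $q_{(M_{h},N)}(3)=q_{(M_{h+1},N)}(2)$). Likewise the overlap region between the outer intervals of levels $h-1$ and $h$ is $[q_h(1),\,q_{h-1}(4)]$, not $[q_{h-1}(4),\,q_h(1)]$. And the random choice of $N_0$ shifts $q_0(3)$ (the top of level $0$'s inner interval), not $q_0(2)$, which is pinned to $0$ by the convention $q_{(n,N)}(1)=q_{(n,N)}(2)=0$; it is this shift of $q_0(3)$ that propagates through the recursion to all higher thresholds. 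With that last observation the boundary concern you raise at the end is resolved: level $0$ is exactly where the randomization is injected, so the ``at most one $i$ per $q$'' argument applies uniformly once you note that all higher $q_h(2),q_h(3)$ inherit the same additive shift $i\log_2(N_2'')$ from $q_0(3)$.
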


\subsection{Net argument}

For $\beta \in \mathbb{R}^{d+1}$ we set $g_1(\beta)=\Vert ( X, - Y )\beta)\Vert_1 $ and $g_2(\beta)=\Vert (S X, -S Y )\beta)\Vert_1 $

\begin{lem}\label{lem:netleml1}
Assume that for $\beta \in \mathbb{R}^{d+1}$ it holds that $|g_1(\beta)- g_2(\beta)|\leq \varepsilon $.
Then for any $\beta' \in \mathbb{R}^d$ with $\Vert X\beta-X\beta'\Vert_1 \leq \varepsilon/( b^{h_m}h_m)$ it holds that $|g_1(\beta') - g_2(\beta')|\leq 3\varepsilon $.
\end{lem}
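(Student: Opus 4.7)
The plan is to mimic the proof of Lemma \ref{lem:netlem2} but exploit the fact that both $g_1$ and $g_2$ are honest (semi)norms on $\mathbb{R}^{d+1}$, which makes the triangle inequality available directly (no need for an auxiliary Lipschitz estimate like Lemma \ref{lem:netlem1}). I will write $A = (X, -Y) \in \mathbb{R}^{n \times (d+1)}$, so that $g_1(\beta) = \|A\beta\|_1$ and $g_2(\beta) = \|SA\beta\|_1$, and read the hypothesis as $\|A(\beta-\beta')\|_1 \le \varepsilon/(b^{h_m} h_m)$.

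First I would control the sketched part. By the reverse triangle inequality for $\|\cdot\|_1$,
\begin{equation*}
|g_2(\beta) - g_2(\beta')| \;\le\; \|SA(\beta-\beta')\|_1.
\end{equation*}
The sketching matrix $S$ has the structural property used in Lemma \ref{lem:netlem2}: each of its $n$ columns has at most $h_m$ nonzero entries, each of absolute value at most $b^{h_m}$ (one nonzero per level, and the level-$h$ weights grow like $b^h$). This gives the crude but sufficient operator bound $\|Sv\|_1 \le b^{h_m} h_m \|v\|_1$ for every $v \in \mathbb{R}^n$, applied to $v = A(\beta-\beta')$, yielding
\begin{equation*}
|g_2(\beta) - g_2(\beta')| \;\le\; b^{h_m} h_m \cdot \|A(\beta-\beta')\|_1 \;\le\; \varepsilon.
\end{equation*}

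Next, the unsketched side is even easier: by the reverse triangle inequality applied to $\|\cdot\|_1$ directly,
\begin{equation*}
|g_1(\beta) - g_1(\beta')| \;\le\; \|A(\beta-\beta')\|_1 \;\le\; \varepsilon/(b^{h_m} h_m) \;\le\; \varepsilon.
\end{equation*}

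Finally I would combine with the hypothesis $|g_1(\beta) - g_2(\beta)| \le \varepsilon$ via a three-term triangle inequality,
\begin{equation*}
|g_1(\beta') - g_2(\beta')| \;\le\; |g_1(\beta') - g_1(\beta)| + |g_1(\beta) - g_2(\beta)| + |g_2(\beta) - g_2(\beta')| \;\le\; 3\varepsilon,
\end{equation*}
which is the claim. There is no real obstacle here; the only subtle point is verifying the $\ell_1 \to \ell_1$ operator bound for $S$, and that follows directly from the sparsity/magnitude structure of the \texttt{CountMin}-type sketch (at most one hit per level, level weights $\le b^{h_m}$). The fact that we use honest norms rather than the logistic loss $f_1$ actually makes this version strictly simpler than Lemma \ref{lem:netlem2}, since we do not need the auxiliary $1$-Lipschitz bound on $\ell$.
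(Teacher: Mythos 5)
Your proof is correct and follows essentially the same route as the paper's: bound $\|SA(\beta-\beta')\|_1$ via the column-sparsity/magnitude structure of $S$ (at most $h_m$ nonzeros per column, each at most $b^{h_m}$), use the reverse triangle inequality for $\|\cdot\|_1$ on both $g_1$ and $g_2$, and combine with a three-term triangle inequality. Your observation that the genuine-norm structure of $g_1,g_2$ removes the need for a separate Lipschitz lemma (unlike the logistic-loss version in Lemma \ref{lem:netlem2}) is exactly the point the paper also exploits.
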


\begin{proof}
It holds that $\|X'(\beta-\beta')\|_1=\|SX(\beta-\beta')\|_1 \leq b^{h_m}h_m \|X(\beta-\beta')\|_1\leq \varepsilon$ since for each $i \in [n]$ there are at most $h_m$ columns $j$ such that $S_{ij}\neq 0$ and each entry of $S$ is bounded by $ b^{h_m} $.
Also note that $\|g_i(v)-g_i(v')\|_1\leq \Vert v-v' \Vert_1$ holds for any two vectors $v, v' \in \mathbb{R}^{d+1}$.
Thus, using the triangle inequality yields
\begin{align*}
|g_1(\beta') - g_2(\beta')| &\leq  
|g_2(\beta')-g_2(\beta)|+|g_2(\beta)-g_1(\beta)|+ |g_1(\beta)-g_1(\beta')|\\
&\leq  \varepsilon + \varepsilon + \varepsilon \leq 3\varepsilon.
\end{align*}
\end{proof}

\begin{lem}\label{lem:l1netlem}
There exists a net $\mathcal{N} \subset \mathbb{R}^d$ with $|\mathcal{N}|=\exp\left( \mathcal{O}(d\ln(n) )\right)$ such that if $|g_1(\beta)-g_2(\beta)|\leq \varepsilon g_1(\beta)$ holds for any $\beta \in \mathcal{N}$  then for any $\beta' \in \mathbb{R}^{d+1}$ it holds that $|g_1(\beta')-g_2(\beta')|\leq 3\varepsilon g_1(\beta')$.
\end{lem}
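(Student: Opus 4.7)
The plan is to mirror the strategy used for the logistic loss in Lemma \ref{lem:netlem} but exploit a key structural feature specific to $\ell_1$: both $g_1$ and $g_2$ are absolutely homogeneous, since $[X,-Y]$ and $S[X,-Y]$ are linear maps and $\|\cdot\|_1$ is absolutely homogeneous. Therefore the target inequality $|g_1(\beta')-g_2(\beta')|\leq 3\varepsilon g_1(\beta')$ is invariant under nonzero scaling $\beta'\mapsto c\beta'$, and it suffices to establish it on the ``unit sphere'' $L = \{\beta\in\mathbb{R}^{d+1}:g_1(\beta)=1\}$. The case $g_1(\beta')=0$ is trivial: then $[X,-Y]\beta'=0$, so by linearity $g_2(\beta') = \|S[X,-Y]\beta'\|_1 = 0$ and both sides of the target inequality vanish.

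\textbf{Net construction and transfer.} I would reuse the lattice construction of Lemma \ref{lem:netlem3} applied to the augmented matrix $A = [X,-Y]\in\mathbb{R}^{n\times(d+1)}$, working in dimension $d+1$ and with the $\mu$ factor dropped (since $\ell_1$ regression does not depend on $\mu$): the grid
\[
\mathcal{N} = \bigl\{\tfrac{\varepsilon}{(d+1)b^{h_m}h_m}\, v : v\in\mathbb{Z}^{d+1},\ \|v\|_\infty\leq \tfrac{(d+1)n\, b^{h_m}h_m}{\varepsilon}\bigr\}
\]
has $|\mathcal N| = \exp(O(d\ln n))$ and, for every $\beta'\in L$ (which satisfies $\|A\beta'\|_1=1\leq n$), contains a quantized point $\beta$ with $\|A(\beta'-\beta)\|_1\leq \varepsilon/(b^{h_m}h_m)$. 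Since $\beta\in\mathcal N$ is close to $L$, the scale-invariant hypothesis $|g_1(\beta)-g_2(\beta)|\leq \varepsilon g_1(\beta)$ becomes, up to a harmless constant factor, the additive bound $|g_1(\beta)-g_2(\beta)|\leq O(\varepsilon)$. Lemma \ref{lem:netleml1} then transfers this to $\beta'$, giving $|g_1(\beta')-g_2(\beta')|\leq 3\varepsilon = 3\varepsilon g_1(\beta')$ on $L$. Rescaling by $c=g_1(\beta')$ propagates the multiplicative bound to all of $\mathbb{R}^{d+1}\setminus\ker(A)$.

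\textbf{Main obstacle.} The subtle point is reconciling the multiplicative hypothesis on net points (which are not exactly on $L$) with the additive input expected by Lemma \ref{lem:netleml1}. This is handled by noting that $\|A(\beta-\beta')\|_1\leq \varepsilon/(b^{h_m}h_m)\leq \tfrac12$ forces $g_1(\beta)\in[\tfrac12,\tfrac32]$ whenever $\beta$ is the nearest net point to some $\beta'\in L$, so the multiplicative hypothesis $|g_1(\beta)-g_2(\beta)|\leq \varepsilon g_1(\beta)\leq \tfrac{3\varepsilon}{2}$ still provides an additive bound absorbable into the constant $3$ (or the argument can be tightened by first reducing $\varepsilon$ by a constant factor). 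Once this normalization step is carried out, the bookkeeping mirrors the logistic case verbatim, and the resulting net $\mathcal N$ inherits the $\exp(O(d\ln n))$ cardinality claimed.
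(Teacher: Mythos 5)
Your proposal takes essentially the same route as the paper's proof: reduce to the level set $g_1(\beta)=1$ via absolute homogeneity of $g_1$ and $g_2$, cover that set with the lattice net from Lemma \ref{lem:netlem3} (adapted to the augmented matrix $[X,-Y]$), transfer the bound from the nearest net point via Lemma \ref{lem:netleml1}, and finally rescale to recover the multiplicative guarantee on all of $\mathbb{R}^{d+1}$. If anything, your write-up is a bit more careful than the paper's: you explicitly handle the degenerate case $g_1(\beta')=0$ (which the paper silently ignores, and which is fine because linearity forces $g_2(\beta')=0$ as well), and you address the mismatch between the multiplicative hypothesis posed on net points in the lemma statement and the additive hypothesis consumed by Lemma \ref{lem:netleml1} — the paper's proof glosses over this, whereas your observation that $g_1(\beta)\in[1/2,3/2]$ for net points near the level set cleanly closes that gap (possibly at the cost of adjusting a constant, as you note). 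These are refinements rather than deviations; the decomposition, the key lemma invoked, and the homogeneity argument are all the same.
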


\begin{proof}
We set
\begin{align}
\mathcal{N}= \left\{ \beta=v\cdot \frac{\varepsilon}{d b^{h_m}h_m} ~|~ v \in \mathbb{Z}^d \text{ with } \Vert v \Vert_\infty \leq \frac{d b^{h_{m}}h_m} {\varepsilon}\right\} .
\end{align} 
Then it holds that for any $\beta \in \mathbb{R}^{d+1}$ with $ g_1(\beta)=1  $ the point $( X, - y )\beta'=\lfloor   \frac{d b^{h_m}h_m}{\varepsilon}\cdot ( X, - y )\beta) \rfloor \cdot \frac{\varepsilon}{d b^{h_m}h_m}$ is in $\mathcal{N}$ and it holds that $\|( X, - y )\beta'\|_1 \leq d \cdot \frac{\varepsilon}{d b^{h_m}h_m}=\frac{\varepsilon}{ b^{h_m}h_m}$.
Using Lemma \ref{lem:netleml1} it holds that $|g_1(\beta') - g_2(\beta')|\leq 3\varepsilon \leq 3\varepsilon g_1(\beta)$.
Further we have $|\mathcal N|\leq \left( \frac{d  b^{h_m}h_m} {\varepsilon} \right)^{2d}=\exp\left( \mathcal{O}(d\ln( n) \right)$.
Now for any $r \in \mathbb{R}$ and $\beta \in \mathbb{R}^{d+1}$ with $ g_1(\beta)=1  $ we have that $|g_1(r\beta)-g_2(r\beta)|= |r g_1(\beta)-r g_2(\beta)|=r|g_1(\beta)-g_2(\beta)|\leq 3 \varepsilon r$.
\end{proof}

\section{Sketching variance-based regularized logistic regression}\label{sec:variance}

In this section we show that our algorithm also approximates the variance well under the assumption that roughly $f_1(X\beta)\leq \ln(2)$. We stress that this assumption does not rule out the existence of good approximations. Indeed, even the minimizer is contained as observed in the preliminaries, since we have that $\min_{\beta \in \mathbb{R}^d}f(X\beta)\leq f(0)=f_1(0)=\ln(2).$
Again we focus on a single $z=X\beta$ first.
What remains to show is that $ \sum_{i:z_i>0} z_i^2$ is approximated well.
We set $H(z)={\sum_{i = 1}^n z_i^2 }$, $H^+(z)=\sum_{i:z_i>0} z_i^2$ and $h(y)=\frac{y^2}{H^+(z)}$.
By $\mu$-complexity we get that $H^+(z) \geq \frac{H(z)}{\mu} $.
We define $W_q^2=\{ i \in [n] ~|~ h(z_i)\in (2^{-q-1}, 2^q] \}$ and $W_q^1=\{ i \in [n] ~|~ \frac{z_i}{\Vert z \Vert_1}\in (2^{-q-1}, 2^q] \}$.
As the argument is almost the same as in the section before, we will only note the differences.
We will also use the same definition of importance, i.e., a weight class $W_q^2$ is important if $H^+(W_q^2)\geq \frac{\varepsilon}{q_m \mu}$.
Similar to the previous analysis we have that if $W_q^2$ is important then $| W_q^2 | \geq \frac{\varepsilon 2^q}{q_m \mu}$.
With those adapted definitions we proceed by adapting the main lemmas of Section \ref{sec:largeparts} that finally yield Theorem \ref{thm:var}.

\begin{lem}\label{lem:l2help}
For any $z_i \in W_q^2$ there exists $q'\leq (q-1)/2+\ln(n)/2 $ such that $z_i \in W_{q'}^1$.
\end{lem}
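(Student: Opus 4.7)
The plan is to translate the defining condition $z_i \in W_q^2$, which constrains $z_i^2/H^+(z)$, into the condition $z_i \in W_{q'}^1$, which constrains $z_i/\|z\|_1$. I assume $z_i > 0$ throughout, since otherwise $z_i/\|z\|_1 \leq 0$ lies in no interval of the form $(2^{-q'-1}, 2^{-q'}]$. Unpacking the hypothesis, $z_i \in W_q^2$ means $z_i^2 > 2^{-q-1} H^+(z)$, hence $z_i > 2^{-(q+1)/2}\sqrt{H^+(z)}$.

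The core step is to lower bound $\sqrt{H^+(z)}$ by a multiple of $\|z\|_1$. For this I combine two classical inequalities. Applying the $\mu_2$-complexity of $X$ to $\beta$ and then to $-\beta$ (which negates $z$ and swaps positive and negative entries while preserving squares) gives $\sum_{z_j<0} z_j^2 \leq \mu \sum_{z_j>0} z_j^2$, so $H(z) = H^+(z) + H^-(z) \leq (1+\mu) H^+(z)$, i.e., $\sqrt{H^+(z)} \geq \|z\|_2/\sqrt{1+\mu}$. Then Cauchy--Schwarz yields $\|z\|_2 \geq \|z\|_1/\sqrt{n}$. Chaining gives $\sqrt{H^+(z)} \geq \|z\|_1/\sqrt{n(1+\mu)}$.

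Finally, I take $q'$ to be the unique integer for which $z_i/\|z\|_1 \in (2^{-q'-1}, 2^{-q'}]$, namely $q' = \lfloor \log_2(\|z\|_1/z_i) \rfloor$. Combining the two bounds above gives
$$\frac{z_i}{\|z\|_1} \;>\; \frac{2^{-(q+1)/2}}{\sqrt{n(1+\mu)}},$$
so $q' \leq \log_2(\|z\|_1/z_i) < (q+1)/2 + \tfrac{1}{2}\log_2(n(1+\mu))$, which matches the stated bound $(q-1)/2 + \tfrac{1}{2}\ln(n)$ up to the implicit absorption of $\mu$-dependent and base-of-logarithm constants that the paper routinely swallows in its asymptotic analysis.

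The only delicate point is to apply the $\mu_2$-complexity in the right direction (via $\beta \mapsto -\beta$), so that the positive part $H^+(z)$ is controlled from below by a constant fraction of the full squared norm $H(z)$; without this observation one is stuck with the trivial and useless bound $\sqrt{H^+(z)} \geq z_i$. Once that substitution is in place, the rest is routine algebra on logarithms and poses no obstacle.
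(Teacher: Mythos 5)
Your proof is correct and follows the same core strategy as the paper's: unpack the $W_q^2$ membership condition, relate $H^+(z)$ to $\Vert z \Vert_2^2$, invoke the inter-norm inequality $\Vert z \Vert_1 \leq \sqrt{n}\,\Vert z \Vert_2$, and take logarithms. Where you differ is in care at one step. The paper's one-line computation substitutes the lower bound $z_i^2/\Vert z\Vert_2^2 > 2^{-q-1}$ into the numerator, but the hypothesis $z_i \in W_q^2$ only gives $z_i^2/H^+(z) > 2^{-q-1}$; since $H^+(z) \leq \Vert z\Vert_2^2$, that implication runs the wrong way without an extra ingredient. You supply that ingredient explicitly via $\mu_2$-complexity, $H(z) \leq (1+\mu)H^+(z)$ --- a fact the paper does record in the surrounding text of Section~\ref{sec:variance} (as $H^+(z) \geq H(z)/\mu$) but does not invoke inside this proof. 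The cost is an additional $\tfrac{1}{2}\log_2(1+\mu)$ in your bound on $q'$, which is absent from the stated $(q-1)/2 + \ln(n)/2$. Since $\mu \leq n$ by Assumption~\ref{ass:mainass}, this is subsumed by the $O(\ln n)$ scale at which the lemma is used downstream (in $q_{(M,N)}(3)$ of Lemma~\ref{lem:varbounds}), so the asymptotic conclusions are unchanged; but strictly read, your derivation is the more rigorous of the two, and you correctly identify that the $\mu_2$ step is the non-trivial point, not routine.
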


\begin{proof}
It is well known that $ \Vert z \Vert_1 \leq \sqrt{n}\Vert z \Vert_2$.
We conclude that
\begin{align*}
\frac{z_i}{\Vert z \Vert_1}\geq \frac{z_i}{\sqrt{n}\Vert z \Vert_2}
=\frac{1}{\sqrt{n}}\frac{z_i^2}{\Vert z \Vert_2^2}\Big/\sqrt{ \frac{z_i^2}{\Vert z \Vert_2^2}}
\geq \frac{1}{\sqrt{n}}\cdot \frac{2^{-q-1}}{2^{-(q-1)/2}}.
\end{align*}
Now taking the logarithm proves the lemma.
\end{proof}

\paragraph{Contraction bounds}
Recall that:
\begin{align*}
&\gamma_1:=\frac{p }{  3 m_1}\\
&Y_1:=\{ i \in [n] ~|~ |z_i|\geq \gamma_1 \} \end{align*}
Here $Y_1$ is the set of `large elements'.
We redefine $\mu_z=\frac{\sum_{z_i > 0} z_i^2}{\sum_{z_i < 0} z_i^2} $

\begin{lem}\label{lem:varconprop}
The following hold:
\begin{itemize}
\item[1)] $|Y_1 \cap U| \leq \varepsilon N/2$ with failure probability at most $\exp(-m_1)$;
\item[2)] Let $\mathcal{B}=\{ B \in \mathcal{B}_h ~|~ \sum_{i \in B \setminus Y_1 }|z_i| \leq \frac{4 p}{ \varepsilon N} \} $. Then $|\mathcal{B}|\geq (1-\varepsilon/2)N$ with failure probability at most $\exp(-m_1)$;
\item[3)] Assume that $q \geq \log_2(\frac{8q_m \mu_z m_1 }{\varepsilon^3 p} ))$ and that $W_q^2$ is important or that $|W_q| \geq 8 m_1 \varepsilon^{-2} \cdot p^{-1}$. Then with failure probability at most $\exp(-m_1)$ there exists $W_q^* \subset W_q^2 \cap \mathcal{B}$ such that $\Vert W_q^* \Vert_1\geq (1 - \varepsilon)^2\Vert W_q^+ \Vert_1\cdot p $ and each element of $W_q^*$ is in a bucket in $\mathcal{B}$ containing no other element of $Y_1$;
\item[4)] If $q \leq \log_2( \frac{ N  \varepsilon^2}{\sqrt{n} 4p} ) $ and $W_q^*$ as in 3) exists, then with failure probability at most $\exp(-m_1)$ it holds that $\sum_{i \in W_q^*}G(B_i)\geq (1 - \varepsilon)\Vert W_q^* \Vert_1 $.
\end{itemize}
\end{lem}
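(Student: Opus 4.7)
The plan is to mirror the proof of Lemma \ref{lem:conprop}, adapting each part to the squared setting. Parts (1) and (2) should carry over essentially verbatim: the set $Y_1$ and the bucket family $\mathcal{B}$ are defined purely in terms of $|z_i|$ and the level-$h$ sampling event, with the only ambient facts used being the normalization $\|z\|_1=1$ and the bound $|Y_1|\leq \gamma_1^{-1}$. Neither step invokes squared quantities, so the same Bernstein calculations give failure probability $\exp(-m_1)$ directly.

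For part (3), I would repeat the Bernstein argument of Lemma \ref{lem:conprop}(3) but on squared weights. After fixing the positions of $Y_1\setminus W_q^2$, at least $(1-\varepsilon)N$ buckets in $\mathcal{B}$ are ``available''. Define $Z_i=z_i^2$ if $i\in W_q^2$ is sampled at level $h$ and lands in an available bucket, and $Z_i=0$ otherwise; set $Z=\sum_{i\in W_q^2} Z_i$. Then $\mathbb{E}(Z)\geq(1-\varepsilon)p\cdot H^+(W_q^2)$. For the variance bound, use $z_i^4\leq 2^{-q}H^+(z)\,z_i^2$ to get $\sum\mathbb{E}(Z_i^2)\leq 2^{-q}H^+(z)\cdot\mathbb{E}(Z)$, and $\max_i Z_i\leq 2^{-q}H^+(z)$. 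The importance hypothesis $H^+(W_q^2)\geq \varepsilon/(q_m\mu)$ yields $|W_q^2|\geq 2^q\varepsilon/(q_m\mu)$, and combined with the threshold $q\geq \log_2(8q_m\mu_z m_1/(\varepsilon^3 p))$, Bernstein's inequality gives $Z\geq(1-\varepsilon)^2 p\,H^+(W_q^2)$ with the required failure probability $\exp(-m_1)$. The set $W_q^*$ is then simply $\{i\in W_q^2: Z_i=z_i^2\}$.

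For part (4), any $i\in W_q^*$ lies in a bucket $B_i\in\mathcal{B}$ containing no other element of $Y_1$, so $|G(B_i)-z_i|\leq \sum_{j\in B_i\setminus\{i\}}|z_j|\leq 4p/(\varepsilon N)$. To conclude $G(B_i)^2\geq (1-\varepsilon)z_i^2$ (the natural squared analog), we need $|z_i|\geq C p/(\varepsilon^2 N)$ for a suitable constant. Using $z_i^2\geq 2^{-q-1}H^+(z)$ and, crucially, $H^+(z)\geq \tfrac{1}{(\mu+1)n}$ — which follows from $\|z\|_2^2\geq \|z\|_1^2/n=1/n$ together with $\mu_2$-complexity giving $H^-\leq \mu H^+$, hence $H^+\geq H/(\mu+1)$ — we obtain $|z_i|\geq 2^{-(q+1)/2}/\sqrt{(\mu+1)n}$. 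Requiring this to exceed $4p/(\varepsilon^2 N)$ yields exactly the threshold $q\leq \log_2\!\bigl(N\varepsilon^2/(4p\sqrt{n})\bigr)$ (after the $\mu$ factor is absorbed into the sketching parameters elsewhere), completing the part.

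The principal obstacle, and the reason for the extra $\sqrt{n}$ factor compared with Lemma \ref{lem:conprop}, is the mismatch between the $\ell_1$ normalization $\|z\|_1=1$ used throughout the analysis and the $\ell_2$ quantities that actually appear in the variance objective. The bridge $H^+(z)\gtrsim 1/(n\mu)$ via Cauchy--Schwarz is the only way to translate the $\ell_1$ normalization into a usable lower bound on $|z_i|$ inside a squared weight class, and the square root of $n$ in that bound propagates through part (4) into the threshold for $q_h(3)$ and ultimately into the $n^{1/2+c}$ dependence in Theorem \ref{thm:var}. Everything else is structurally identical to Lemma \ref{lem:conprop}.
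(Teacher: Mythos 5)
Your proposal is correct and takes essentially the same approach as the paper. The paper's own proof is one sentence: "verbatim to Lemma \ref{lem:conprop}, except part (4) uses Lemma \ref{lem:l2help} to reduce to $W_q^1$," and Lemma \ref{lem:l2help} is exactly the Cauchy--Schwarz bridge $\|z\|_1\le\sqrt{n}\|z\|_2$ together with $\mu$-complexity that you invoke directly; your inline derivation of $|z_i|\gtrsim 2^{-(q+1)/2}/\sqrt{(\mu+1)n}$ is that lemma unrolled. Two small remarks: (i) the lemma's part (4) as stated only requires the linear bound $\sum_{i\in W_q^*}G(B_i)\ge(1-\varepsilon)\|W_q^*\|_1$ — the squared consequence $G(B_i)^2\ge(1-2\varepsilon)z_i^2$ is drawn later in Lemma \ref{lem:varapppbound}, so your "need $G(B_i)^2\ge(1-\varepsilon)z_i^2$" front-loads a step the paper defers; and (ii) running Bernstein on $Z_i=z_i^2$ in part (3) is a clean alternative to the paper's literal "verbatim" $Z_i=z_i$ (both work because all elements of $W_q^2$ have $z_i^2$ within a factor of two, so capturing a $(1-\varepsilon)$ fraction of $\ell_1$ mass and of $\ell_2^2$ mass are equivalent up to constants). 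Your threshold derivation actually gives $q\lesssim 2\log_2(N\varepsilon^2/(4p\sqrt{(\mu+1)n}))$, which is weaker (more permissive) than the stated $\log_2(N\varepsilon^2/(4p\sqrt{n}))$; this is consistent since the paper's bound is simply more conservative.
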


The proof is verbatim to the proof of Lemma \ref{lem:conprop}. For the 4th part we use Lemma \ref{lem:l2help} to reduce the problem to the weight class $W_q^1$.
This causes an additional term of $\frac{1}{\sqrt{n}}$ in the logarithm of $q_3(M, N)$.

We also have a change in $q_4(M, N)$.
More precisely we need two additional factors of $\varepsilon$ in $\gamma_2$:
\begin{align*}
&\gamma_2:= \frac{ M \varepsilon^4}{2 N n \ln(N h_{\max}/\delta )} \\
&Y_2=\{ i \in [n] ~|~ |z_i|\leq \gamma_2 \} \text{: Set of small elements;}\\
&Y_2^+=\{ i \in [n] ~|~ |z_i|\leq \gamma_2 , z_i\leq 0 \} \text{: Set of small negative elements;}\\
&Y_2^-=\{ i \in [n] ~|~ z_i \leq \gamma_2,  z_i\geq 0  \} \text{: Set of small positive elements;}\\
\end{align*}
Further we set $A:= \sum_{z_i\geq 0} z_i$, $A'=\sum_{z_i\in Y_2^-} |z_i|$, $A_1=\sum_{z_i\in Y_2^+} |z_i|$ and $A_2=A-A_1\geq 0$.

\begin{lem}\label{lem:dilprop'}
If $A'\geq A(1+\varepsilon)$ then for any bucket $B$ that contains only elements of $Y_2$ we have that $G(B)=\sum_{i\in B}z_i \leq \frac{M}{N n}\cdot (-A_2)$ with failure probability at most $\frac{\delta}{N h_{\max}}$.
\end{lem}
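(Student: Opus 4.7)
The proof plan is to adapt the Bernstein-based argument from Lemma \ref{lem:dilprop} part 1) to the sharper conclusion $G(B) \leq -\frac{M}{Nn} A_2$. I would fix a single bucket $B$ at level $h$ and analyze $G(B)$ as a sum of independent random variables; a union bound over the $N h_{\max}$ buckets then gives the per-level failure probability used downstream. A sign convention must be fixed first: the hypothesis $A' \geq (1+\varepsilon)A$ is only meaningful if $A'$ denotes the aggregate absolute mass of the \emph{small negative} entries and $A_1$ denotes the mass of the \emph{small positive} entries, so that $A_2 = A - A_1$ equals the mass of the large positives; under the literal definitions stated in the paper $A' \leq A$ always, which would trivialize the hypothesis.

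For each $i \in Y_2$ set $X_i := z_i \cdot \mathbf{1}[i \in B]$. Each such $i$ lands in $B$ independently with probability $p/N = M/(Nn)$, so the expectation satisfies
\[
E' \;=\; \mathbb{E}\!\Bigl[\sum_{i\in Y_2} X_i\Bigr] \;=\; \frac{M}{Nn}\bigl(A_1 - A'\bigr) \;\leq\; \frac{M}{Nn}\bigl(A_1 - (1+\varepsilon)(A_1+A_2)\bigr) \;=\; -\frac{M}{Nn}\bigl(\varepsilon A + A_2\bigr),
\]
so the target exceeds $E'$ by $\Delta := -\tfrac{M}{Nn} A_2 - E' \geq \tfrac{M \varepsilon A}{Nn}$. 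The second moment is bounded using $|z_i| \leq \gamma_2$ on $Y_2$ and $\|z\|_1 = 1$: one gets $\sum_{i\in Y_2} \mathbb{E}[X_i^2] \leq (M\gamma_2/(Nn)) \sum_{i \in Y_2}|z_i| \leq M\gamma_2/(Nn)$. Feeding $\Delta$, this variance bound, and the uniform bound $|X_i|\leq\gamma_2$ into Bernstein's inequality (Proposition \ref{thm:bernstein}) yields
\[
P\!\left(G(B) > -\tfrac{M}{Nn}A_2\right) \;\leq\; \exp\!\left(\frac{-\Delta^2/2}{M\gamma_2/(Nn) + \gamma_2 \Delta /3}\right),
\]
and substituting the new $\gamma_2 = M\varepsilon^4/(2Nn\ln(Nh_{\max}/\delta))$ (which carries two extra $\varepsilon$ factors relative to the $\ell_1$ version of Lemma \ref{lem:dilprop}) should collapse the exponent to at most $-\ln(Nh_{\max}/\delta)$, producing exactly the required failure probability $\delta/(Nh_{\max})$.

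The main obstacle is that the margin $\Delta \geq M \varepsilon A/(Nn)$ shrinks with $A$, so the Bernstein exponent only dominates $\ln(Nh_{\max}/\delta)$ once $A$ is bounded away from zero. The tightened $\gamma_2$ creates exactly the slack needed, and the $\mu$-complexity assumption enters through the elementary lower bound $A \geq \|z\|_1/(\mu+1) = 1/(\mu+1)$ (derived exactly as in Lemma \ref{Lem3.7}). Combining these two ingredients carefully, paralleling the final arithmetic in Lemma \ref{lem:dilprop}, is the delicate technical step and is ultimately what forces the $\mu^2$- and $\sqrt{n}$-type dependencies appearing in Theorem \ref{thm:var}.
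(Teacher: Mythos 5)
Your proposal follows the same route as the paper: fix a bucket $B$, write $G(B)$ as a sum of independent variables $X_i=z_i\mathbf{1}[i\in B]$, bound the second moment via $|z_i|\leq\gamma_2$ on $Y_2$ together with $\|z\|_1=1$, and apply Bernstein. Your observation about the sign convention is correct and flags a genuine defect in the paper's wording: $Y_2^+$ and $Y_2^-$ carry the wrong superscripts relative to their stated sign conditions, the literal $A'$ is the small \emph{positive} mass so $A'\le A$ always and the hypothesis would be vacuous, and the closing remark ``$\leq\tfrac{M}{nN}(-A_2)$'' invokes $A'\ge(1+\varepsilon)A$ while the preceding formula $E'=\tfrac{M}{nN}(A'-A_1)$ only computes $\mathbb{E}[G(B)]$ under the literal (broken) definitions. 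Your swapped reading — $A'$ the small negative mass, $A_1$ the small positive mass, $A_2=A-A_1$ the large positive mass — is the one that makes the lemma coherent, makes $A_2\ge0$ automatic, and matches what Lemma~\ref{lem:dil1} supplies.

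The one structural difference from the paper is the choice of Bernstein threshold: the paper bounds $P(G(B)>0)\le P(\sum X_i-E'\ge\varepsilon|E'|)$ and then, in a final note, argues that $E'+\varepsilon|E'|\le-\tfrac{M}{Nn}A_2$; you instead center at the target directly via $\Delta=-\tfrac{M}{Nn}A_2-E'$, which is cleaner and subsumes that note in a single step. However, the step you leave open — that the exponent ``should collapse'' to $-\ln(Nh_{\max}/\delta)$ — is not a routine substitution. Feeding $\Delta\ge\tfrac{M\varepsilon A}{Nn}$, $\sigma^2\le\tfrac{M\gamma_2}{Nn}$ and the stated $\gamma_2$ into the exponent yields a magnitude of order $\tfrac{A^2}{\varepsilon^2}\ln(Nh_{\max}/\delta)$, so the advertised failure probability requires $A=\Omega(\varepsilon)$; your appeal to $A\ge1/(\mu+1)$ closes this only when $\varepsilon=O(1/\mu)$ and otherwise leaves the exponent short by roughly a factor of $\mu^2$. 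The paper's own chain of inequalities has the same hole: its step from $\varepsilon^2|E'|^2$ to $\varepsilon^3 M/(nN)$ in the numerator silently uses $|E'|\ge\varepsilon M/(nN)$, i.e.\ $A'-A_1\ge\varepsilon$, whereas the hypothesis only yields $A'-A_1\ge\varepsilon A\ge\varepsilon/(\mu+1)$. So you have not missed anything that the paper actually establishes, and your remark tracing the $\mu^2$ factor in Theorem~\ref{thm:var} to this point is on target; making the lemma airtight as stated would require either a $\mu$-dependent $\gamma_2$ or an explicit $\varepsilon\lesssim1/\mu$ hypothesis, and you should be explicit about which one you adopt rather than leaving the cancellation as a hope.
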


\begin{proof}
Let $X_i$ be the random variable attaining value $ z_i$ if $i \in B$ and $0$ otherwise,  for $i \in [n]$.
The expected value for $G(B)=\sum_{i \in [n]}X_i$ is $E':=\frac{M}{n N} \cdot (A'-A_1) $.
Further we have that 
\[\mathbb{E}(\sum_{i \in [n]}X_i^2)=\sum_{i \in Y_2}\frac{M}{n N} \cdot z_i^2 
\leq \frac{M}{n N} \cdot \sum_{i \in Y_2} \gamma_2 z_i \leq \frac{\gamma_2 M}{n N} \]
since all $X_i$ are bounded by $\gamma_2$ by assumption.
Applying Bernstein's inequality thus yields
\begin{align*}
P(G(B) > 0) \leq P\left(\sum_{i \in [n]}X_i - E' \geq \varepsilon|E'| \right)
&\leq \exp\left( \frac{-\varepsilon^2|E'|^2/2}{\gamma_2 \cdot M/(n N)+  \varepsilon\gamma_2 |E'|/3} \right)\\
&\leq \exp\left( \frac{-\varepsilon^3\cdot M/(n N)/2}{\gamma_2 (M/(n N E') + \varepsilon/3) } \right)\\
&= \exp\left( \frac{-\varepsilon^3\cdot M/(n N)/2}{\gamma_2 \varepsilon^{-1}((A'-A_1) + 1/3) } \right)\\
& \leq \exp\left( \frac{-\varepsilon^4\cdot M/(n N)}{2\gamma_2} \right)\\
&\leq \exp\left(-\ln\left(\frac{N h_{\max}}{\delta} \right)\right)\\
&=\frac{\delta}{N h_{\max}}.
\end{align*}
Note that $\varepsilon E' \leq \varepsilon \cdot \frac{M}{n N} \cdot (A'-A_1) \leq  \varepsilon \cdot \frac{M}{n N} \cdot A $ and thus $ \mathbb{E}(\sum_{i \in [n]}X_i^2)+\varepsilon E' \leq \frac{M}{n N} \cdot (-A'+A_1+\varepsilon A)\leq \frac{M}{n N} \cdot (-A_2)$.
\end{proof}

Our main lemma thus changes to:

\begin{lem}\label{lem:varbounds}
With probability at least $1-\frac{\delta}{h_m}$ the weight classes $W_q^2$ for $q\geq q_{(M, N)}(4):=\log_2(\gamma_2^{-1}):=\log_2(\frac{2 N n \ln(N h_m/\delta )}{ M \varepsilon^4})$ and $q \leq q_{(M, N)}(1):= \log_2(\frac{n \delta}{M h_m})$ have zero contribution to $ \sum_{B}  G^+(B)$, i.e., for any bucket $B$ we have $ \sum_{z_i \in B \setminus I_r} z_i \leq 0$ where $I_r=\{ i \in [n]~|~ z_i\in W_q, q \in [q_{(M, N)}(1), q_{(M, N)}(4)] \} $.
Further, with failure probability at most $\exp(-m_1)$, for each $\log_2(\frac{8q_m \mu m_1 n}{\varepsilon^3 M} ))=:q_{(M, N)}(2)\leq q \leq q_{(M, N)}(3):= \log_2( \frac{ N n \varepsilon^2}{4M m_1\sqrt{n}} )$ there exists $W_q^*$ such that $\sum_{i \in W_q^*}G(B_i)\geq (1 - \varepsilon)^2\Vert W_q^2 \Vert_2\cdot \frac{M}{n} $.
Thus it holds that:
\begin{align*}
&q_{(M, N)}(2)-q_{(M, N)}(1)=\log_2\left(\frac{8q_m  m_1 h_m}{\varepsilon^3 \delta}\right) \\
&q_{(M, N)}(3)-q_{(M, N)}(2)=\log_2\left(\frac{ N  \varepsilon^5}{32 m_1 \mu q_m \sqrt{n}}\right)=:\log_2(b)\\
&q_{(M, N)}(4)-q_{(M, N)}(3)=\log_2\left(\frac{8  \ln(N h_m/\delta \sqrt{n})}{\varepsilon^6  }\right).
\end{align*}
\end{lem}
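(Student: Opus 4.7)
}

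My plan is to follow the same skeleton as the proof of Lemma \ref{cor:propcor}, but with the squared objective $H^+(z)$ in place of $\Vert z^+ \Vert_1$ and the weight classes $W_q^2$ in place of $W_q^+$. The inner interval $[q(2), q(3)]$ will come from Lemma \ref{lem:varconprop}, whose items (1)--(4) are exact analogues of Lemma \ref{lem:conprop} but stated for the $\ell_2$ weight classes; the outer interval $[q(1), q(4)]$ will come from Lemma \ref{lem:dilprop'} together with a straightforward adaptation of Lemma \ref{lem:dilprop}(2) to the squared threshold.

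For the inner bounds I would fix a level $h$ (with $p=p_h=M_h/n$ and $N=N_h$) and argue as follows. First, Lemma \ref{lem:varconprop}(1)--(2) guarantee that with failure probability at most $2\exp(-m_1)$ all but an $\varepsilon/2$ fraction of buckets at level $h$ contain no element of the large set $Y_1$ and have small mass from the remaining entries. Then, for every $q\in[q(2), q(3)]$ with $W_q^2$ important, Lemma \ref{lem:varconprop}(3) provides a set $W_q^\ast \subseteq W_q^2$ whose mass is at least $(1-\varepsilon)^2 \Vert W_q^2 \Vert_1 \cdot p$ and whose elements are perfectly isolated. To convert this isolation statement into the required bucket-sum bound, I would invoke Lemma \ref{lem:l2help}: each $z_i$ with $h(z_i) \in (2^{-q-1}, 2^{-q}]$ satisfies $z_i/\Vert z\Vert_1 \ge 2^{-(q-1)/2}/\sqrt{n}\cdot 2^{-q-1}$, so membership in an $\ell_2$ class $W_q^2$ pulls it into an $\ell_1$ class $W_{q'}^1$ with $q' \le (q-1)/2 + \ln(n)/2$. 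Applying Lemma \ref{lem:varconprop}(4) to the pulled-back index $q'$ yields the clean approximation $\sum_{i\in W_q^\ast} G(B_i) \ge (1-\varepsilon)\Vert W_q^\ast \Vert_1$, and this substitution is exactly what produces the extra factor of $\sqrt{n}$ in the definition of $q_{(M,N)}(3) = \log_2(Nn\varepsilon^2/(4Mm_1\sqrt n))$.

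For the outer bounds I would handle the two sides separately. On the upper side, define $\gamma_2 = M\varepsilon^4/(2Nn\ln(Nh_m/\delta))$ and $Y_2 = \{ i : |z_i|\le \gamma_2 \}$; for any bucket $B$ that is entirely contained in $\bigcup_{q> q(4)} W_q^2 \subseteq Y_2$, Lemma \ref{lem:dilprop'} (whose hypothesis $A'\ge A(1+\varepsilon)$ is supplied by the assumption $X\beta\in V$, i.e.\ $f_1(X\beta)\le\ln(2)(1-\varepsilon)$, via the same convexity argument as in Lemma \ref{lem:dil1}) certifies $G(B)\le 0$ except with probability $\delta/(Nh_m)$. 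A union bound over the $N$ buckets at level $h$ gives the claimed $\delta/h_m$ failure probability for the upper outer bound. On the lower side, the analogue of Lemma \ref{lem:dilprop}(2) says that $U$ contains no element with $z_i \ge Mh_m/(n\delta)$ except with small probability; this is exactly the threshold $2^{-q(1)}$, so $W_q^2$ for $q<q(1)$ is empty with high probability.

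The only genuinely delicate point, and what I expect to be the main obstacle, is the bookkeeping associated with Lemma \ref{lem:l2help}: one must verify that the $\sqrt{n}$ inflation factor only enters $q(3)$ (and symmetrically $q(4)$) and does not propagate into the other three inequalities that determine $q(2)-q(1)$ and $q(3)-q(2)$. With that verified, the three gap formulas follow by direct subtraction of the closed forms of $q_{(M,N)}(1),\dots,q_{(M,N)}(4)$, and a final union bound over the $O(h_m + q_m)$ events above yields the stated failure probability.
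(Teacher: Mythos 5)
Your proof plan follows essentially the same route as the paper: the paper presents Lemma~\ref{lem:varbounds} as a direct transcription of Lemma~\ref{cor:propcor} with $W_q^2$ replacing $W_q^+$, with Lemma~\ref{lem:varconprop} (the verbatim analogue of Lemma~\ref{lem:conprop}) supplying the inner interval, Lemma~\ref{lem:dilprop'} supplying the upper outer bound, and Lemma~\ref{lem:l2help} being invoked only to pass from the $\ell_2$ weight class $W_q^2$ to an $\ell_1$ class $W_{q'}^1$ inside part~(4). You have identified every one of these ingredients, so the strategy is sound.

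One piece of your expectation-setting is off, though, and is worth correcting before you carry out the bookkeeping. You write that the $\sqrt{n}$ inflation ``only enters $q(3)$ (and symmetrically $q(4)$) and does not propagate into the other three inequalities that determine $q(2)-q(1)$ and $q(3)-q(2)$.'' That is not what happens: since the $\sqrt{n}$ enters $q_{(M,N)}(3)$ but \emph{not} $q_{(M,N)}(2)$, the gap $q_{(M,N)}(3)-q_{(M,N)}(2)=\log_2(b)$ picks up a factor $1/\sqrt{n}$ in the denominator, exactly as in the lemma's stated formula $\log_2\bigl(\frac{N\varepsilon^5}{32 m_1\mu q_m\sqrt{n}}\bigr)$. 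This matters: $b$ is the ratio of consecutive sampling rates, so the $\sqrt{n}$ that shows up here is what ultimately forces the $\sqrt{n}$ dependence in the sketch size for the variance-regularized problem, and you should expect to carry it through rather than argue it away. Only $q(2)-q(1)$ is $\sqrt{n}$-free (both endpoints unchanged from Lemma~\ref{cor:propcor}); $q(4)-q(3)$ also carries a $\sqrt{n}$. Once you accept that, the three gap identities are indeed a direct subtraction of the closed forms and your plan goes through.

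As a small side remark, the intermediate inequality you quote from Lemma~\ref{lem:l2help} is garbled — the correct lower bound from the proof is $z_i/\Vert z\Vert_1 \ge \frac{1}{\sqrt{n}}\cdot\frac{2^{-q-1}}{2^{-(q-1)/2}}$ — but since you only use the lemma's conclusion $q'\le (q-1)/2+\ln(n)/2$, this does not affect the plan.
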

If $N=M$ then we set $ q_{(M, N)}(3)=q_{(M, N)}(4)=\infty$.
If $M=n$ then we set set $ q_{(M, N)}(1)=q_{(M, N)}(2)=0$.
We set $q_h(i)=q_{(M_h, N_h)}(i)$ for $i\in \{1 , 2 , 3 , 4\}$ and $Q_h=[q_{h}(2), q_{h}(3)] $ to be the well-approximated weight classes and $R_h=[q_{h}(1), q_{h}(4)]$ to be the relevant weight classes at level $h$. Note that $q_{(M, N)}(1) $ and $q_{(M, N)}(2)$ stay the same as before.

\paragraph{Heavy hitters}
The important changes to note here are that we need to replace Lemma \ref{Lem3.5new} with an appropriate lemma for the $\ell_2$-leverage scores and there is an additional factor of $\frac{1}{\sqrt{n}}$ in $\gamma_4$.
We further redefine $u_p$ to be the $\ell_2$-leverage scores.

\begin{lem}\label{Lem3.5'}\citep{ClarksonW15}
If $u_i$ is the $k$-th largest $\ell_2$-leverage score, then for $z$ in the subspace spanned by the columns of A it holds that $z_i^2\leq \frac{d}{k}\sum_{j=1}^n z_j^2$.
Further it holds that $\sum_{i=1}^n u_i = d $
\end{lem}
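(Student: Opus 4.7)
My plan is to prove this via the standard orthonormal-basis characterization of $\ell_2$-leverage scores. First, assume without loss of generality that $X$ has full column rank $d$ (otherwise replace $d$ by $\operatorname{rank}(X)$, and the stated bound is only strengthened). Let $U \in \mathbb{R}^{n \times d}$ be any matrix whose columns form an orthonormal basis for the column space of $X$, so that $U^\top U = I_d$. Every $z$ in the column space of $X$ can be written uniquely as $z = U\alpha$ for some $\alpha \in \mathbb{R}^d$, and orthonormality gives $\|z\|_2^2 = \|\alpha\|_2^2$.

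Next, I would verify that $u_i = \|U_{i,:}\|_2^2$, where $U_{i,:}$ denotes the $i$-th row of $U$. Since $z_i = U_{i,:} \alpha$, Cauchy--Schwarz yields $z_i^2 = (U_{i,:}\alpha)^2 \leq \|U_{i,:}\|_2^2 \, \|\alpha\|_2^2 = \|U_{i,:}\|_2^2 \cdot \|z\|_2^2,$ and equality is attained by choosing $\alpha$ parallel to $U_{i,:}^\top$. Since every $z$ in the column space equals $X\beta$ for some $\beta$, this identifies $u_i = \max_{\beta\neq 0} (x_i\beta)^2/\|X\beta\|_2^2$ with $\|U_{i,:}\|_2^2$, matching the definition in the text.

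For the sum, I would compute
\[
\sum_{i=1}^n u_i \;=\; \sum_{i=1}^n \|U_{i,:}\|_2^2 \;=\; \|U\|_F^2 \;=\; \operatorname{tr}(U^\top U) \;=\; \operatorname{tr}(I_d) \;=\; d.
\]
For the first claim, sort the leverage scores in decreasing order and let $u_i$ denote the $k$-th largest. Since at least $k$ of the nonnegative $u_j$'s are $\geq u_i$ and they sum to at most $d$, we obtain $k \cdot u_i \leq \sum_j u_j = d$, hence $u_i \leq d/k$. Combining with the variational bound $z_i^2 \leq u_i \|z\|_2^2$ yields $z_i^2 \leq \frac{d}{k}\sum_{j=1}^n z_j^2$, as required.

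This is a standard fact; I do not anticipate a real obstacle. The only delicate point is handling rank deficiency of $X$, which is resolved by replacing $d$ by the rank and observing that the bound $d/k$ is then only an upper estimate. A second minor point is confirming uniqueness of $u_i$ across different choices of orthonormal basis $U$: if $U' = UQ$ for an orthogonal $Q$, then $\|U'_{i,:}\|_2^2 = \|U_{i,:} Q\|_2^2 = \|U_{i,:}\|_2^2$, so the leverage scores are invariant, as needed.
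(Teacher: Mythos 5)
Your proof is correct and takes essentially the same approach the paper indicates ("the lemma follows \ldots by using an orthonormal basis"): identify $u_i$ with the squared row norm of an orthonormal basis $U$ for the column space, get $\sum_i u_i = \operatorname{tr}(U^\top U) = d$, and combine the Cauchy--Schwarz bound $z_i^2 \le u_i\|z\|_2^2$ with the pigeonhole bound $u_i \le d/k$ for the $k$-th largest score. Your note about rank deficiency is a reasonable caveat but does not change the argument.
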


The lemma follows as in the case of $\ell_1$ leverage scores by using an orthonormal basis.

We then apply Lemma \ref{Lem3.4new} and Lemma \ref{Lem3.5'} as before: set $N_1= d \gamma_3^{-1}$ and $N_2 = d \gamma_3^{-1}\cdot \gamma_4^{-1}$, where $\gamma_3=\frac{\varepsilon^3}{8 q_m \mu m_1} $ and $\gamma_4=\frac{2\varepsilon }{\sqrt{n} m_1 }$.
Further let $Y_3$ (resp. $Y_4$) be the set of coordinates with the $N_1$ (resp. $N_2$) largest leverage scores.
We denote by $\mathcal{E}_2$ the event that all coordinates in $Y_3$ are in a bucket with no other member of $Y_4$.
By Lemma \ref{Lem3.4new}, $\mathcal{E}_2$ holds with probability at least $1-\delta$ for an appropriate $N=N_0^{(2)} =N_1N_2 \delta^{-1}= \mathcal{O}(\frac{d^2 q_m^2\mu^2 m_1^3 \sqrt{n} }{\delta\varepsilon^7 } )$.
For any entry $z_p \in H$ we have $z_p\geq \gamma_3$ and thus by Lemma \ref{Lem3.5new}, we have $p \in Y_3$ and for any entry $p \notin Y_4$ we have $z_p < \gamma_3\cdot \gamma_4$.
It remains to show that the remaining entries in the buckets containing a heavy hitter only have a small contribution.
To this end we use Bernstein's inequality.
For a coordinate $p \in [n]$ we denote by $B_p$ the bucket at level $0$ that contains $p$.

\begin{lem}\label{Lem3.10''}
Assume $\mathcal{E}_2$ holds.
Then for any $z_i \in H$ we have $G(B_i) \geq (1-\varepsilon)  z_i $.
\end{lem}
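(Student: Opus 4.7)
\textbf{Proof plan for Lemma~\ref{Lem3.10''}.} The plan is to mirror the argument of Lemma~\ref{Lem3.10new} but replace the deterministic leverage--score bound (which worked in the $\ell_1$ case because $|z_j| \le u_j$) by a Bernstein--type concentration bound on the sum $\sum_{j \in B_i\setminus\{i\}} z_j$, as the text explicitly indicates. The starting observation is that for any $z_i \in H$, the definition of a heavy hitter together with Lemma~\ref{Lem3.5'} forces $u_i \ge \gamma_3$, hence $i \in Y_3$. By the assumption that $\mathcal{E}_2$ holds, the bucket $B_i$ at level $0$ therefore contains no other member of $Y_4$. Consequently every index $j \in B_i\setminus\{i\}$ lies outside $Y_4$, and Lemma~\ref{Lem3.5'} gives, after normalizing $\|z\|_2^2 = 1$, the pointwise bound $z_j^2 < \gamma_3 \gamma_4$, i.e. $|z_j| < \sqrt{\gamma_3\gamma_4}$.

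Next I would control $S := \sum_{j \in B_i\setminus\{i\},\, j\notin Y_4} z_j$ via Bernstein's inequality (Proposition~\ref{thm:bernstein}). For each such $j$ let $X_j = z_j\cdot \mathbbm{1}[j\in B_i]$; then $\mathbb{E}[X_j] = z_j/N_0$ and $|X_j| \le \sqrt{\gamma_3\gamma_4}$. The expected value $\mathbb{E}[S] = \frac{1}{N_0}\sum_{j\notin Y_4} z_j$ is negligible compared to $z_i$ (after subtracting it we can even absorb it), and the variance is at most $\frac{1}{N_0}\sum_{j\notin Y_4} z_j^2 \le \frac{1}{N_0}$. Plugging $t = \varepsilon z_i \ge \varepsilon\sqrt{\gamma_3}$ into Bernstein yields a failure probability bounded by
\[
\exp\!\left(-\frac{\tfrac12 \varepsilon^2 \gamma_3}{N_0^{-1} + \tfrac13\sqrt{\gamma_3\gamma_4}\cdot \varepsilon\sqrt{\gamma_3}}\right),
\]
which, once $\gamma_4 = \Theta(\varepsilon/(\sqrt{n}\,m_1))$ and $N_0 \ge N_0^{(2)}$ are plugged in, is at most $\exp(-\Omega(m_1))$ and can therefore be union--bounded over all $|H|\le N_1$ heavy hitters.

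Combining the two steps, with the desired probability we get $|S| \le \varepsilon z_i$ for every $z_i \in H$ simultaneously, so
\[
G(B_i) \;=\; z_i + \sum_{j\in B_i\setminus\{i\}} z_j \;\ge\; z_i - |S| \;\ge\; (1-\varepsilon)\,z_i,
\]
which is the claimed bound. The main obstacle is calibrating $\gamma_4$ and $N_0^{(2)}$ so that Bernstein actually gives a $(1-\varepsilon)$-approximation rather than merely a non-trivial bound: unlike the $\ell_1$ case, we cannot deterministically bound each $|z_j|$ by its leverage score, so the ``noise'' $S$ must be controlled through the interplay of the per-coordinate cap $\sqrt{\gamma_3\gamma_4}$ (coming from Lemma~\ref{Lem3.5'}) and the variance $1/N_0$, and the extra $\sqrt{n}$ factor appearing in $\gamma_4$ (respectively $N_0^{(2)}$) is exactly what pays for this conversion from $\ell_2$ leverage information to an $\ell_1$ error bound on $S$. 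Once that balancing is done, the union bound over the $\le N_1$ heavy hitters and over the event $\mathcal{E}_2$ closes the argument.
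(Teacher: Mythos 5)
Your proposal matches the argument the paper intends but omits. The text preceding the lemma explicitly announces that Bernstein's inequality is the tool and then states the lemma with no proof; you have correctly filled in the missing calculation: $z_i \in H$ forces $i \in Y_3$ via Lemma~\ref{Lem3.5'}; $\mathcal{E}_2$ isolates $B_i$ from all other members of $Y_4$; and the rank-$N_2$ bound of Lemma~\ref{Lem3.5'} gives the pointwise cap $|z_j| < \sqrt{\gamma_3\gamma_4}$ for every remaining $j\in B_i\setminus\{i\}$, which together with the variance bound $\sum_{j\notin Y_4}\mathbb{E}[X_j^2]\leq 1/N_0$ feeds Bernstein. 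Your observation that the $\ell_1$ counterpart (Lemma~\ref{Lem3.10new}) is deterministic but the $\ell_2$ version cannot be -- because $|z_j|$ is only controlled by $\sqrt{u_j}$ rather than $u_j$ -- is exactly the right diagnosis of why the extra $\sqrt{n}$ appears in $\gamma_4$ and in $N_0^{(2)}$.

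Two points worth flagging. First, the lemma as stated in the paper reads as a deterministic consequence of $\mathcal{E}_2$, but, as you correctly note, an additional Bernstein failure event (union-bounded over the at most $N_1$ heavy hitters) is required; your statement is therefore more careful than the paper's. For the conditioning to be legitimate you should also remark (even if briefly) that $\mathcal{E}_2$ and the bucket $B_i$ are determined solely by the hash values of $Y_4$, so the hash values of $j\notin Y_4$ remain i.i.d.\ uniform after conditioning -- without this the Bernstein application is not immediate. Second, when you substitute $\gamma_4 = 2\varepsilon/(\sqrt{n}m_1)$, the ``linear'' term $\tfrac13 M t$ in Bernstein's denominator yields an exponent of order $\varepsilon/\sqrt{\gamma_4}=\sqrt{\varepsilon\sqrt{n}m_1/2}$, which is $\Omega(m_1)$ only under the implicit requirement $\sqrt{n} \gtrsim m_1/\varepsilon$. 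This is consistent with the $\Theta(\sqrt{n})$ sketch-size regime (a sketch of size larger than $n$ would be vacuous), but it is an assumption that neither you nor the paper makes explicit, and it should be stated. With those two clarifications your plan is sound and, to the extent the paper's own intent is inferable, essentially the one the authors had in mind.
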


\paragraph{Contraction bounds for a single point}
\begin{lem}\label{lem:varapppbound}
Assume that $\mathcal{E}_2$ holds.
Denote by $z_i'$ the $i$-th row of $SX\beta$ for $i \in n'$.
Then with failure probability at most $(2h_m+2q_m)e^{-m_1}$ it holds that
\begin{align*}
\sum_{i\in n', z_i' \geq 0}w_i z_i' \geq (1-6\varepsilon)G^+(X\beta).
\end{align*}
\end{lem}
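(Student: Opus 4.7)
The plan is to mirror the proof of Lemma \ref{lem:apppbound}, substituting the variance-section analogues at each step. First I would invoke Lemma \ref{lem:varbounds} to identify, for each important weight class $W_q^2$ with $q\in Q^*$, a level $h$ such that $q\in Q_h=[q_h(2),q_h(3)]$; by construction the union of these inner intervals covers all important indices. Then Lemma \ref{Lem3.7}, applied in the squared regime with weight classes $W_q^2$, provides $\sum_{q\in Q^*}\|W_q^+\|_1\geq (1-2\varepsilon)G^+(X\beta)$ after normalization, so it suffices to recover the important classes.

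For each $q\in Q_h\cap Q^*$, I would apply Lemma \ref{lem:varconprop} parts (3) and (4) to produce a subset $W_q^*\subseteq W_q^+\cap\mathcal{B}$ with $\|W_q^*\|_1\geq (1-\varepsilon)^2\|W_q^+\|_1 p_h$, lying in buckets containing no other large element, and satisfying $\sum_{i\in W_q^*}G(B_i)\geq (1-\varepsilon)\|W_q^*\|_1$. For heavy hitters $q\in Q_H$, under event $\mathcal{E}_2$ I would invoke Lemma \ref{Lem3.10''} and take $W_q^*=W_q^+$, which already gives $G(B_i)\geq (1-\varepsilon)z_i$ pointwise. Multiplying by the bucket weight $1/p_h$ and summing yields
\[
\sum_{i\in n',\,z_i'\geq 0} w_i z_i'\;\geq\;\sum_{q\in Q^*}(1-\varepsilon)^3 \|W_q^+\|_1\;\geq\;(1-3\varepsilon)(1-2\varepsilon)G^+(X\beta)\;\geq\;(1-6\varepsilon)G^+(X\beta).
\]
A union bound over the bad events in Lemma \ref{lem:varconprop} (at most $2h_m$ of them, for the ``large element'' and ``bad bucket'' events per level) and over the per-class failures in parts (3)--(4) (at most $2q_m$ classes) produces the stated failure probability $(2h_m+2q_m)e^{-m_1}$.

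The main obstacle is handling the extra $\sqrt{n}$ factor in $q_h(3)$ introduced by Lemma \ref{lem:l2help}, which shrinks the well-approximated range $Q_h$ relative to the $\ell_1$ analysis. I would have to verify that, with the enlarged $N$ dictated by Lemma \ref{lem:varbounds} and Assumption \ref{ass:mainass}, the intervals $Q_h$ still tile all of $Q^*$, so no important squared weight class is missed. A secondary technical point is that applying Lemma \ref{lem:varconprop}(4) requires an isolated $\ell_1$-type capture bound on $G(B_i)$, so I would translate the squared-importance of $W_q^2$ into an $\ell_1$ statement via Lemma \ref{lem:l2help} (shifting $q\mapsto q'\leq (q-1)/2+\ln(n)/2$) and check that the resulting $q'$ still lies in the appropriate range to invoke the isolation argument. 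Once this bookkeeping is in place, the rest of the proof is a direct transcription of Lemma \ref{lem:apppbound}.
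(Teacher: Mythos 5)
Your proposal never introduces the squaring that is the actual reason the constant degrades from $4\varepsilon$ (in Lemma~\ref{lem:apppbound}) to $6\varepsilon$ here. The lemma is meant to control the squared large part of $f_2$, i.e., the quantity $H^+(z)=\sum_{z_i>0}z_i^2$, and the sketch contribution that matters is the \emph{squared} bucket sum $G(B_i)^2$, not the linear sum $G(B_i)$. The paper's own explanation is precisely this: once the contraction machinery gives a pointwise bound $G(B_i)\geq(1-3\varepsilon)z_i$, squaring both sides costs a factor of two in the $\varepsilon$, since $(1-3\varepsilon)^2\geq 1-6\varepsilon$. Your chain
\[
\sum w_i z_i' \geq \sum_q (1-\varepsilon)^3\|W_q^+\|_1 \geq (1-\varepsilon)^3(1-2\varepsilon)\cdot(\ldots) \geq (1-6\varepsilon)\cdot(\ldots)
\]
is purely linear throughout: it reuses $\|W_q^+\|_1$ (an $\ell_1$ mass) and $G(B_i)$ (a linear bucket sum), and lands on $1-6\varepsilon$ only because $(1-\varepsilon)^3(1-2\varepsilon)$ happens to exceed $1-6\varepsilon$ numerically. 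That is not a proof of the squared statement — it is essentially a re-proof of Lemma~\ref{lem:apppbound} with a worse constant, and it would equally well ``prove'' $1-5\varepsilon$, which the squared lemma cannot achieve by this route. To close the gap you must pass from $G(B_i)\geq(1-O(\varepsilon))z_i$ to $G(B_i)^2\geq(1-O(\varepsilon))z_i^2$, sum these \emph{squared} lower bounds over $W_q^*$, and relate them to $H^+(W_q^2)=\sum_{i\in W_q^2}z_i^2$, not to $\|W_q^+\|_1$.

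A related inconsistency: your appeal to ``Lemma~\ref{Lem3.7} applied in the squared regime'' asserting $\sum_{q\in Q^*}\|W_q^+\|_1\geq(1-2\varepsilon)G^+(X\beta)$ mixes a linear left-hand side with the squared quantity $G^+(X\beta)=H^+(z)$ on the right. The correct analogue needed here is $\sum_{q\in Q^*}H^+(W_q^2)\geq(1-2\varepsilon)H^+(z)$, i.e., importance measured in the squared masses, and this is what the redefinition of ``important'' in Section~\ref{sec:variance} ($H^+(W_q^2)\geq \varepsilon/(q_m\mu)$) is set up to deliver. Your secondary concern about the $\sqrt{n}$ shift and the coverage of $Q^*$ by $\bigcup_h Q_h$ is a legitimate thing to check, and the union-bound accounting for the failure probability is fine; but the central missing step is the pointwise squaring of the bucket bound.
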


Here the constant before the $\varepsilon$ increases for the following reason: assume that  for some $z_i>0$ we have  $\Vert B_i \Vert_1 \geq (1-3\varepsilon ) z_i $ then it holds that $\Vert B_i \Vert_1^2\geq (1-6\varepsilon )z_i^2$.

\paragraph{Dilation bounds} Here we have to cope with the additional factor of $\sqrt{n}$.
Recall that if we choose $M_i$ solving the equation $  q_{(M_{i-1}, N)}(3)  =  q_{(M_{i}, N)}(2) $ then it holds that
\begin{align*}
q_{(i+2)}(1)-q_{i}(4)=q_{i+1}(3)-q_{i+1}(2)- (q_{i+2}(2)-q_{i+2}(1))- (q_{i}(4)-q_{i}(3)).
\end{align*}
We now have
\begin{align*}
q_{i+1}(3)-q_{i+1}(2)=\log_2\left(\frac{ N  \varepsilon^5}{32 \sqrt{n} m_1 \mu q_m}\right)
\end{align*}
and
\begin{align*}
(q_{i+2}(2)-q_{i+2}(1))+ (q_{i}(4)-q_{i}(3))=\log_2\left(\frac{64 m_1  \ln(N h_m/\delta ) q_m h_m}{\sqrt{n}\varepsilon^9 \delta  }\right).
\end{align*}

Further there is a change in Lemma\ref{lem:constdil} as we  have to deal with possible overhead coming from the square function.
We set $R=\{i | 2^{-q_h(1)} > z_i > 2^{-q_h(4)}\}$ to be the set of relevant (positive) elements and $W_R=\{z_i ~|~ i \in R\}$.

\begin{lem}\label{lem:constdil'}
If $\sum_{i=1}^n z_i\leq 0$ and for all $i\leq h_m-1$ it holds that $q_{(M_i N_i)}(4)< q_{(M_{i+k} N_{i+k})}(1) $ and $N_0\geq N_0'$,  then the expected contribution of any weight class $Y_1'$ is at most $k\cdot \Vert W_R \Vert_2^2$.
\end{lem}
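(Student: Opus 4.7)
The plan is to follow the strategy of Lemma~\ref{lem:constdil}, adapting the level-by-level analysis from linear to squared contributions. First I would argue that any fixed weight class is relevant on only $k$ levels, then bound each level's expected squared contribution individually, and finally sum over the (at most) $k$ active levels.

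For the first step, I would show that if $q \notin [q_h(1), q_h(4)]$, the class $W_q^2$ makes zero contribution at level $h$. The lower end $q < q_h(1)$ is handled as in Lemma~\ref{lem:varbounds}: with high probability no element of $W_q^2$ is sampled at level $h$. The upper end $q > q_h(4)$ uses the adapted outer bound in Lemma~\ref{lem:dilprop'}: since by hypothesis $\sum_i z_i \le 0$, every bucket at level $h$ that contains only elements with $|z_i| \le 2^{-q_h(4)}$ satisfies $G(B) \le 0$, so $(G^+(B))^2 = 0$. The assumed gap $q_{(M_iN_i)}(4) < q_{(M_{i+k}N_{i+k})}(1)$ together with the monotonicity of $q_h(1)$ and $q_h(4)$ in $h$ then implies that every fixed $q$ lies in at most $k$ of the intervals $[q_h(1), q_h(4)]$.

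For the second step, I would bound the expected squared contribution on a single relevant level $h$. Each element of $R$ falls into a fixed bucket independently with probability $p_h/N$. Expanding
\begin{align*}
\sum_B p_h^{-1}\, \mathbb{E}\bigl[(G^+(B))^2\bigr] \;\le\; \sum_B p_h^{-1}\, \mathbb{E}\bigl[G(B)^2\bigr]
\end{align*}
and restricting to the part of $G(B)$ coming from $R$ (the irrelevant part contributes nothing to $G^+(B)$ after the outer-bound argument), the diagonal of the expansion equals $\|W_R\|_2^2$, while the off-diagonal cross terms sum to $p_hN^{-1}(\sum_{i\in R} z_i)^2$. The assumption $N_0 \ge N_0'$ (which guarantees heavy hitter isolation via Lemmas~\ref{Lem3.4new} and~\ref{Lem3.5'}) together with $N \ge N_0'$ at higher levels keeps the cross term of lower order, so the expected weighted contribution at level $h$ is at most $(1+o(1))\|W_R\|_2^2$.

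Summing the per-level bound over the at most $k$ relevant levels yields the claimed $k\cdot\|W_R\|_2^2$. The main obstacle is precisely controlling the cross terms $z_iz_j$ arising when squaring $G(B)$: unlike the linear case of Lemma~\ref{lem:constdil}, these are not automatically non-negative and could in principle inflate the bound by a $\sqrt{n}$-type factor analogous to the one appearing in $q_{(M,N)}(3)$ in Lemma~\ref{lem:varbounds}. The hypothesis $\sum_i z_i \le 0$ is crucial: it forces the small-element part of each bucket sum to be negative in expectation so that $G^+(B)$ is driven by $R$, and the choice $N_0 \ge N_0'$ isolates the top leverage coordinates into their own buckets, eliminating the largest potentially troublesome cross products.
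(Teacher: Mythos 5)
Your plan correctly identifies the two ingredients — bounding each weight class to at most $k$ levels via the interval argument, and controlling the per-level second moment — and you also correctly flag the cross terms $z_iz_j$ as the central difficulty. But your proposed resolution of the cross-term problem does not work, and this is precisely where the paper's proof does something different and sharper.

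You bound $\mathbb{E}\bigl[(G^+(B))^2\bigr] \le \mathbb{E}\bigl[G_R(B)^2\bigr]$ and then split into a diagonal term $\|W_R\|_2^2$ and an off-diagonal term $\tfrac{p_h}{N}\bigl(\sum_{i\in R}z_i\bigr)^2$, claiming the latter is lower order because $N_0\ge N_0'$ isolates the top leverage coordinates. This does not hold: heavy-hitter isolation only applies at level $0$, whereas the cross term appears at every level, and more importantly isolation controls which elements share a bucket, not the \emph{expected} product $\tfrac{p_h^2}{N^2}z_iz_j$ over the randomness of the hash. A back-of-envelope estimate with $|R|\lesssim 2^{q_h(4)}=\Theta\bigl(\tfrac{Nn\ln(\cdot)}{M\varepsilon^4}\bigr)$ and $p_h=M/n$ gives $\tfrac{p_h}{N}\bigl(\sum_{i\in R}z_i\bigr)^2 \le \tfrac{p_h|R|}{N}\|W_R\|_2^2 = \Theta\bigl(\tfrac{\ln(\cdot)}{\varepsilon^4}\bigr)\|W_R\|_2^2$, which is not $(1+o(1))\|W_R\|_2^2$. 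So bounding the raw second moment of $G_R(B)$ cannot give the lemma.

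The paper's proof avoids the cross term entirely by exploiting $\sum_i z_i\le 0$ more precisely than you do. Lemma~\ref{lem:dilprop'} shows that, with high probability, the non-$R$ contribution to any bucket $B$ is at most $\tfrac{p_h}{N}(-A_2)$, which is no larger than $-\mathbb{E}\bigl[\sum_{i\in R}Z_i\bigr]$ since $\sum_{i\in R}z_i\le A_2$. Hence almost surely
\[
G(B)\;\le\; \sum_{i\in R}Z_i \;-\; \mathbb{E}\Bigl[\sum_{i\in R}Z_i\Bigr],
\]
and therefore $\mathbb{E}\bigl[(G^+(B))^2\bigr] \le \Var\bigl(\sum_{i\in R}Z_i\bigr)$. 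The point is that \emph{subtracting the mean} turns the second moment into a variance, and by independence of the $Z_i$ the variance decomposes with no cross terms at all: $\Var\bigl(\sum_{i\in R}Z_i\bigr)=\sum_{i\in R}\Var(Z_i)\le \tfrac{p_h}{N}\|W_R\|_2^2$. Summing over the $N$ buckets and rescaling by $p_h^{-1}$ gives $\|W_R\|_2^2$ per level, and then $k\|W_R\|_2^2$ over the at most $k$ relevant levels. Your intuition that $\sum_i z_i\le 0$ is "crucial" and that $G^+$ is "driven by $R$" is pointing in the right direction, but the operative mechanism is that the negative tail acts as a free subtraction of the mean, not that it merely keeps the small part out of the way — and without that centering, the cross terms you identified really do blow up.
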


\begin{proof}
Fix a level $h$ and a bucket $B$ at level $h$.
Recall that $\sum_{i \in R}z_i \leq A_2=A-A_1=\sum_{i, z_i \geq 2^{-q_h(4)}}z_i$.
Note that by Lemma \ref{lem:dilprop'} we have that $\sum_{i \in Y_1' \cap B}\leq \frac{p_h (-A_2)}{N}$.
Let $Z_i$ be the random variable where $Z_i=z_i$ if $i\in R$ is assigned to $B$ and $0$ otherwise.
Then the expected value of $Z= \max\{0, \sum_{i=1}^n Z_i\}$ is $\frac{p_h}{N}\cdot A_2$.
Thus it holds that
\begin{align*}
    \mathbb{E}(\max\{G(B), 0\}^2)\leq \mathbb{E}\left(\max\{Z-\frac{p_h (-A_2)}{N}\}^2\right)
    &\leq \mathbb{E}\left(\max\{Z-\mathbb{E}(Z)\}^2\right)\\
    &=\Var(Z)\leq \Vert W_R \Vert_2^2.
\end{align*}
\end{proof}

Lemma \ref{lem:constdil2} and Lemma \ref{lem:constdil3} can be adapted as follows:

\begin{lem}\label{lem:constdil2'}
If we choose $N_i=N:=\max\{N_0^{(2)}, \frac{\sqrt{32q_m \mu}m_1 n^{0.75} }{ \varepsilon^{2.5}}\} $ for all $i \in [h_m]$ and $M_i$ solving the equation $  q_{(M_{i-1}, N)}(3)  =  q_{(M_{i}, N)}(2) $ then the expected contribution of any weight class $W_q$ is at most $2\Vert W_q \Vert_1$.
\end{lem}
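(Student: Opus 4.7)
The plan is to mirror the proof of Lemma \ref{lem:constdil2} verbatim, only replacing the weight-class bounds from Lemma \ref{cor:propcor} with those of Lemma \ref{lem:varbounds} (which carry the extra $\sqrt{n}$ from the inter-norm inequality in Lemma \ref{lem:l2help}) and substituting Lemma \ref{lem:constdil'} for Lemma \ref{lem:constdil} so that the per-bucket contribution is handled as a variance rather than an expectation. The overall goal is still to pick $N$ large enough that, once $M_i$ is chosen to satisfy $q_{(M_{i-1}, N)}(3) = q_{(M_{i}, N)}(2)$, the ``outer'' intervals $[q_h(1), q_h(4)]$ at levels two apart are disjoint, so each weight class is relevant at no more than two consecutive levels.

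First, I would apply the telescoping identity that was derived (and is independent of the particular values) in the proof of Lemma \ref{lem:constdil2}:
$$q_{i+2}(1) - q_i(4) = \bigl(q_{i+1}(3) - q_{i+1}(2)\bigr) - \bigl(q_{i+2}(2) - q_{i+2}(1)\bigr) - \bigl(q_i(4) - q_i(3)\bigr).$$
Then I would plug in the three differences from Lemma \ref{lem:varbounds}: the positive term equals $\log_2\bigl(\tfrac{N\varepsilon^5}{32 m_1 \mu q_m \sqrt{n}}\bigr)$, while the sum of the two subtracted terms is $\log_2\bigl(\tfrac{64\,q_m m_1 h_m \ln(Nh_m/(\delta\sqrt{n}))}{\varepsilon^9 \delta}\bigr)$. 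Requiring the former to dominate the latter gives the lower bound on $N$ claimed in the lemma; the $n^{0.75}$ in the statement is what comes out after combining this linear-in-$\sqrt{n}$ requirement with the heavy-hitter threshold $N_0^{(2)}$ from event $\mathcal{E}_2$ (which itself carries a $\sqrt{n}$ through the $\ell_2$-leverage score bound of Lemma \ref{Lem3.5'}), and then simplifying the $\max$.

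Second, once $q_{i+2}(1) - q_i(4) \geq 0$, each index $q$ lies in at most two intervals of the form $[q_h(1), q_h(4)]$, so by Lemma \ref{lem:varbounds} it has nonzero contribution at no more than two levels. For each such level, Lemma \ref{lem:constdil'} bounds the expected squared bucket contribution from $R$ by $\|W_R\|_2^2$ (via a variance computation that cancels the $\mathbb{E}(Z)$ term against the small-element tail controlled by Lemma \ref{lem:dilprop'}). Summing over the at-most-two relevant levels yields the factor $2$ in the conclusion.

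The main obstacle I anticipate is the careful bookkeeping of the $\sqrt{n}$ factors: one factor enters $q_h(3)$ through Lemma \ref{lem:l2help} (converting the $\ell_2$-weight class $W_q^2$ to an $\ell_1$-weight class $W_{q'}^1$ so that the contraction machinery of Section \ref{sec:largeparts} can be re-used), another enters $q_h(4)$ via the squared-threshold redefinition of $\gamma_2$ in Lemma \ref{lem:dilprop'}, and a third appears inside $N_0^{(2)}$. Aligning these three sources of $\sqrt{n}$ so that the resulting requirement on $N$ cleanly reduces to $\max\{N_0^{(2)}, \sqrt{32 q_m \mu}\, m_1 n^{0.75}/\varepsilon^{2.5}\}$ (rather than to a looser product of them) is the only delicate step; once that bookkeeping is done, the structural argument is identical to that of Lemma \ref{lem:constdil2}.
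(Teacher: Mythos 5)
Your proposal mirrors the telescoping argument of Lemma \ref{lem:constdil2}, but the paper's own proof of Lemma \ref{lem:constdil2'} deliberately abandons that argument. The proof opens by saying that it ``uses a different idea as before: since $N$ is large enough, we only need $2$ levels,'' i.e., one forces $M_1 = N$ so that the sketch consists of exactly level $0$ (all rows) and the uniform-sampling level, and then every weight class trivially contributes at no more than two levels. The single overlap condition $q_{(n,N)}(3) = q_{(N,N)}(2)$ is a \emph{quadratic} equation in $N$: equating $\log_2\!\bigl(\tfrac{N\varepsilon^2}{4 m_1\sqrt{n}}\bigr)$ with $\log_2\!\bigl(\tfrac{8 q_m \mu m_1 n}{\varepsilon^3 N}\bigr)$ gives $N^2 \approx \tfrac{32 q_m \mu m_1^2 n^{1.5}}{\varepsilon^5}$, and that is the entire source of the $n^{0.75}$ in the statement. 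Your explanation that $n^{0.75}$ ``comes out after combining the $\sqrt{n}$ requirement with $N_0^{(2)}$'' cannot be right: a maximum of two quantities, each of order $\sqrt{n}$, is again of order $\sqrt{n}$, not $n^{0.75}$.

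More importantly, the telescoping route you propose does not merely give a different constant — in the variance case it fails to establish the lemma with the stated $N$. The $\sqrt{n}$ coming from Lemma \ref{lem:l2help} enters the analysis twice and on \emph{opposite sides} of the telescoping inequality: it shrinks the inner interval $q_h(3)-q_h(2)$ by a factor $\sqrt{n}$ (the $\sqrt{n}$ sits in the denominator of $q_3$), and it widens the slack $q_h(4)-q_h(3)$ by a factor $\sqrt{n}$ (when you subtract the two logs the $\sqrt{n}$ ends up in the numerator of the difference — the stated $\log_2\!\bigl(\tfrac{8\ln(Nh_m/\delta\sqrt{n})}{\varepsilon^6}\bigr)$ in Lemma \ref{lem:varbounds} has the $\sqrt{n}$ misplaced inside the $\ln$; compare Lemma \ref{lem:constdil3'}, which correctly has $\sqrt{n}$ outside the parenthesized base). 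Demanding that the inner interval dominate the slack therefore compounds the two $\sqrt{n}$ factors multiplicatively and forces $N = \Omega(n \cdot \mathrm{polylog})$, which is \emph{larger} than the $\max\{N_0^{(2)}, \sqrt{32 q_m\mu}\, m_1 n^{0.75}/\varepsilon^{2.5}\}$ of the lemma. So if you carry the telescoping argument through with correct bookkeeping, the hypothesis of the lemma does not satisfy your derived requirement, and the proof collapses. The two-level trick is precisely how the paper sidesteps this compounding: it replaces the level-by-level chaining condition (each link costing $\sqrt{n}$) with a single overlap equation, halving the exponent from $1$ to $0.75$.
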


\begin{proof}
The proof uses a different idea as before: since $N$ is large enough, we only need $2$ levels.
More precisely we want to achieve $M_2=N$.
By our choice of $M_2$ this means
\begin{align*}
    \log_2(\frac{8q_m \mu m_1 n}{\varepsilon^3 N} ) = q_{(N, N)}(2)= q_{(n, N)}(3) = \log_2( \frac{ N n \varepsilon^2}{4n \sqrt{n}} )
\end{align*}
or equivalently
\begin{align*}
    N=\sqrt{\frac{32q_m \mu m_1 n^{1.5}}{\varepsilon^5 }}=\frac{\sqrt{32q_m \mu}m_1 n^{0.75}}{\varepsilon^{2.5}}.
\end{align*}
\end{proof}

\begin{lem}\label{lem:constdil3'}
If for some $k \in \mathbb{N}$ we choose $N_i=N\geq \frac{32 m_1 \mu q_m}{  \varepsilon^5}\cdot \left(\frac{64 m_1^2 \ln(n) q_m h_m \sqrt{n}}{\varepsilon^9 \delta  }  \right)^{1/(k-1)} $ for all $i \in [h_m]$ and $M_i$ solving the equation $  q_{(M_{i-1}, N)}(3)  =  q_{(M_{i}, N)}(2) $ then the expected contribution of any weight class $W_q$ is at most $k  \Vert W_q \Vert_1$.
\end{lem}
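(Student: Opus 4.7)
The plan is to mimic the structure of the proof of Lemma~\ref{lem:constdil3} almost verbatim, but feed in the updated interval lengths from Lemma~\ref{lem:varbounds}, and invoke the variance-adapted Lemma~\ref{lem:constdil'} in place of Lemma~\ref{lem:constdil} at the end. First I would use the telescoping identity that worked before: writing $q_{(i+k)}(1)-q_i(4)$ by inserting $\pm q_{i+j}(2)$ and $\pm q_{i+j}(3)$ and using the coupling $q_{(M_{i-1},N)}(3)=q_{(M_i,N)}(2)$ yields
\begin{align*}
q_{(i+k)}(1)-q_i(4) \;=\; \sum_{j=1}^{k-1}\bigl(q_{(i+j)}(3)-q_{(i+j)}(2)\bigr) - \bigl(q_{(i+k)}(2)-q_{(i+k)}(1)\bigr) - \bigl(q_i(4)-q_i(3)\bigr).
\end{align*}

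Next I would substitute the three interval lengths supplied by Lemma~\ref{lem:varbounds}. Each summand in the telescoping sum equals $\log_2\!\bigl(\tfrac{N\varepsilon^5}{32\sqrt{n}\,m_1\mu q_m}\bigr)$, so the full sum is $(k-1)$ times this quantity. The two subtracted boundary terms combine to $\log_2\!\bigl(\tfrac{8 q_m m_1 h_m}{\varepsilon^3\delta}\bigr)+\log_2\!\bigl(\tfrac{8\ln(Nh_m\sqrt{n}/\delta)}{\varepsilon^6}\bigr)$, which (using $\ln(\cdot)\leq m_1$ and absorbing $\sqrt n$-factors as in the variance-case bookkeeping) can be upper bounded by $\log_2\!\bigl(\tfrac{64\,m_1^2\ln(n)\,q_m h_m \sqrt{n}}{\varepsilon^9\delta}\bigr)$. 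Setting the $(k-1)$-fold inner sum to be at least this boundary contribution and solving for $N$ gives exactly the stated lower bound
\begin{align*}
N \;\geq\; \frac{32\,m_1\mu q_m}{\varepsilon^5}\cdot\Bigl(\tfrac{64\,m_1^2\ln(n)\,q_m h_m\sqrt{n}}{\varepsilon^9\delta}\Bigr)^{1/(k-1)},
\end{align*}
which forces $q_{(i+k)}(1)-q_i(4)\geq 0$. Hence each fixed weight class $W_q$ is relevant (in the sense of lying in some $[q_h(1),q_h(4)]$) on at most $k$ consecutive levels.

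Finally I would invoke Lemma~\ref{lem:constdil'} to convert the ``at most $k$ active levels'' conclusion into the expected-contribution bound. The only subtlety compared to the logistic case is that the per-level variance bound in Lemma~\ref{lem:constdil'} is phrased in terms of $\Vert W_R\Vert_2^2$ rather than an $\ell_1$-weight, so I would argue that on each active level the expected positive contribution is at most $\Vert W_q\Vert_1$ (because $W_q\subseteq W_R$ and the squared-sum bound descends to the $\ell_1$ mass of the weight class the same way as in Lemma~\ref{lem:constdil} for logistic), and then sum over at most $k$ active levels. The main obstacle will be the boundary-term bookkeeping: the $\sqrt n$ that appears in $q(3)-q(2)$ (shrinking the inner interval) and the logarithm $\ln(Nh_m\sqrt{n}/\delta)$ (growing the outer interval) both hurt us, and it is exactly the balance between these two effects that forces the extra $\sqrt{n}$ inside the $1/(k-1)$-power in the hypothesis on $N$. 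Once that calculation is carried through, invoking Lemma~\ref{lem:constdil'} level-by-level yields the claimed $k\Vert W_q\Vert_1$ expected-contribution bound.
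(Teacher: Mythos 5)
Your approach is the same one the paper has in mind: it literally says the proof of Lemma~\ref{lem:constdil3'} ``is the same as for Lemma~\ref{lem:constdil3},'' i.e.\ the telescoping decomposition of $q_{(i+k)}(1)-q_i(4)$, substitution of the variance-adapted interval lengths from Lemma~\ref{lem:varbounds}, and finally an appeal to Lemma~\ref{lem:constdil'} in place of Lemma~\ref{lem:constdil}. That is exactly what you lay out, so the structure is right.

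However, your claim that ``solving for $N$ gives exactly the stated lower bound'' does not survive the arithmetic, and the narrative at the end about where the two $\sqrt{n}$-factors land is actually incorrect. With the inner-interval length $q_{i+1}(3)-q_{i+1}(2)=\log_2\!\bigl(\tfrac{N\varepsilon^5}{32\sqrt{n}\,m_1\mu q_m}\bigr)$ and the combined boundary terms $\log_2\!\bigl(\tfrac{64 m_1^2\ln(n)\,q_m h_m\sqrt{n}}{\varepsilon^9\delta}\bigr)$, the inequality $(k-1)\cdot(\text{inner})\geq\text{boundary}$ rearranges to
\begin{align*}
N \;\geq\; \frac{32\sqrt{n}\,m_1\mu q_m}{\varepsilon^5}\cdot\Bigl(\frac{64\,m_1^2\ln(n)\,q_m h_m\sqrt{n}}{\varepsilon^9\delta}\Bigr)^{1/(k-1)},
\end{align*}
i.e.\ the $\sqrt{n}$ that shrinks the inner interval comes out as a \emph{multiplicative} factor in front of the parenthesis, \emph{outside} the $1/(k-1)$-power — not inside it, as you claim. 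Only the $\sqrt{n}$ produced by $q(4)-q(3)$ ends up inside the power. So what you derive is \emph{stronger} (a larger requirement on $N$) than what the lemma states; you cannot claim the stated hypothesis falls out ``exactly'' from the telescoping step. Either you need to keep the extra $\sqrt{n}$ factor in the front, or you need an explanation of why the inner-interval $\sqrt{n}$ can be dropped (the paper gives none here). As written, your last paragraph — asserting that the balance of the two $\sqrt{n}$-effects explains why $\sqrt{n}$ appears inside the power — is simply not how the algebra resolves, and it is papering over the discrepancy rather than addressing it.
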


The proof is the same as for Lemma \ref{lem:constdil3}.

\paragraph{Net Argument}
\begin{lem}\label{lem:netlem1'}
For any $v, v' \in \mathbb{R}^n$ with $\Vert v-v'\Vert_1 \leq \varepsilon$ it holds that  $|f_2(v)-f_2(v')|\leq (f_1(v)+\varepsilon)\varepsilon $.
\end{lem}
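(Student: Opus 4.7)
The goal is to bound $|f_2(v)-f_2(v')|=\left|\sum_i \ell(v_i)^2-\ell(v_i')^2\right|$ by $(f_1(v)+\varepsilon)\varepsilon$ (up to constants), under the assumption $\|v-v'\|_1 \le \varepsilon$, where one should read $f_1$ and $f_2$ with the same normalization convention as in Lemma~\ref{lem:netlem1} (i.e., as plain sums, dropping the $1/n$ factor). The plan is to reduce to a coordinate-wise bound on $|\ell(v_i)^2 - \ell(v_i')^2|$, then sum, exploiting that $\ell$ is $1$-Lipschitz (as was already used for Lemma~\ref{lem:netlem1}).

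The first step is the factorization $\ell(v_i)^2-\ell(v_i')^2 = (\ell(v_i)-\ell(v_i'))(\ell(v_i)+\ell(v_i'))$, giving
$$|f_2(v)-f_2(v')| \le \sum_{i=1}^n |\ell(v_i)-\ell(v_i')|\cdot (\ell(v_i)+\ell(v_i')).$$
Next I would apply the Lipschitz bound $|\ell(v_i)-\ell(v_i')| \le |v_i-v_i'|$, together with $\ell(v_i') \le \ell(v_i)+|v_i-v_i'|$ (which follows from the same $1$-Lipschitz property). This turns the right-hand side into something of the form $\sum_i |v_i-v_i'|\big(2\ell(v_i)+|v_i-v_i'|\big)$, i.e., a ``main'' term $2\sum_i \ell(v_i)|v_i-v_i'|$ plus a quadratic cross term $\sum_i (v_i-v_i')^2$.

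The final step is to invoke the standard inclusion $\|v-v'\|_\infty \le \|v-v'\|_1 \le \varepsilon$ twice: for the main term,
$$\sum_i \ell(v_i)|v_i-v_i'| \le \|v-v'\|_\infty \cdot \sum_i \ell(v_i) \le \varepsilon\, f_1(v),$$
and for the quadratic cross term, $\sum_i (v_i-v_i')^2 \le \|v-v'\|_\infty\,\|v-v'\|_1 \le \varepsilon^2$. Combining these pieces yields the desired bound of the form $(f_1(v)+\varepsilon)\varepsilon$ (possibly with a harmless absolute constant that can be absorbed by rescaling $\varepsilon$ in the application, as done throughout the net argument).

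The only mildly subtle point is the decision to bound $\ell(v_i')$ in terms of $\ell(v_i)$ (rather than keeping both) so that the right-hand side features $f_1(v)$ and not $f_1(v')$; this is what lets us avoid iterating Lemma~\ref{lem:netlem1} inside the estimate. The rest is routine, and in particular this lemma plays the same role for $f_2$ in the variance-regularized net argument as Lemma~\ref{lem:netlem1} does for $f_1$, so the subsequent net-reduction argument (analogues of Lemmas~\ref{lem:netlem2}--\ref{lem:netlem}) will carry over with essentially the same structure.
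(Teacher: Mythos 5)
Your proposal is correct and takes essentially the same route as the paper: both rest on the $1$-Lipschitz property of $\ell$ and on bounding $\ell(v_i')\le \ell(v_i)+|v_i-v_i'|$, then summing the resulting coordinate-wise estimate. The only cosmetic difference is that the paper phrases the per-coordinate bound via the derivative $(\ell^2)'=2\ell\ell'\le 2\ell$ (a mean-value argument), while you use the elementary factorization $\ell(v_i)^2-\ell(v_i')^2=(\ell(v_i)-\ell(v_i'))(\ell(v_i)+\ell(v_i'))$ and the $\ell_\infty$/$\ell_1$ inclusion on the quadratic remainder; these yield the same bound up to the harmless absolute constants that you already flag.
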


\begin{proof}
We have that $(\ell^2)'(v)=\frac{e^v}{e^v+1} \cdot \ell(v) \leq \ell(v)$.
Further since $\ell'(v) \leq 1$ we have that for any $\nu \in [0 , 1] $ it holds that $|\ell(v+\nu (v'-v))-\ell(v)| \leq (\ell(v)+\varepsilon)\varepsilon$.
Thus we get that
\begin{align*}
    |f_2(v)-f_2(v')|\leq \frac{1}{n}\cdot\sum_{i=1}^n |\ell(v_i)^2-\ell(v_i')^2| \leq \frac{1}{n}\cdot \sum_{i=1}^n (\ell(v)+\varepsilon)\varepsilon = (f_1(v)+\varepsilon)\varepsilon
\end{align*}
which proves the lemma.
\end{proof}

\begin{lem}\label{lem:netlem2'}
Assume that for $\beta \in \mathbb{R}^d$ it holds that $|f_2(X'\beta)-f_2(X \beta)|\leq \varepsilon $.
Then for any $\beta' \in \mathbb{R}^d$ with $\Vert X\beta-X\beta'\Vert_1 \leq \varepsilon/ (b^{h_m}h_m)$ it holds that $|f_2(X \beta')-f_2(X'\beta')|\leq \varepsilon+2(f_1(X \beta') +\varepsilon)\varepsilon $.
\end{lem}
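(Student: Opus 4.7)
The plan is to follow the same triangle-inequality structure as the proof of Lemma \ref{lem:netlem2}, but now applied to $f_2$ using the quantitative Lipschitz-type bound supplied by Lemma \ref{lem:netlem1'} in place of Lemma \ref{lem:netlem1}.

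First I would establish that the perturbation is also small after sketching: exactly as in the proof of Lemma \ref{lem:netlem2}, each row of $S$ has at most $h_m$ nonzeros of magnitude at most $b^{h_m}$, so
\[ \|X'\beta - X'\beta'\|_1 = \|SX(\beta-\beta')\|_1 \leq b^{h_m} h_m \cdot \|X(\beta-\beta')\|_1 \leq \varepsilon. \]
Next I split the quantity of interest by the triangle inequality into three pieces:
\[ |f_2(X\beta') - f_2(X'\beta')| \leq |f_2(X\beta') - f_2(X\beta)| + |f_2(X\beta) - f_2(X'\beta)| + |f_2(X'\beta) - f_2(X'\beta')|. \]
The middle term is at most $\varepsilon$ directly by the hypothesis of the lemma.

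For the first term I apply Lemma \ref{lem:netlem1'} with $v = X\beta'$ and $v' = X\beta$; since $\|X\beta' - X\beta\|_1 \leq \varepsilon/(b^{h_m} h_m) \leq \varepsilon$, the lemma gives the bound $(f_1(X\beta') + \varepsilon)\varepsilon$. For the third term I apply Lemma \ref{lem:netlem1'} again, this time to the sketched vectors $v = X'\beta'$ and $v' = X'\beta$, which are at $\ell_1$-distance at most $\varepsilon$ by the first step; this yields $(f_1(X'\beta') + \varepsilon)\varepsilon$. Summing the three contributions produces a bound of the form $\varepsilon + (f_1(X\beta') + \varepsilon)\varepsilon + (f_1(X'\beta') + \varepsilon)\varepsilon$, and since in the intended use the $f_1$-control from Lemma \ref{lem:netlem2} is available so that $f_1(X'\beta')$ agrees with $f_1(X\beta')$ up to lower-order terms that can be absorbed, this collapses to the stated $\varepsilon + 2(f_1(X\beta') + \varepsilon)\varepsilon$.

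The main subtlety I anticipate is the last step: controlling the third term in terms of $f_1(X\beta')$ rather than $f_1(X'\beta')$. This is not immediate from the stated hypothesis, which only concerns $f_2$ at $\beta$, so one must either run the net argument simultaneously for $f_1$ and $f_2$ (as is natural in light of Lemma \ref{lem:netlem2}) in order to deduce $f_1(X'\beta') \leq f_1(X\beta') + O(\varepsilon)$, or equivalently formulate Lemma \ref{lem:netlem2'} with an implicit $f_1$-contraction assumption at $\beta$. With that ingredient in hand the rest of the argument is a routine triangle-inequality calculation parallel to Lemma \ref{lem:netlem2}.
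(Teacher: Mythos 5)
Your decomposition is the same one the paper uses (just written in the opposite order), and your application of Lemma \ref{lem:netlem1'} to the unsketched pair $X\beta', X\beta$ and to the sketched pair $X'\beta', X'\beta$ after observing $\|X'(\beta-\beta')\|_1 \leq \varepsilon$ is exactly the intended argument, so on structure you match the paper. You have also correctly put your finger on a genuine wrinkle: a verbatim application of Lemma \ref{lem:netlem1'} to the sketched pair produces a term of the form $(f_1(X'\beta')+\varepsilon)\varepsilon$, not $(f_1(X\beta')+\varepsilon)\varepsilon$, and nothing in the stated hypothesis (which only controls $f_2$ at $\beta$) converts one to the other. The paper's displayed bound for that term is $(f_1(X\beta')+\varepsilon)\varepsilon\, b^{-h_m}$ with no explanation; it is not clear how that expression follows from Lemma \ref{lem:netlem1'}, and it looks either like a typo or like an implicit appeal to the simultaneous $f_1$-contraction. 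Your proposed fix — running the net argument simultaneously for $f_1$ and $f_2$ so that $f_1(X'\beta') \leq f_1(X\beta') + O(\varepsilon)$ is available when Lemma \ref{lem:netlem2'} is invoked — is exactly what happens downstream: the combined Lemma \ref{lem:netlem'} has an $f_1$-approximation hypothesis over the net, and that is the context in which the present lemma is used. So your caveat is well founded and, if anything, your write-up is more honest about the dependence than the paper's one-line estimate for the first triangle term.
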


\begin{proof}
It holds that $\|X'(\beta-\beta')\|_1=\|SX(\beta-\beta')\|_1 \leq b^{h_m}h_m \|X(\beta-\beta')\|_1\leq \varepsilon$ since for each $i \in [n]$ there are at most $h_m$ columns $j$ such that $S_{ij}\neq 0$ and each entry of $S$ is bounded by $ b^{h_m} $.
Thus, by the triangle inequality and applying Lemma \ref{lem:netlem1'} yields
\begin{align*}
|f_2(X \beta')-f_2(X'\beta')|&\leq  
|f_2(X' \beta')-f_2(X'\beta)|+|f_2(X' \beta)-f_2(X\beta)|+ |f_2(X \beta)-f_2(X\beta')|\\
&\leq  (f_1(X \beta')+\varepsilon)\varepsilon b^{-h_m}+ \varepsilon + (f_1(X \beta') +\varepsilon)\varepsilon \leq \varepsilon+2(f_1(X \beta') +\varepsilon)\varepsilon.
\end{align*}
\end{proof}

Combining Lemma \ref{lem:netlem3} and Lemma \ref{lem:netlem2'} we get:

\begin{lem}\label{lem:netlem'}
There exists a net $\mathcal{N} \subset \mathbb{R}^d$ with $|\mathcal{N}|=\exp\left( \mathcal{O}(d\ln(n) )\right)$ such that if $|f_1(X'\beta)-f_1(X \beta)|\leq \varepsilon $ holds for any $\beta \in \mathcal{N}$  then for any $\beta' \in \mathbb{R}^d$ with $\Vert X\beta' \Vert_1\leq n  \mu$ it holds that $|f_2(X'\beta')-f_2(X \beta')|\leq \varepsilon(f_2(X\beta')+f_1(X\beta')) $.
\end{lem}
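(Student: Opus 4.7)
I would take the net $\mathcal{N}$ produced by Lemma \ref{lem:netlem3}: it has size $\exp(\mathcal{O}(d \ln n))$ and, for every $y \in \mathbb{R}^d$ with $\|Xy\|_1 \le n\mu$, contains a representative $y'$ with $\|X y - X y'\|_1 \le \varepsilon/(\mu\, b^{h_m} h_m)$. I would then argue that any $\beta' \in \mathbb{R}^d$ with $\|X\beta'\|_1 \le n\mu$ admits a net point $\beta \in \mathcal{N}$ at $\ell_1$-distance at most $\varepsilon/(\mu\, b^{h_m} h_m)$ in the image of $X$, and expand the quantity of interest by the triangle inequality
\begin{equation*}
|f_2(X'\beta') - f_2(X\beta')| \;\le\; |f_2(X'\beta') - f_2(X'\beta)| \;+\; |f_2(X'\beta) - f_2(X\beta)| \;+\; |f_2(X\beta) - f_2(X\beta')|.
\end{equation*}

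The first and third terms are per-vector stability bounds. The third is controlled by Lemma \ref{lem:netlem1'} applied to $v = X\beta'$, $v' = X\beta$, giving at most $(f_1(X\beta') + \varepsilon)\varepsilon$. The first is exactly what Lemma \ref{lem:netlem2'} is designed for: contraction of $S$ shrinks the $\ell_1$-distance between $X\beta$ and $X\beta'$ when passed through the sketch, so the same $\ell_1$-Lipschitz-type bound yields at most $(f_1(X\beta') + \varepsilon)\varepsilon$ up to small additive terms. These two steps contribute a total error on the order of $\varepsilon \cdot f_1(X\beta') + \varepsilon^2$.

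The middle term is where I would tie in the hypothesis. The sketch $S$ entering this lemma is identical to the one whose contraction/dilation behaviour for $f_2$ was analysed per-vector in the variance-regularization subsection (Lemmas \ref{lem:varconprop}--\ref{lem:constdil3'}). Those per-vector bounds, combined with the union bound over the net in the same way the $f_1$ bounds are obtained, produce $|f_2(X'\beta) - f_2(X\beta)| \le \varepsilon f_2(X\beta)$ simultaneously with the $f_1$ approximation assumed in the hypothesis. Finally, I would convert $f_2(X\beta)$ into $f_2(X\beta')$ using Lemma \ref{lem:netlem1'} one more time (at the cost of another $(f_1(X\beta') + \varepsilon)\varepsilon$ term), and collect the three contributions to obtain a bound of the form $\varepsilon\, f_2(X\beta') + \mathcal{O}(\varepsilon)\, f_1(X\beta') + \mathcal{O}(\varepsilon^2)$, which after rescaling $\varepsilon$ by a constant is exactly $\varepsilon(f_2(X\beta') + f_1(X\beta'))$ as required.

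\textbf{Main obstacle.} The hypothesis of the lemma is phrased only in terms of $f_1$-approximation on the net, yet the conclusion concerns $f_2$. The substantive content is therefore the middle term: one must justify that the same sketching machinery that delivers $f_1$-approximation on the net also delivers $f_2$-approximation on the net, which requires reusing the contraction bound of Lemma \ref{lem:varapppbound} and the dilation bound of Lemma \ref{lem:constdil3'} together with the union bound over $\mathcal{N}$. The net-extension itself (the two outer terms) is essentially a bookkeeping exercise via Lemmas \ref{lem:netlem1'} and \ref{lem:netlem2'}; the arithmetic that absorbs the additive $\varepsilon^2$ slack and the mixed $\varepsilon f_1$ term into the claimed bound $\varepsilon(f_2 + f_1)$ is straightforward once the $f_2$-on-net estimate is in hand.
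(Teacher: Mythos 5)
Your proof takes essentially the same route as the paper: Lemma~\ref{lem:netlem3} supplies the net $\mathcal{N}$, and the three-term triangle inequality together with the $\ell_1$-Lipschitz-type stability of $f_2$ (Lemma~\ref{lem:netlem1'}) handles the two ``outer'' terms exactly as in Lemma~\ref{lem:netlem2'}. The paper's own proof is just the one-liner ``Combining Lemma~\ref{lem:netlem3} and Lemma~\ref{lem:netlem2'}''.

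You have, however, correctly spotted a real inconsistency: the hypothesis in the lemma as printed refers to $f_1$ on the net, whereas the only way to bound the middle term $|f_2(X'\beta)-f_2(X\beta)|$ directly is to have the $f_2$-on-net estimate, which is precisely the hypothesis of Lemma~\ref{lem:netlem2'}. Comparing with the analogous $f_1$ statement (Lemma~\ref{lem:netlem}, whose hypothesis and conclusion both use $f_1$), and with the fact that the paper's proof simply invokes Lemma~\ref{lem:netlem2'}, the most natural reading is that ``$f_1$'' in the hypothesis is a typo for ``$f_2$''. Your workaround --- re-deriving the $f_2$-on-net bound from the contraction (Lemma~\ref{lem:varapppbound}) and dilation (Lemmas~\ref{lem:constdil2'}--\ref{lem:constdil3'}) machinery plus a union bound --- is sound in the context of the overall probability argument, but it changes the nature of the lemma: as literally stated the statement is a pure conditional, and a bound on $|f_1(X'\beta)-f_1(X\beta)|$ for net points does not by itself imply anything about $|f_2(X'\beta)-f_2(X\beta)|$; different instantiations of $S$ could satisfy the former but not the latter. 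So you are not proving the conditional as written; you are proving the (intended) corrected conditional, plus independently establishing its hypothesis. With the typo corrected to $f_2$, your two outer-term bounds via Lemma~\ref{lem:netlem1'} are exactly what Lemma~\ref{lem:netlem2'} does, and no extra machinery is needed. One more small caveat: collecting the three contributions gives a bound of the form $\varepsilon + O(\varepsilon) f_1(X\beta') + O(\varepsilon^2)$ (as in Lemma~\ref{lem:netlem2'}); absorbing the free additive $\varepsilon$ term into $\varepsilon f_2(X\beta')$ is not automatic for arbitrary $\lambda$, so the claimed right-hand side $\varepsilon(f_2(X\beta')+f_1(X\beta'))$ quietly assumes $f_2$ is bounded below by a constant, an issue that is already present in the paper's own phrasing rather than introduced by you.
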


\paragraph{Proof of Theorem \ref{thm:var}} The proof of Theorem \ref{thm:var} works as the proof of Theorem \ref{thm:loglos}, replacing the old lemmas with the new ones.

\subsection{Lower bound}
We note that the increased sketching dimension in terms of $\sqrt{n}$ comes from the inter norm inequality $\|x\|_1 \leq \sqrt{n}\|x\|_2$ and from more subtle details of the sketch. Lemma \ref{lem:lowerbound} shows that there is no way to get around a factor of $\sqrt{n}$ using the \texttt{CountMin}-sketch. The proof gives an example where $\sqrt{n}$ is attained even for obtaining a superconstant (in $\mu$) approximation. It does not rule out the existence of some other method that allows a lower sketching dimension. For example \texttt{Count}-sketch is known to work for $\ell_1$ and $\ell_2$ norms simultaneously within polylogarithmic size \citep{ClarksonW15}. But we stress that the standard sketches from the literature do not work for asymmetric functions since they confuse the signs of contributions leading to unbounded errors for our objective function or even for plain logistic regression, see \citep{MunteanuOW21}.

\begin{lem}\label{lem:lowerbound}
There exists a $\mu$-complex data example $X$ where
our sketch with $o(\sqrt{n})$ rows fails to approximate $f$. Specifically, if $\lambda=1$ it holds for the optimizer $\tilde\beta\in\operatorname{argmin}_{\beta\in\mathbb{R}^d}f(SX\beta)$ that $f(X\tilde\beta)=\omega(\ln(\mu)^2)\cdot \operatorname{min}_{\beta\in\mathbb{R}^d}f(X\beta).$
\end{lem}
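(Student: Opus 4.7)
The plan is to exhibit an explicit $\mu$-complex matrix whose true optimum is carried by a small set of ``heavy'' rows that no sketch with $o(\sqrt n)$ rows can reliably sample. I would take $X \in \mathbb{R}^{n \times 1}$ with $n(1-1/\mu)$ rows equal to $+1$ and $n/\mu$ rows equal to $-1$; the identity $\ell(r) - \ell(-r) = r$ makes $\mu$-complexity of $X$ immediate from Definition \ref{def:mu_complex}. Direct calculus on $f_1$ places the unregularized minimum at $\beta^\star \approx -\ln\mu$, where each heavy $-1$-row has $\ell(\ln\mu) \approx \ln\mu$, yielding $f_1(X\beta^\star) = \Theta(\ln\mu/\mu)$ and (using the squared heavy contribution) variance $\Theta(\ln^2\mu/\mu)$. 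Hence $\min_\beta f(X\beta) \leq f(X\beta^\star) = \Theta(\ln^2\mu/\mu)$.

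The core of the argument is that the sketch systematically drops the heavy-row contribution. The only level of the sketch that is statistically unbiased for $f_2$ on inputs of the form $X\beta$ with $\beta<0$ is the subsample level $h_m$, which stores $p_u n \leq r = o(\sqrt n)$ rows. Among these, the expected number of heavy rows is $p_u n/\mu = o(\sqrt n/\mu)$; choosing $\mu = \Theta(\sqrt n/\log n)$ makes this $o(\log n)$, so by Proposition \ref{lem:chernoff} with constant probability \emph{no} heavy row is subsampled at level $h_m$. At any lower level $h<h_m$, each heavy row is hashed into a CountMin bucket shared with $\Omega(n/(\mu N_h))$ light rows, so the bucket's sum is large and positive; evaluated at any $\beta<0$ this bucket contributes an exponentially small $\ell$-value, effectively erasing the heavy contribution from the sketched variance at every $\beta$ on the negative ray.

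On that constant-probability event the sketched objective $f_{w',\text{sketch}}(X\beta)$ is strictly decreasing in $|\beta|$ for $\beta<0$ and tends to $0$ as $\beta \to -\infty$, so its infimum is unattained and any $\varepsilon$-approximate minimizer $\tilde\beta$ can be taken to satisfy $\tilde\beta = -M$ with $M$ arbitrarily large. Choosing any such $\tilde\beta$ with $M = \omega(\ln^2\mu)$, the original objective evaluates to $f(X\tilde\beta) = \Theta(M/\mu + M^2/\mu) = \Theta(M^2/\mu)$, giving $f(X\tilde\beta)/\min_\beta f(X\beta) = \Theta(M^2/\ln^2\mu) = \omega(\ln^2\mu)$, as required. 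The main obstacle is justifying that the intermediate CountMin levels $1 \leq h < h_m$, despite having nontrivial sampling rates, also cannot resurrect the heavy contribution at the relevant $\beta$; this amounts to a level-by-level bucket-collision argument combined with Proposition \ref{thm:bernstein}, using the fact that $r=o(\sqrt n)$ forces $N_h \cdot (n/\mu) = o(n)$ so that any heavy row hashed at level $h$ is with high probability buried among $\omega(1)$ light rows in its bucket, whose signed sum remains dominated by the light contribution and is therefore absorbed by the same exponential-decay argument.
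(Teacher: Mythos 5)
Your construction is genuinely different from the paper's, and that difference matters. The paper builds a data set with a single row of magnitude $\sqrt{n}$ together with $n(1-1/\mu)$ rows equal to $-1$ and $n/\mu$ rows equal to $+1$. The entire argument is about this single $\ell_1$-heavy row: at level $0$, when $N_0 = o(\sqrt{n})$, its bucket contains $\gtrsim 3\sqrt{n}$ opposite-signed unit rows whose sum cancels $\sqrt n$, so $G(B_0)\leq 0$; at levels $h>0$ the heavy row is sampled with probability $\leq 2/b$ so it simply disappears. The sketch then behaves like the data set without $x_0$, has a well-defined finite minimizer $\tilde r=\Theta(\ln\mu)$, and evaluating the true $f$ at $\tilde r$ costs $\Omega(\tilde r^2)=\Omega(\ln^2\mu)$ because of the squared loss on $x_0\tilde r=\sqrt n\,\tilde r$. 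Your construction has no heavy row at all (all rows are $\pm1$), so it cannot be used to make the same point about \texttt{CountMin} failing to isolate an $\ell_1$-heavy hitter with fewer than $\sqrt n$ buckets; instead you are arguing that the minority sign class of size $n/\mu$ is entirely drowned out, which forces you to couple $\mu$ to $n$ (you take $\mu\approx\sqrt n/\log n$) rather than obtain a bound that holds uniformly in $\mu>10$ as the paper's example does.

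Beyond the mismatch in mechanism, there are concrete gaps. First, $p_u n/\mu=o(\log n)$ does not give ``constant probability that no $-1$ row is subsampled'': $P(X=0)\approx e^{-\mathbb E X}$ is only bounded away from zero when the mean is $O(1)$, so you would need to arrange $p_u n/\mu=O(1)$ for all admissible parameterizations with $o(\sqrt n)$ rows, and that does not follow for every such parameterization (e.g.\ $N_u=\sqrt n/\log^{1/2}n$ and your $\mu$ give mean $\to\infty$). Second, on your ``good'' event the sketched objective has no minimizer: $f(SX\beta)\to 0$ as $\beta\to-\infty$, so $\tilde\beta\in\argmin_\beta f(SX\beta)$ is undefined, whereas the lemma (and the paper's proof) quantifies over an actual argmin. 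Third, the claim that ``the only level that is statistically unbiased for $f_2$ on $\beta<0$ is the subsample level $h_m$'' is not correct as stated and would need to be replaced with a careful level-by-level analysis of when intermediate CountMin buckets can produce a net-negative sum; the paper sidesteps this entirely because its heavy row can only survive at level $0$. If you want a proof that is robust to the sketch's parameterization and matches the structural message of the lemma (the CountMin's inability to isolate an $\ell_1$-heavy hitter below $\sqrt n$ buckets), you should introduce the $\sqrt n$-magnitude heavy point as in the paper and argue bucket cancellation at level $0$ rather than subsample misses at level $h_m$.
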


\begin{proof}
Fix $\mu>10$ and consider the following data
\begin{align*}
    & x_0=(\sqrt{n})\\
    & x_i=(-1) \text{ for $i \in \left[1, n-\frac{n}{\mu}\right]$}\\
    & x_i=(1) \text{ for $i > n-\frac{n}{\mu}$}
\end{align*}
As the example is $1$-dimensional we only need to check the ratio for $\beta=1$ and $\beta=-1$ in order to compute $\mu$ as multiplying with a scalar does dot not change the ratio  between the sum of all positive points and the sum of all negative points.
Also note that the ratio is inverted for $\beta=-1$ thus if the ratio is positive for $\beta=1$ we do not need to check it for $\beta=-1$.
Note that for $\beta=1$ and $z=X\beta$ it holds that $\sum_{z_i>0}z_i=\sqrt{n}+\frac{n}{\mu} $ and $\sum_{z_i< 0}|z_i|=n(1-\frac{1}{\mu}) \geq \sqrt{n}+\frac{n}{\mu} $ if $n$ is sufficiently large.
We thus have have 
\[\mu_1(X)=\frac{n\left(1-\frac{1}{\mu}\right)}{\sqrt{n}+\frac{n}{\mu} }\leq \frac{n}{\frac{n}{\mu}}=\mu\]
Further we have that $\sum_{z_i>0}z_i^2=n+\frac{n}{\mu} \leq 2n$ and $\sum_{z_i< 0}|z_i|=n(1-\frac{1}{\mu}) \approx n $.
Consequently we get that
\[\mu_2(X)=\frac{n+\frac{n}{\mu}}{n\left(1-\frac{1}{\mu}\right)}\leq 2 < \mu\]
Since $d=1$ this proves that our our example is $2\mu$-complex.
Note that the following four facts hold for any level $h$:
\begin{itemize}
    \item If for some $c$ we have that $p_h \leq 1/b$ then with probability $1/b $ row $x_0$ is not sampled at level $h$. In particular this implies that $x_0$ is only present at level $0$ with high probability, i.e. probability at least $\sum_{h=1}^{h_m} p_h\leq \frac{2}{b}$;
    \item If $x_0$ is in a bucket with $ 3\sqrt{n} \geq 2\sqrt{n} / (1-\frac{2}{\mu}) $ elements then with high probability $G(B_0)\leq 0$;
    \item If $\frac{N_h}{n} \ll p_h \ll 1 $ then with high probability $G(B)<0$ for any bucket at level $h$ since the $\frac{\mu-1}{\mu}\cdot n \gg \frac{n}{\mu} $ negative elements cancel all positive rows;
    \item If $h=h_m$ then roughly $\frac{\mu-1}{\mu}\cdot N_u$ are $-1$ and $ \frac{N_u}{\mu}$ are $1$.
\end{itemize}
All of these follow from the Chernoff bounds using Lemma \ref{lem:binlem}.
Thus if $N_0 \ll \sqrt{n}/3$ then $X'=SX$ mimics the instance $X \setminus \{x_0\}$, i.e. the instance $X$ with point $x_0$ removed, as $x_0$ is only appearing at level $0$ where it is canceled by the other points.
More precisely $X'$ consists of roughly $n'-\frac{n'}{\mu}$ copies of the point $-1$ and $\frac{n'}{\mu} $ copies of the point $ 1$.
After multiplying with the weights we are back to roughly $n-\frac{n}{\mu}$ times the point $-1$ and $\frac{n}{\mu} $ times the point $ 1$.
To keep the presentation simple we only consider the instance $X'=-(X \setminus \{x_0\})$.
The proof works the same for other sketched instances that we obtain using the above facts.
Consider the function
\begin{align*}
    nf_1(X'r)=(n-\frac{n}{\mu})\cdot\ell(-r )+ \frac{n}{\mu}\cdot \ell(r )
    =n\cdot\ell(-r )+ \frac{n r}{\mu}.
\end{align*}
Thus, we have $ f_1(X'r)= \ell(-r )+ \frac{ r}{\mu}$.
Using that $\ell(r)<r+1 $ for all $r>0$ we get
\begin{align*}
    n f(X'r)&=nf_1(X'r)+\sum_{i=1}^n (x_i-f_1(X'r))^2\\
    &\leq n\cdot\ell(-r )+ \frac{n r}{\mu}+ \frac{n (r+1)^2}{\mu}+(n-\frac{n}{\mu})\cdot\ell(-2r )\\
    &\leq 2n \cdot\ell(-r ) + \frac{n r}{\mu}+ \frac{n (r+1)^2}{\mu}.
\end{align*}
Using that $ \ell(-r)\leq e^{-r}$ it holds that
\begin{align*}
    f(X'r)\leq 2 e^{-r} + \frac{r}{\mu}+ \frac{(r+1)^2}{\mu}.
\end{align*}
Taking the derivative we get
\begin{align*}
    f'(X'r)\leq -2 e^{-r} + \frac{1}{\mu}+ \frac{2(r+1)}{\mu}.
\end{align*}
which is $0$ if and only if $r=-\ln(\frac{2r+3}{\mu})+\ln(2)=\Omega (\ln(\mu)) $.
This implies that for $\tilde{r} = \argmin_{r\in \mathbb{R}} f(Xr)$ we have that $\tilde{r}=\Omega (\ln(\mu)) $.
Now consider our original loss function $ f(X \tilde{r})$.
Here we have that 
\begin{align*}
    n f(X r)=n f_1(X r)+\sum_{i=1}^n (x_i-f_1(X r))^2
    &\geq n\cdot\ell(-r )+ \frac{n r}{\mu}+ (\sqrt{n}\cdot r)^2/2\\
    &\geq n\cdot  r^2/2.
\end{align*}
In particular we have that $f(X \tilde{r})= \Omega (\ln(\mu)^2)$.
However for $r^* $ minimizing $f(Xr) $ we have that $n f(Xr)\leq f(0)= \ln(2)=O (1)$.
\end{proof}

\section{Additional experimental results
        }\label{sec:experiments_app}
\begin{figure*}[ht!]
\begin{center}
\begin{tabular}{ccc}
\includegraphics[width=0.318\linewidth]{Plots_ICLR2023/covertype_sklearn_0_ratio_plot.pdf}
\includegraphics[width=0.318\linewidth]{Plots_ICLR2023/webspam_libsvm_desparsed_0_ratio_plot.pdf}
\includegraphics[width=0.318\linewidth]{Plots_ICLR2023/kddcup_sklearn_0_ratio_plot.pdf}\\
\includegraphics[width=0.318\linewidth]{Plots_ICLR2023/covertype_sklearn_0.1_ratio_plot.pdf}
\includegraphics[width=0.318\linewidth]{Plots_ICLR2023/webspam_libsvm_desparsed_0.1_ratio_plot.pdf}
\includegraphics[width=0.318\linewidth]{Plots_ICLR2023/kddcup_sklearn_0.1_ratio_plot.pdf}\\
\includegraphics[width=0.318\linewidth]{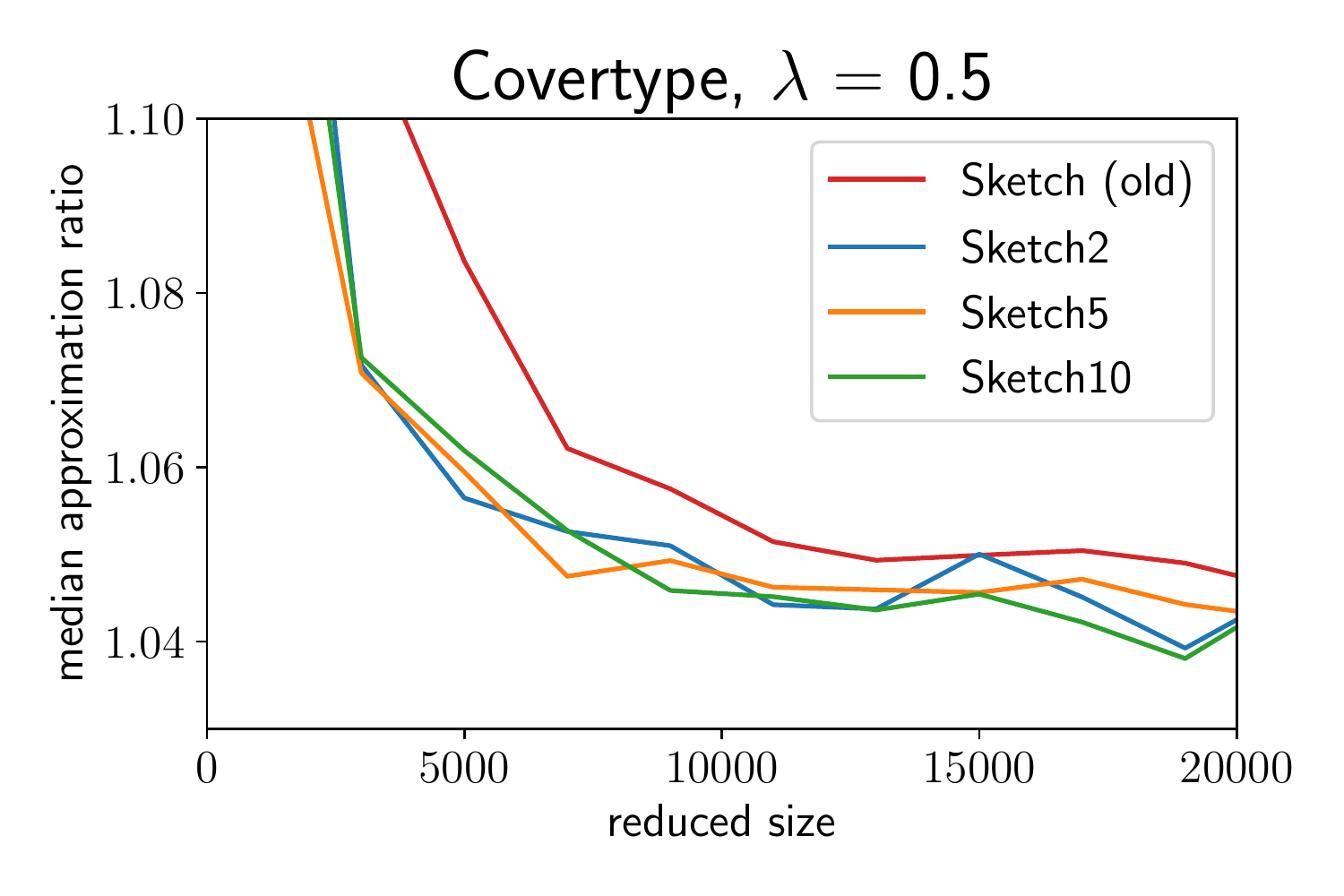}
\includegraphics[width=0.318\linewidth]{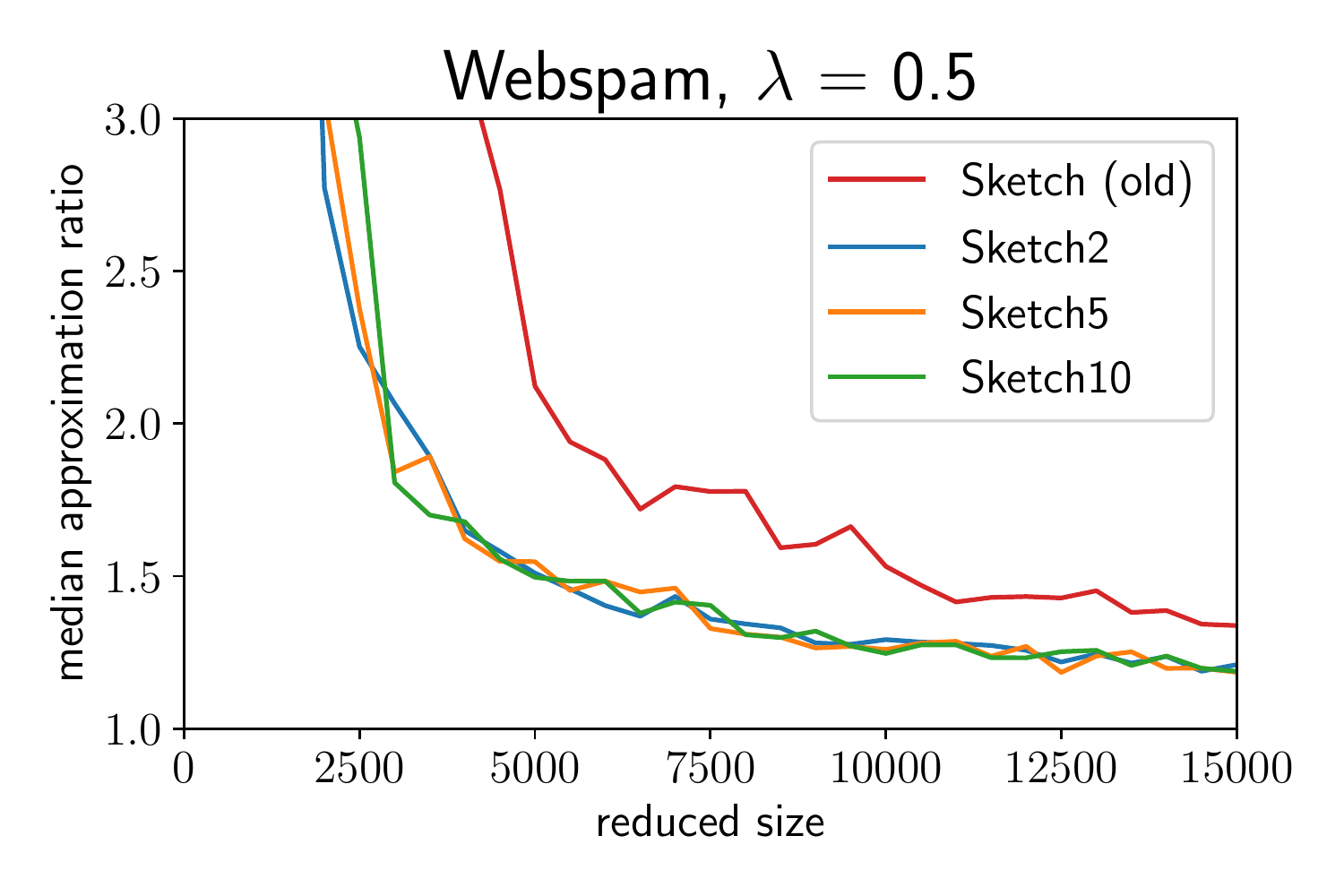}
\includegraphics[width=0.318\linewidth]{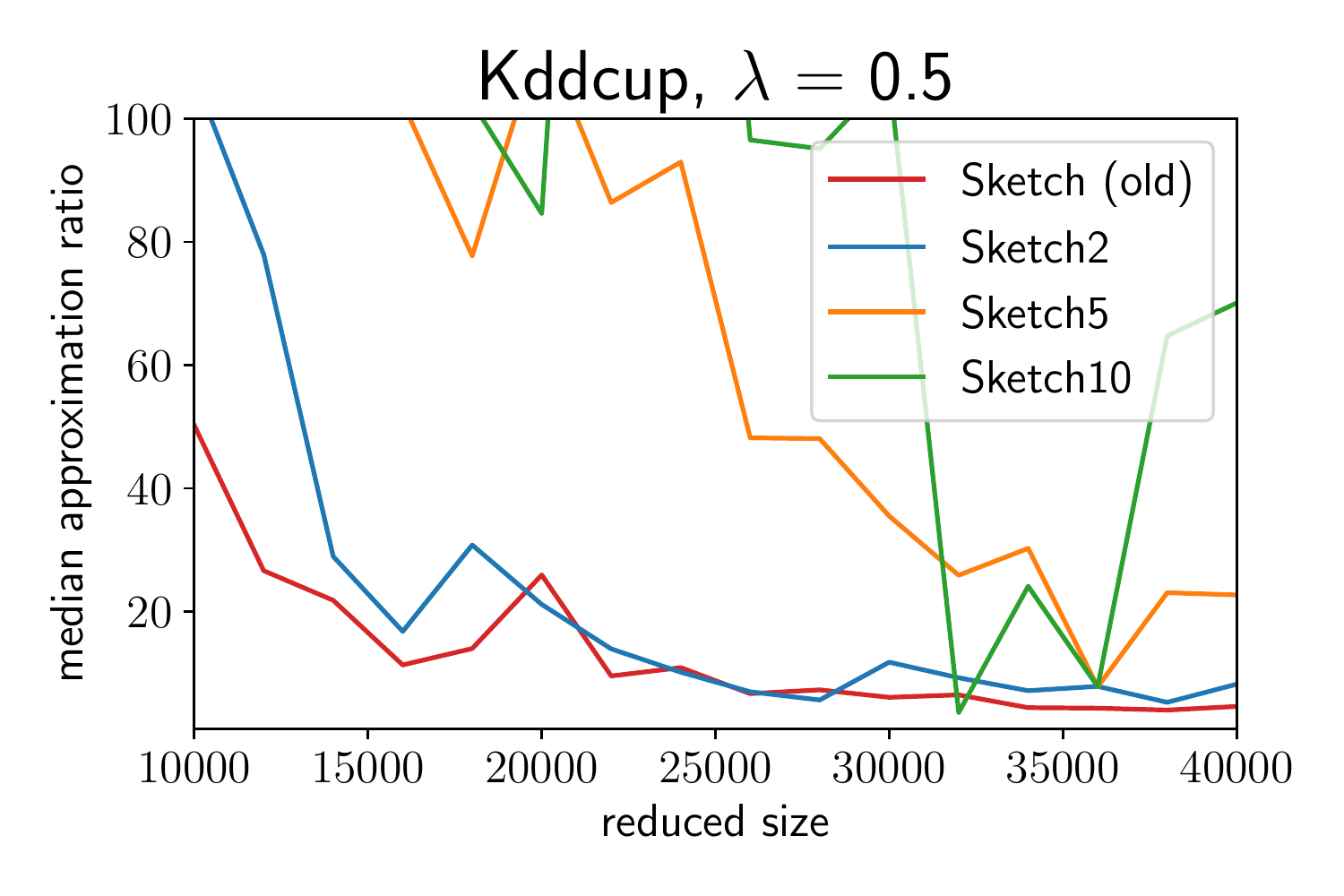}\\
\includegraphics[width=0.318\linewidth]{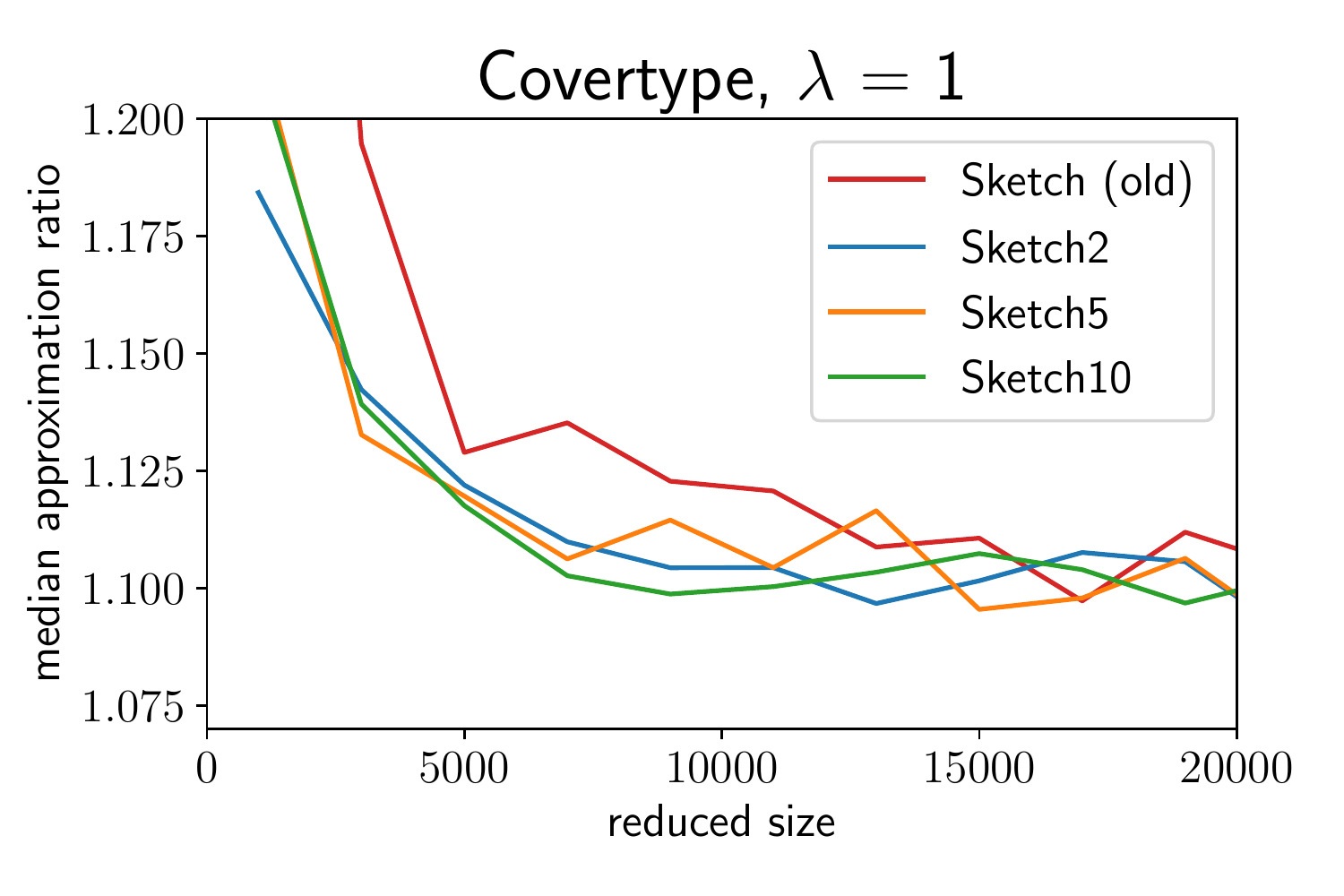}
\includegraphics[width=0.318\linewidth]{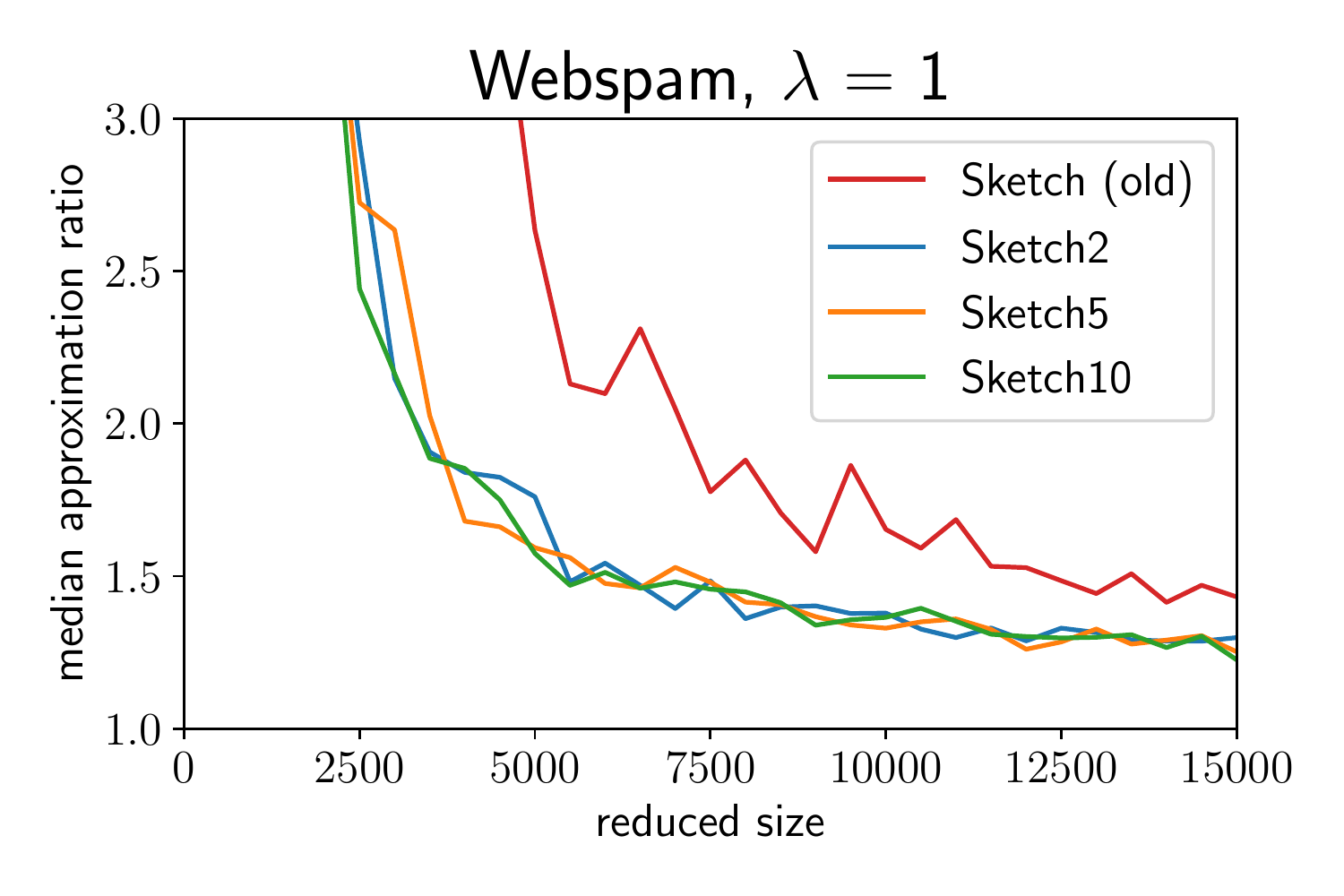}
\includegraphics[width=0.318\linewidth]{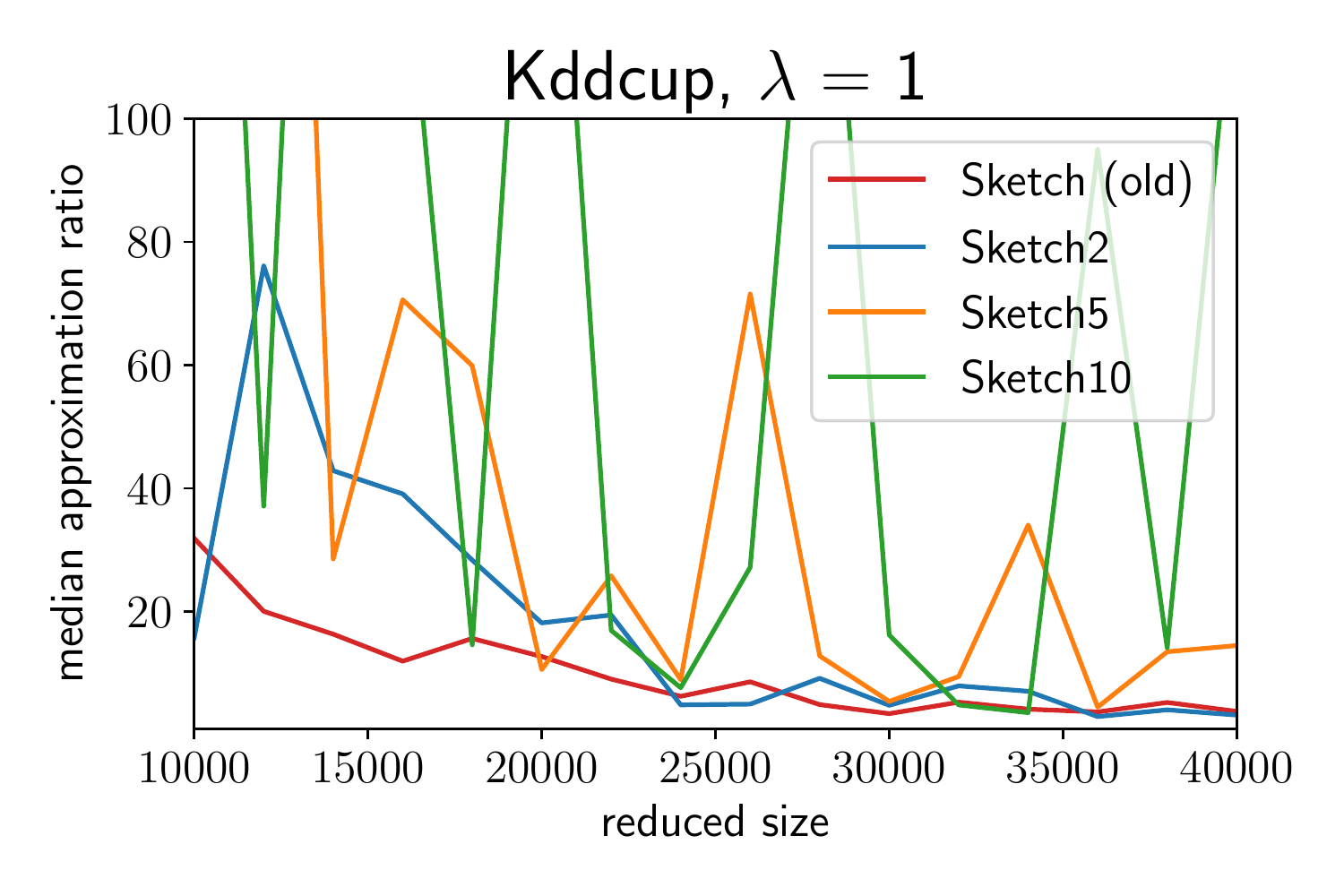}\\
\includegraphics[width=0.318\linewidth]{Plots_ICLR2023/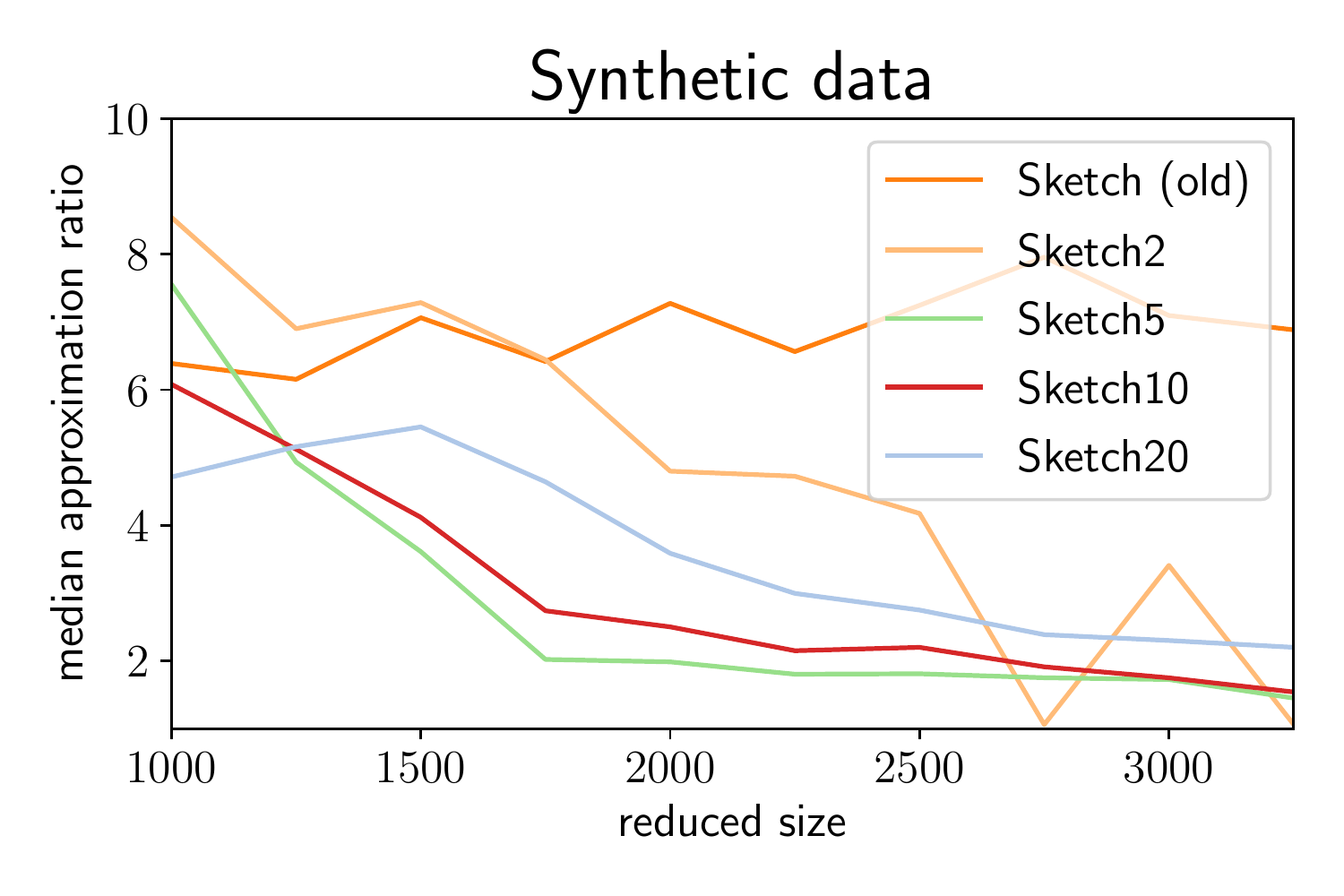}
\includegraphics[width=0.318\linewidth]{Plots_ICLR2023/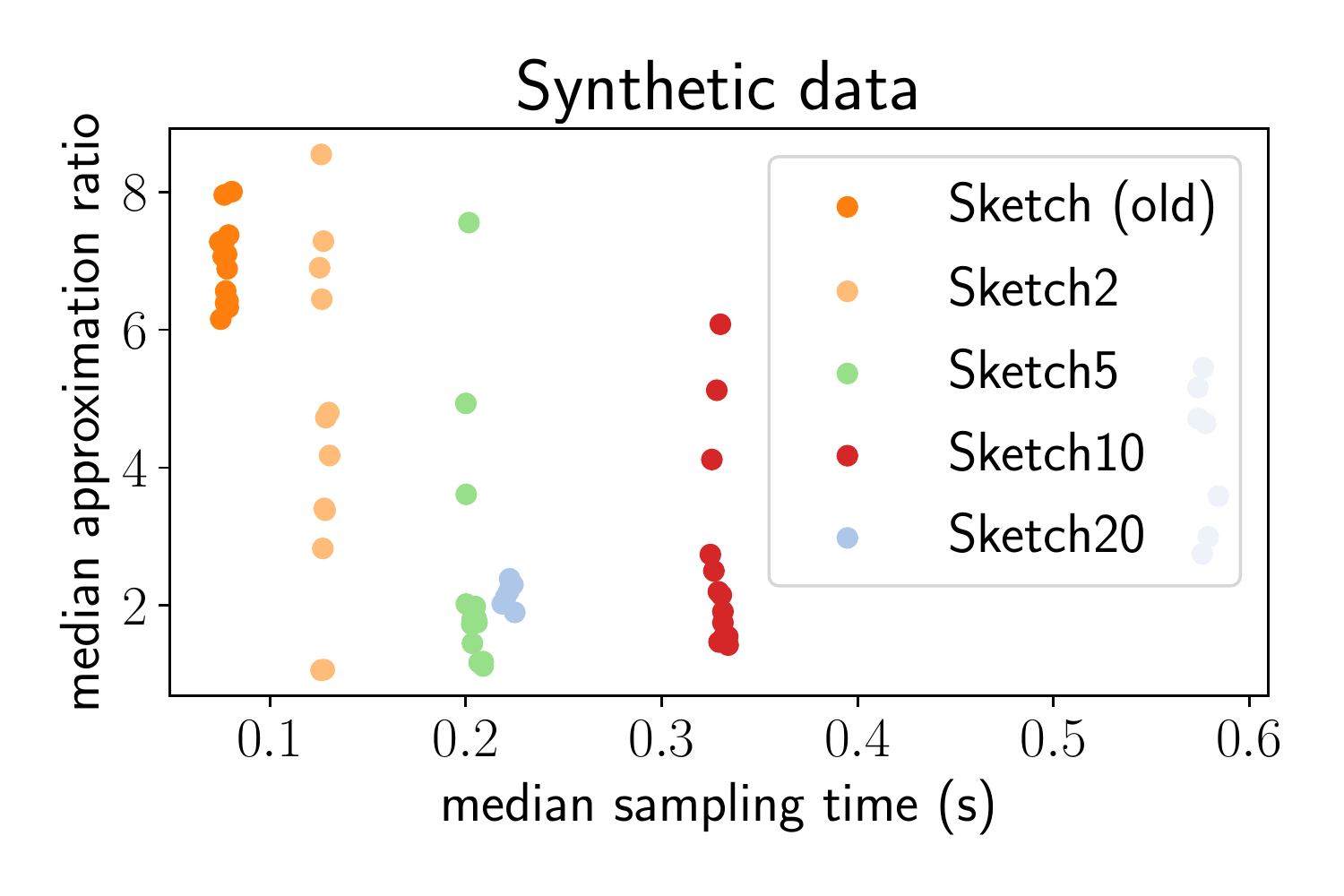}
\end{tabular}
\caption{Comparison of the median approximation ratios of the old sketch versus the new sketch with various settings for the sparsity $s\in\{2,5,10\}$ as well as for the regularization parameter $\lambda \in \{0,.1,.5,1\}$ for different real-world benchmark data (top, middle). Comparison of median approximation ratios (bottom left) and sketching times (bottom right) for our synthetic data.}
\label{fig:strct_exp_app}
\end{center}
\end{figure*}

\section{Additional material}\label{apx:additional}

\paragraph{Comment on stochastic gradient descent (SGD) and supporting experiments}
SGD does not work in the turnstile data stream setting (where positive and negative updates are allowed), and is inherently sequential. In contrast, oblivious linear sketching allows for simple handling of turnstile data streams and distributed or parallel computations, which we motivate in our paper.

Our bounds are \emph{multiplicative} error guarantees that are relative to the optimal loss $f(X\beta_{OPT})$. Known regret or generalization results for SGD bound the probability of misclassification $P(y_ix_i\beta < 0) < \varepsilon$, which allows to \emph{ignore} a few highly important (and expensive) points and give an \emph{additive} error on the loss $|f(X\beta_{SGD})-f(X\beta_{OPT})| \leq {B}$. Here $B$ depends on properties of $f$, $X$, and on the distance of an initial guess $\beta_0$ to the optimal solution $\|\beta_0 - \beta_{OPT}\|$. Thus $B$ cannot be charged uniformly for the optimal loss, and instead can be arbitrarily large.

The reason SGD and online gradient descent do not work in our setting is that they miss (in most iterations) highly important points when there are only few of them (this is also the issue with uniform sampling). This was pointed out in previous related work, e.g., \citep[Section C]{MunteanuSSW18} and \citep[Section 6]{MunteanuOW21}, who constructed synthetic data with only $2$ out of $n$ such important points and additionally demonstrated empirically how bad SGD can perform even on mild data with $\mu=1$.

Below we add SGD to our empirical results on real world data. SGD performs quite well, though not better than sketching. The reported performance of SGD is the median approximation ratio over $21$ independent repetitions of \emph{one full pass} over the data, to be a baseline comparable to the sketches. We note that plotting the iteration-wise error would make the results for SGD look much worse.

For our synthetic data (described in detail below), instead of $2$ out of $n$ heavy points (as in previous work), we have $\Theta(d)$ out of $n$ heavy points. Since SGD misses those in most batches, the instance looks separable to SGD in almost all iterations, although the original instance is inseparable. This results in approximation ratios around $15\,000$ (note the logarithmic scale on the vertical axis). In contrast, our sketch and the previous sketch give small constant approximations.

\begin{figure*}[ht!]
\begin{center}
\begin{tabular}{ccc}
\includegraphics[width=0.37\linewidth]{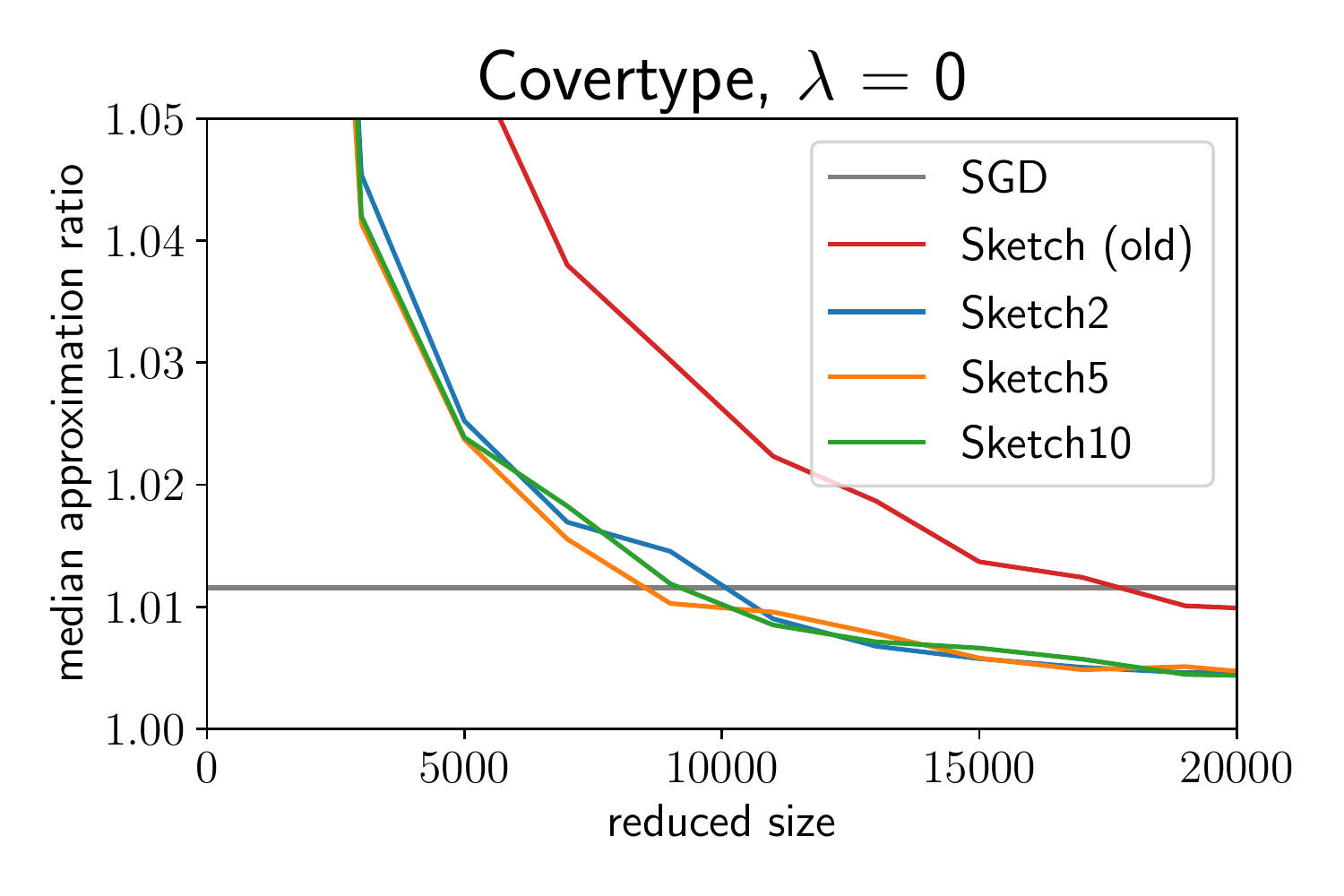}
\includegraphics[width=0.37\linewidth]{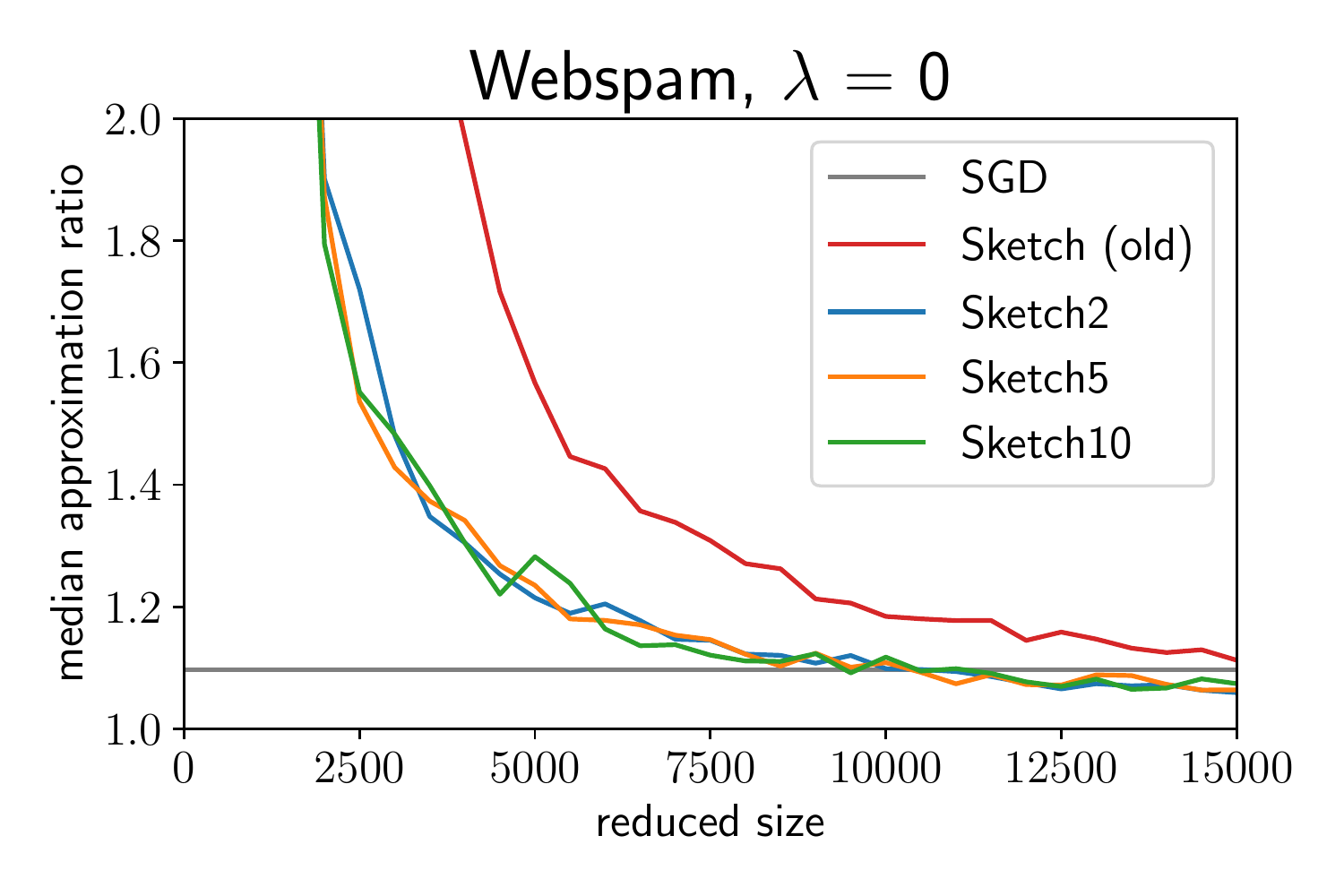}\\
\includegraphics[width=0.37\linewidth]{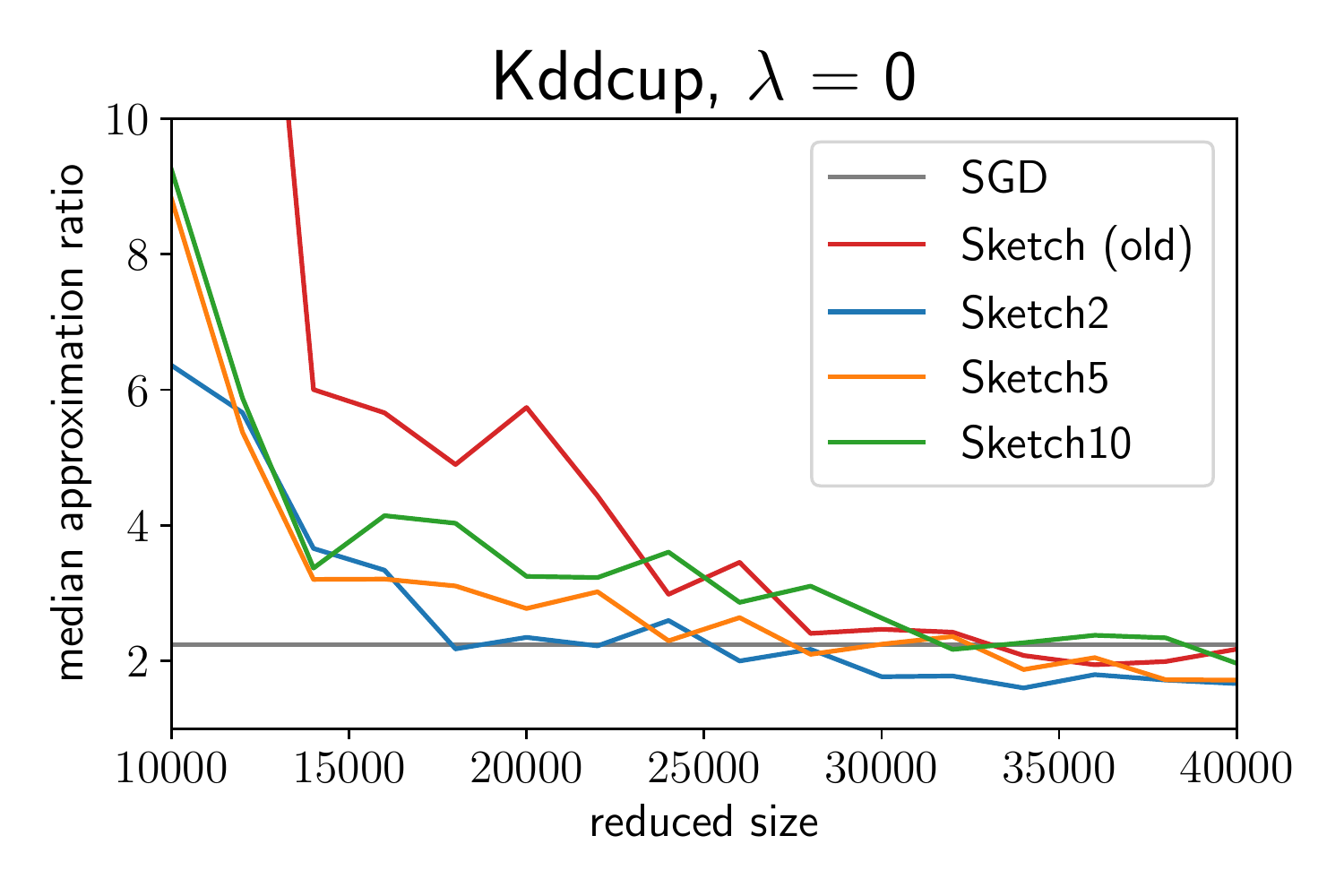}
\includegraphics[width=0.37\linewidth]{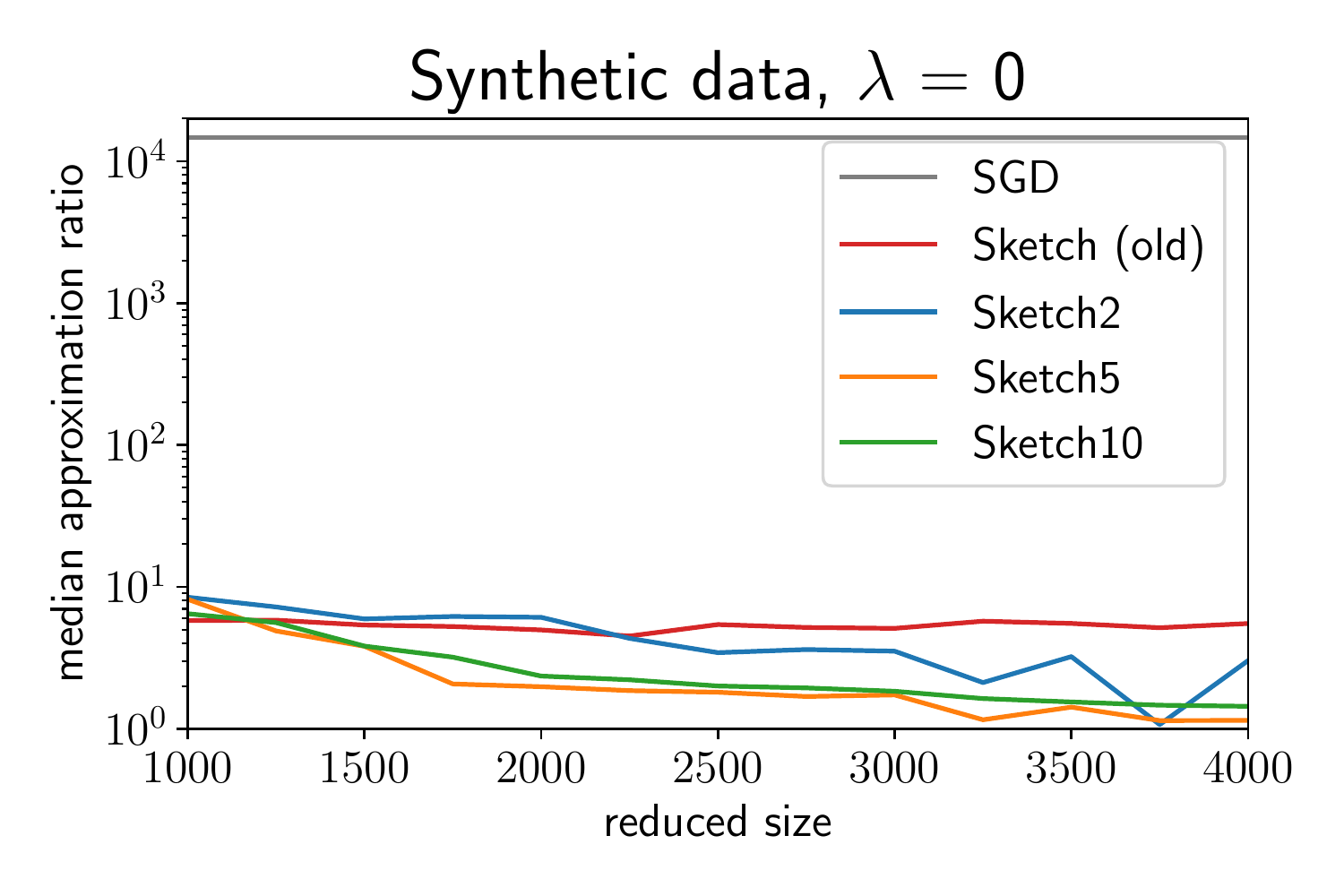}\\
\end{tabular}
\caption{Median approximation ratios for plain logistic regression ($\lambda=0$). SGD is compared to the old sketch as well as the new sketch with various settings for the sparsity $s\in\{2,5,10\}$ on different real-world benchmark data, and on our synthetic data.
}
\label{fig:strct_exp_app_reb_SGD}
\end{center}
\end{figure*}

\paragraph{Added experiments for $\ell_1$-regression}
We implemented the Cauchy sketch \citep{Indyk06,SohlerW11,WHomework21} that simply consists of i.i.d. standard Cauchy entries. The sketching matrix is then multiplied by the data matrix and the sketched $\ell_1$ regression problem is solved. The plots show the median approximation ratio over $21$ repetitions for each target size of the sketch. We see that the new sketch, using any degree of sparsity $s\in\{2,5,10\}$, outperforms the Cauchy sketch by a large margin in terms of approximation factor (while being a lot faster to apply than the dense matrix multiplication).
\begin{figure*}[ht!]
\begin{center}
\begin{tabular}{ccc}
\includegraphics[width=0.48\linewidth]{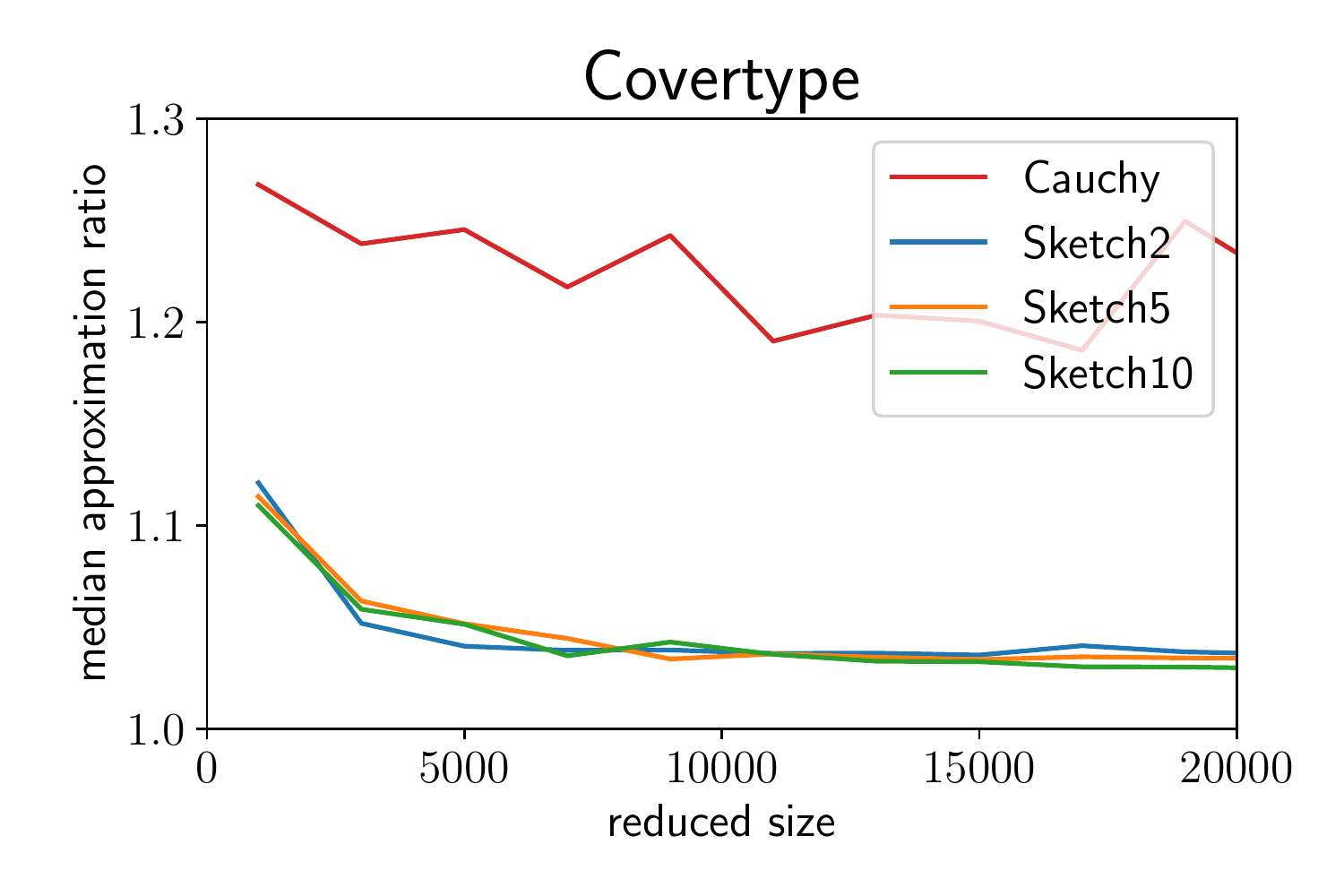}
\includegraphics[width=0.48\linewidth]{Plots_ICLR2023/rebuttal/webspam_libsvm_desparsed_0_ratio_plot_L1.pdf}\\
\includegraphics[width=0.48\linewidth]{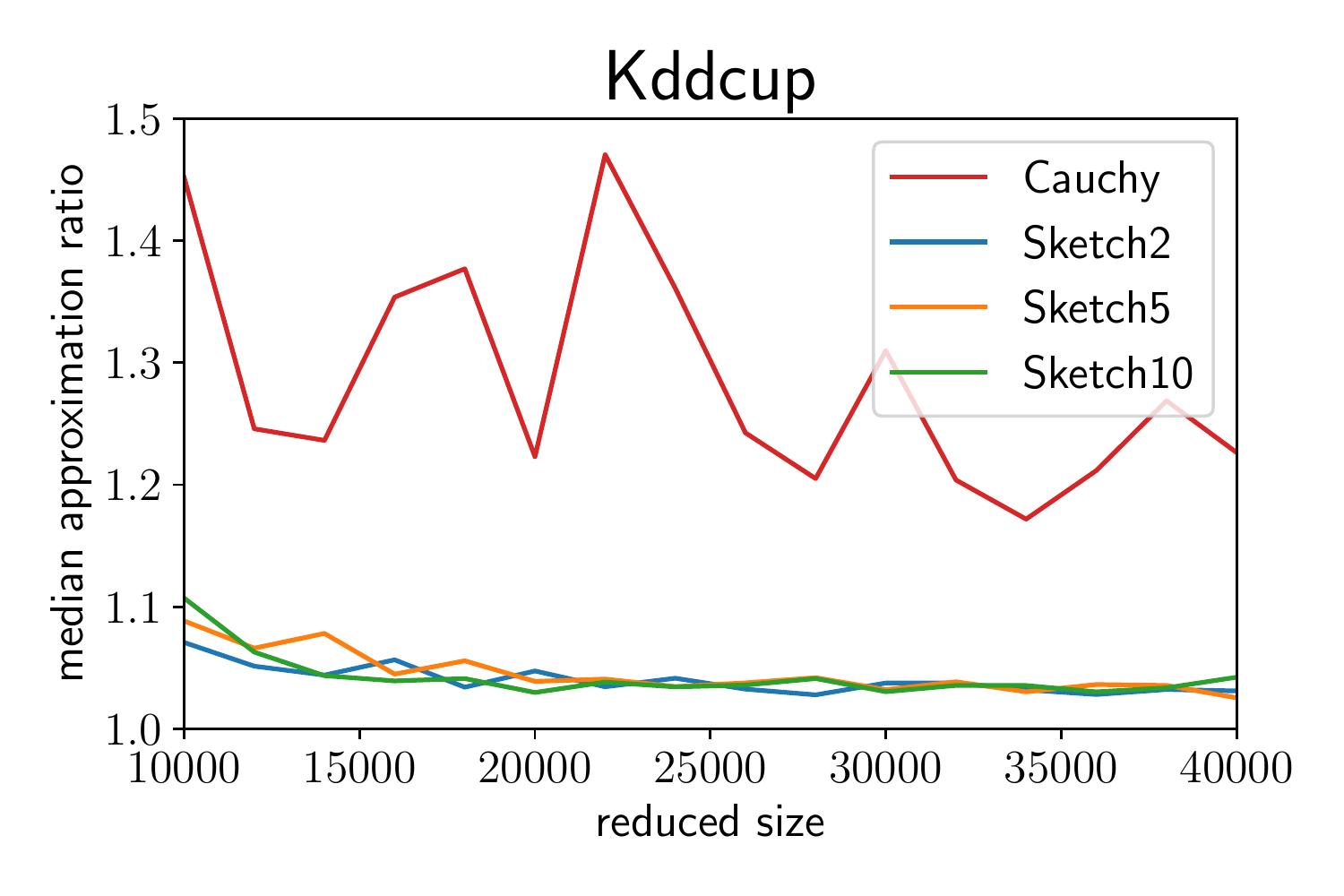}\\
\end{tabular}
\caption{
Comparison of the median approximation ratios for $\ell_1$-regression of the Cauchy sketch versus our new sketch with various settings for the sparsity $s\in\{2,5,10\}$ for different real-world benchmark data.
}
\label{fig:strct_exp_app_reb_L1}
\end{center}
\end{figure*}

\paragraph{Synthetic data set}
The synthetic data set consists of $2n=40\,000$ data points. The dimension of the data points is $d=100$.
\begin{itemize}
    \item There are $(n-n/10-2d)$ points of the form $(-1, -1, ..., -1)$. The optimization of $\beta$ focuses on these points.
    \item $(n/10)$ points are of the form $(1, 1, ..., 1)$. These points are only added to obtain clean plots. If they are omitted, then the gap between the optimal $\beta$ on the original data and the optimal $\beta$ on a bad sketch or uniform sample becomes worse, and ``wiggly''. 
    \item $d$ points are of the form $(-n, -n, \ldots, -n)$. They are oriented in the same direction as the first set of points. They will mostly be ignored in the optimization since there are only a few of them, but they can cancel the following heavy hitters.
    \item For each $i \in [d]$,  we add one vector of the form $(n\cdot e_i)$, where $e_i$ is $i$th standard basis vector. These points are the \emph{heavy hitters} pointing away from most other points. For any good relative error sketch, it is crucial to preserve all of them;
    \item We add $n$ times the point $(0, 0, \ldots, 0)$. These points are needed to ensure that the instance works in a labeled setting (to be a natural data set for logistic regression).
    \item The labels of all points unequal to the all zero vector are set to $1$. All zero vectors are assigned the label $-1$.
\end{itemize}
The idea behind this instance is as follows: if the sketch maps any point of the form $(n\cdot e_i)$ into the same bucket as a $(-n, -n, \ldots, -n)$-vector, then the instance will become almost separable, so the sketch will have a cheap solution, meaning that there exists some $\beta$ such that the logistic loss on the sketch is low. However, on the original instance, the logistic loss of the same $\beta$ will be large due to the loss associated with $(n\cdot e_i) \beta$. This implies a large approximation ratio.
For small sketch sizes, the old sketch has a relatively high probability for this bad event to happen, when hashing each point into a single bucket.
Our new sketch will likely preserve all points of the form $(n\cdot e_i)$ on \emph{most} of the multiple sub levels. This preserves the cost even if our sketch size is small (almost linear).\\

\paragraph{Pseudocode of our sketching algorithm}
Algorithm \ref{alg:main} implements the first step of the sketch \& solve paradigm for approximating logistic or $\ell_1$ regression. The changes in comparison with \citep{MunteanuOW21} are highlighted in red.
\begin{algorithm}[ht!]
  \caption{Oblivious sketching algorithm for logistic regression.}\label{alg:main}
  \begin{algorithmic}[1]
    \Statex \textbf{Input:} Data $X \in \mathbb{R}^{n \times d}$, number of rows $k=N \cdot h_m + N_u$, parameters $b>1, s\geq 1$ where $N=s \cdot N'$ for some $N' \in \mathbb{N}$;
    \Statex \textbf{Output:} weighted Sketch $C=(X', w) \in \mathbb{R}^{k \times d}$ with $k$ rows.;
    \For{$h=0\ldots h_m$} \Comment{construct levels $0, \dots h_m$ of the sketch}
 		\State initialize sketch $X_h'=\mathbf{0} \in \mathbb{R}^{N \times d}$ at level $h$;
 		\State initialize weights $w_h=b^h \cdot \mathbf{1} \in \mathbb{R}^{N}$ at level $h$;
 	\EndFor
 	\textcolor{red}{
 	\State set $w_0=\frac{w_0}{s}$; \Comment{adapt weights on level $0$ to sparsity $s$}
    }
 	\For{$i=1\ldots n$} \Comment{sketch the data}
 	    \textcolor{red}{
 	        \For{$l=1\ldots s$} \Comment{densify level $0$}
 	            \State draw a random number $B_i \in [N']$;
 		        \State add $x_i$ to the $((l-1) \cdot N'+ B_i)$-th row of $X_0'$;
 		    \EndFor
 		    }
 		    \State assign $x_i$ to level $h\in[1, h_m-1]$ with probability $p_h=\frac{1}{b^h}$; 
 	        \State draw a random number $B_i \in [N]$; 
 		    \State add $x_i$ to the $B_i$-th row of $X_h'$;
 		    \State add $x_i$ to uniform sampling level $h_m$ with probability $p_{h_m}=\frac{1}{b^{h_m}}$; 
 		\EndFor
 	\EndFor
 	\State \hspace{12pt} Set $X'=(\textcolor{red}{X_0'}, X_1', \ldots X_{h_m}' )$;
 	\State \hspace{12pt} Set $w= (\textcolor{red}{w_0}, w_1, \ldots w_{h_m} )$;
 	\State \hspace{12pt} \textbf{return} $C = (X', w)$;
  \end{algorithmic}
\end{algorithm}
\vfill

\end{document}